\titleformat{\part}[block]
  {\Large\bfseries}
  {\partname~\thepart:}
  {10pt}
  {}
\newcommand{\blue}[1]{{{\color{blue}#1}}}
\newcommand{\ignore}[1]{}
\newcommand{\GSA}{\mathbf{GSA}}
\newcommand{\dG}{\mathrm{dist}_{\mathrm{G}}}
\newcommand{\dH}{\mathrm{dist}_{\mathrm{H}}}
\newcommand{\FC}{\mathrm{FC}}
\newcommand{\Facet}{\mathrm{Facet}}
\newcommand{\spann}{\mathrm{span}}
\newcommand{\newinf}{\mathop{\mathrm{inf}\vphantom{\mathrm{sup}}}} 
\renewcommand{\S}{\mathbb{S}}
\newcommand{\B}{\mathbb{B}}
\newcommand{\zoom}[3]{#1_{#2 \mid #3}}
\newcommand{\bup}{{\boldsymbol{\upsilon}}}
\newcommand{\Naz}{\mathrm{Naz}}
\newcommand{\g}[1]{\bg^{(#1)}}
\newcommand{\din}{d_{\mathrm{in}}}
\newcommand{\dout}{d_{\mathrm{out}}}
\newcommand{\HV}{\mathbf{HyperVar}}   \newcommand{\hypvar}{\HV}
\newcommand{\paramset}{\frac{1}{\sqrt{\log n}} \leq \eps}
\newcommand{\bucket}{\mathrm{buck}}
\newcommand{\Cramer}{Cram\'er}
\renewcommand{\i}[1]{\bi^{(#1)}}
\newcommand{\newN}{m}
\newcommand{\Dp}{\calD}
\newcommand{\VC}{\text{VC}}
  \patchcmd{\use@@tikzlibrary}{\global}{}{}{}%
\title{Gaussian Approximation of Convex Sets\\
 by Intersections of Halfspaces
\vspace{0.5em}}
\author{
Anindya De\thanks{University of Pennsylvania. Email: \href{mailto:anindyad@seas.upenn.edu}{\texttt{anindyad@seas.upenn.edu}}.}
\and 
Shivam Nadimpalli\thanks{Columbia University.  Email: \href{mailto:sn2855@columbia.edu}{\texttt{sn2855@columbia.edu}}.}
\and 
Rocco A. Servedio\thanks{Columbia University.  Email: \href{mailto:ras2105@columbia.edu}{\texttt{ras2105@columbia.edu}}.}
\vspace{0.5em}
}
\date{\small\today}
\begin{document}

\pagenumbering{gobble}
\maketitle

\begin{abstract}
We study the approximability of general convex sets in $\R^n$ by intersections of halfspaces, where the approximation quality is measured with respect to the \emph{standard Gaussian distribution} $N(0,I_n)$ and the complexity of an approximation is the number of halfspaces used.
While a large body of research has considered the approximation of convex sets by intersections of halfspaces under distance metrics such as the Lebesgue measure and Hausdorff distance, prior to our work there has not been a systematic study of convex approximation under the Gaussian distribution.

We establish a range of upper and lower bounds, both for general convex sets and for specific natural convex sets that are of particular interest.
Our results demonstrate that the landscape of approximation is intriguingly different under the Gaussian distribution versus previously studied distance measures.  For example, we show that $2^{\Theta(\sqrt{n})}$ halfspaces are both necessary and sufficient to approximate the origin-centered $\ell_2$ ball of Gaussian volume 1/2 to any constant accuracy, and that for $1 \leq p < 2$, the origin-centered $\ell_p$ ball of Gaussian volume 1/2 can be approximated to any constant accuracy as an intersection of $2^{\tilde{O}(n^{3/4})}$ many halfspaces.  These bounds are quite different from known approximation results under more commonly studied distance measures.

Our results are proved using techniques from many different areas. These include classical results on convex polyhedral approximation, Cram\'{e}r-type bounds on large deviations from probability theory, and---perhaps surprisingly---a range of topics from computational complexity, including computational learning theory, unconditional pseudorandomness, and the study of influences and noise sensitivity in the analysis of Boolean functions.

\end{abstract}

%
%
%
%

\newpage

\setcounter{tocdepth}{2}
{\tableofcontents}

\newpage

\pagenumbering{arabic}


\section{Introduction} \label{sec:intro}

A long line of mathematical research has investigated \emph{convex polyhedral approximation of convex bodies}, i.e. the broad question of how to best approximate general  convex bodies using intersections of halfspaces (or equivalently, convex hulls of finite point sets).
Research on questions of this sort dates back at least to the first half of the twentieth century, see for example the early works of Sas \cite{Sas39}, Fejes T\'{o}th \cite{FejesToth48} and Macbeath \cite{Macbeath51}, among others.  
Contemporary motivation for the study of polyhedral approximation of convex bodies arises from many areas including discrete and computational geometry, geometric convexity, the study of finite-dimensional normed spaces, and optimization. 
Given this breadth of connections, it is not surprising that the existing body of work on the subject is vast, as witnessed by the hundreds of references that appear in multiple surveys (including the 1983 \cite{Gruber83} and 1993 \cite{Gruber93} surveys of Gruber and the 2008 survey of Bronstein \cite{Bronstein08})
on approximation of convex bodies by polyhedra.

In the polyhedral approximation literature to date, a number of different distance notions have been used to measure the accuracy of convex polyhedral approximations of convex sets. 
These include the Hausdorff distance, the Nikodym metric (volume of the symmetric difference), the Banach-Mazur distance, distances generated by various $L_p$ metrics, and more.  
(See \cite{Bronstein08} for descriptions of each of these metrics and for an extensive overview of approximability results under each of those distance measures.)

This paper adopts a new perspective, by considering approximation of convex bodies in $\R^n$ using the \emph{(standard) Gaussian volume of the symmetric difference} as the distance measure. 
We refer to this distance measure simply as the \emph{Gaussian distance}. 
This is a natural and well-studied distance measure in theoretical computer science, as witnessed by the many works that have considered learning, testing, derandomization, and other problems using it (see e.g. \cite{KOS:08,Vempalafocs10,MORS:10,Kane:Gotsman11,Kane12,Kane14,KK14,KNOW14,Kane15,CDS19,DMN19,DKPV21,OSTK21,DMN21,HSSV22,DNS23-convex} 
and references therein). Given this body of work, it is perhaps surprising that Gaussian distance does not appear to have been previously studied in a systematic way for the convex polyhedral approximation problem.

Let us make our notion of approximation precise:

\begin{definition}  \label{def:Gaussian-volume}
Given (measurable) sets $K,L \subseteq \R^n$, we define the \emph{Gaussian distance} between $K$ and $L$ to be
\[
	\dG(K,L) 
	:= \Prx_{\bx \sim N(0,I_n)}\sbra{\bx \in K \, \triangle \, L}
\]
where $N(0,I_n)$ denotes the $n$-dimensional standard Gaussian distribution.
\end{definition}

We consider approximators which are intersections of finitely many $n$-dimensional halfspaces, and we measure the complexity of such an approximator by the number of halfspaces (facets) that it contains; we refer to this as its \emph{facet complexity}.
Thus, we are concerned with the following broad question:  

\begin{quote}
Given a convex set $K$ in $\R^n$ and a value $0 < \eps < 1/2$, what is the minimum 
facet complexity of an intersection of halfspaces $L$ 
such that $\dG(K, L) \leq \eps$?  
\end{quote}

\subsection{Motivation}

The broad approximation question stated above arises naturally in the context of contemporary theoretical computer science.  As alluded to earlier, an increasing number of results across different areas of TCS involve convex sets and the Gaussian distance;  taking the next step on various natural problems in these areas would seem to require an understanding of how well intersections of halfspaces can approximate convex sets in $\R^n$.  We give two specific examples below:

\begin{itemize}

\item \textbf{Testing Convex Sets:}  A number of researchers have considered the property testing problem of efficiently determining whether an unknown set $K \subseteq \R^n$ is convex versus far from convex \cite{RV04,CFSS17,BBB20,BB20,BBH23}. The Gaussian distance provides a  clean mathematical framework for this problem, and one in which a natural analogy emerges between the well-studied problem of monotonicity testing over the Boolean hypercube and convexity testing \cite{CDNSW23inprep}. Understanding how convex sets can be approximated by ``simpler objects'' under the Gaussian distance seems likely to be helpful in developing algorithms for testing convexity.  (In support of this thesis, we recall that the study of approximating linear threshold functions by ``simpler objects'' \cite{Servedio:07cc} yielded insights and technical ingredients, namely the notion of the ``critical index,'' that played an essential role in the development of efficient algorithms for testing linear threshold functions \cite{MORS:10}.) We also recall that several papers have given testing algorithms for single halfspaces \cite{MORS:10} and intersections of $k$ halfspaces \cite{DMN19, DMN21} over $\R^n$ under the Gaussian distance; combining these results with results about approximating general convex sets by intersections of halfspaces offers a potential avenue to developing testers for general convex sets.

\item \textbf{Properly Learning Convex Sets:} It has long been known \cite{BshoutyTamon:96} that monotone Boolean functions over $\{-1,1\}^n$ can be learned to any constant accuracy under the uniform distribution in time $2^{\tilde{O}(\sqrt{n})}$, via the ``low-degree'' algorithm based on the Fourier decomposition. Similarly, it has been known for some time \cite{KOS:08} that convex sets can be learned to any constant accuracy under the Gaussian distance, via a ``Hermite polynomial'' variant of the low-degree algorithm.  However, neither the algorithms of \cite{BshoutyTamon:96} nor \cite{KOS:08} are \emph{proper}: in the monotone case the output hypothesis of the learning algorithm is not a monotone function, and in the convex setting the output hypothesis is not a convex set.  Exciting recent work \cite{LRV22} has given a $2^{\tilde{O}(\sqrt{n})}$-time uniform-distribution learning algorithm for monotone Boolean functions over $\{-1,1\}^n$ that \emph{is} proper (even in the demanding \emph{agnostic learning} model, see \cite{LV23}); given this, it is a natural goal to seek a $2^{\tilde{O}(\sqrt{n})}$-time proper learning algorithm for convex sets under the Gaussian distance.  What would the output hypothesis of such a hoped-for proper learning algorithm look like? It seems quite plausible that it would be an intersection of halfspaces; this begs the question of understanding the capabilities and limitations of intersections of halfspaces as approximators for general convex sets in $\R^n$.

\end{itemize}

We further note that intersections of halfspaces have been intensively studied in many branches of ``concrete complexity theory;'' indeed, many of the TCS results alluded to earlier that involve Gaussian distance are more specifically about intersections of halfspaces, e.g.~\cite{KOS:08,Vempalafocs10,CDS19,DMN19,DKPV21,DMN21,HSSV22}. 

Finally, we note that the suite of tools which turn out to be relevant in our study link our investigations closely to theoretical computer science.  Perhaps unexpectedly, we will see that there are close connections between the study of facet complexity of polyhedral convex approximators and a number of fundamental ingredients and results (such as noise sensitivity, random restrictions, influence of variables/directions, various extremal constructions, etc.) in the \emph{analysis of Boolean functions} over the discrete domain $\{-1,1\}^n$.  From this perspective, our study can be seen as a continuation of the broad theme of exploring the emerging analogy between convex sets in $\R^n$ under the Gaussian distribution and monotone Boolean functions over $\{-1,1\}^n$ under the uniform distribution (see e.g.~\cite{DS21colt,DNS21itcs,DNS22,HSSV22}).

 

Before turning to an overview of our results and techniques, let us fix some convenient notation and terminology:

\begin{notation} \label{notation:fc-intro}
If $L \subseteq \R^n$ is an intersection of $N$ halfspaces, then as stated earlier we say that the \emph{facet complexity} of $L$ is $N$.
Given a convex set $K \subseteq \R^n$ and a value $0 \leq \eps < 1/2$, we write $\FC(K,\eps)$ to denote the minimum value $N$ such that there is a convex set $L$ that is an intersection $N$ halfspaces 
satisfying $\dG(K,L) \leq \eps.$
\end{notation}



\section{Overview of Results and Techniques}

\begin{table}[]
\centering
\renewcommand{\arraystretch}{1.25}
{
\begin{tabular}{@{}llll@{}}	
\toprule
Convex {Set} $K \sse\R^n$ & $\eps$        & Upper Bound                                                          & Lower Bound                                                                 \\ \midrule
Arbitrary $K$ with $\Vol(K) = 1-\delta$ & $c$                             & \begin{tabular}[t]{@{}l@{}}$\frac{1}{\delta} \cdot \left(n\log\pbra{\frac{1}{\delta}}\right)^{O(n)}$ \\[-0.1em] (\Cref{thm:rel-err-dud})  \end{tabular} & \hfill---~~~~~~~~                                                                                        \\[2.25em] 
$K = \ell_2$-ball {with $\Vol(K)=1/2$}          & $c$                        & \begin{tabular}[t]{@{}l@{}}$2^{O(\sqrt{n})}$ \\[-0.25em] (\Cref{thm:de-nazarov}) \end{tabular} & \begin{tabular}[t]{@{}l@{}}$2^{\Omega(\sqrt{n})}$ \\[-0.25em] (\Cref{thm:high-influence-lb}) \end{tabular}                                        \\[2.25em]
$K= \ell_1$-ball {with $\Vol(K)=1/2$}         & $c$                        & \begin{tabular}[t]{@{}l@{}}$2^{O(n^{3/4})}$  \\[-0.25em] (\Cref{thm:ell-p-approx}) \end{tabular}  & \begin{tabular}[t]{@{}l@{}}$2^{\Omega(\sqrt{n})}$ \\[-0.25em] (\Cref{thm:high-influence-lb}) \end{tabular}                                       \\[2.25em] 
Non-explicit $K$~~ & 
 \begin{tabular}[t]{@{}l@{}}$\frac12 - \tau$, {for} \\[-0.25em] {$\tau = \omega (n^{-1/2} \log n)$} \end{tabular}
 & ~~~~~~~~~--- & \begin{tabular}[t]{@{}l@{}}$2^{\Omega(\tau\cdot \sqrt{n})}$ \\[-0.25em] (\Cref{lb:non-explicit-convex})\end{tabular} \\[2.25em]
$K$ s.t. $\GNS_{n^{-c}}(K) \geq \frac{1}{2} - n^{-c}$~~ &  ${\frac{1}{2} - n^{-c}}$                                                                 & ~~~~~~~~~---~~~      & \begin{tabular}[t]{@{}l@{}}$2^{n^{\Omega(1)}}$ \\[-0.25em] (\Cref{thm:high-GNS-hard}) \end{tabular} \\ \bottomrule
\end{tabular}
}
\caption{A summary of some of our bounds on $\FC(K,\eps)$ (see~\Cref{notation:fc-intro}) for convex sets $K\sse\R^n$, where $c$ is any sufficiently small absolute constant.}
\label{tab:results}
\end{table}

A partial summary of some of our main results is presented above in~\Cref{tab:results}. 

\subsection{Overview of Positive Results}

We start with our positive results on approximability of convex sets under the Gaussian distance by intersections of halfspaces. 

\paragraph{Universal Approximation via Hausdorff Distance Approximation.}  As a warmup, we begin in \Cref{subsec:dudley} by giving a fairly simple ``universal approximation'' result that upper bounds $\FC(K,\eps)$ for any convex set $K \subseteq \R^n$.  The key observation (implicit in the proof of \Cref{thm:gaussian-dudley}) is that for any convex set $K$, if $L$ is an intersection of halfspaces which is a high-accuracy \emph{outer}\footnote{Recall that $L$ is an \emph{outer} (respectively \emph{inner}) approximator to $K$ if $K\sse L$ (respectively $L\sse K$).} approximator of $K$ under the Hausdorff distance measure, then $L$ is also a (slightly lower accuracy) $\dG$-approximator for $K$.  (Recall that the Hausdorff distance between two sets $S,T \subseteq \R^n$ is 
\[\max\cbra{\sup_{s \in S}\|s-T\|, \sup_{t \in S}\|t - S\|} \quad\text{where}~ \|x-Y\| = \inf_{y \in Y}\|x-y\|\] and $\|x-y\|$ is the Euclidean distance between $x$ and $y$.)  This is easy to establish by simply integrating Keith Ball's universal $O(n^{1/4})$ upper bound \cite{Ball:93} on the \emph{Gaussian surface area} of any convex set (see \Cref{thm:ball}), using the fact that for any $t > 0$ and any convex set $K$, the ``$t$-enlargement'' $K_t$ of $K$ is also a convex set (see the proof of \Cref{thm:gaussian-dudley}).

Combining the above observation with standard Gaussian tail bounds and classical upper bounds on the facet complexity of an outer Hausdorff approximator for any convex body with bounded radius, we obtain the following:

\begin{theorem} [Informal version of \Cref{thm:gaussian-dudley}] \label{thm:gaussian-dudley-informal}
For any convex set $K$ and any $\eps > 0$, we have $\FC(K,\eps) \leq (n/\eps)^{O(n)}.$
\end{theorem}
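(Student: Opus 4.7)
My plan is exactly the one foreshadowed in the excerpt: approximate $K$ in Hausdorff distance by a polytope inside a Gaussian-typical ball, and then convert the Hausdorff bound into a Gaussian bound using Ball's $O(n^{1/4})$ upper bound on the Gaussian surface area of any convex set. First I would truncate. Set $R := C\sqrt{n + \log(1/\eps)}$ for a sufficiently large absolute constant $C$, so that the standard Gaussian tail bound gives $\Pr_{\bx \sim N(0,I_n)}[\|\bx\| > R] \leq \eps/3$, and let $K' := K \cap B(0,R)$. Then $\dG(K,K') \leq \eps/3$, the set $K'$ is convex, and $K'$ is contained in a ball of polynomially bounded radius --- exactly the regime in which classical polyhedral approximation results apply.

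Next I would invoke the standard convex-geometry fact that any convex body inside $B(0,R)$ admits, for every $\eta > 0$, an outer polyhedral approximator $L$ with $K' \subseteq L \subseteq (K')_\eta$ (the Minkowski $\eta$-enlargement) using $(R/\eta)^{O(n)}$ halfspaces. A direct construction works: take an $(\eta/R)$-net $\mathcal{N}$ of the unit sphere $S^{n-1}$, of cardinality $(R/\eta)^{O(n)}$, and let $L := \bigcap_{v \in \mathcal{N}}\{x \in \R^n : \langle v,x\rangle \leq h_{K'}(v)\}$, where $h_{K'}$ denotes the support function of $K'$; an elementary angular approximation argument shows the Hausdorff distance between $L$ and $K'$ is at most $O(\eta)$.

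The crux is then to translate this Hausdorff bound into a Gaussian bound. Since $K' \triangle L \subseteq (K')_\eta \setminus K'$, it suffices to bound the Gaussian volume of that annular shell. Each intermediate Minkowski sum $(K')_t$ for $t \in [0,\eta]$ is again convex, so Ball's theorem bounds $\GSA((K')_t) \leq c_0 n^{1/4}$ uniformly in $t$. Integrating the Gaussian Steiner-type identity $\frac{d}{dt}\dG(\emptyset,(K')_t) = \GSA((K')_t)$ over $t \in [0,\eta]$ then yields $\dG(K',L) \leq c_0 \eta n^{1/4}$. Choosing $\eta = \Theta(\eps/n^{1/4})$ balances this against the truncation error to give $\dG(K,L) \leq \eps$, while the facet complexity becomes $(R/\eta)^{O(n)} = \bigl(n^{3/4}\sqrt{n + \log(1/\eps)}/\eps\bigr)^{O(n)} = (n/\eps)^{O(n)}$, as desired.

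The main obstacle is the integration step above: I need to argue rigorously that the Gaussian volume of an outer $\eta$-enlargement of a convex set is bounded by $\eta$ times the supremum of the Gaussian surface area along the enlargement. For $C^2$ convex bodies this is a direct Steiner-type computation, but for general convex bodies $\partial K'$ could be non-smooth, and a careful argument requires approximating $K'$ by smooth convex bodies and invoking lower semicontinuity of Gaussian perimeter, together with the fact that Ball's dimension-dependent bound $c_0 n^{1/4}$ is uniform across the enlargement parameter. Everything else (the Gaussian tail bound, the $(\eta/R)$-net construction of an outer Hausdorff approximator, and the parameter balancing) is routine.
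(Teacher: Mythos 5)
Your proposal takes essentially the same route as the paper's proof of \Cref{thm:gaussian-dudley}: truncate $K$ to a ball capturing all but an $O(\eps)$ fraction of the Gaussian mass, build an outer Hausdorff $\eta$-approximator, and integrate Ball's $O(n^{1/4})$ bound over the $\eta$-enlargement shell to convert Hausdorff error into Gaussian error, with $\eta = \Theta(\eps/n^{1/4})$. The only minor differences are that you construct the Hausdorff approximator yourself from a sphere net rather than citing Bronstein--Ivanov (\Cref{thm:dudley}), and that the co-area integration step you flag as delicate is taken as immediate in the paper --- which is justified, since for any convex $K$ the map $t\mapsto\Vol(K_t)$ is absolutely continuous with derivative equal to $\GSA(K_t)$, and Ball's bound applies uniformly to every $K_t$ because each $K_t$ is itself convex.
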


\paragraph{A ``Relative-Error'' Universal Approximation Sharpening of \Cref{thm:gaussian-dudley-informal}.}  Suppose that $K \subseteq \R^n$ is a convex set whose Gaussian volume is very large, i.e.~$1-\delta$ where $\delta$ is very small.  In this situation the trivial approximator which is all of $\R^n$ already has $\dG(K,\R^n)=\delta$, so a natural goal for approximation is to achieve small error \emph{relative to} the $\delta$ amount of mass which lies outside of $K$, i.e. to construct an intersection of halfspaces $L$ for which 
\[\dG(K,L) \leq \eps \delta.\]  
While \Cref{thm:gaussian-dudley-informal} shows that $(n/(\eps \delta))^{O(n)}$ halfspaces suffice for this, in \Cref{subsec:good-dudley} we give a stronger result which has a significantly improved dependence on $\delta$:

\begin{theorem} [Informal version of \Cref{thm:rel-err-dud}]
\label{thm:rel-err-dud-informal}
Let $0<\eps,\delta<1$ and let $K \subsetneq \R^n$ be a  convex set with 
$\Vol(K) = 1-\delta.$  Then 
\[\FC(K,\eps \delta) \leq 
\frac{1}{\delta} \cdot \pbra{\frac{n}{\eps}\log\pbra{\frac{1}{\delta}}}^{O(n)}.
\]
\end{theorem}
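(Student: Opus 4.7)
My approach is to adapt the proof of \Cref{thm:gaussian-dudley-informal} with parameter choices tailored to exploit the high Gaussian volume $\Vol(K)=1-\delta$. Recall that the proof of \Cref{thm:gaussian-dudley-informal} proceeds by (i) picking a truncation radius $R$ so that the Gaussian tail $\Pr_{\bx \sim N(0,I_n)}[\|\bx\|>R]$ is at most the target error, (ii) setting $K' := K \cap B(0, R)$ and constructing an outer Hausdorff-$\eta$ polyhedral approximator $L$ of $K'$ with $(R/\eta)^{O(n)}$ facets via the classical Bronshteyn--Ivanov theorem, and (iii) bounding $\Pr[L\setminus K']$ by $\int_0^\eta \Gamma((K')_t)\,dt = O(\eta \cdot n^{1/4})$ via Ball's theorem on the Gaussian surface area of the (convex) enlargements $(K')_t$. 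Together these yield $\dG(K,L) \leq \Pr[\|\bx\|>R] + O(\eta n^{1/4})$.

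The first and easier improvement is to decouple the truncation radius $R$ from the overall target error $\eps\delta$. Instead of the naive choice $R = \Theta(\sqrt{n+\log(1/(\eps\delta))})$, which would lead to a $\log(1/(\eps\delta))$ factor in the bound, I would choose $R = \Theta(\sqrt{n+\log(1/\delta)})$, which already makes $\Pr[\|\bx\|>R] \leq \delta/10$. This puts only $\log(1/\delta)$ inside the eventual $(\cdots)^{O(n)}$ factor. Setting $\eta = \Theta(\eps\delta/n^{1/4})$ so that $O(\eta n^{1/4})$ absorbs the rest of the error budget, the Bronshteyn--Ivanov count $(R/\eta)^{O(n)}$ evaluates (after simplification) to $\left(\tfrac{n\log(1/\delta)}{\eps}\right)^{O(n)} \cdot \left(\tfrac{1}{\delta}\right)^{O(n)}$.

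This already matches the $\left(\tfrac{n\log(1/\delta)}{\eps}\right)^{O(n)}$ portion of the target bound, but has a wasteful $(1/\delta)^{O(n)}$ blow-up rather than the promised linear $1/\delta$ prefactor. The main obstacle, and the essential new content of the proof, is sharpening the dependence on $1/\delta$ from $(1/\delta)^{O(n)}$ down to a single factor of $1/\delta$: this must exploit the high Gaussian volume of $K$ in a way that Ball's \emph{universal} $O(n^{1/4})$ GSA bound does not. A natural route is a volume-sensitive refinement, showing that for convex $K'$ with $\Vol(K') = 1 - O(\delta)$, the Gaussian mass of the shell $(K')_\eta\setminus K'$ scales with $\eta$ together with $1-\Vol(K')$ (as happens, for instance, for halfspaces of Gaussian volume $1-\delta$, whose Gaussian surface area is $O(\delta\sqrt{\log(1/\delta)})$); such a bound would allow $\eta = \Theta(\eps/\sqrt{\log(1/\delta)})$, eliminating the $(1/\delta)^{O(n)}$ factor entirely. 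An alternative is a multi-scale construction that pays a separate, \emph{linear} $1/\delta$ cost to handle the outer tail of $K$ (where the Gaussian density is exponentially small and can be covered economically) while running a finer Hausdorff approximator of complexity $\left(\tfrac{n\log(1/\delta)}{\eps}\right)^{O(n)}$ in the interior region. Pinning down the correct argument to achieve the multiplicative $1/\delta$ prefactor is the delicate step I would expect to require the most care.
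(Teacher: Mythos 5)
You correctly diagnose the crux: the naive Hausdorff$+$Ball argument wastes a $(1/\delta)^{O(n)}$ factor, and reducing this to a single multiplicative $1/\delta$ is the whole content of the theorem. But neither of your two proposed routes is carried out, and both diverge from what the paper actually does. Route (a) hinges on an unproven ``reverse isoperimetric'' lemma, namely that $\GSA(S)=O(\delta\cdot\mathrm{polylog})$ for every convex $S$ with $\Vol(S)=1-\delta$; the paper neither proves nor uses such a statement, it is unclear it holds at the sharp $\delta\sqrt{\log(1/\delta)}$ strength you posit (for the cube $[-a,a]^n$ of Gaussian volume $1-\delta$ one computes $\GSA\approx\delta\sqrt{\log(n/\delta)}$, with an extra $\log n$ inside the square root), and establishing it for general convex bodies is itself a nontrivial problem. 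Route (b) is a reasonable slogan but you supply no construction or analysis. There is also a small bug at the outset: choosing $R=\Theta(\sqrt{n+\log(1/\delta)})$ gives truncation error $\Theta(\delta)$, which exceeds the total budget $\eps\delta$ whenever $\eps<1/10$, so $R$ must in fact scale as $\Theta(\sqrt{n+\log(1/(\eps\delta))})$.

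The paper's actual proof (self-contained for $\delta<1/10$, where the general-$\delta$ case reduces to \Cref{thm:gaussian-dudley}) abandons the Hausdorff/GSA template entirely and works direction-by-direction. For $v\in\S^{n-1}$, set $\ell(v)=\sup_{x\in K}\langle x,v\rangle$ and let $m(\ell(v))$ be the $\chi^2(n)$ tail mass beyond $\ell(v)^2$, so $\Ex_{\bv}\sbra{m(\ell(\bv))}=\delta$ and in fact $m(\ell(v))\leq\delta$ pointwise. Directions with $m(\ell(v))\leq\eps\delta/4$ are discarded (they contribute at most $\eps\delta/4$ in aggregate); the remaining directions are partitioned into buckets by the geometric scale of $m(\ell(v))$, with ratio $1+\tau$ where $\tau=\eps\delta/8$, giving only $O\pbra{\frac{1}{\eps\delta}\log\frac{1}{\eps\delta}}$ buckets --- this is precisely where the linear $1/\delta$ prefactor arises. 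Each bucket is covered by an angular $\theta^*$-net of $(12/\theta^*)^n$ points, with $\theta^*$ depending on $\delta$ only through $\log(1/\delta)$ (via the truncation radius $R$), and tangent hyperplanes to $K$ are placed at the net points. The per-direction error estimate uses $B(1)\subseteq K\subseteq B(R)$, elementary trigonometry, and $\chi(n)$-pdf bounds rather than any surface-area inequality; the total facet count is (number of buckets) times (net size per bucket), giving $\frac{1}{\delta}\cdot\pbra{\frac{n}{\eps}\log\frac{1}{\delta}}^{O(n)}$ as claimed.
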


\noindent 
(We remark that the strengthening which \Cref{thm:rel-err-dud} achieves over \Cref{thm:gaussian-dudley} will be crucial for the construction of our $\ell_p$ ball approximators, discussed below.)

We briefly explain the main idea underlying the construction of the approximator  of \Cref{thm:rel-err-dud}.  First, we recall a basic property of the $N(0,I_n)$ distribution, which is that along any ray $\{tv: t \geq 0\}$ from the origin in $\R^n$, the distribution of Gaussian mass along that ray follows the chi-distribution (see \Cref{subsec:prelims-gaussian}).  Since $\Vol(K)=1-\delta$, this means that the ``amount of chi-distribution mass'' which is ``lopped off'' by $K$ along direction $v$, averaged over all directions $v$, is $\delta$.  The high-level idea of our construction of the approximator $L$ is to place tangent hyperplanes on the boundary of $K$ in such a way as to ``lop off'' {at least a} $(1-\eps)$-fraction of the total $\delta$ amount of Gaussian mass that lies outside the set $K$. We do this via an analysis that proceeds direction by direction: We can ignore directions in which $K$ ``lops off'' only a very small amount (less than $\eps \delta/4$, say) of the chi-distribution's mass. For the other directions, we use a bucketing scheme, a suitable net construction for the points in each bucket, and a mixture of careful geometric arguments and tail bounds to argue that not too many halfspaces are required to adequately handle all of the other directions.

\paragraph{\bf Sub-Exponential Approximation for Specific Convex Sets.}
Going beyond the universal approximation results mentioned above, it is natural to wonder whether improved approximation bounds can be obtained for specific interesting convex sets.
Perhaps the two most natural convex sets to consider in this context are the $\ell_1$ and $\ell_2$ balls
of Gaussian volume $1/2$, which we briefly discuss below.\footnote{It is also natural to wonder about the $\ell_\infty$ ball, but the $\ell_\infty$ ball $\cbra{x \in \R^n \ : \ \|x\|_\infty \leq r}$ is an intersection of $2n$ halfspaces $-r \leq x_i \leq r, i \in [n]$.}
  
The $\ell_1$ ball of Gaussian volume $1/2$, also known as the ``spectrahedron'' or ``cross-polytope,'' is the convex set 
\[
 \cbra{x \in \R^n \ : \|x\|_1 \leq c n},
\]
where $c = \sqrt{2/\pi} \pm o_n(1)$ (this is an easy consequence of the Berry-Esseen theorem).  This set is an intersection of exactly $2^n$ halfspaces, namely
\[
 b_1 x_1 + \cdots + b_n x_n \leq c n
\text{~for all~}(b_1,\dots,b_n) \in \{-1,1\}^n,
\]
and it is natural to wonder whether this set can be approximated as an intersection of significantly fewer than $2^n$ halfspaces.  {(As we will explain later, it is slightly more convenient for us to work with a slight rescaling of this ball, namely the set 
\[B_1 := \{x \in \R^n \ : \ \|x\|_1 \leq \sqrt{{\frac 2 \pi}} n\},\] which can easily be shown to have $\Vol(B_1) = 1/2 \pm o_n(1).$)}

The $\ell_2$ ball of Gaussian volume $1/2$ is simply the Euclidean ball
\[
\cbra{ x \in \R^n \ : \|x\|^2 \leq \text{median}(\chi^2(n))}.
\]
(Of course, writing this set \emph{exactly} as an intersection of halfspaces requires infinitely many halfspaces.)
Using well-known bounds on the chi-squared distribution we have that the Euclidean ball of radius ${\sqrt{n}}$, which we denote $B(\sqrt{n})$, has Gaussian volume $1/2 + o_n(1)$; it will be slightly more convenient for us to use this as the $\ell_2$ ball that we seek to approximate. 

\medskip

\paragraph{A 
$2^{O(\sqrt{n})}$-Facet 
Approximator for the $\ell_2$ Ball.}
For several standard distance measures, including the Hausdorff distance and the symmetric-difference measure, the $\ell_2$ ball is the ``hardest case'' for approximation by  intersections of halfspaces, matching or essentially matching known upper bounds for universal approximation.  For example, it is known \cite{Bronstein08} that any intersection of halfspaces which is an $\eps$-approximator to $B(\sqrt{n})$ in Hausdorff distance must have facet complexity $\Omega((\sqrt{n}/\eps)^{(n-1)/2})$, matching the universal approximation upper bound for Hausdorff distance to within a constant factor \cite{AFM12}.
For the symmetric-difference measure, Ludwig, Sch\"{u}tt and Werner \cite{LSW06} showed that if $L$ is an intersection of $M$ halfspaces for which 
{\[{\frac {\mathrm{Leb}(L \ \triangle \ B(\sqrt{n}))}{\mathrm{Leb}(B(\sqrt{n}))}} \leq c\] (here $\mathrm{Leb}$ denotes volume under the standard Lebesgue measure)},
where $c>0$ is some absolute constant, then $M$ must be at least $2^{\Omega(n)}$.

In contrast, for the Gaussian distance we are able to exploit the rotational symmetry of the standard Gaussian distribution and tail bounds on the chi-distribution to get an approximation of the $\ell_2$ ball which uses much fewer than $2^n$ facets, and hence is much better than the universal approximation upper bounds given by \Cref{thm:gaussian-dudley-informal} or \Cref{thm:rel-err-dud-informal}.  For any constant $\eps$, we give a probabilistic construction of a $2^{O(\sqrt{n})}$-facet polytope which is an $\eps$-approximator of the $\ell_2$ ball.  This construction is analogous to a probabilistic construction of O'Donnell and Wimmer \cite{o2007approximation}, which shows that a variant of a random monotone CNF construction due to Talagrand \cite{Talagrand:96} gives an $\eps$-accurate approximator for the Boolean Majority function.  We similarly modify  a probabilistic construction of a convex body due to Nazarov~\cite{Nazarov:03}, and use it to prove the following in \Cref{sec:nazarov-ub}:

\begin{theorem} [Informal version of \Cref{thm:de-nazarov}]
\label{thm:de-nazarov-informal}
Let $B_2$ denote the $\ell_2$ ball of radius $\sqrt{n}$ in $\R^n$ (so $\Vol(B_2) = 1/2 + o_n(1)$). Then for any constant $\eps > 0$,
\[\FC(B_2,\eps) \leq 2^{O(\sqrt{n})}.\]
\end{theorem}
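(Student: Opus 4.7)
The plan is to construct $L$ probabilistically as an intersection of $N = 2^{O(\sqrt{n}/\eps)}$ random halfspaces, directly mirroring the Talagrand/O'Donnell--Wimmer random monotone CNF construction for approximating Majority. Concretely, I would pick $v_1,\dots,v_N$ iid uniform on $\S^{n-1}$, choose a threshold $R = \Theta(n^{1/4})$ calibrated so that the random polytope $L := \bigcap_{i=1}^N \{x \in \R^n : \langle v_i, x\rangle \leq R\}$ is ``centered'' on the sphere of radius $\sqrt{n}$, and show $\mathbb{E}_v[\dG(B_2, L)] \leq \eps$, so that some realization yields a deterministic approximator of facet complexity $N = 2^{O(\sqrt{n})}$ for constant $\eps$.

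The analysis hinges on concentration of $M(x) := \max_i \langle v_i, x\rangle$ for each fixed test point $x$ with $\|x\| = r$. Writing $v_i = g_i/\|g_i\|$ for iid $g_i \sim N(0,I_n)$ and decomposing $g_i = G_i \widehat{x} + g_i^\perp$ where $G_i := \langle g_i, \widehat{x}\rangle \sim N(0,1)$, one obtains $\langle v_i, x\rangle = r G_i/\|g_i\|$. Chi-square concentration pins down each $\|g_i\|$ to $\sqrt{n}(1 \pm O(n^{-1/4}))$ uniformly over $i \in [N]$ with probability $1 - o_n(1)$, so up to negligible multiplicative corrections $M(x) \approx (r/\sqrt{n})\max_i G_i$. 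By the sharp Gumbel-type concentration of the maximum of iid Gaussians, and crucially by centering $R$ at the \emph{median} of $\max_i G_i$ (rather than just at $\sqrt{2\log N}$), the event $\{x \in L\}$ exhibits a sharp probabilistic transition as $r$ crosses $\sqrt{n}$, with a transition window of width $\Delta = O(\sqrt{n}/\log N) = O(1/C)$ when $N = 2^{C\sqrt{n}}$.

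Putting this together via Fubini: outside the radial ``transition shell'' $\mathcal{S} := \{x : |\|x\| - \sqrt{n}| \leq \Delta\}$ we have $\mathrm{Pr}_v[x \in L \triangle B_2] = o(1)$, while the Gaussian mass of $\mathcal{S}$ itself is $O(\Delta) = O(1/C)$, because the density of $\|\bx\| - \sqrt{n}$ under $N(0,I_n)$---essentially the chi density near its mode---is bounded by an absolute constant. Choosing $C = \Theta(1/\eps)$ then yields $\mathbb{E}_v[\dG(B_2, L)] \leq \eps$ with $N = 2^{O(\sqrt{n})}$ halfspaces. The main technical obstacle is obtaining the Gumbel-type concentration of $\max_i G_i$ at the sharp $\Theta(1/\sqrt{\log N})$ scale: a naive Gaussian union bound only controls the maximum at the cruder $\Theta(\log\log N/\log N)$ scale relative to $\sqrt{2\log N}$, so reaching the sharp scale requires complementing the upper bound with a Poisson/second-moment lower bound on the number of $G_i$'s exceeding a threshold just below the median. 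A subsidiary concern---the non-independence between $G_i$ and $\|g_i\|$---is benign because $\|g_i\|^2 - G_i^2 = \|g_i^\perp\|^2 \sim \chi^2_{n-1}$ is independent of $G_i$ and $|G_i| = o(\sqrt{n})$ with overwhelming probability, so the normalization only introduces an $O(n^{-1/4})$ relative perturbation that can be absorbed into $\Delta$.
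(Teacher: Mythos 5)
Your high-level strategy matches the paper's exactly: probabilistically construct a random polytope, swap the order of integration to analyze the error point-by-point, and give up a thin radial shell around $\|x\| = \sqrt{n}$ whose Gaussian mass you bound using the fact that the chi density is $O(1)$ near its mode. However, there are two places where you make the analysis substantially harder than it needs to be, and one of them is a real conceptual misstep.

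\textbf{Normalization.} You take $v_i$ uniform on $\S^{n-1}$ (i.e., $g_i/\|g_i\|$), which forces you to control $\|g_i\|$ uniformly over $2^{C\sqrt n}$ indices and absorb the resulting $O(n^{-1/4})$ multiplicative perturbation. The paper (following Nazarov's construction) uses \emph{unnormalized} normals $\bH_i = \{x : \langle x, g^{(i)}\rangle \leq w\}$ with $g^{(i)} \sim N(0,I_n)$. Then $\langle x, g^{(i)}\rangle \sim N(0,\|x\|^2)$ exactly, giving the closed form $\Pr[x \in \bK] = \Phi(w/\|x\|)^s$ with zero normalization bookkeeping. This one change eliminates your entire chi-square concentration step.

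\textbf{The ``main technical obstacle'' is a phantom.} You flag obtaining Gumbel-scale two-sided concentration of $\max_i G_i$ as the crux, requiring a Poisson/second-moment lower bound to complement a union bound. But because the halfspace normals are i.i.d., the CDF of $M(x) = \max_i \langle v_i, x\rangle$ is an \emph{exact} product: $\Pr[M(x) \leq R] = q(\|x\|)^N$ where $q(r) := \Pr_v[\langle v, x\rangle \leq R]$ is the probability for a single halfspace. There is no ``maximum of correlated variables'' here and nothing to control at a Gumbel scale --- the distribution of the max is completely determined. The paper's proof of \Cref{thm:de-nazarov} leverages this: it sets up the two algebraic conditions in \Cref{eq:naz-w-choice,eq:naz-s-choice}, namely
$\frac{1-\Phi(w/\dout)}{1-\Phi(w/\din)} = \tfrac{4}{\eps}\ln\tfrac{4}{\eps}$ and $s = \tfrac{\eps}{4}\bigl(1-\Phi(w/\din)\bigr)^{-1}$,
which force $\Phi(w/\din)^s \geq 1-\eps/4$ and $\Phi(w/\dout)^s \leq \eps/4$ by elementary manipulations of $(1-a)^b$. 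Solving for $w$ via standard Gaussian tail bounds then directly yields $s = 2^{O(\sqrt{n}\cdot \eps^{-1}\log(1/\eps))}$. No asymptotics of the maximum, no median centering, no second moment.

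One minor quantitative inaccuracy: you claim the error outside the shell is $o(1)$ while the shell has width $\Delta = O(1/C)$; in fact with $N = 2^{C\sqrt n}$, achieving $o(1)$ error requires $\Delta = \Omega(\log\log N/C)$. The paper sidesteps this by aiming for $\eps/4$ (not $o(1)$) error outside the shell, which only needs $\Delta$ of order $\ln(1/\eps)/C$. For constant $\eps$ this is harmless, but it's worth being precise since it determines the $\log(1/\eps)$ factor in the exponent.

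In short: your plan would succeed, but the route via Gumbel concentration and second-moment arguments is solving a problem independence has already solved for you, and unnormalized Gaussian normals make the single-halfspace probability a literal $\Phi$ rather than a Beta tail.
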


\paragraph{A $2^{O(n^{3/4})}$-Facet Approximator for the $\ell_1$-Ball.} 
Since the $\ell_1$ ball does not enjoy the same level of rotational symmetry as the $\ell_2$ ball, it is natural to wonder whether it can be similarly approximated with a sub-exponential number of facets.
As we now explain, another motivation for this question comes from considering a result of O'Donnell and Wimmer on the inability of small CNF formulas to approximate the ``tribes'' Boolean DNF formula over $\zo^n$, in the context of a  recently-explored analogy between monotone Boolean functions and symmetric convex sets~\cite{DS21colt,DNS21itcs,DNS22,HSSV22}. 

In \cite{o2007approximation} O'Donnell and Wimmer showed that any CNF formula that computes the $n$-variable DNF tribes function correctly on $(1-\eps) \cdot 2^n$ inputs, for $\eps = 0.1$, must have $2^{\Omega(n/\log n)}$ clauses.  As detailed in \cite{DS21colt,DNS22}, there is a natural correspondence between $s$-clause CNF formulas over $\{0,1\}^n$ and symmetric intersections of $2s$ halfspaces over $N(0,I_n)$ (this correspondence is at the heart of the main lower bound of \cite{DS21colt} and several of the results of \cite{HSSV22,DNS22}).  Thus, in seeking convex sets that may require $2^{\tilde{\Omega}(n)}$ halfspaces to approximate under $N(0,1)^n$, it is natural to look for a Gaussian space convex analogue of of the DNF tribes function over $\{0,1\}^n$.  Now, 
\begin{itemize}

\item The DNF tribes function is the Boolean dual of the CNF Tribes function; 
\item The Gaussian-space polytope corresponding to the CNF Tribes function is the $\ell_\infty$ ball \cite{DNS22,HSSV22}; and 
\item The polytope which is dual to the $\ell_\infty$ ball is the $\ell_1$ ball.  

\end{itemize}
Thus the  following question naturally suggests itself:  
\begin{quote}
	Does the $\ell_1$ ball $B_1$ in $\R^n$ 
	require $2^{\tilde{\Omega}(n)}$ halfspaces for constant-accuracy approximation, analogous to how the DNF Tribes function requires $2^{\tilde{\Omega}(n)}$-clause CNFs for constant-accuracy approximation?
\end{quote}
While the above conjecture may seem intuitively plausible (and indeed, the authors initially tried to prove it), it turns out to be false. In \Cref{sec:ell-p}, we show that the $\ell_1$ ball can be approximated to any constant accuracy as an intersection of sub-exponentially many halfspaces:

\begin{theorem} [Special case of~\Cref{thm:ell-p-approx}]
\label{thm:ell-1-approx-informal}
Let 
$B_1$ denote the origin-centered $\ell_1$ ball, i.e. 
\[B_1 := \left\{x \in \R^n \ : \ \|x\|_1 \leq \sqrt{{\frac 2 \pi}} n\right\}.\]
Then for any constant $\eps$, we have
\[\FC({B_1},\eps) \leq
2^{O(n^{3/4})}.\]
\end{theorem}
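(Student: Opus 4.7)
The plan is to build the approximator $L$ by partitioning the coordinates into a small number of large blocks and imposing a tight $\ell_1$-like constraint on each block, then controlling $\dG(L,B_1)$ via a large-deviation analysis of the resulting i.i.d.\ block sums.

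Concretely, partition $[n]$ into $m := n^{1/4}$ disjoint blocks $B_1,\ldots,B_m$ of size $k := n^{3/4}$. For each block $B_j$, the constraint $\|x|_{B_j}\|_1 \leq t$ is the intersection of $2^k$ halfspaces of the form $\sum_{i \in B_j} b_i x_i \leq t$ for $b \in \{-1,1\}^{B_j}$. I would then define
\[
L \;:=\; \bigcap_{j=1}^{m} \cbra{x \in \R^n : \|x|_{B_j}\|_1 \leq t},
\]
with threshold $t$ tuned so that $\Vol(L) \approx 1/2 = \Vol(B_1)$. The total facet complexity is $m \cdot 2^k = n^{1/4} \cdot 2^{n^{3/4}} = 2^{O(n^{3/4})}$, matching the target bound.

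For the analysis, set $Y_j := \|\bx|_{B_j}\|_1$ for $\bx \sim N(0,I_n)$; the $Y_j$ are i.i.d.\ sums of $k$ half-normals, with mean $\mu = ck$ (where $c = \sqrt{2/\pi}$) and variance $\Theta(k)$. Then $\{\bx \in L\} = \{\max_j Y_j \leq t\}$ and $\{\bx \in B_1\} = \{\sum_j Y_j \leq m\mu\}$, so bounding $\dG(L,B_1)$ reduces to estimating the two joint probabilities
\[
\Pr\sbra{\max_j Y_j \leq t,\; \textstyle\sum_j Y_j > m\mu} \quad\text{and}\quad \Pr\sbra{\max_j Y_j > t,\; \textstyle\sum_j Y_j \leq m\mu}.
\]
The technical ingredient is a Cram\'er-type joint large-deviation bound: conditioning on the per-block caps $\{Y_j \leq t\}$ yields an i.i.d.\ sequence of upper-truncated variables whose sum has mean strictly below $m\mu$, and, with $t$ calibrated correctly, the chance the capped sum still exceeds $m\mu$ decays fast enough (in the $m = n^{1/4}$ summands regime) to beat the $\eps$ target.

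The main obstacle is that a single-partition, single-threshold construction does not, by itself, achieve $\eps$-approximation: tuning $t$ so that $\Vol(L) = 1/2$ yields $\dG(L, B_1) = \Theta(1)$, because ``spread-out'' violations in which each $Y_j$ sits mildly above $\mu$ (but $\sum_j Y_j$ just barely exceeds $m\mu$) slip through the per-block caps. Closing this gap while preserving the $2^{O(n^{3/4})}$ facet count appears to require an additional ingredient, such as: (a) intersecting $L$ with analogous polytopes built from a polynomial number of different (random) partitions, so any spread-out violation is caught in \emph{some} partition; (b) augmenting $L$ with a small Cram\'er-tilted family of ``global'' halfspaces aligned with the most likely escape direction of $\bx$ conditioned on $\bx \notin B_1$; or (c) first replacing $B_1$ by a slightly smoother proxy $K'$ (for instance an $\ell_{1+\eta}$ ball with $\eta = \Theta(1/\log n)$) that is $\eps$-close to $B_1$ in $\dG$ and for which the per-block argument tightens. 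The heart of the proof is the joint tail bound on $(\max_j Y_j, \sum_j Y_j)$, accessed through the Cram\'er-type large-deviation machinery advertised in the introduction.
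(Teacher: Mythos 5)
You have correctly identified the right building block---capped $\ell_1$ constraints on subsets of $\Theta(n^{3/4})$ coordinates---and correctly diagnosed why a single fixed partition cannot work: a point with $\|x\|_1$ marginally above the boundary has its excess of order $\sqrt{n}$ spread roughly uniformly over $[n]$, so each block's $\ell_1$-sum exceeds its mean by only $\Theta(n^{1/4})$, far below a cap tuned to give $\Vol(L) \approx 1/2$ (which sits $\Theta(\sqrt{\log n})$ block standard deviations, i.e.~$\Theta(n^{3/8}\sqrt{\log n})$, above the block mean). But none of your proposed fixes (a)--(c) is what is actually done, and (a) is off by an enormous quantitative margin: the paper's construction (\Cref{thm:approx-Bp-by-intersection-of-juntas}) uses $M = 2^{\Theta(n^{1/4})}$ \emph{independently chosen} random subsets of size $\Theta(n^{3/4})$, each drawn without replacement from $[n]$ but chosen independently across subsets, so the subsets overlap rather than partition $[n]$; a polynomial number of random partitions is nowhere near enough. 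The facet budget still works out, since $2^{\Theta(n^{1/4})} \cdot 2^{\Theta(n^{3/4})} = 2^{\Theta(n^{3/4})}$.

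The conceptual shift you are missing is that one should not try to catch a gross violation in any block; the approximator works as an \emph{amplification} of a weak per-constraint bias. Set the per-subset threshold $\theta$ at roughly $\Theta(n^{1/8})$ standard deviations above the subsampled mean---far higher in standardized units than your implicit $\Theta(\sqrt{\log n})$---so that each constraint rejects a light point (one with $\|x\|_1 \leq \mu - \eps\sigma$, where $\sigma = \Theta(\sqrt{n})$) with probability only about $\eps/M \approx 2^{-\Theta(n^{1/4})}$. For a heavy point, the subsampled $\ell_1$-sum has its mean shifted up by $\Theta(\eps\sqrt{n}\cdot n^{3/4}/n) = \Theta(\eps n^{1/4})$, i.e.~by a mere $\Theta(\eps n^{-1/8})$ standard deviations: tiny, and far below the cap, exactly the spread-out situation you flagged. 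The point, however, is that because the tail probability is already around $2^{-\Theta(n^{1/4})}$, the \Cramer-type estimate for sampling without replacement from a finite population (\Cref{thm:hrw}) shows this microscopic shift \emph{multiplies} the rejection probability by a constant factor of roughly $\ln(1/\eps)/\eps$, to $\approx \ln(1/\eps)/M$. ANDing all $M$ constraints then yields $(1-\eps/M)^M \geq 1-\eps$ on light points and $(1-\ln(1/\eps)/M)^M \leq \eps$ on heavy points. No joint tail bound on $(\max_j Y_j,\sum_j Y_j)$ is needed, no $\ell_{1+\eta}$ smoothing (your option (c)) is used, and the \Cramer{} tilt shows up in the per-constraint rejection probability rather than as a separate family of global halfspaces (your option (b)).
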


We prove \Cref{thm:ell-1-approx-informal} using a probabilistic construction, but one which is very different from the probabilistic construction described above in the sketch of \Cref{thm:de-nazarov-informal} for the $\ell_2$ ball.  The high-level intuition is as follows:  Determining whether or not a point $x \in \R^n$ lies within $B_1$ is the same as determining whether or not the sum $|x_1| + |x_2| + \dots + |x_n|$ exceeds the threshold value $\sqrt{2/\pi}n$.  Given this, the main idea is to take a probabilistic approach which exploits both \emph{anti-concentration} and \emph{sampling}: 

\begin{itemize}

\item  Anti-concentration tells us that for $\bx \sim N(0,I_n)$, only a small fraction of outcomes of $|\bx_1|+\cdots + |\bx_n|$ will lie very close to the boundary value of $\sqrt{2/\pi}n$.  Thus, at the cost of a small error, we can assume that a typical outcome $x$ of $\bx$ either has $|x_1| + \cdots + |x_n| \geq \sqrt{2/\pi}n + \tau$ (call this a ``heavy'' $x$), or has $|x_1| + \cdots + |x_n| \leq \sqrt{2/\pi}n - \tau$ (call this a ``light'' $x$) for a suitable margin parameter $\tau$. 

\item Given this, we can use a sampling-based approach:   if we uniformly sample $\newN$ coordinates $\bi_1,\dots,\bi_\newN$ from $[n]$ and evaluate $|x_{\bi_1}| + \cdots + |x_{\bi_\newN}|$, then for a suitable threshold $\theta_1$ there will be a noticeable gap between (i) the (extremely large) probability that $|x_{\bi_1}| + \cdots + |x_{\bi_\newN}| \leq \theta_1$ when $x$ is a typical light point, and  (ii) the (still very large, but slightly smaller) probability that $|x_{\bi_1}| + \cdots + |x_{\bi_\newN}| \leq \theta_1$ when $x$ is a typical heavy point.

\end{itemize} 

The above gap means that by ANDing together a carefully chosen number $M$ of random sets of the form
\begin{equation}
\cbra{x \in \R^n \ : \ |x_{\bi_1}| + \cdots + |x_{\bi_\newN}| \leq \theta_1},\label{eq:juntaell1}
\end{equation}
we get an approximator for $B_1$ which is accurate on almost all of the points $x \in \R^n$ which are either light or heavy (and, as sketched above, almost all points drawn from $N(0,I_n)$ are either light or heavy, by anti-concentration). The detailed analysis requires sophisticated Cram\'{e}r-type bounds which give very tight \emph{multiplicative} control on tails of sums of random variables drawn \emph{without} replacement from a finite population of values.

 Note that each set of the form (\ref{eq:juntaell1}) is an $\newN$-variable ``junta'' $\ell_1$ ball over its $\newN$ relevant coordinates (and is also an intersection of $2^\newN$ halfspaces), so the AND of $M$ such sets is an intersection of $M2^\newN$ many halfspaces.  With careful setting of parameters we get that this is at most $2^{O(n^{3/4})}$ halfspaces, for any constant-factor approximation.

\medskip

\noindent {\bf A $2^{\tilde{O}(n^{3/4})}$-facet approximator for the $\ell_p$ ball, $1 \leq p < 2.$} \Cref{thm:de-nazarov-informal} and \Cref{thm:ell-1-approx-informal} give sub-exponential approximators for the $\ell_1$ and $\ell_2$ balls using two different sets of techniques.  We generalize the approach of \Cref{thm:ell-1-approx-informal} to give approximators of similar facet complexity for $\ell_p$ balls of volume $1/2 \pm o_n(1)$, for all $1 \leq p < 2$:

\begin{theorem} [Informal version of \Cref{thm:ell-p-approx}]
\label{thm:ell-p-approx-informal}
For $1 < p < 2$ let $B_p$ be the origin-centered $\ell_p$ balls in $\R^n$ defined in \Cref{eq:Bp}. 
 Then
for any constant $\eps$, we have
 \[
 \FC(B_p,\eps) \leq 2^{\tilde{O}(n^{3/4})}.
 \]
\end{theorem}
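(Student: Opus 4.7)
I would adapt the sampling-based construction from the proof of \Cref{thm:ell-1-approx-informal} by replacing the sum $\sum_i |x_i|$ with the power sum $\sum_i |x_i|^p$, and then address the one new difficulty: for $1 < p < 2$ the $m$-variable $\ell_p$ ball is no longer a finite intersection of halfspaces. Writing $B_p = \cbra{x \in \R^n : \sum_i |x_i|^p \leq c_p n}$ for the appropriate constant $c_p = \E_{X \sim N(0,1)}[|X|^p]$, I first observe that $|X|^p$ has sub-Gaussian tails for $1 < p < 2$, so the Berry--Esseen-type anti-concentration used in the $\ell_1$ analysis should carry over: up to small Gaussian error we may restrict attention to inputs $x$ that are either \emph{heavy} ($\sum_i |x_i|^p \geq c_p n + \tau$) or \emph{light} ($\sum_i |x_i|^p \leq c_p n - \tau$) for a suitable margin $\tau = \tau(p)$.

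\paragraph{Random junta construction.} Next, I would sample $M$ independent random subsets $S_1, \dots, S_M \sse [n]$ each of size $m$, and consider the intersection
\[ L^* := \bigcap_{k=1}^M A_k, \qquad A_k := \cbra{x \in \R^n : \sum_{i \in S_k} |x_i|^p \leq \theta} \]
for a carefully chosen threshold $\theta$ (these sets play the role of \eqref{eq:juntaell1} from the $\ell_1$ construction). The required multiplicative gap between $\Pr_S[x \in A_k]$ for a typical heavy versus typical light $x$ should follow from the same \Cramer{}-type bounds for sums drawn without replacement---whose hypotheses are satisfied by $|X|^p$ for $1 < p < 2$---with $m \approx \sqrt{n}$ and $M \approx 2^{n^{3/4}}$ (and $\theta$ tuned accordingly) so that $\dG(B_p, L^*) \leq \eps/2$ and each $A_k$ has Gaussian volume $1 - \delta$ with $\delta = \Theta(1/M)$ (otherwise the $M$-fold intersection would have volume far from $1/2$).

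\paragraph{Halfspace approximation of each junta set.} Each $A_k$ is a cylinder in $\R^n$ over the $m$-dimensional $\ell_p$ ball on coordinates $S_k$, so approximating $A_k$ in the $n$-dim Gaussian measure reduces to approximating the $m$-dim $\ell_p$ ball in the $m$-dim Gaussian measure by halfspaces in $\R^m$. Since each $A_k$ has Gaussian volume $1 - \delta$ with $\delta = \Theta(1/M)$ very small, it is crucial to invoke the \emph{relative-error} result of \Cref{thm:rel-err-dud-informal} rather than the naive bound of \Cref{thm:gaussian-dudley-informal}: with target error $\eps/(2M) = \eps' \delta$ (so $\eps' = \Theta(\eps)$), this yields an outer polyhedral approximator $\tilde A_k \supseteq A_k$ in $\R^m$ with at most $\tfrac{1}{\delta} \cdot (m \log(1/\delta)/\eps')^{O(m)} \leq M \cdot (\sqrt{n} \log M)^{O(\sqrt{n})}$ facets. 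A union bound over the $M$ violation events gives $\dG(L^*, \bigcap_k \tilde A_k) \leq \eps/2$, and plugging in our choices the total facet complexity is $M^2 \cdot 2^{\tilde O(\sqrt{n})} = 2^{\tilde O(n^{3/4})}$, as required.

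\paragraph{Main obstacle.} The step I anticipate will be most delicate is establishing the \Cramer{}-type deviation gap with a clean $p$-dependence: as $p$ approaches $2$, the random variable $|X|^p$ acquires heavier tails, the multiplicative sampling gap between heavy and light inputs narrows, and both the margin $\tau$ and threshold $\theta$ must be tuned carefully as functions of $p$. This is the likely source of the polylogarithmic $\tilde O(\cdot)$ overhead in the final exponent compared to the clean $2^{O(n^{3/4})}$ bound obtained in the $\ell_1$ case, where $|X|$ has the sharpest possible (sub-Gaussian, Lipschitz) concentration.
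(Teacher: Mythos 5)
Your overall blueprint matches the paper's: approximate $B_p$ by an intersection of $M$ random $m$-coordinate ``junta'' $\ell_p$ balls using a without-replacement \Cramer{} gap between heavy and light inputs, then use the relative-error bound of \Cref{thm:rel-err-dud-informal} to turn each (very high volume) junta into a not-too-large intersection of halfspaces. The three-stage structure, the role of anti-concentration, and the identification of the relative-error bound as the crucial tool are all correct.

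The one real gap is your parameter choice $m \approx \sqrt{n}$, $M \approx 2^{n^{3/4}}$: it is incompatible with the validity range of the without-replacement \Cramer{} bound. That bound (\Cref{thm:hrw}) only gives multiplicative tail control for standardized deviations $t \lesssim m^{1/6}$. The heavy/light distinction you rely on is, after centering and scaling, a shift of size $\Delta t \approx \eps\sqrt{m}/\sqrt{n}$ in the standardized deviation, so the multiplicative tail ratio between a heavy and a light point is roughly $e^{t\cdot\Delta t}=e^{\Theta(\eps t\sqrt{m}/\sqrt{n})}$, which must be bounded away from $1$ (indeed $\Omega(\log(1/\eps)/\eps)$) for the construction to work. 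Combining $t\lesssim m^{1/6}$ with the requirement $t\sqrt{m}\gtrsim\sqrt{n}/\eps$ forces $m^{2/3}\gtrsim\sqrt{n}$, i.e.\ $m=\Omega(n^{3/4})$. With your $m=\sqrt{n}$ the constraint caps $t$ at $n^{1/12}$, making $t\cdot\Delta t = \Theta(\eps n^{-1/6})\to 0$, so each test $A_k$ becomes essentially useless at separating heavy from light. The paper's actual choices are $m=\Theta(n^{3/4})$, $t=\Theta(m^{1/6})=\Theta(n^{1/8})$, giving $M=2^{\Theta(n^{1/4})}$ (not $2^{n^{3/4}}$); coincidentally, your mis-set parameters produce the same $2^{\tilde O(n^{3/4})}$ arithmetic, but via a term-by-term accounting that the \Cramer{} bound does not support.

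A secondary omission: your parenthetical ``(otherwise the $M$-fold intersection would have volume far from $1/2$)'' establishes $\delta\geq\Omega(1/M)$ but not the matching upper bound $\delta\leq O((\log M)/M)$. The upper bound is also needed --- if $\delta$ were much larger than $1/M$, the per-junta relative-error budget $\tau=\eps/(2M\delta)$ would collapse and the $\pbra{m/\tau}^{O(m)}$ factor in \Cref{thm:rel-err-dud-informal} would blow up --- and the paper proves it separately using Petrov's \Cramer{}-type bound for i.i.d.\ sums (\Cref{thm:petrov}), applied to $\sum_j|\bg_j|^p$ with $\bg\sim N(0,I_m)$. Your proposal should be amended to cite a second, i.i.d.\ large-deviations argument for this estimate.
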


The high-level approach, in its first stage, is similar to the $\ell_1$-ball approximation sketched above.   We now employ anticoncentration of the sum $|\bx_1|^p + \cdots + |\bx_n|^p$, and now in place of random sets of the form (\ref{eq:juntaell1}) we use random sets of the form
\begin{equation} \label{eq:juntaellp}
\cbra{x \in \R^n \ : \ |x_{\bi_1}|^p + \cdots + |x_{\bi_\newN}|^p \leq \theta_p}
\end{equation}
for a suitable threshold value $\theta_p$.  As before, ANDing together a carefully chosen number $M$ of sets of this form will be a high-accuracy approximator to the volume-$1/2$ $\ell_p$ ball.  But while in the $p=1$ case each set of the form (\ref{eq:juntaell1}) was already an intersection of only $2^\newN$ halfspaces, for $1<p<2$ a set of the form (\ref{eq:juntaellp}) is not an intersection of any finite number of halfspaces.  Instead, we must replace each set of the form (\ref{eq:juntaellp}) with an intersection of (not too many) halfspaces which is a very high accuracy approximator for it.  Crucially, each set of the form (\ref{eq:juntaellp}) has extremely high volume; this is exactly the setting in which our efficient relative-error approximator, \Cref{thm:rel-err-dud-informal}, is useful.  We apply \Cref{thm:rel-err-dud-informal} to approximate each of the $M$ sets of the form (\ref{eq:juntaellp}) as an intersection of $N$ halfspaces (where $N$ is not too large), and then ANDing together all of those approximators we achieve an $MN$-halfspace approximator to the original $\ell_p$ ball $B_p$ in $\R^n$.

\subsection{Overview of Lower Bounds}

We now turn to a technical overview of our lower bounds. 

\paragraph{Non-Explicit Average-Case Lower Bounds.}  It is well known that standard counting arguments easily yield strong (average-case) lower bounds on the complexity of approximating (non-explicit) \emph{Boolean functions} that map $\bn$ to $\bits$. 
For a range of different computational models, these lower bounds show that even achieving approximation error $1/2 - o_n(1)$ requires any approximator to be exponentially large.

In our context of approximating convex sets in $\R^n$ by intersections of halfspaces, counting arguments are not quite as straightforward because there are infinitely many distinct halfspaces over $\R^n$.  Nevertheless, counting arguments can be brought to bear to prove lower bounds. In more detail, \cite{KOS:08} used a counting argument to establish the existence of $N=2^{2^{\Omega(\sqrt{n})}}$ distinct convex sets $K_1,\dots,K_N$ in $\R^n$, each of which is an intersection of $2^{\Omega(\sqrt{n})}$ many halfspaces, such that $\dG(K_i,K_j) \geq 1/44000$ for each $1 \leq i \neq j \leq N.$  Using this result, it is possible to establish the existence of a convex set $K$ (one of the sets $K_1,\dots,K_N$) such that $\FC(K,c) \geq 2^{c \sqrt{n}}$, where $c>0$ is a small absolute constant.

In \Cref{sec:nonexplicit} we give a stronger (but still non-explicit) lower bound, which shows that many halfspaces are required even to achieve error $1/2 - o_n(1)$; in other words, we give an \emph{average-case} lower bound (also known as a \emph{correlation bound}). Our lower bound trades off the accuracy of the approximator against the number of halfspaces:

\begin{theorem} [Informal version of \Cref{lb:non-explicit-convex}] \label{thm:lb-non-explicit-convex-informal}
For any $\tau = \omega (n^{-1/2} \cdot \log n)$, there exists some convex set $K \subset \mathbb{R}^n$ that has $\FC(K,\frac12 - \tau) = 2^{\Omega(\tau\cdot \sqrt{n})}$.
\end{theorem}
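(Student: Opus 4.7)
The plan is a probabilistic-construction-plus-counting argument in the spirit of the non-explicit lower bound of~\cite{KOS:08} sketched above, pushed into the average-case regime via a sharper concentration analysis. I aim to exhibit a distribution on random convex sets $\mathbf{K} \subseteq \R^n$ of expected Gaussian volume $1/2$ with the property that, for every fixed intersection of $M$ halfspaces $L$, the Gaussian distance $\dG(\mathbf{K}, L)$ concentrates very sharply around $1/2$; combining this with an $\eps$-net argument on intersections of $M$ halfspaces then produces, via a union bound, a single $K$ which is $(1/2-\tau)$-far from every such $L$.

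Concretely, I take $\mathbf{K}$ to be a Nazarov-style random polytope
\[
\mathbf{K} \;=\; \bigcap_{i=1}^N \cbra{x \in \R^n \;:\; \langle \mathbf{v}_i, x\rangle \le \rho},
\]
where $\mathbf{v}_1,\dots,\mathbf{v}_N$ are i.i.d.~uniform on $\mathbb{S}^{n-1}$, $N = 2^{\Theta(\sqrt n)}$, and $\rho = \Theta(n^{1/4})$ is calibrated so that $\mathbb{E}\sbra{\Vol(\mathbf{K})} = 1/2$. By rotational invariance, $\mathbb{E}_{\mathbf{K}}\sbra{\mathbf{1}_{\mathbf{K}}(x)}$ depends on $x$ only through $\|x\|$ and is close to $1/2$ on the bulk of the chi distribution, so the expected value $\mathbb{E}_{\mathbf{K}}\sbra{\dG(\mathbf{K},L)}$ is close to $1/2$ for any $L$ whose radial profile is balanced. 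Separately, I would build an $\eps$-net $\mathcal{L}_M^\eps$ over intersections of $M$ halfspaces in parameter space (each halfspace specified by a point of $\mathbb{S}^{n-1} \times \R$), with $|\mathcal{L}_M^\eps| \le 2^{O(Mn \log(Mn/\eps))}$ and every intersection of $M$ halfspaces within Gaussian distance $O(M\eps)$ of some net element.

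The heart of the proof is the concentration inequality
\[
\Pr_{\mathbf{K}}\sbra{\,\dG(\mathbf{K}, L) < 1/2 - \tau\,} \;\le\; \exp\pbra{-\Omega(\tau^2 N)}
\]
for each fixed $L \in \mathcal{L}_M^\eps$, obtained by viewing $\dG(\mathbf{K}, L)$ as a function on the product sphere $(\mathbb{S}^{n-1})^N$ and applying a product-space concentration inequality (McDiarmid's bounded-differences or Talagrand's convex-distance inequality) together with a typical $O(1/N)$ bound on the per-coordinate influences. A union bound over $\mathcal{L}_M^\eps$ succeeds whenever $\tau^2 N \gg Mn \log(Mn/\eps)$; substituting $N = 2^{\Theta(\sqrt n)}$, $M = 2^{c\tau\sqrt n}$, and $\eps = \tau/M$ then recovers precisely the hypothesis $\tau = \omega(n^{-1/2} \log n)$ (and in turn makes the conclusion $2^{\Omega(\tau \sqrt n)}$ super-polynomial).

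The main technical obstacle is justifying the $O(1/N)$ per-coordinate influence bound, since a naive worst-case estimate allows a single halfspace perturbation to flip a Gaussian-measure-$\Omega(1)$ region in or out of $\mathbf{K}$. I would address this using two ingredients specific to the Nazarov construction: (i) by the calibration of $\rho$, the intersection of any $N-1$ of the halfspaces defining $\mathbf{K}$ differs from $\mathbf{K}$ by only an $O(1/N)$ Gaussian-measure cap, so the \emph{typical} per-coordinate influence is $O(1/N)$ even though the worst-case influence is $\Omega(1)$; and (ii) Ball's universal $O(n^{1/4})$ Gaussian surface area bound (\Cref{thm:ball}) applied to $\mathbf{K} \cap L$ controls how much a smooth perturbation of any single halfspace changes $\Vol(\mathbf{K} \cap L)$. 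Combined via Talagrand's convex-distance inequality---which tolerates a few large individual influences provided the ``convex $\ell_2$'' norm of the influence vector is small---this should yield the claimed concentration and close out the argument.
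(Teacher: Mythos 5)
Your approach is entirely different from the paper's, and it has a gap I do not see how to close. The paper proves \Cref{lb:non-explicit-convex} by a learning-theoretic argument: the VC-dimension of $\Facet(n,M)$ is $O(nM\log M)$, so if every convex set were $(\tfrac12-\tau)$-approximable by $M$ halfspaces then agnostic PAC learning (\Cref{VCdim-learnability}) would yield a sample-efficient weak learner for convex sets, which is then played off against the information-theoretic weak-learning lower bound of De--Servedio (\Cref{thm:our-BBL-lb}). There is no explicit $\eps$-net and no direct concentration estimate anywhere; the hard distribution is hidden inside \Cref{thm:our-BBL-lb}.

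The core difficulty with your counting-plus-concentration route is that the Nazarov random polytope $\mathbf K$ is rotation-invariant in law, so for any fixed $L$ the quantity $\mathbb{E}_{\mathbf K}\sbra{\dG(\mathbf K,L)}$ depends on $L$ only through its radial profile, and when you swap the order of integration you find that it essentially equals $\tfrac12 + \int q(r)\pbra{1-2p(r)}\,d\mu(r)$ where $p(r)=\Pr[\mathbf K\ni x]$ for $\|x\|=r$, $q(r)$ is the fraction of the radius-$r$ sphere in $L$, and $\mu$ is the chi law. With $N=2^{\Theta(\sqrt n)}$ and $\rho=\Theta(n^{1/4})$ (your stated parameters, which match those of \Cref{thm:de-nazarov}), the profile $p$ is a sharp step at $\sqrt n$, and the integral collapses to $\mathbb{E}_{\mathbf K}\sbra{\dG(\mathbf K,L)}\approx \dG(L,B_2)$; this is confirmed by \Cref{thm:de-nazarov}, which says $\mathbb{E}_{\mathbf K}\sbra{\dG(\mathbf K,B_2)}\leq\eps$ for a small constant $\eps$, so the mean is in fact very far from $\tfrac12$ when $L$ resembles $B_2$. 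Your concentration inequality, if true (and McDiarmid or Talagrand with $O(1/N)$ typical influences gives concentration only around the mean), would therefore show $\dG(\mathbf K,L)\geq \tfrac12-\tau$ only when $\dG(L,B_2)\geq \tfrac12-O(\tau)$ already; combined with the fact that $\mathbf K$ is $\eps$-close to $B_2$, the whole argument would be equivalent to proving that the $\ell_2$ ball itself satisfies the strong average-case lower bound $\FC(B_2,\tfrac12-\tau)=2^{\Omega(\tau\sqrt n)}$. The paper establishes only the constant-error version of this (\Cref{thm:ball-conv-inf-lb}), and the average-case version is doubtful: $B_2$ is a degree-2 PTF with $\Omega(1)$ degree-2 Hermite mass, so very cheap predictors achieve nontrivial advantage (a single two-halfspace slab $\{|x_1|\leq t\}$ already has $\dG(L,B_2)=\tfrac12-\Theta(1/\sqrt n)$). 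Choosing instead a smaller $N$ to spread the transition of $p$ over a window $T\gg 1$ brings the mean back toward $\tfrac12$, but an FKG-type monotonicity argument shows $\mathbb{E}_{\mathbf K}\sbra{\dG(\mathbf K,L)}\leq \tfrac12$ always, and the downward deviation is of order $\tau\sqrt{\log(1/\tau)}$ rather than $o(\tau)$, so there is no room for your concentration at resolution $\tau$. The fundamental problem is that a rotation-invariant random convex set cannot be average-case hard in this strong sense; whatever hard distribution lurks behind \Cref{thm:our-BBL-lb} is necessarily not radially symmetric, and this is precisely what your construction is missing. The secondary obstacles you flag yourself --- justifying the per-coordinate influence bound when a single halfspace can in the worst case shave off $\Omega(1)$ mass, and building an $\eps$-net in Gaussian distance over $(\mathbb S^{n-1}\times\R)^M$ where the map from parameters to $\dG$ is not Lipschitz without truncation --- are real, but they are less serious than the mean-value issue above.
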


\noindent For example, taking $\tau = n^{-1/4}$, \Cref{thm:lb-non-explicit-convex-informal} implies the existence of an $n$-dimensional convex set $K$ such that no intersection of $2^{cn^{1/4}}$ many halfspaces can approximate $K$ to accuracy even as large as $1/2 + n^{-1/4}$; taking $\tau$ to be a constant, we recover the non-explicit $2^{\Omega(\sqrt{n})}$ lower bound mentioned in the previous paragraph.

The proof of \Cref{thm:lb-non-explicit-convex-informal}  combines classical results from statistical learning theory with an information theoretic lower bound on \emph{weak} learning convex sets under $N(0,I_n)$ that was established in recent work \cite{DS21colt}. 
A high-level sketch is as follows:  If every convex set could be approximated to high accuracy ($1/2 - \tau$) by an intersection of ``few'' halfspaces, then classical results from statistical learning theory would imply the existence of (computationally inefficient but sample-efficient) algorithms to ``weakly'' learn any convex set $K$ to error $1/2 - \tau/2$.  On the other hand, a recent result of De and Servedio \cite{DS21colt} gives a strong lower bound on the error that any sample-efficient weak learning algorithm for convex sets must incur. The tension between these two bounds can be shown to imply a lower bound on the number of halfspaces that any $(1/2-\tau)$-approximator must use; see \Cref{sec:nonexplicit} for the detailed argument.
We remark that the proof of \Cref{thm:lb-non-explicit-convex-informal} is the first result we know of in which an information-theoretic sample complexity lower bound for learning is used to establish an \emph{inapproximability} result.

\paragraph{\bf Average-Case Lower Bounds via Gaussian Noise Sensitivity.} 
One drawback of \Cref{thm:lb-non-explicit-convex-informal} is that it is non-constructive: not only does it not exhibit any particular convex set which is hard to approximate, it does not establish any particular criterion that implies that a convex set is hard to approximate.
\Cref{sec:noise-sensitivity} gives such a criterion. We show that any set with very high \emph{Gaussian noise sensitivity} (at least $1/2 - 1/\poly(n)$) at very low noise rates (at most $1/\poly(n)$) requires many halfspaces to approximate to accuracy even $1/2 + 1/\poly(n)$:

\begin{theorem} [Informal version of \Cref{thm:high-GNS-hard}]
\label{thm:high-GNS-hard-informal}
If $K \subset \R^n$ is any convex set which satisfies $\GNS_{n^{-c}}(K) \geq {\frac 1 2} - n^{-c}$, then $\FC(K,{\frac 1 2} - n^{-c}) \geq 2^{n^{\Omega(1)}},$ where $c>0$ is a suitable absolute constant.
\end{theorem}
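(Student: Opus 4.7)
}
The plan is a Hermite-analytic argument that plays off two complementary facts: $K$'s very high noise sensitivity forces its indicator to place essentially all of its Hermite mass on very \emph{high} levels, while $L$ being an intersection of only $N$ halfspaces forces it (via Kane's bound on Gaussian surface area) to place most of its Hermite mass on \emph{low} levels. The correlation between a ``high-only'' function and a ``low-only'' function must then be small, contradicting the weak approximation guarantee.

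Concretely, set $\tilde f = 2\mathbf{1}_K - 1$ and $\tilde g = 2\mathbf{1}_L - 1$. The hypothesis $\GNS_{n^{-c}}(K) \geq \frac12 - n^{-c}$ rewrites as the Hermite-side stability bound $\sum_k (1-n^{-c})^k W_k(\tilde f) \leq 2n^{-c}$, and the approximation bound $\dG(K,L) \leq \frac12 - n^{-c}$ rewrites as $\E[\tilde f\tilde g] \geq 2n^{-c}$. Choosing a threshold $M = \Theta(n^c)$ so that $(1-n^{-c})^M = \Theta(1)$, a Markov-type argument on the Laplace transform above gives $W_{\leq M}(\tilde f) = O(n^{-c})$. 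On the other hand, Kane's theorem yields $\GSA(L) = O(\sqrt{\log N})$, and identifying total Gaussian influence with $\GSA$ produces the tail bound $W_{>M}(\tilde g) \leq O(\sqrt{\log N}/M)$. Splitting the inner product $\E[\tilde f\tilde g] = \sum_k \langle \tilde f_{=k},\tilde g_{=k}\rangle$ at degree $M$ and applying Cauchy--Schwarz to each half yields
\[
\E[\tilde f\tilde g] \;\leq\; \sqrt{W_{\leq M}(\tilde f)} \;+\; \sqrt{W_{>M}(\tilde g)} \;\leq\; O\!\bigl(n^{-c/2}\bigr) \;+\; O\!\bigl((\log N)^{1/4}\,n^{-c/2}\bigr).
\]
Writing $N = 2^{n^{c'}}$ and comparing against the lower bound $2n^{-c}$ forces $c' \geq \Omega(1)$, giving the claimed $\FC(K,\tfrac12 - n^{-c}) \geq 2^{n^{\Omega(1)}}$ bound once the absolute constant $c$ is chosen appropriately.

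\paragraph{Main obstacle.}
The central issue is the square-root loss in the Cauchy--Schwarz step: the display above bounds $\E[\tilde f\tilde g]$ by $O(n^{-c/2})$ rather than the $O(n^{-c})$ that would most directly close the argument, and in particular the crude ``agreement implies noise-sensitivity transfer'' inequality $\GNS_\rho(L) \geq \GNS_\rho(K) - 2\dG(K,L)$ is vacuous in this regime because $\dG(K,L)$ is allowed to be as large as $\tfrac12 - n^{-c}$. Making the three $n^{-c}$'s in the statement match up therefore requires either a careful rebalancing of the exponents (with $c$ taken suitably small and the $\Omega(1)$ exponent in the conclusion correspondingly small), or a sharpening of the Hermite-tail control on intersections of halfspaces beyond what total influence provides --- for example a second-moment estimate of the form $\sum_k k^2 W_k(\tilde g) = O(\log N)$ from Gaussian hypercontractivity, which would strengthen $W_{>M}(\tilde g)$ to $O(\log N/M^2)$ and let the Cauchy--Schwarz terms beat $n^{-c}$ for a much wider range of parameters. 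Finding the right quantitative trade-off between the threshold $M$, the constant $c$, and the $\Omega(1)$ in the exponent is the main technical work of the proof.
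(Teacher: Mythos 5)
Your Hermite-spectral route is a genuinely different proof than the paper's, which argues via random zooms, local hyperconcentration of low-degree polynomials (\cite{OSTK21}), and low-degree PTF approximation of halfspace intersections (\cite{KOS:08}). Yours is a direct ``low-level versus high-level'' correlation argument in the Hermite basis, of the kind familiar from circuit lower bounds; it is more elementary, and --- once the two corrections below are made --- actually gives a quantitatively \emph{sharper} result than the paper's proof, so the square-root loss you flag should not discourage you.

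The first correction: the tail estimate $W_{>M}(\tilde g) \leq O(\sqrt{\log N}/M)$ cannot be obtained by ``identifying total Gaussian influence with $\GSA$.'' In Gaussian space the level-weighted sum $\sum_k k\,W_k(\tilde g)$ is \emph{infinite} even when $L$ is a single halfspace (whose Hermite weights decay only as $W_k = \Theta(k^{-3/2})$), so there is no first-moment bound to run Markov on. The route that does work is through noise sensitivity: by Corollary~14 of \cite{KOS:08} one has $\GNS_\delta(L) \leq \sqrt{\pi\delta}\,\GSA(L)$, and since $W_{>1/\delta}(\tilde g) \leq 4\,\GNS_\delta(L)$, combining with Nazarov's bound $\GSA(L) = O(\sqrt{\log N})$ gives $W_{>M}(\tilde g) = O\bigl(\sqrt{\log N}/\sqrt M\bigr)$ --- with $\sqrt M$, not $M$, in the denominator. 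Your proposed second-moment sharpening $\sum_k k^2 W_k(\tilde g) = O(\log N)$ is worse still: it already diverges at $N=1$.

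The second point is that, as you observe, with a \emph{single} constant $c$ the display is vacuous --- the first Cauchy--Schwarz term is $\Theta(n^{-c/2}) > n^{-c}$ no matter what $N$ is --- but rebalancing the exponents is not an optional refinement, it is how the argument actually closes (and the informal statement's repeated ``$c$'' must be read as potentially different constants). Take $M := n^{c_1}$; the stability hypothesis with $\kappa := n^{-c_1}$ gives $W_{\leq M}(\tilde f) = O(\kappa)$, and the corrected tail bound gives $W_{>M}(\tilde g) = O(\sqrt{\log N}\cdot n^{-c_1/2})$, whence
\[
\E[\tilde f\tilde g] \;\leq\; \sqrt{W_{\leq M}(\tilde f)} + \sqrt{W_{>M}(\tilde g)} \;=\; O\!\left(\sqrt\kappa\right) + O\!\left((\log N)^{1/4}\,n^{-c_1/4}\right).
\]
Taking accuracy parameter $\eta := n^{-c_1/8}$, the assumption $\dG(K,L) \leq \frac12 - \eta$ (equivalently $\E[\tilde f\tilde g] \geq 2\eta$) is inconsistent with this display unless $\log N = \Omega(n^{c_1/2})$. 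This yields $\FC(K,\frac12-n^{-c}) \geq 2^{n^{\Omega(1)}}$ with $c = c_1/8$, and the hypothesis $\GNS_{n^{-c}}(K)\geq\frac12-n^{-c}$ then follows from $\GNS_{n^{-c_1}}(K)\geq\frac12-n^{-c_1}$ by the monotonicity of $\GNS_\rho$ in $\rho\in[0,\frac12]$. Compared with the paper's formal \Cref{thm:high-GNS-hard}: the zoom-based proof incurs a $\kappa^{1/4}$ loss (reverse Markov plus a variance-to-probability conversion) and, because of the $\delta^{20}/(\log k)^{9/2}$ setting in \Cref{cor:polytope-collapse}, only permits $N$ up to $2^{n^{c_2}}$ with $c_2 = 2c_1/49$. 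Your argument gives $\sqrt\kappa$ and permits $N$ up to roughly $2^{n^{c_1/2}}$ --- both strictly better. The single square root lost to Cauchy--Schwarz is cheaper than the compound losses in the zoom-based route.
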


\Cref{thm:high-GNS-hard-informal} naturally raises the question of whether there in fact exist $n$-dimensional convex sets that have Gaussian noise sensitivity as high as $1/2 - 1/\poly(n)$ at noise rates that are as low as $1/\poly(n)$.  While natural, this question does not appear \cite{OD23pc} to have been asked or answered prior to the current work.  In \Cref{sec:existence} we combine the information-theoretic lower bounds of  \cite{DS21colt} on weak learning convex sets with the ``low-degree algorithm'' for learning bounded functions to show that there do exist $n$-dimensional convex sets $K$ with $\GNS_{1/\poly(n)}(K) \geq 1/2 - 1/\poly(n)$ (see \Cref{thm:highGNSlownoise}). We believe that this result may be of independent interest.

The proof of \Cref{thm:high-GNS-hard-informal} combines results from learning theory \cite{KOS:08} and the study of unconditional pseudorandomness \cite{OSTK21} with classical ideas from the study of correlation bounds for Boolean functions \cite{Hastad86,Cai86}.  
It employs a notion of a ``random zoom'' for a function $f: \R^n  \to \R$ which is closely analogous to the well-studied notion of a ``random restriction'' of a function from $\bn$ to $\bn$.

Analogous to the well-known phenomenon that low-depth Boolean circuits over $\bn$ simplify under random restrictions \cite{Hastad86,Cai86}, in the context of developing efficient pseudorandom generators for intersections of halfspaces it was shown in \cite{OSTK21} that low-degree polynomials over $N(0,I_n)$ simplify under random zooms; more precisely, they are very likely to have very low ``hypervariance'' after a random zoom (see \Cref{subsec:zoom-prelims} for detailed definitions).  Moreover, it was shown in \cite{KOS:08} (in the context of developing efficient agnostic learning algorithms for intersections of halfspaces) that any intersection of ``not too many'' halfspaces is close (in a suitable sense) to a low-degree polynomial.  Combining these results, we get that (1) applying a random zoom to any intersection of few halfspaces, we get a function that must have very low hypervariance, and hence must be very close to a constant.  But on the other hand, it can be shown that (2) if a set $K$ has very high Gaussian noise stability at very low noise rates, then after a random zoom it must be very \emph{far} from a constant function.  Similar to classical correlation bounds for shallow Boolean circuits (which use the fact that that the parity function is far from a constant function after random restriction, while shallow Boolean circuits become close-to-constant after random restrictions), the tension between (1) and (2) can be used to deduce that any intersection of ``not too many'' halfspaces must be very far from any set $K$ that has high Gaussian noise stability at low noise rates.


\paragraph{Lower Bounds via Convex Influences.}
Neither of the two lower bound techniques discussed thus far yield lower bounds for any explicit convex sets that we know of. 
For our final lower bound, we use the recently introduced notion of convex influences~\cite{DNS21itcs,DNS22} to prove lower bounds for the $\ell_1$ and $\ell_2$ balls. 
In more detail, in~\Cref{sec:conv-inf} we give a general criterion on a convex set $K \subset \R^n$ which suffices to ensure that $\FC(K,\eps) = 2^{\Omega(\sqrt{n})}$, where $\eps$ is a suitable constant. 
This criterion is that $K$ is symmetric\footnote{Recall that a set $K\sse\R^n$ is \emph{symmetric} if $x\in K$ implies $-x\in K$.} and has maximal \emph{convex influence}, up to a multiplicative constant factor.
Since both the $\ell_1$ ball of Gaussian volume $1/2$ and the $\ell_2$ ball of volume $1/2$ satisfy this condition, we get a $2^{\Omega(\sqrt{n})}$ lower bound for each of these explicit sets.

The notion of convex influence was recently introduced in the paper~\cite{DNS22} as an analogue of the classical notion of influence from Boolean function analysis over the discrete cube~\cite{KKL:88, odonnell-book}. In particular, for any unit vector $v$ and convex set $K$, $\TInf_v[K]$ is meant to capture how $K$ varies just in the direction $v$ (while averaging out every other direction) --- see~\Cref{def:convex-influence} for the precise formulation. An attractive feature of this notion of influence is that for any orthonormal basis of $\mathbb{R}^n$, say $\{v_1, \ldots, v_n\}$, the sum $\sum_{i=1}^n\TInf_{v_i}[K]$ is independent of the choice of the basis and depends only on the set $K$. This leads us to the notion of total influence of a set $K$, denoted by $\TInf[K] = \sum_{i=1}^n\TInf_{v_i}[K]$  (where $\{v_1, \ldots, v_n\}$ is any orthonormal basis of $\mathbb{R}^n$). 

\cite{DNS22} proved several structural properties of both influence and total influence. In particular, Proposition~19 of \cite{DNS22} established that 
the total influence of any convex set $K$ is $O(\sqrt{n})$; furthermore, this bound is tight, as exhibited by $K= B(\sqrt{n})$, which is shown in \cite{DNS22} to be the convex set of maximal total influence.

In this paper, we show that any symmetric set $K$ with maximal total influence (up to a constant factor) requires any approximator to have high facet complexity. In particular, we prove the following theorem:

\begin{theorem} [Informal version of \Cref{thm:high-influence-lb}] \label{thm:high-influence-lb-informal}
If $K$ is symmetric and $\TInf[K] = \Omega(\sqrt{n})$, then for a suitable constant $\eps>0$ we have that $\FC(K,\eps) = 2^{\Omega(\sqrt{n})}.$
\end{theorem}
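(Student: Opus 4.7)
My plan is to use the total convex influence $\TInf[\cdot]$ of \cite{DNS22} as a spectral obstruction to approximation, in direct analogy with how the total influence of the Majority function is used to rule out small CNF/DNF approximators in the Boolean setting. The argument decomposes into three components: a symmetrization reduction to the case of symmetric approximators; a stability lemma showing $\TInf$ is approximately preserved under Gaussian approximation; and, as the main technical step, a structural upper bound on $\TInf[L]$ for intersections $L$ of few halfspaces.

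For the symmetrization step, given any intersection $L$ of $N$ halfspaces with $\dG(K,L) \leq \eps$, I would replace $L$ by its symmetrization $L' := L \cap (-L)$, which is itself symmetric and is an intersection of at most $2N$ halfspaces. Using that $K = -K$ together with the central symmetry of $N(0, I_n)$, one verifies $\dG(K, L') \leq 2\dG(K, L) \leq 2\eps$, so we may assume the approximator is symmetric at the cost of doubling both $N$ and $\eps$. For the stability lemma, since the convex influence of \cite{DNS22} admits a spectral description in terms of low-degree Hermite data of $\mathbf{1}_K$, and since $\|\mathbf{1}_K - \mathbf{1}_{L'}\|_{L^2(N(0,I_n))}^2 \leq \dG(K, L') \leq 2\eps$, Parseval combined with Cauchy--Schwarz yields $|\TInf[K] - \TInf[L']| = o(\sqrt{n})$ for any sufficiently small constant $\eps$. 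Combined with the hypothesis $\TInf[K] = \Omega(\sqrt{n})$, this forces $\TInf[L'] = \Omega(\sqrt{n})$.

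The third and main step is the structural upper bound: any intersection $L'$ of $N'$ halfspaces satisfies $\TInf[L'] = O(\log N')$. Given this, the conclusion is immediate, since $\TInf[L'] = \Omega(\sqrt{n})$ together with $\TInf[L'] = O(\log N')$ forces $N' \geq 2^{\Omega(\sqrt{n})}$, and hence $N \geq 2^{\Omega(\sqrt{n})}$. To establish the structural bound I would use Stein-type identities to express (the relevant part of) $\TInf[L']$ as a boundary integral of the form $\int_{\partial L'} \langle x, \nu(x) \rangle\, \varphi(x)\, dS$, decompose it as a sum over the polytope's facets (on which the outer normal $\nu$ is constant), and bound each facet's contribution by combining Nazarov's bound that the total Gaussian surface area of an intersection of $N'$ halfspaces is $O(\sqrt{\log N'})$ with a careful accounting that groups facets by their distance from the origin. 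Remote facets can be handled via Gaussian tail bounds on $\langle x, \nu\rangle$, while nearby facets are controlled via the surface-area bound together with Cauchy--Schwarz.

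The principal obstacle is this structural bound itself. The uniform inequality $\TInf[K] \leq O(\sqrt{n})$ for arbitrary convex $K$ from \cite{DNS22} must be sharpened to $O(\log N')$ specifically for intersections of $N'$ halfspaces. Doing so requires exploiting both the polyhedral structure of $L'$ and (after symmetrization) the fact that its facets come in antipodal pairs, while carrying out the facet-by-facet accounting so that the final bound accumulates only polylogarithmic factors in $N'$ rather than factors of $\sqrt{n}$ across facets.
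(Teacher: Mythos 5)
Your proposal is correct and follows the same overall strategy as the paper's proof of \Cref{thm:high-influence-lb}: the engine of the argument is a structural upper bound $\TInf[L] = O(\log s)$ for any intersection $L$ of $s$ halfspaces containing the origin (\Cref{prop:convex-boppana}), proved essentially as you sketch by writing $\TInf$ as the boundary integral $\int_{\partial L}\langle x,\nu_x\rangle\varphi(x)\,d\sigma(x)$, decomposing over facets, splitting facets into those within distance $\sqrt{2\ln s}$ of the origin (handled via Nazarov's GSA bound) and the remote ones (handled via the Gaussian tail decay of $\theta\,e^{-\theta^2/2}$). This is then played off against $\TInf[K]=\Omega(\sqrt n)$ via the identity expressing $\Inf_{e_i}$ as a degree-$2$ Hermite coefficient.

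Where you diverge is in the intermediate ``stability'' step, and your route is a bit cleaner. You apply Cauchy--Schwarz directly to the degree-$2$ Hermite coefficients to get $|\TInf[K]-\TInf[L']|\le \sqrt{2n}\,\|\mathbf{1}_K-\mathbf{1}_{L'}\|_2 = O(\sqrt{\eps n})$, which for a sufficiently small constant $\eps$ forces $\TInf[L']=\Omega(\sqrt n)$. (One nitpick: this quantity is $\Theta(\sqrt n)$ for constant $\eps$, not $o(\sqrt n)$ as you wrote, but your conclusion still follows once $\eps$ is small enough relative to the hidden constant in $\TInf[K]=\Omega(\sqrt n)$.) The paper instead uses Cantelli's inequality together with nonnegativity of the individual influences $\Inf_{e_i}[K]$ (which requires the symmetry of $K$) to find $\Omega(n)$ coordinates with $\Inf_{e_i}[K]\geq 0.05/\sqrt n$, then a Markov-type argument on $\TInf[L]$, and finally Parseval to force $\dG(K,L)$ large; this is more involved but yields sharper explicit constants. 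Finally, your symmetrization step $L\mapsto L\cap(-L)$ is harmless but not actually needed: the structural bound in \Cref{prop:convex-boppana} does not use symmetry or antipodal facets, and the paper works directly with $L$ after observing that any good approximator of a symmetric set of volume $\approx 1/2$ must contain the origin.
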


The main idea in the proof of the above theorem is a new structural result establishing that if $L$ is a convex set which is an intersection of $s$ halfspaces, then $\TInf[L] = O(\log s)$ (Proposition~\ref{prop:convex-boppana}). Coupled with the fact that $\TInf[K]$ can be expressed in terms of the degree-$2$ Hermite coefficients of $K$,  a fairly simple argument using Parseval's identity and Cantelli's inequality leads to Theorem~\ref{thm:high-influence-lb-informal}.  

Once again, we observe that there is an analogy here with the Boolean setting. Namely,  Boppana~\cite{Boppana1997} showed that for any $s$-clause CNF formula, its total influence (as defined over the Boolean cube) is bounded by $O(\log s)$. We note that despite the syntactic analogy with~\Cref{prop:convex-boppana}, the underlying techniques are completely different; while~\Cref{prop:convex-boppana} is an inherently geometric argument, the proof of Boppana relies on the method of random restrictions~\cite{Hastad86}. 


\medskip


%
%
%
%
%

\subsection{Discussion}

The current paper takes the first steps in studying polytopal approximations of convex sets under Gaussian measure. Several tantalizing questions emerge from our work; we highlight a few of these now. 

Perhaps the most natural question is to close the gap between the worst case upper and lower bounds on $\FC(K,\epsilon)$ in the constant-$\epsilon$ regime. In particular, both \Cref{thm:high-influence-lb} and \Cref{lb:non-explicit-convex} guarantee the existence of convex sets $K$ such that 
$\FC(K,\epsilon)= 2^{\Omega(\sqrt{n})}$ for some constant $\epsilon>0$. On the other hand, \Cref{thm:gaussian-dudley,thm:rel-err-dud} show that for any constant $\epsilon>0$, $\FC(K,\epsilon) = n^{O(n)}$. Can we close this gap? 

In a related vein, it would also be interesting to obtain tighter upper and lower bounds on $\FC(K,\epsilon)$ when $K$ is the $\ell_p$ ball of Gaussian volume $1/2$. In particular, for $p>2$, we do not have an upper bound on the facet complexity (when the error $\epsilon $ is a constant) that is better than $n^{O(n)}$. Is it possible to obtain polytopal approximations for the $\ell_p$ ball with $2^{n^{1-c}}$ facets for constant $c>0$ in the constant error regime? 

Finally, looking ahead, it would be interesting to study Gaussian-distance polytopal approximation of convex sets via approximators with small \emph{vertex complexity}. Very little seems to be known here and indeed, the vertex complexity analogues of many questions considered in the current paper seem to be wide open. For  example,~\Cref{sec:nazarov-ub} gets a $2^{O(\sqrt{n})}$-facet approximator for the $\ell_2$ ball. Can we obtain a similar approximator for the $\ell_2$ ball which has subexponential vertex complexity?

\section{Preliminaries}
\label{sec:prelims}

\Cref{subsec:notation} recalls basic background and sets up notation,
\Cref{subsec:prelims-gaussian} gives standard facts about the Gaussian and chi-squared distributions,
\Cref{subsec:gsa} recalls known bounds on Gaussian surface area, and finally  \Cref{subsec:distance-metrics} defines the various distance metrics between convex sets that we will use.

\subsection{Basic Notation and Terminology}
\label{subsec:notation}

We use boldfaced letters such as $\bx, \boldf,\bA$, etc. to denote random variables (which may be real-valued, vector-valued, function-valued, or set-valued; the intended type will be clear from the context).
We write $\bx \sim \calD$ to indicate that the random variable $\bx$ is distributed according to probability distribution $\calD$. We will frequently identify a set $K\sse\R^n$ with its $0/1$-valued indicator function.  

\begin{notation} [Multiplicative approximation]
    We use the following notation to denote that two nonzero reals $a,b$ are multiplicatively close:  For $\nu \geq 0$,
    \[
        a \approx_\nu b \quad \iff \quad e^{-\nu} \leq a/b \leq e^\nu
    \]
    (note that this condition is symmetric in $a$ and $b$).  
\end{notation}

\paragraph{Geometry.} We write $e_i\in\R^n$ to denote the $i^{\text{th}}$ standard basis vector. For $r >0$, we write $\S^{n-1}(r)$ to denote the origin-centered sphere of radius $r$ in $\R^n$
and $B(r)$ to denote the origin-centered ball of radius $r$ in $\R^n$, i.e.,
$$
\S^{n-1}(r) = \big\{x \in \R^n: \|x\|=r\big\}\quad\text{and}\quad
B(r) = \big\{x \in \R^n : \|x\| \leq r\big\},
$$
where $\|x\|$   denotes the $\ell_2$ norm $\|\cdot \|_2$ of $x\in \R^n$.
We also write $\S^{n-1}$ for the unit sphere $\S^{n-1}(1)$.

\paragraph{Convex Sets and Convex Bodies.}
A set $K \subseteq \R^n$ is \emph{convex} if $x,y \in K$ implies $\alpha\hspace{0.03cm}x + (1-\alpha) y \in K$ for all $\alpha\in [0,1]$.  We recall (see e.g.~\cite{lang1986note}) that all convex sets are Lebesgue measurable. 

A convex \emph{body} in $\R^n$ is a compact convex set with non-empty interior.  Our results will hold for general convex sets (not only bodies), but since we are working with Gaussian distance, it is easy to see that it suffices to consider only convex bodies.  (If a convex set has empty interior then its Gaussian volume is 0; the Gaussian volume of the closure of a convex set is the same as the Gaussian volume of the set; and if a convex set $A$ is unbounded, we can ``truncate'' it by intersecting it with a sufficiently large ball to obtain a bounded convex set $A' \subset A$ with $\dG(A,A') \leq \eps/2$.)

Finally, for sets $K,L \subseteq \R^n$, we write $K+L$ to denote the Minkowski sum $\{x+y : x \in K\ \text{and}\ y \in L\}.$ For a set $K \subseteq \R^n$ and $r > 0$ we write $rK$ to denote the set $\{rx : x \in K\}$.

\paragraph{Intersections of Halfspaces and Approximation.}
For the sake of readability we will mostly state our results in a self-contained way, but the following notation will sometimes be useful:

\begin{itemize}

\item We write $\Facet(n,M)$ to denote the class of all convex sets which are intersections of $M$ halfspaces in $\R^n$, i.e.~the class of all $M$-facet convex sets.  

\item For a convex set $K \subseteq \R^n$ and a value $0 < \eps < 1$, we write $\FC(K,\eps)$ to denote the minimum value $M$ such that there is some intersection of $M$ halfspaces $L \in \Facet(n,M)$ such that $\dG(K,L) \leq \eps$, i.e.~$\FC(K,\eps)$ is the minimum ``facet complexity'' of any $\eps$-approximator of $K$.

\end{itemize}


\subsection{The Gaussian and Chi-Squared Distributions}
\label{subsec:prelims-gaussian}

We write $N(0,I_n)$ to denote the $n$-dimensional standard Gaussian distribution, and denote its density function by $\phi_n$, i.e. 
\[\varphi_n(x) = (2\pi)^{-n/2} e^{-\|x\|^2/2}.\]
When the dimension is clear from context, we may simply write $\phi$ instead of $\phi_n$. We write $\vol(K)$ to denote the Gaussian measure of a (Lebesgue measurable) set $K \subseteq \R^n$, that is 
\[\vol(K) := \Prx_{\bx \sim N(0,I_n)}[\bx \in K].\]  
We recall the following standard tail bound on Gaussian random variables:

\begin{proposition}[Theorem~1.2.6 of \cite{durrett_2019} or Equation~2.58 of \cite{TAILBOUND}] \label{prop:gaussian-tails}
	Suppose $\bg\sim N(0,1)$ is a one-dimensional Gaussian random variable. Then {for all $r>0$,}
\[
\varphi_1(r)
\left({\frac 1 r} - {\frac 1 {r^3}} \right) \leq \Prx_{\bg \sim N(0,1)}[\bg \geq r] \leq
\varphi_1(r)
\left({\frac 1 r} - {\frac 1 {r^3}} + {\frac 3 {r^5}}\right)
\]
where $\phi_1$ is the one-dimensional Gaussian density.
\end{proposition}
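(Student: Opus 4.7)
The plan is to derive both bounds by three successive applications of integration by parts to the tail integral $\int_r^\infty \varphi_1(x)\,dx$. The key identity is $\varphi_1'(x) = -x\,\varphi_1(x)$, equivalently $\varphi_1(x) = -\varphi_1'(x)/x$. Rewriting the integrand this way inserts a factor of $\varphi_1'(x)$ which can be integrated by parts, peeling off the leading $\varphi_1(r)/r$ term of the Mills ratio and leaving behind a strictly smaller tail integral amenable to the same treatment. Each iteration will produce one more term in an alternating expansion whose sign pattern immediately yields one-sided bounds.

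Concretely, I would first apply integration by parts with $u = 1/x$ and $dv = \varphi_1'(x)\,dx$ (so $v = \varphi_1(x)$) to obtain
\[
\int_r^\infty \varphi_1(x)\,dx = \frac{\varphi_1(r)}{r} - \int_r^\infty \frac{\varphi_1(x)}{x^2}\,dx,
\]
where the boundary term at infinity vanishes by the super-exponential decay of $\varphi_1$. A second application, now with $u = 1/x^3$, applied to the remaining integral after rewriting $\varphi_1(x)/x^2 = -\varphi_1'(x)/x^3$, gives
\[
\int_r^\infty \varphi_1(x)\,dx = \frac{\varphi_1(r)}{r} - \frac{\varphi_1(r)}{r^3} + 3\int_r^\infty \frac{\varphi_1(x)}{x^4}\,dx.
\]
Dropping the non-negative tail integral on the right yields the claimed lower bound $\varphi_1(r)(1/r - 1/r^3)$ (which is in any case trivially true when $r \leq 1$, since the right-hand side is then non-positive). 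A third integration by parts, with $u = 1/x^5$, converts the remaining integral into $\varphi_1(r)/r^5 - 5\int_r^\infty \varphi_1(x)/x^6\,dx$; discarding this last non-negative tail yields the upper bound $\varphi_1(r)(1/r - 1/r^3 + 3/r^5)$.

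There is no genuine obstacle here; this is a classical Mills-ratio calculation that appears in standard probability references (including the two cited). The only care required is verifying that each boundary term at infinity vanishes, which follows immediately from $\varphi_1(x)/x^k \to 0$ as $x \to \infty$ for every fixed $k \geq 0$. The hypothesis $r > 0$ in the statement ensures that all factors $1/r^k$ appearing at the lower limit of integration are finite, so no additional care is required there.
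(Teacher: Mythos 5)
Your derivation is correct, and each integration-by-parts step checks out (the constants $1$, $-1$, $3$, and the discarded $-15\int_r^\infty \varphi_1(x)/x^6\,dx$ are all right). The paper itself does not prove this proposition---it simply cites it from Durrett and from the cited tail-bound reference---so there is no internal proof to compare against; your argument is the standard Mills-ratio expansion by repeated integration by parts that those references use, and it is a complete and correct proof of the statement.
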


{
We will also make use of the Berry-Esseen central limit theorem~\cite{berry,esseen} (alternatively, see Section 11.5 of~\cite{odonnell-book}):

\begin{theorem} \label{thm:berry-esseen}
	Let $\bX_1, \ldots, \bX_n$ be independent random variables with $\Ex\sbra{\bX_i} = 0$ and $\Varx\sbra{\bX_i} = \sigma_i^2$, and assume $\sumi \sigma_i^2 = 1$. Let $\bS = \sumi \bX_i$ and let $\bZ \sim N(0,1)$ be a standard univariate Gaussian. Then for all $u\in\R$, 
	\[\abs{\Prx\sbra{\bS\leq u} - \Prx\sbra{\bZ \leq u}} \leq c\gamma\] 
	where 
	\[\gamma = \sumi \Ex\sbra{|\bX_i|^3}\]
	and $c \leq 0.56$ is a universal constant. 
\end{theorem}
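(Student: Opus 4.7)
Berry--Esseen is a classical result that the authors cite rather than prove, so the sketch below is of how one would derive it from scratch. The cleanest approach I know combines the Lindeberg swapping trick with a smoothing argument (equivalently, Esseen's smoothing inequality on characteristic functions).

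The plan is to first establish a bound against smooth test functions $f \in C^3$. Introduce independent auxiliary Gaussians $\bZ_1,\ldots,\bZ_n$ with $\bZ_i \sim N(0,\sigma_i^2)$ that are also independent of the $\bX_i$'s, so that $\bZ := \sum_i \bZ_i \sim N(0,1)$ by the assumption $\sum_i \sigma_i^2 = 1$. The key construction is the sequence of hybrids
\[
H_k := \bX_1 + \cdots + \bX_k + \bZ_{k+1} + \cdots + \bZ_n,
\]
so that $H_0 = \bZ$ and $H_n = \bS$. Telescoping gives
\[
\Ex\sbra{f(\bS)} - \Ex\sbra{f(\bZ)} = \sum_{k=1}^n \pbra{\Ex\sbra{f(H_k)} - \Ex\sbra{f(H_{k-1})}},
\]
and each summand is bounded by Taylor-expanding $f$ around the common term $H_k - \bX_k = H_{k-1} - \bZ_k$ to third order. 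Since $\bX_k$ and $\bZ_k$ have matching first and second moments, the order-$0$, $1$, and $2$ terms cancel, and the remainder is $O(\|f'''\|_\infty \cdot (\Ex\sbra{|\bX_k|^3} + \Ex\sbra{|\bZ_k|^3}))$. Since $\Ex\sbra{|\bZ_k|^3} = O(\sigma_k^3) \leq O(\Ex\sbra{|\bX_k|^3})$ by Jensen's inequality, summing over $k$ yields $\abs{\Ex\sbra{f(\bS)} - \Ex\sbra{f(\bZ)}} \leq C \|f'''\|_\infty \gamma$.

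To upgrade to Kolmogorov distance, one would approximate the indicator $\mathbf{1}[\cdot \leq u]$ by a smooth mollification $f_{u,\delta}$ at scale $\delta$, with $\|f_{u,\delta}'''\|_\infty = O(\delta^{-3})$. This gives an error of $O(\gamma/\delta^3)$ from the smooth bound, plus an additive mollification error of $O(\delta)$ (using that the Gaussian density of $\bZ$ is bounded by $(2\pi)^{-1/2}$); optimizing $\delta$ yields the suboptimal rate $O(\gamma^{1/4})$. To recover the correct $O(\gamma)$ rate, one would instead work with characteristic functions: by Esseen's smoothing inequality, $\sup_u \abs{\Prx\sbra{\bS \leq u} - \Prx\sbra{\bZ \leq u}}$ is bounded in terms of $\int_{-T}^T \abs{\varphi_\bS(t) - e^{-t^2/2}}/|t|\,dt$ plus an $O(1/T)$ remainder. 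Then a Taylor expansion $\varphi_{\bX_i}(t) = 1 - \sigma_i^2 t^2/2 + O(|t|^3 \Ex\sbra{|\bX_i|^3})$ together with the telescoping bound $\abs{\prod_i a_i - \prod_i b_i} \leq \sum_i \abs{a_i - b_i}$ (applied on the region $|t| \leq c/\gamma^{1/3}$ with $T$ chosen of the same order) delivers the desired $O(\gamma)$ bound on the characteristic-function difference.

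The main obstacle is tracking the sharp constant: Esseen's original $0.7655$ has been successively sharpened by Shevtsova, Tyurin, and others, and to land on the specific constant $0.56$ quoted in the theorem one must either invoke one of these refined estimates or carry out the characteristic-function argument above with a careful analysis of the remainder terms near the origin. For the purposes of the present paper, however, only the $O(\gamma)$ rate is required in the applications of \Cref{thm:berry-esseen}, so the looser absolute constant coming out of the simplest Lindeberg-based or Fourier argument would already suffice.
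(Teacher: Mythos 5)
The paper does not prove this statement: it is the classical Berry--Esseen theorem, cited as a black box from the standard references (\cite{berry,esseen} and O'Donnell's book), and used only for its $O(\gamma)$ rate. You correctly recognize this, so there is no ``paper's proof'' to compare against.

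As a self-contained sketch, what you wrote is essentially correct in outline and reflects the two canonical approaches. A couple of places are compressed enough that a reader would need to fill in real work. First, in the Lindeberg step, the claim $\Ex[|\bZ_k|^3] = O(\sigma_k^3) \le O(\Ex[|\bX_k|^3])$ uses $\sigma_k^3 = (\Ex[\bX_k^2])^{3/2} \le \Ex[|\bX_k|^3]$, which is Lyapunov's (power-mean) inequality rather than Jensen applied directly; worth stating precisely since it is the only place the matching second moments are leveraged beyond cancellation. Second, and more substantively, in the Esseen smoothing step the Taylor expansion $\varphi_{\bX_i}(t) = 1 - \sigma_i^2 t^2/2 + O(|t|^3 \Ex[|\bX_i|^3])$ together with $|\prod a_i - \prod b_i| \le \sum |a_i - b_i|$ only controls the integrand on a region near the origin; to take $T \asymp 1/\gamma$ one also needs a pointwise decay bound of the form $|\varphi_{\bS}(t)| \le \exp(-t^2/3)$ (say) on the full range $|t| \le T$, which comes from a separate estimate on each factor and is not a consequence of the near-origin Taylor bound alone. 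Without that ingredient the argument as written does not close, and the rate degrades. Your identification of the constant $0.56$ as requiring the refined computations of Tyurin/Shevtsova rather than the bare Esseen argument is accurate. Since the paper only needs the $O(\gamma)$ bound, these refinements are immaterial to the downstream applications.
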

}

Recall that the norm of an $n$-dimensional Gaussian random vector is distributed according to the chi-squared distribution with $n$ degrees of freedom, i.e. if $\bx\sim N(0,I_n)$ then $\|\bx\|^2 \sim \chi^2(n)$. {It is well known (see e.g.~\cite{Wiki-chisquare}) that the mean of the $\chi^2(n)$ distribution is $n$, the median is $n\pbra{1 - \Theta(1/n)}$, and for $n\geq 2$ the pdf is everywhere at most 1. We note that an easy consequence of these facts is that the origin-centered ball $B(\sqrt{n})$ of radius $\sqrt{n}$ in $\R^n$ has $\Vol(B(\sqrt{n})) = 1/2 + o_n(1).$} 

We will require the following tail bound on $\chi^2(n)$ random variables:

\begin{proposition}[Section 4.1 of \cite{laurent2000}] \label{prop:chi-squared-tail}
	Suppose $\by\sim\chi^2(n)$. Then for any $t > 0$, we have 
	\begin{gather*}
		\Prx_{\by\sim\chi^2(n)}\sbra{\by \geq n + 2\sqrt{nt} + 2t} \leq \exp\pbra{-t}, \\
		\Prx_{\by\sim\chi^2(n)}\sbra{\by \leq n - 2\sqrt{nt}} \leq \exp\pbra{-t} .
	\end{gather*}
\end{proposition}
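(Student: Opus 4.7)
The plan is to use the standard Chernoff--Cram\'er approach based on the moment generating function of $\chi^2(n)$, mirroring the argument in Section~4.1 of \cite{laurent2000}. First I would write $\by = \sum_{i=1}^n \bg_i^2$ with $\bg_i \sim N(0,1)$ i.i.d., so that by independence $\Ex\sbra{e^{s\by}} = (1-2s)^{-n/2}$ for every $s < 1/2$. Applying Markov to $e^{s\by}$ with $s \in (0,1/2)$ (for the upper tail) and to $e^{-s\by}$ with any $s > 0$ (for the lower tail) yields the two Cram\'er bounds
\[
\log \Prx\sbra{\by \geq n+u} \leq -s(n+u) - \tfrac{n}{2}\log(1-2s) \quad\text{and}\quad \log \Prx\sbra{\by \leq n-u} \leq s(n-u) - \tfrac{n}{2}\log(1+2s).
\]

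Second, I would optimize each bound in closed form. Differentiating gives the critical points $s = u/(2(n+u))$ for the upper tail (valid for all $u>0$) and $s = u/(2(n-u))$ for the lower tail (valid when $u<n$; when $u \geq n$ the lower-tail inequality is trivial since $\by \geq 0$ always). Substituting these back collapses both bounds to
\[
\log \Prx\sbra{\by \geq n+u} \leq \tfrac{n}{2}\log(1+u/n) - u/2 \quad\text{and}\quad \log \Prx\sbra{\by \leq n-u} \leq \tfrac{n}{2}\log(1-u/n) + u/2.
\]

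Third, I would specialize to the stated thresholds. Writing $a := \sqrt{t/n}$, the lower-tail case $u = 2\sqrt{nt}$ has $u/n = 2a$, and the desired ``$\leq -t$'' inequality rearranges to $\log(1-2a) \leq -2a - 2a^2$, which is immediate from the expansion $\log(1-x) = -\sum_{k \geq 1} x^k/k$ since every omitted term is negative. The upper-tail case $u = 2\sqrt{nt}+2t$ has $u/n = 2a + 2a^2$, and after a brief rearrangement the desired ``$\leq -t$'' inequality reduces to $e^{2a} \geq 1 + 2a + 2a^2$, which is immediate from the Taylor series of $e^{2a}$.

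The only (rather mild) obstacle I anticipate is recognizing why the upper- and lower-tail thresholds appear in their asymmetric forms. The reason is that the Taylor expansion of $\log(1+v)$ has alternating signs, so the quadratic truncation $v - v^2/2$ is \emph{not} an upper bound on $\log(1+v)$; the additive $+2t$ correction inside $n + 2\sqrt{nt} + 2t$ is exactly what is needed to absorb the positive cubic-and-higher remainder in the Cram\'er optimum. In contrast, $\log(1-v) = -v - v^2/2 - \cdots$ has all negative terms, so no such correction is required for the lower tail and the simpler threshold $n - 2\sqrt{nt}$ suffices.
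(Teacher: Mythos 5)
Your proof is correct, and it reproduces the standard Chernoff--Cram\'er argument from Section~4.1 (Lemma~1) of Laurent--Massart, which the paper simply cites without reproducing. The optimization of the MGF bound and the reduction of the two tail inequalities to $\log(1-2a) \leq -2a-2a^2$ and $e^{2a} \geq 1+2a+2a^2$ are exactly the right clean endgame, and your remark explaining why the upper-tail threshold carries the extra $+2t$ (sign-alternation of $\log(1+v)$ vs.\ all-negative tail of $\log(1-v)$) correctly identifies the source of the asymmetry.
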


\subsection{Bounds on Gaussian Surface Area}
\label{subsec:gsa}

Given a measurable set $K\sse\R^n$, recall that the \emph{Gaussian surface area of $K$}---which we will denote as $\GSA(K)$---is given by 
\[\GSA(K) := \lim_{\delta\to0}\frac{\vol\pbra{K + B(\delta)} - \vol(K)}{\delta}.\] 
For convex sets (more generally, for sets which are sufficiently regular, e.g. have smooth boundary except at a set of measure zero), we have
\begin{equation} \label{eq:gsa-def}
	\GSA(K) = \int_{\partial K} \phi(x) \,d\sigma(x)
\end{equation}
See~\cite{Nazarov:03} or Definition~2 of~\cite{KOS:08} for further discussion on this point.
The following upper bound on the Gaussian surface area of an intersection of halfspaces was obtained by Nazarov; see~\cite{KOS:08,Kane:sens14}.

\begin{theorem}[Nazarov's bound] \label{thm:nazarov}
	Suppose $K \sse \R^n$ is an intersection of $s$ halfspaces. Then we have
	\[\GSA(K) \leq \sqrt{2\ln s} + 2.\] 
\end{theorem}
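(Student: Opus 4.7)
The approach I would take reinterprets the Gaussian surface area as the probability density at zero of a certain real-valued random variable, and then bounds that density using the concentration of the maximum of jointly Gaussian random variables.

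\textbf{Step 1 (Density reformulation).} Write $K = \bigcap_{i=1}^s H_i$ with $H_i = \{x \in \R^n : \langle a_i, x\rangle \leq b_i\}$ and $\|a_i\| = 1$. For $\bx \sim N(0, I_n)$, define
\[
\bM \;:=\; \max_{i \in [s]}\bigl(\langle a_i, \bx\rangle - b_i\bigr).
\]
Note that $\bx \in K \iff \bM \leq 0$, and the inner parallel body $K_{-t} := \bigcap_i \{x : \langle a_i, x \rangle \leq b_i - t\}$ (which for convex $K$ is exactly $\{y : y + B(t) \subseteq K\}$) satisfies $\bx \in K_{-t} \iff \bM \leq -t$. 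Using the inner-parallel-body formulation of Gaussian surface area, equivalent to (\ref{eq:gsa-def}) for sufficiently regular convex sets, I would deduce
\[
\GSA(K) \;=\; \lim_{t \to 0^+} \frac{\vol(K) - \vol(K_{-t})}{t} \;=\; \lim_{t \to 0^+} \frac{\Pr[-t < \bM \leq 0]}{t} \;=\; f_{\bM}(0),
\]
the density of $\bM$ evaluated at $0$.

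\textbf{Step 2 (Reduction to a density bound for Gaussian maxima).} The variables $\bY_i := \langle a_i, \bx\rangle - b_i$ are jointly Gaussian with $\mathrm{Var}[\bY_i] = 1$ and $\mathbb{E}[\bY_i] = -b_i$, so $\bM = \max_i \bY_i$. The theorem then follows from the general fact that for any collection of $s$ jointly Gaussian random variables with unit variance (any means, any correlations), the density of $\max_i \bY_i$ is pointwise bounded by $\sqrt{2\ln s} + 2$.

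\textbf{Step 3 (Proving the density bound).} For the i.i.d.\ benchmark $\bZ_1, \dots, \bZ_s \sim N(0,1)$, the density of the maximum is $f_{\max}(t) = s\,\phi_1(t)\,\Phi(t)^{s-1}$. The maximizer $t^*$ of this expression satisfies $t^* \Phi(t^*) = (s-1)\phi_1(t^*)$, which using Mills-ratio estimates gives $t^* = \sqrt{2\ln s}\,(1 + o(1))$, and substituting back yields $f_{\max}(t^*) \leq \sqrt{2\ln s} + 2$. For the general correlated case, I would either (i) apply a Slepian/Gordon-type comparison showing that arbitrary correlation structure cannot inflate the max's density beyond the i.i.d.\ benchmark, or (ii) argue directly via the representation $f_{\bM}(t) = \sum_i \phi_1(t + b_i)\,\Pr[\forall j \neq i : \bY_j \leq t \mid \bY_i = t]$ and control the conditional probabilities using the polytope structure.

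\textbf{Main obstacle.} The technical crux is Step 3: while the i.i.d.\ computation is elementary if slightly delicate, extending the density bound to arbitrary jointly Gaussian collections is what prevents us from using the much weaker naive union-bound estimate $\GSA(K) \leq \sum_i \phi_1(b_i)$, which can be far too large (e.g., when many $b_i$ are small, as is possible for polytopes whose facets all pass near the origin). The sharp bound $\sqrt{2\ln s} + 2$ precisely reflects the Gumbel-scale $1/\sqrt{2\ln s}$ fluctuations of the maximum of $s$ Gaussians, a feature not captured by union bounding facet-by-facet.
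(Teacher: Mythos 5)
Your Step 1 --- expressing $\GSA(K)$ as the value at $0$ of the density of $\mathbf{M} := \max_i(\langle v_i,\mathbf{x}\rangle-\theta_i)$ --- is correct. Note, however, that the paper does not prove Theorem~\ref{thm:nazarov}; it cites Nazarov's result via the references given there, so the comparison must be against the standard argument. The difficulty with your proposal lies in Steps~2--3. The ``general fact'' you invoke in Step~2, that the density of the maximum of any $s$ jointly Gaussian unit-variance random variables is pointwise at most $\sqrt{2\ln s}+2$, is not a reduction to anything known or simpler: it is \emph{equivalent} to Theorem~\ref{thm:nazarov} itself. Any such collection can be written as $\mathbf{Y}_i=\langle v_i,\mathbf{x}\rangle-b_i$ with $\|v_i\|=1$ and $\mathbf{x}\sim N(0,I_n)$, and then $f_{\mathbf{M}}(t)=\GSA\bigl(\bigcap_i\{x:\langle v_i,x\rangle\le b_i+t\}\bigr)$ is the surface area of an $s$-facet polytope. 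As $t$ and the parameters range freely this sweeps out \emph{all} $s$-facet polytopes, so Step~2 restates the theorem rather than reducing it.

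Step~3 therefore has to carry the entire proof, and it does not. The i.i.d.\ computation is valid but corresponds only to a polytope with pairwise-orthogonal facet normals, which cannot exist once $s>n$ --- precisely the regime the theorem is about. Option~(i) fails: Slepian/Gordon comparisons control the CDF $\Pr[\max_i\mathbf{Y}_i\le u]$ pointwise, and a pointwise CDF inequality gives no pointwise bound on the density in either direction. Option~(ii), the formula $f_{\mathbf{M}}(t)=\sum_i\phi_1(t+b_i)\Pr[\forall j\ne i:\mathbf{Y}_j\le t\mid\mathbf{Y}_i=t]$, is exactly the decomposition $\GSA=\sum_i\int_{F_i}\phi\,d\sigma$ over the facets $F_i$, so it returns you to the starting point. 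What the actual argument supplies, and what is missing here, is a geometric identity: the outward normal tubes $T_i=\{x+tv_i:x\in F_i,\ t>0\}$ over the facets are pairwise disjoint (by uniqueness of the nearest-point projection onto the convex body $K$), and integrating the Gaussian density along each tube gives $\int_{T_i}\phi = g(\theta_i)\int_{F_i}\phi\,d\sigma$, where $g(\theta)=(1-\Phi(\theta))/\phi_1(\theta)$ is the Mills ratio. Disjointness forces $\sum_i g(\theta_i)\int_{F_i}\phi\,d\sigma\le 1$. Splitting the facets at the threshold $T=\sqrt{2\ln s}$, the near facets ($\theta_i\le T$) contribute at most $1/g(T)\le T+O(1)$, and the far facets contribute at most $s\,\phi_1(T)=1/\sqrt{2\pi}$, giving $\sqrt{2\ln s}+2$. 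This disjoint-tube/Mills-ratio step is the crux, and it has no counterpart in a soft Gaussian comparison argument; you would need to supply it to make Step~3 rigorous.
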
 

We will also require the following upper bound on the Gaussian surface area of an arbitrary convex set in $\R^n$, which was obtained by Ball~\cite{Ball:93}:

\begin{theorem}[Ball's theorem] \label{thm:ball}
	Suppose $K\sse\R^n$ is a convex set. Then we have 
	\[\GSA(K) \leq O(n^{1/4}).\]
\end{theorem}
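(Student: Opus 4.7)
The plan is to reduce to smooth convex bodies containing the origin in their interior (by approximation and translation, each of which affects $\GSA(K)$ only by absolute factors), and then to combine a divergence-theorem identity with a Cauchy--Schwarz estimate. Parametrize $\partial K$ radially as $\{\rho(u) u : u \in \S^{n-1}\}$, where $\rho: \S^{n-1} \to (0,\infty)$ is the radial function of $K$, and let $\theta(u) \in [0, \pi/2)$ denote the angle between the outward unit normal to $\partial K$ at $\rho(u) u$ and the radial direction $u$ itself. A direct computation of the induced surface measure yields
\[
\GSA(K) \;=\; \int_{\S^{n-1}} \varphi_n(\rho(u) u)\, \rho(u)^{n-1}\, \sec(\theta(u))\, d\sigma(u),
\]
where $\sigma$ denotes the (unnormalized) surface measure on $\S^{n-1}$. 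The factor $\varphi_n(\rho u)\rho^{n-1}$ is, up to normalization, the density of the chi distribution with $n$ degrees of freedom, which is concentrated in an $O(1)$-width window about $\rho \approx \sqrt{n}$; this sharp shell concentration is the quantitative ingredient that will ultimately produce the $n^{1/4}$ exponent rather than $n^{1/2}$.

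Applying the divergence theorem to the vector field $V(x) = x\,\varphi(x)$ on $K$ gives the identity
\[
\int_{\partial K} (x\cdot \nu(x))\,\varphi(x)\, dS(x) \;=\; n\,\vol(K) \;-\; \int_K |x|^2\,\varphi(x)\, dx \;\leq\; n,
\]
where $\nu(x)$ is the outward unit normal and $dS$ is the surface measure on $\partial K$. Since the origin lies in the convex body $K$, one has $x\cdot \nu(x) = \rho(u)\cos\theta(u) \geq 0$, so this is an honest $L^1$ bound on the $(x\cdot\nu)$-weighted Gaussian surface measure. Cauchy--Schwarz then yields
\[
\GSA(K)^2 \;\leq\; \left(\int_{\partial K} (x\cdot\nu)\,\varphi\, dS\right) \cdot \left(\int_{\partial K} \frac{\varphi}{x\cdot\nu}\, dS\right) \;\leq\; n \cdot \int_{\partial K} \frac{\varphi(x)}{x\cdot\nu(x)}\, dS(x),
\]
reducing the theorem to showing the inverse-weighted surface integral is at most $O(\sqrt n)$ uniformly over convex bodies $K$ with the origin in their interior.

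The main obstacle will be establishing this $O(\sqrt{n})$ bound. Converting to radial coordinates, the inverse-weighted integrand becomes $(2\pi)^{-n/2}\rho(u)^{n-2} e^{-\rho(u)^2/2}\,\sec^2(\theta(u))$, i.e.\ the chi density deflated by a factor of $\rho\approx\sqrt n$ but inflated by $\sec^2\theta$. The key geometric input is the identity $\tan^2\theta(u) = |\nabla_{\S^{n-1}} \log \rho(u)|^2$, so controlling $\sec\theta$ on average amounts to controlling the Dirichlet energy of $\log \rho$ on the sphere; convexity of $K$ constrains the Hessian of the support function and thereby prevents $\sec\theta$ from being large on too large a fraction of $\S^{n-1}$. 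Combined with the $O(1)$-width shell concentration of the chi profile around $\rho\approx \sqrt n$, which restricts the effective integration to a thin annular region of $\S^{n-1}$, this should yield the required $O(\sqrt n)$ bound and hence $\GSA(K) \leq O(n^{1/4})$.
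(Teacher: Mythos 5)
Your opening steps are correct: the radial parametrization of $\partial K$, the divergence-theorem computation giving $\int_{\partial K}(x\cdot\nu)\,\varphi\,dS = n\Vol(K)-\int_K|x|^2\varphi\,dx$ (which is exactly the surface-integral expression for the total convex influence $\TInf[K]$; compare \Cref{lem:neeman} and \Cref{def:convex-influence}), the non-negativity of $x\cdot\nu$ when the origin lies in $K$, and the Cauchy--Schwarz factorization $\GSA(K)^2 \le A\cdot B$ where $A = \int_{\partial K}(x\cdot\nu)\varphi\,dS$ and $B = \int_{\partial K}\varphi/(x\cdot\nu)\,dS$. However, the reduction then discards all information about $A$ beyond $A\le n$, and the remaining claim --- that $B\le O(\sqrt n)$ uniformly over convex bodies containing the origin in their interior --- is simply false. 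Consider the slab $K=\{x\in\R^n : |x_1|\le\epsilon\}$. Its boundary consists of the two hyperplanes $\{x_1=\pm\epsilon\}$, with $\nu=\pm e_1$, so $x\cdot\nu=\epsilon$ identically on $\partial K$. Therefore
\[
B \;=\; \int_{\partial K}\frac{\varphi(x)}{x\cdot\nu}\,dS \;=\; \frac{1}{\epsilon}\,\GSA(K) \;=\; \frac{2}{\epsilon\sqrt{2\pi}}\,e^{-\epsilon^2/2} \;\longrightarrow\; \infty \quad\text{as }\epsilon\to 0^+,
\]
while translating $K$ within the slab cannot prevent $\min_{\partial K}(x\cdot\nu)\le\epsilon$. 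In radial coordinates the same blow-up is visible: $\sec^2\theta(u)=1/u_1^2$ becomes of order $n/\epsilon^2$ exactly where the chi-density factor $\rho^{n-2}e^{-\rho^2/2}$ is large (namely where $\rho(u)=\epsilon/|u_1|\approx\sqrt n$), so convexity does \emph{not} keep $\sec\theta$ controlled on the relevant region. The informal appeal to ``controlling the Dirichlet energy of $\log\rho$'' via the Hessian of the support function does not rescue this --- thin bodies achieve unbounded $B$.

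The diagnosis is that $A\le n$ is far too lossy precisely when $B$ is large: for the slab, $A\approx 2\epsilon/\sqrt{2\pi}$ and $B\approx 2/(\epsilon\sqrt{2\pi})$, so the product $AB=O(1)$ and Cauchy--Schwarz is actually essentially tight --- but this compensation is destroyed once you replace $A$ by $n$. (Incidentally, one has the sharper bound $A=\TInf[K]\le\sqrt{2n}$ by Cauchy--Schwarz applied to $\E[\mathbf{1}_K(\bx)(n-\|\bx\|^2)]$, independent of Ball's theorem, but the slab still kills the approach since $B$ is genuinely unbounded.) Note also that the paper does not prove \Cref{thm:ball}; it cites it to \cite{Ball:93}, and Ball's argument is of a different flavor --- the standard expositions proceed via the observation that a Gaussian-random line in any fixed direction $v$ meets $\partial K$ at most twice, which bounds $\int_{\partial K}|\nu\cdot v|\,\varphi\,dS$ uniformly, and then combine these directional bounds carefully --- rather than via a single Cauchy--Schwarz against the weight $x\cdot\nu$.
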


Finally, we recall the Gaussian isoperimetric inequality~\cite{sudakov1978extremal,borell1975brunn}:

\begin{theorem}
	Suppose $K\sse\R^n$. Let $H \sse\R^n$ be a halfspace (i.e. $H = \{x \in \R^n : \langle x, v\rangle \leq \theta \}$ for $v\in\S^{n-1}$ and $\theta\in\R$) such that $\vol(H) = \vol(K)$.
	Then \[\GSA(K) \geq \GSA(H) = \phi_1(\theta)\]
	where $\phi_1$ denotes the univariate Gaussian p.d.f. 
\end{theorem}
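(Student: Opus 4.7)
The plan is to derive the Gaussian isoperimetric inequality from the spherical isoperimetric inequality of L\'{e}vy--Schmidt via Poincar\'{e}'s limit construction, which is the classical approach due to Sudakov--Tsirelson.

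The key ingredient is Poincar\'{e}'s observation that if $\mathbf{x}$ is distributed uniformly on $\S^{N-1}(\sqrt{N})$, then the projection of $\mathbf{x}$ onto any fixed $n$ coordinates converges in distribution to $N(0,I_n)$ as $N \to \infty$. Given the measurable set $K \subseteq \R^n$, I would consider the cylindrical lift $\widetilde{K}_N := (K \times \R^{N-n}) \cap \S^{N-1}(\sqrt{N})$, and the analogous lift $\widetilde{H}_N$ of the halfspace $H$. By Poincar\'{e}, the normalized uniform surface measure of $\widetilde{K}_N$ tends to $\vol(K)$ and that of $\widetilde{H}_N$ tends to $\vol(H) = \vol(K)$; moreover, $\widetilde{H}_N$ is isometrically a spherical cap on $\S^{N-1}(\sqrt{N})$.

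Next, I would invoke the spherical isoperimetric inequality: among all Borel subsets of $\S^{N-1}(\sqrt{N})$ of a given normalized surface measure, spherical caps minimize the $(N-1)$-dimensional Minkowski boundary measure. Applied to $\widetilde{K}_N$ and $\widetilde{H}_N$, this yields that the spherical boundary measure of $\partial \widetilde{K}_N$ dominates that of $\partial \widetilde{H}_N$. Taking $N \to \infty$ and checking (this is the technical heart of the argument) that the suitably normalized spherical boundary measures converge to the Gaussian surface areas $\GSA(K)$ and $\GSA(H)$ respectively, one obtains $\GSA(K) \geq \GSA(H)$. The equality $\GSA(H) = \varphi_1(\theta)$ then follows directly from the definition in \Cref{eq:gsa-def}: by rotational invariance we may take $v = e_1$, so $\partial H = \{x : x_1 = \theta\}$ and
\[
\GSA(H) \;=\; \int_{\partial H} \varphi_n(x)\, d\sigma(x) \;=\; \varphi_1(\theta)\cdot \int_{\R^{n-1}} \varphi_{n-1}(y)\, dy \;=\; \varphi_1(\theta).
\]

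The main obstacle is rigorously justifying the convergence of surface measures as $N \to \infty$, since Minkowski boundary content is not lower semicontinuous under mere weak convergence of measures. The standard workaround is to pass instead to the $\varepsilon$-enlargement form of the isoperimetric inequality, namely
\[
\vol(K + B(\varepsilon)) \;\geq\; \Phi\pbra{\Phi^{-1}(\vol(K)) + \varepsilon},
\]
where $\Phi$ is the standard normal CDF. This formulation only involves measures of enlarged sets (not their derivatives), so it passes to the Gaussian limit robustly from the spherical-enlargement statement given by L\'{e}vy--Schmidt. One then recovers the surface-area form by subtracting $\vol(K)$ from both sides, dividing by $\varepsilon$, and letting $\varepsilon \to 0^{+}$: the right-hand side limits to $\varphi_1(\Phi^{-1}(\vol(K))) = \varphi_1(\theta)$ by the chain rule, while the left-hand side equals $\GSA(K)$ whenever this limit exists. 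For sufficiently regular sets (in particular the convex sets that are our concern in this paper) this last step is unproblematic; for general measurable $K$ one interprets $\GSA(K)$ as the $\liminf$ and the same inequality survives.
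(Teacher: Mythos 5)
The paper does not prove this theorem; it simply cites it to Sudakov--Tsirelson and Borell, so there is no ``paper proof'' to compare against. Your sketch accurately reproduces the classical Sudakov--Tsirelson argument: lift $K$ and $H$ cylindrically to $\S^{N-1}(\sqrt{N})$, observe that the lift of $H$ is a spherical cap, apply the L\'{e}vy--Schmidt spherical isoperimetric inequality in its $\varepsilon$-enlargement form (which is the right move precisely because Minkowski boundary content is not stable under weak limits), pass to the Gaussian limit via Poincar\'{e}'s observation, and differentiate at $\varepsilon=0^+$ to recover the surface-area statement. Your final Fubini computation $\GSA(H)=\varphi_1(\theta)\int_{\R^{n-1}}\varphi_{n-1}=\varphi_1(\theta)$ and the identification $\Phi^{-1}(\vol(K))=\theta$ are both correct. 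The only caveat worth spelling out if this were written in full is the one you already flag: for a general measurable $K$ the limit defining $\GSA(K)$ need not exist, and one should work with the $\liminf$ (or restrict attention to sets of finite perimeter, which includes all convex bodies used in this paper). For completeness, you might also mention that there are proofs of the Gaussian isoperimetric inequality that avoid the spherical limit altogether --- Ehrhard's symmetrization argument, Bobkov's two-point functional inequality on the discrete cube followed by a central-limit tensorization, and the Bakry--Ledoux semigroup proof --- but the route you chose is the original one and is perfectly sound.
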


\subsection{Distance Metrics Between Sets}
\label{subsec:distance-metrics}

The primary distance metric we will use throughout this paper is the following: Given two measurable sets $K, L \sse\R^n$, we define
\[
	\dG(K,L) := \Prx_{\bx\sim N(0,I_n)}\sbra{K(\bx) \neq L(\bx).}
\]
In other words, $\dG(K,L) = \vol(K\, \triangle\, L)$, i.e. the Gaussian measure of the symmetric difference of the sets $K$ and $L$. 

We also recall the \emph{Hausdorff distance} between two sets $K,L\sse\R^n$, which is defined as 
\[\dH(K,L) := \max\cbra{\sup_{x\in K} \newinf_{y\in L}d(x,y),\, \sup_{y\in L} \newinf_{x\in K} d(x,y)}\]
where $d(x,y) := \|x-y\|_2$ denotes the usual Euclidean $\ell_2$ distance between the points $x$ and $y$. We will rely on bounds on polytope approximation under the Hausdorff distance in~\Cref{subsec:dudley}.


%

\part{Upper Bounds}

We start by giving generic upper bounds on the number of halfspaces needed to approximate arbitrary $n$-dimensional convex sets under the Gaussian distance.
In~\Cref{subsec:dudley}, we show that 
\[\pbra{\wt{O}(n^{5/4}/\epsilon)}^{(n-1)/2}\] halfspaces suffice to $\eps$-approximate any convex set under the Gaussian distance metric, where the $\wt{O}$ hides a logarithmic factor in $1/\epsilon$. 
In~\Cref{subsec:good-dudley}, we sharpen the bound from~\Cref{subsec:dudley} for sets of large-enough volume. In particular, we show that if $K \subsetneq \R^n$
is a convex set with $\vol(K) = 1 -\delta$, then
\[\frac{1}{\delta} \cdot \pbra{\frac{n}{\eps}\log\pbra{\frac{1}{\delta}}}^{O(n)}\] halfspaces 
suffice to $\eps\delta$-approximate $K$ under the Gaussian distance metric. 

In~\Cref{sec:nazarov-ub,sec:ell-p}, we show that the above exponential-in-$n$ bounds on the facet complexity of convex sets can be significantly sharpened for specific convex sets. We show in~\Cref{sec:nazarov-ub} that $\exp(O(\sqrt{n}))$ halfspaces suffice to approximate the $\ell_2$ ball of Gaussian volume $1/2$ 
to constant accuracy. In~\Cref{sec:ell-p}, we give an entirely different construction from that in~\Cref{sec:nazarov-ub} to show that for $1 \leq p < 2$, $\exp(\wt{\Theta}(n^{3/4}))$ halfspaces suffice to approximate $\ell_p$ balls of Gaussian volume $1/2$ to constant accuracy.
Crucially, our construction in~\Cref{sec:ell-p} will make use of the improved generic upper bound from~\Cref{subsec:good-dudley}.


\section{Generic Upper Bounds}
\label{sec:generic-ubs}

We first give upper bounds on the facet complexity of arbitrary convex sets in $\R^n$ under the Gaussian distance metric. 

\subsection{Warmup: Universal Approximation via Hausdorff Distance Approximation}
\label{subsec:dudley}

The following upper bound on the facet complexity of polytope approximators under the \emph{Hausdorff distance} (cf.~\Cref{subsec:distance-metrics}) was independently obtained by Dudley~\cite{Dudley1974} and by 
Bronstein and Ivanov~\cite{bronshteyn1975approximation}; see also Section~4.1 of~\cite{Bronstein08}. 


\begin{theorem}[\cite{bronshteyn1975approximation}] \label{thm:dudley}
	Suppose $K \sse\R^n$ is a compact convex set with non-empty interior that is contained in the unit ball in $\R^n$, i.e. 
	\[K \sse B(1).\]
	For $0 < \eps < 10^{-3}$, there exists a convex body $L$ which is the intersection of 
	\[3\sqrt{n}\pbra{\frac{9}{\eps}}^{(n-1)/2}~\text{halfspaces}\] 
	such that 
	$K \sse L$ and $\dH(K,L) \leq \epsilon.$
\end{theorem}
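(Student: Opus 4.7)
The plan follows the classical Bronshteyn--Ivanov strategy: produce $L$ as an intersection of carefully chosen supporting halfspaces of $K$, indexed by a Euclidean net on a \emph{smoothed} version of $\partial K$, and extract the $\sqrt{\eps}$ savings (which yields the exponent $(n-1)/2$ in the facet count rather than the naive $n-1$) from the convex curvature bound that the smoothing provides.

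First I would fix a smoothing radius $\eta = \Theta(1)$ (say $\eta = 1$) and a net spacing $\delta = \Theta(\sqrt{\eps})$, and form the Minkowski-smoothed body $\widetilde K := K + B(\eta)$. Then $K \sse \widetilde K \sse B(1+\eta)$, the boundary $\partial \widetilde K$ is $C^{1,1}$ with principal curvatures at most $1/\eta$, and the outward Gauss map $\nu:\partial \widetilde K \to \S^{n-1}$ is $(1/\eta)$-Lipschitz. Build a Euclidean $\delta$-net $N \sse \partial \widetilde K$ by greedy packing; since $\partial \widetilde K$ is a compact convex hypersurface in $B(1+\eta)$, its $(n-1)$-dimensional Hausdorff measure is at most the surface area of $\S^{n-1}(1+\eta)$, and careful constant-tracking in the packing argument yields $|N| \leq 3\sqrt n\,(9/\eps)^{(n-1)/2}$. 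For each $p \in N$, let $q_p := \pi_K(p) \in \partial K$ (the nearest point of $K$ to $p$) and $v_p := (p - q_p)/\eta \in \S^{n-1}$; this $v_p$ is simultaneously the outward normal to $\widetilde K$ at $p$ and a supporting outward normal to $K$ at $q_p$. Define the supporting halfspace $H_p := \{z \in \R^n : \langle z - q_p, v_p\rangle \leq 0\}$ of $K$ and set $L := \bigcap_{p \in N} H_p$, so $K \sse L$ automatically.

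It remains to establish $\dH(K,L) \leq \eps$. Fix $x \in L \setminus K$ and set $y := \pi_K(x)$, $u := (x-y)/\|x-y\|$, and $\widetilde y := y + \eta u \in \partial \widetilde K$; pick $p \in N$ within Euclidean distance $\delta$ of $\widetilde y$. The Lipschitz Gauss map gives $\|v_p - u\| \leq \delta/\eta$; convexity of $\widetilde K$ together with the $1/\eta$-curvature bound on $\partial \widetilde K$ gives the quadratic tangent gap $\langle p - \widetilde y, v_p\rangle \leq \delta^2/(2\eta)$, which unfolds via $p = q_p + \eta v_p$ and $\widetilde y = y + \eta u$ into $\langle q_p - y, v_p\rangle \leq \delta^2/(2\eta)$. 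Combining this with the halfspace constraint $\langle x - q_p, v_p\rangle \leq 0$ and decomposing $\|x-y\| = \langle x-y, u\rangle = \langle x-y, v_p\rangle + \langle x-y, u - v_p\rangle$ gives $\|x-y\|\,(1 - \delta/\eta) \leq \delta^2/(2\eta)$, whence $\|x-y\| = O(\delta^2/\eta) = O(\eps)$ for our parameter choices.

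The main obstacle is exactly this curvature-to-distance conversion, which is what the $(n-1)/2$ exponent hinges on. A naive directions-net on $\S^{n-1}$ with supporting halfspaces would yield only a linear bound $\dH \leq O(\delta)$ and hence the weaker count $(1/\eps)^{n-1}$; similarly, anchoring the hyperplanes to $\widetilde K$ rather than to $K$ would incur an additive $\Theta(\eta)$ loss in $\dH(K,L)$ that forces $\eta = \Theta(\eps)$ and destroys the savings. Both the constant-sized smoothing (which uniformly caps the curvature of $\partial \widetilde K$) \emph{and} the choice to anchor halfspaces to $\partial K$ (so that we measure distance from $K$ directly) are essential, and together they let the quadratic tangent gap on $\partial \widetilde K$ transfer faithfully to a distance estimate from $K$ itself.
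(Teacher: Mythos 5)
The paper does not prove \Cref{thm:dudley}; it is cited to \cite{bronshteyn1975approximation} (and \cite{Dudley1974}), with a remark that the originals are stated for vertex complexity and inner approximators and ``can be easily modified.'' So there is no in-paper proof to compare against. Your argument is correct and is precisely the classical Bronshteyn--Ivanov smoothing-and-netting proof, adapted to directly produce an outer facet approximator: Minkowski-smooth $K$ to $\widetilde K = K + B(\eta)$ so that $\partial\widetilde K$ acquires a uniform $1/\eta$ curvature bound (inner rolling ball of radius $\eta$), take a $\delta$-net of $\partial\widetilde K$ with $\delta = \Theta(\sqrt{\eps})$, project each net point back to $\partial K$ to obtain a supporting halfspace of $K$, and use the quadratic tangent gap (forced by the rolling ball) plus the $1/\eta$-Lipschitz Gauss map to convert the $O(\delta)$ net resolution into an $O(\delta^2/\eta) = O(\eps)$ Hausdorff error. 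Your algebra verifying $\|x-y\|(1-\delta/\eta) \le \delta^2/(2\eta)$ is sound, and the byproduct $L \subseteq K + B(\eps)$ shows $L$ is bounded, so it is genuinely a convex body.

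The only place I'd urge caution is the net-counting constant, which you assert without derivation. The naive packing argument (surface area of $\partial\widetilde K$ divided by the area of the surface patch in a $\delta/2$-ball) incurs a correction of the form $\bigl(1 + \delta^2/(16\eta^2)\bigr)^{(n-1)/2}$, since the surface patch projects onto a disc of radius slightly less than $\delta/2$ in the tangent plane. This factor is superficially alarming for large $n$, but it is absorbed: with $\eta = 1$ it multiplies $(16/\delta^2)^{(n-1)/2}$ into $(16/\delta^2 + 1)^{(n-1)/2}$, and one can choose $\delta^2 = 16\eps/(9-\eps)$ to make this exactly $(9/\eps)^{(n-1)/2}$, while the error bound $\delta^2/(2(1-\delta)) \le \eps$ then holds comfortably once $\eps < 10^{-3}$. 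The prefactor $n\omega_n/\omega_{n-1} \le \sqrt{2\pi n} < 3\sqrt{n}$ follows from Gautschi's inequality. So the stated constants do come out, but this bookkeeping is exactly where a reader would expect to see the work, and you should spell it out rather than wave at ``careful constant-tracking.''
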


We remark that the original results due to~\cite{Dudley1974,bronshteyn1975approximation} consider the \emph{vertex complexity} (i.e. number of vertices) of approximators instead of the {facet complexity} and furthermore only consider inner approximators to $K$ (i.e. $L\sse K$ in the above); their arguments, however, can be easily modified to obtain the above \cite{Mountslides18}.
Finally, we note that the bound in \Cref{thm:dudley} is known to be close to tight: any outer (or inner) approximation to the unit ball $B(1)$ in $\R^n$ requires $\Omega(1/\eps^{(n-1)/2})$ halfspaces \cite{AFM12,Bronstein08}.

\Cref{thm:dudley} and Ball's universal bound on the Gaussian surface area of any convex set (\Cref{thm:ball}) imply the following upper bound for approximating generic convex sets in $\R^n$ under the Gaussian distance:

\begin{theorem} [Universal approximation of convex sets.] \label{thm:gaussian-dudley}
	Let $0 < \eps < 10^{-3}$. For every convex set $K \sse\R^n$, there exists a convex set $L$ which is an intersection of 
	\[O\pbra{\frac{n^{5/4} + {2n^{3/4}}\sqrt{{\ln}(2/\epsilon)}}{\eps}}^{(n-1)/2}~\text{halfspaces} 
	\]
	such that 
	$\dG(K,L) \leq \epsilon$, i.e.~we have $\FC(K,\eps) \leq O\pbra{\frac{n^{5/4} + {2n^{3/4}}\sqrt{{\ln}(2/\epsilon)}}{\eps}}^{(n-1)/2}.$
\end{theorem}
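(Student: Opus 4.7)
The plan is to reduce the Gaussian approximation problem to the Hausdorff approximation problem solved by Theorem~\ref{thm:dudley}, using Ball's universal bound $\GSA(\cdot) = O(n^{1/4})$ (Theorem~\ref{thm:ball}) as the bridge that converts a Hausdorff guarantee into a bound on the Gaussian volume of a symmetric difference. The conceptual heart of the argument is that enlargements of convex sets are convex, so Ball's bound applies uniformly to all the intermediate sets produced by the reduction.

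First, since Theorem~\ref{thm:dudley} requires a bounded set, I would truncate: choose $R = \Theta\!\left(\sqrt{n} + \sqrt{\ln(2/\eps)}\right)$ so that Proposition~\ref{prop:chi-squared-tail} guarantees $\Pr_{\bx\sim N(0,I_n)}[\|\bx\|>R]\leq \eps/2$, and set $K'' := K\cap B(R)$. Then $K''$ is convex and $\dG(K,K'')\leq \eps/2$, so it suffices to produce an intersection of halfspaces $L$ with $\dG(K'',L)\leq \eps/2$.

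Next I would apply Theorem~\ref{thm:dudley} to the rescaled body $K''/R\subseteq B(1)$ at a Hausdorff precision $\eps'$ to be chosen. This yields an outer approximator of $K''/R$ which, when rescaled by $R$, gives $L$: an intersection of at most $3\sqrt{n}(9/\eps')^{(n-1)/2}$ halfspaces with $K''\subseteq L$ and $\dH(K'',L)\leq R\eps'$. Because $K''\subseteq L$, the Hausdorff bound is equivalent to $L\subseteq K''+B(R\eps')$, so
\[
\dG(K'',L) \;=\; \vol(L\setminus K'') \;\leq\; \vol\!\left(K''+B(R\eps')\right) - \vol(K'').
\]
Since $K''$ is convex, each Minkowski sum $K''+B(s)$ is convex too, so Ball's theorem gives $\GSA(K''+B(s)) = O(n^{1/4})$ uniformly in $s\geq 0$. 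Integrating $\frac{d}{ds}\vol(K''+B(s)) = \GSA(K''+B(s))$ over $s\in[0,R\eps']$ (which is rigorous for the convex family at hand) yields $\vol(K''+B(R\eps'))-\vol(K'')\leq O(n^{1/4})\cdot R\eps'$. Choosing $\eps' = \Theta\!\left(\eps/(n^{1/4}R)\right)$ makes this at most $\eps/2$, so by the triangle inequality $\dG(K,L)\leq \eps$.

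Finally I would count facets. For all sufficiently large $n$ one has $(3\sqrt{n})^{2/(n-1)}\leq \sqrt{n}$, which lets me absorb the leading factor and bound $3\sqrt{n}(9/\eps')^{(n-1)/2}\leq (9\sqrt{n}/\eps')^{(n-1)/2}$. Substituting $1/\eps' = O(n^{1/4}R/\eps)$ together with $R = O(\sqrt{n}+\sqrt{\ln(2/\eps)})$ gives exactly the claimed bound $\left(O\!\left((n^{5/4} + n^{3/4}\sqrt{\ln(2/\eps)})/\eps\right)\right)^{(n-1)/2}$; the small-$n$ cases can be folded into the implicit constant. The main step I view as doing nontrivial work is the Hausdorff-to-Gaussian conversion via Ball: it is precisely that the $O(n^{1/4})$ surface-area bound is \emph{universal in both the shape and the location} of the convex set which lets me control the Gaussian mass of the shell $L\setminus K''$ by $R\eps'\cdot O(n^{1/4})$ with only a mild blowup over the pure Hausdorff analysis. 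The remaining steps—tail truncation, rescaling, and the constant-absorption trick for the $3\sqrt{n}$ prefactor—are routine.
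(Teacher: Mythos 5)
Your proposal is correct and follows essentially the same route as the paper's proof: truncate $K$ to a ball of radius $\Theta(\sqrt{n}+\sqrt{\ln(2/\eps)})$ via the $\chi^2$ tail bound, apply the Bronstein--Ivanov outer Hausdorff approximator to the rescaled body, and convert the Hausdorff guarantee into a Gaussian-volume bound by integrating Ball's $O(n^{1/4})$ surface-area bound over the shell of enlargements, choosing the Hausdorff precision $\eps' = \Theta(\eps/(n^{1/4}R))$. The parametrization and the $(3\sqrt{n})^{2/(n-1)}$-absorption step differ only cosmetically from what the paper does.
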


\begin{proof}
	For the rest of the argument, set
	\[r^\ast := \sqrt{n} + 2\sqrt{{\ln}\pbra{\frac{2}{\eps}}} \qquad\text{and}\qquad \eps' := \Theta\pbra{\frac{\epsilon}{n^{1/4}}}.\]
	Recalling that the squared norm of a Gaussian vector is distributed according to a $\chi^2(n)$ distribution, we have by \Cref{prop:chi-squared-tail} that 
	\begin{equation} \label{eq:dudley-ball-trunc}
		\Prx_{\bx\sim N(0,I_n)}\sbra{\|\bx\|^2 > {r^\ast}^2} \leq \Prx_{\|\bx\|^2\sim \chi^2(n)}\sbra{\|\bx\|^2 \geq n + 2\sqrt{n {\ln}\pbra{\frac{2}{\eps}}} + 2 {\ln}\pbra{\frac{2}{\epsilon}}} \leq \frac{\eps}{2}.
	\end{equation}
	It thus suffices to $(\eps/2)$-approximate $K' := K\cap B(r^\ast)$ under the Gaussian distance metric. 
	
	It is easy to see by definition of the Hausdorff distance that  
	\[\dH(K,L) \leq \delta \quad\text{is equivalent to}\quad \dH(tK, tL) \leq t\delta.\]
	In particular, consider the convex set $({1}/{r^\ast})K' \sse B(1)$, and let $L \sse\R^n$ be the polytope guaranteed to exist by \Cref{thm:dudley} such that {$K \subseteq L$ and}
	\[\dH\pbra{\frac{1}{r^\ast}K', \frac{1}{r^\ast}L} \leq \frac{\eps'}{2r^\ast}, \qquad\text{which implies that}\qquad \dH(K', L) \leq \frac{\epsilon'}{2}.\] 
	Recalling our choice of $\eps'$, note that~\Cref{thm:dudley} guarantees that $L$ is an intersection of 
	\[O\pbra{\frac{18\sqrt{n}r^\ast}{\eps'}}^{(n-1)/2} = O\pbra{\frac{n^{5/4} + {2n^{3/4}}\sqrt{{\ln}(2/\epsilon)}}{\eps}}^{(n-1)/2}~\text{halfspaces.}\]
	We will show that $L$ is in fact an $(\epsilon/2)$-approximator for $K'$ under the Gaussian distance, which would complete the proof as 
	\begin{equation} \label{eq:dudley-potato}
		\dG(K, L) \leq \dG(K,K') + \dG(K',L) \leq \frac{\eps}{2} + \frac{\eps}{2} = \frac{\eps}{2}.
	\end{equation}
 
	Indeed, by definition of the Hausdorff distance we have that
	\begin{equation} \label{eq:dudley-inclusions}
		K' \sse L \sse K'_{\eps'} \qquad\text{where}~K'_{\eps'} := \cbra{x\in\R^n : \inf_{y\in K} d(x,y) \leq \eps'}.
	\end{equation}
	Consequently, the Gaussian distance between the sets $K'$ and $L$ is	
	\begin{align}
		\dG(K', L) &= \Prx_{\bx\sim N(0,I_n)}\sbra{K'(\bx) \neq L(\bx)} \nonumber \\ 
		&= \Prx_{\bx\sim N(0,I_n)}\sbra{\bx \in L \setminus K'} \nonumber \\ 
		&\leq \Prx_{\bx\sim N(0,I_n)}\sbra{\bx \in K'_{\eps'} \setminus K'} \nonumber \\ 
		&= \int_{t=0}^{\eps'} \int_{x \in \partial K_{t}} \varphi(x)\,dx \,dt \nonumber \\
		&= \int_{t=0}^{\eps'} \GSA(K_t) \, dt \nonumber \\ 
		&\leq \eps' \cdot O(n^{1/4}) \label{eq:dudley-proof-nazarov-app}\\
		&= \frac{\eps}{2}. \label{eq:dudley-final}
	\end{align}
	Here, $K_t$ is defined as in~\Cref{eq:dudley-inclusions}; it is easy to see that if $K$ is a convex, then so is $K_t$ for all $t \geq 0$. 
	Note that we used Keith Ball's bound on the Gaussian surface area of convex sets (\Cref{thm:ball}) to obtain \Cref{eq:dudley-proof-nazarov-app}. 
	The theorem now follows from \Cref{eq:dudley-potato,eq:dudley-final}.
\end{proof}

\subsection{A Relative-Error Sharpening of \Cref{thm:gaussian-dudley}}
\label{subsec:good-dudley}

We now establish a sharpening of \Cref{thm:gaussian-dudley} that gives a much better bound in the regime where $K$ has volume very close to 1:

\begin{theorem} \label{thm:rel-err-dud}
	Let $0 < \delta < 1.$
	Suppose $K\sse\R^n$ is a {closed} convex set
	with 
	\[\Vol(K) = 1-\delta.\]
	Then for all {$\eps > 0$}, there exists a convex set $L$ which is the intersection of
\[
\frac{1}{\delta} \cdot \pbra{\frac{n}{\eps}\log\pbra{\frac{1}{\delta}}}^{O(n)}
\]
halfspaces
	such that
	$\dG(K, L) \leq \eps\delta$; i.e.,~$\FC(K,\eps \delta) \leq \frac{1}{\delta} \cdot \pbra{\frac{n}{\eps}\log\pbra{\frac{1}{\delta}}}^{O(n)}$.
\end{theorem}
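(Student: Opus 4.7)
\textbf{The plan} is to build an \emph{outer} approximator $L\supseteq K$ as a finite intersection of supporting halfspaces of $K$; since $L\supseteq K$, we have $\dG(K,L)=\vol(L\setminus K)=\vol(L)-\vol(K)$, so it suffices to arrange $\vol(L)-\vol(K)\leq \eps\delta$. First I would reduce to the case $0\in K$: if $0\notin K$, a separating hyperplane argument gives $\vol(K)<1/2$, so $\delta>1/2$, and the claim then follows immediately from \Cref{thm:gaussian-dudley}. Assuming $0\in K$, write $\mathbf{g}\sim N(0,I_n)$ as $\mathbf{g}=\mathbf{r}\mathbf{u}$ with $\mathbf{r}\sim\chi(n)$ and $\mathbf{u}$ independent uniform on $\S^{n-1}$. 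For each direction $u\in\S^{n-1}$ define the radial boundary $r_K(u):=\sup\{t\geq 0:tu\in K\}$ and the chi-mass lopped along $u$,
\[
\lambda(u):=\Prx_{\mathbf{r}\sim\chi(n)}\sbra{\mathbf{r}>r_K(u)},
\]
so that $\delta=\int_{\S^{n-1}}\lambda(u)\,d\sigma(u)$, where $\sigma$ is the uniform measure on $\S^{n-1}$.

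Next, I discard ``insignificant'' directions $u$ with $\lambda(u)<\eps\delta/8$: these contribute at most $\eps\delta/8$ to $\delta$. On the remaining ``significant'' set $S^+$, \Cref{prop:chi-squared-tail} yields $r_K(u)\leq R:=O(\sqrt{n}+\sqrt{\log(1/(\eps\delta))})$. Partition $S^+$ into buckets $B_j:=\{u\in S^+:r_K(u)\in[\rho_j,\rho_j(1+\gamma)]\}$ using a multiplicative grid of ratio $\gamma=\Theta(\eps/R^2)$, so that $\lambda(\cdot)$ varies by a factor $(1+O(\eps))$ within each bucket; the total number of buckets is $J=\tilde{O}(n\log(1/\delta)/\eps)$. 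Discard ``light'' buckets whose total bucket mass $\int_{B_j}\lambda\,d\sigma$ is at most $\eps\delta/(4J)$, losing at most a further $\eps\delta/4$. At most $M=O(J/\eps)$ ``heavy'' buckets survive.

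For each surviving bucket $B_j$, I take an $\eta$-net $N_j\subseteq\S^{n-1}$ covering $B_j$ with $\eta=\Theta(\eps/(n\log(1/\delta)))$, and for each $v\in N_j$ include in $L$ the supporting halfspace $T_v$ tangent to $K$ at the boundary point $r_K(v)v$. The key geometric claim is that for any $u\in B_j$ whose nearest net point is $v$ (with $\|u-v\|\leq\eta$), the halfspace $T_v$ cuts the ray $\{tu:t\geq 0\}$ at some $t^\star$ satisfying $t^\star-r_K(u)\leq O((\eta+\gamma)\cdot r_K(u))$. Combining this with the inequality $\chi(r)/\lambda(r)\leq O(r)$ on the relevant range $r\leq R$ (i.e., the hazard rate of the chi-distribution), the extra chi-mass between $r_K(u)$ and $t^\star$ along each ray is at most an $O(\eps)$-fraction of $\lambda(u)$; integrating over $u\in B_j$ and summing over $j$ yields total error $O(\eps\delta)$. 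Standard covering-number bounds give $|N_j|\leq(C/\eta)^{n-1}$, and with at most $M$ heavy buckets the total facet count of $L$ is at most $M\cdot(C/\eta)^{n-1}\leq \frac{1}{\delta}\cdot\pbra{\frac{n}{\eps}\log(1/\delta)}^{O(n)}$ as claimed (the $1/\delta$ factor appears as slack since $1/\delta\geq 1$).

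\textbf{The main obstacle} is the geometric claim above: when the actual outward normal $\nu_v$ of $K$ at $r_K(v)v$ deviates substantially from the radial direction $v$ (which can happen at ``flat'' regions of $\partial K$ or near corners), controlling how $T_v$ intersects nearby rays becomes delicate, and the naive bound $t^\star-r_K(u)=O((\eta+\gamma)r_K(u))$ can fail. A natural remedy is to preliminarily replace $K$ by the Minkowski sum $K+B(\rho)$ for a tiny $\rho$; this smooths $\partial K$ so that every boundary point has an outward normal essentially aligned with the radial direction there, while by \Cref{thm:ball} it loses only $O(\rho\cdot n^{1/4})$ Gaussian volume, which is at most $\eps\delta/10$ if we choose $\rho=\Theta(\eps\delta/n^{1/4})$. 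With this smoothing in place, the per-ray and per-bucket error analyses go through, and the net and bucket counts proceed as outlined.
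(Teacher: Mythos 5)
Your overall strategy—outer approximation by supporting halfspaces, radial/chi decomposition, multiplicative bucketing of directions (by $r_K(u)$ rather than the paper's essentially equivalent bucketing by chi-tail mass $m(\ell(v))$), discarding negligible directions, and covering each bucket with an angular net—is the same as the paper's. Where you diverge is precisely the step you flag as the main obstacle, and there the proposal has a genuine gap.

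You correctly observe that the per-ray estimate $t^\star - r_K(u) \lesssim (\eta+\gamma)\, r_K(u)$ can fail when the supporting hyperplane $T_v$ is nearly parallel to the nearby ray, but your proposed remedy does not repair this. Replacing $K$ by $K + B(\rho)$ for tiny $\rho$ only lower-bounds the radius of curvature of $\partial K$; it does \emph{not} make outward normals "essentially aligned with the radial direction." Consider a nearly-halfspace set such as $K = \{x_1 \leq c\}\cap B(R')$: after smoothing, the outward normal is still $e_1$ along the flat face, while at a boundary point $(c, t, 0,\dots)$ with $t$ large the radial direction is nearly $e_2$, so the normal-vs.-radial angle remains close to $\pi/2$. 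Thus the smoothing does not restore the geometric claim.

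The ingredient actually needed—and what the paper uses—is the containment $B(1)\subseteq K$, which holds once $\delta < 1/10$ (the paper first dispatches $\delta \geq 1/10$ by \Cref{thm:gaussian-dudley}; your $0\in K$ reduction only gives $\delta < 1/2$ and so does not secure this). With $B(1) \subseteq K \subseteq B(R)$, every supporting hyperplane of $K$ is at distance $\geq 1$ from the origin while tangent points lie within distance $R$, giving a uniform lower bound $\tan\kappa \geq 1/R$ on the angle $\kappa$ between the hyperplane and a nearby ray; this is what makes $\|W\| \leq (1+2R\theta^\ast)\,\ell(k)_{\max}$ go through in the paper. Note that this does \emph{not} say the normal is close to radial—in the halfspace example the angle can be close to $\pi/2$—but the quantitative $\tan\kappa \geq 1/R$ bound suffices, at the cost of an extra factor $R$ multiplying the net parameter (so the correct per-ray bound is $O((R\eta + \gamma)\,r_K(u))$, not $O((\eta+\gamma)\,r_K(u))$). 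If you replace the smoothing step with the $B(1)\subseteq K$ containment (after reducing to $\delta < 1/10$) and carry the $R$ factor through your choice of $\eta$, the rest of your argument—including the chi hazard-rate bound in place of the paper's two-case analysis of $\mathrm{pdf}_{\chi(n)}$—should close.
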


\noindent We mention that in the most interesting regime, when $\delta$ is smaller than some absolute constant (which can be taken to be 1/10), our proof of \Cref{thm:rel-err-dud} is self-contained and does not rely on the result of Bronstein--Ivanov (\Cref{thm:dudley}). 

To contrast \Cref{thm:rel-err-dud} with \Cref{thm:gaussian-dudley}, note that \Cref{thm:gaussian-dudley} would give an $\eps\delta$-approximation to $K$ using 
\[\pbra{\wt{O}(n^{5/4}/\eps\delta)}^{(n-1)/2}~\text{halfspaces}\]
where the $\wt{O}$ hides a logarithmic factor in $\log(1/\eps\delta)$. {The crucial difference between \Cref{thm:rel-err-dud} and \Cref{thm:gaussian-dudley} is that the dependence of  \Cref{thm:rel-err-dud} on $\delta$ is only ${\frac 1 \delta} \cdot \log(1/\delta)^{O(n)}$ rather than $(1/\delta)^{O(n)}$.
This can make a major difference if $\delta$ is very small; for example, if $\delta=1/2^{\sqrt{n}}$ and $\eps = 0.01$ (observe that with these parameters, the desired approximator of $K$ must correctly ``lop off'' at least 99\% of the $1/2^{\sqrt{n}}$ amount of mass that lies outside of $K$), then \Cref{thm:gaussian-dudley} would only give a bound of $2^{\tilde{O}(n^{3/2})}$ halfspaces whereas \Cref{thm:rel-err-dud} gives a bound of $2^{\tilde{O}(n)}$ halfspaces.}  Indeed, 
our $2^{\tilde{O}(n^{3/4})}$-halfspaces approximation of $\ell_p$ balls for $p\in(1,2)$, given in~\Cref{sec:ell-p}, will crucially rely on a savings of this sort that comes from the sharper parameters provided by \Cref{thm:rel-err-dud}.

\subsubsection{Proof Overview}
\label{subsec:rel-err-dud-prelims}

We set up some useful notation and give a high-level proof overview of \Cref{thm:rel-err-dud}. Throughout this section as well as the next, let $K\sse\R^n$ be a fixed convex set with volume $1-\delta$ as in the statement of \Cref{thm:rel-err-dud}.  We may assume without loss of generality that $K$ is closed, since $\Vol(K)$ is the same as the volume of the closure of $K$ for any convex set $K$.

We first remark that if $\delta \geq 1/10$ then the claimed bound is an immediate consequence of \Cref{thm:gaussian-dudley}, so we henceforth suppose that $\delta < 1/10.$

As in the proof of \Cref{thm:gaussian-dudley}, we may assume that $K$ is contained in a ball of sufficiently large radius, namely
\[K \sse B\pbra{R} \qquad\text{where}~R := \sqrt{n} + 2\sqrt{\ln\pbra{\frac{2}{\eps\delta}}},
\qquad\text{so}~\Vol(R) \geq 1 - {\frac {\eps \delta} 2},\]
and that the goal is to obtain an $(\eps\delta/2)$-approximation to $K$.\footnote{In particular, we may intersect $K$ with $B(R)$ to obtain a convex set $K'$ and aim for an $(\eps\delta/2)$-approximation to $K'$ as in \Cref{subsec:dudley}; for notational simplicity, however, we do not introduce the new set $K'$ in this section.} 

\begin{definition}[``Length'' function] \label{def:rel-dud-ell}
	Given a vector $v\in\S^{n-1}$, we define the function $\ell : \S^{n-1} \to \R$ as
	\[\ell(v) := \sup_{x\in K} \abra{x, v}.\]
\end{definition}

In other words, $\ell(v)$ is the {``length''} of the set $K$ along the direction $v$ {from the origin. We remark that since $\Vol(K) > 0.9 > 1/2$, the set $K$ must contain the origin.}  We further remark that the function $\ell: \S^{n-1}\to\R$ is sometimes called the \emph{Minkowski} or \emph{support functional} of the set $K$~\cite{TkoczNotes}. 

\begin{notation} \label{def:rel-dud-m}
	We write $m(t)$ for 
	\[m(t) := \Prx_{\by\sim\chi^2(n)}\sbra{\by \geq t^2}.\]
\end{notation}

Note that $m(t)$ is simply the tail mass of the $\chi^2(n)$ distribution beyond the point $t^2$. 
{We can also view it as the probability that a Gaussian draw $\bx \sim N(0,I_n)$ has $\|\bx\|^2 \geq t$, even conditioned on $\bx$ lying on any particular ray $\{rv: r \in \R_{\geq 0}\}$ for any fixed $v \in \S^{n-1}.$}
See \Cref{fig:RED-setup} for a schematic of the setup thus far. 

\begin{figure}
\centering
\usetikzlibrary{patterns}
\begin{tikzpicture}
	\filldraw[rotate=45,pattern=crosshatch,pattern color=black, opacity=0.25] (0,0) ellipse (3.25 and 1.75);
	\filldraw[rotate=45,fill=white] (0,0) ellipse (3 and 3);
	\draw[rotate=45] (0,0) ellipse (3.25 and 1.75);
	\draw (0,0) ellipse (0.5 and 0.5);
	\draw (0, 0) -- (-1.24, 1.24);
	\draw[-latex,dashed] (-1.275, 1.275) -- (-3,3);
	\node[fill=black,inner sep=0.75pt,circle] (0)  at (0,0) {};
	\node (bigB) at (2.75, -2.35) {\small $B(R)$};
	\node (smallB) at (0.75, -0.5) {\small $B(1)$};
	\node (vee) at (-3.1,3.1) {\small $v$};
	\node (ellV) at (-0.375, 0.8) {\small $\ell(v)$};
	\node (K) at (2.5, 2.5) {\small $K$};
	\end{tikzpicture}
\caption{Setup for the proof of \Cref{thm:rel-err-dud}. The length of the solid line is $\ell(v)$, and the {conditional} Gaussian mass on the dashed ray {(conditioned on the Gaussian lying on the ray from the origin in direction $v$)} corresponds to $m(\ell(v))$. The cross-hatched region corresponds to the (at most) $\eps\delta/2$ approximation error incurred by intersecting $K$ with $B(R)$.}
\label{fig:RED-setup}
\end{figure}
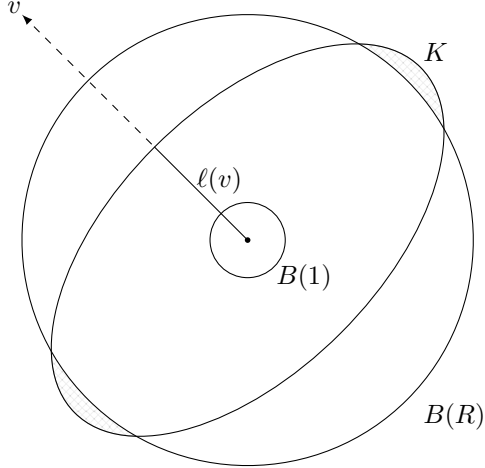

From \Cref{def:rel-dud-ell}, \Cref{def:rel-dud-m}, and the definition of the set $K$ in \Cref{thm:rel-err-dud}, we have that 
\[\Ex_{\bv\sim\S^{n-1}}\sbra{m\pbra{\ell(\bv)}} = \delta\]
since $\Vol(K) = 1-\delta$. Note that this uses the convexity of the set $K$ {and the fact that $K$ contains the origin.}

Finally, since $\Vol(K) = 1-\delta \geq 0.9$, we have that $B(1) \sse K$. To see this, note that if there exists $x\in B(1)$ such that $x\notin K$, then the separating hyperplane through $x$ for $K$ ``lops off'' at least $1 - \Phi(1) > 0.1$ of the Gaussian mass, and hence $\Vol(K) < 0.9$, which is a contradiction. We record this fact for convenience:

\begin{fact} \label{fact:K-contains-b1}
	We have that $B(1) \sse K$.
\end{fact}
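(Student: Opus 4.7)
The plan is to argue by contradiction using the separating hyperplane theorem. Suppose, toward a contradiction, that there exists some $x \in B(1)$ with $x \notin K$. Since $K$ is closed and convex and $\{x\}$ is a compact convex set disjoint from $K$, the separating hyperplane theorem furnishes a unit vector $v \in \S^{n-1}$ and a scalar $c \in \R$ such that
\[
K \subseteq H := \cbra{y \in \R^n : \langle y, v \rangle \leq c}
\qquad\text{while}\qquad \langle x, v\rangle > c.
\]

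Next I would bound $c$ from above. By Cauchy--Schwarz and the hypothesis $x \in B(1)$, we get $c < \langle x, v\rangle \leq \|x\|\cdot\|v\| \leq 1$. Since $K \subseteq H$, monotonicity of Gaussian measure gives $\Vol(K) \leq \Vol(H)$; and the rotational symmetry of $N(0,I_n)$ together with the fact that $\langle \bx, v\rangle \sim N(0,1)$ for $\bx \sim N(0,I_n)$ and $\|v\|=1$ yields $\Vol(H) = \Phi(c)$, where $\Phi$ is the standard univariate Gaussian CDF.

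Combining these gives $\Vol(K) \leq \Phi(c) < \Phi(1)$. The final step is to observe that $\Phi(1) \approx 0.8413 < 0.9$, which contradicts the standing hypothesis $\Vol(K) = 1 - \delta \geq 0.9$. Hence no such $x$ can exist, and $B(1) \subseteq K$.

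There is no real obstacle here: the argument is a one-shot application of separation plus a standard halfspace Gaussian-mass computation, and the only ``calculation'' is the lookup $\Phi(1) < 0.9$. The only mild subtlety is justifying strict separation, which is immediate from the closedness of $K$ (so that $\{x\}$ and $K$ can be strictly separated), and from the fact that we only need the weak inclusion $K \subseteq H$ together with $x \notin H$.
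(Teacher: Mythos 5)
Your proof is correct and follows essentially the same route as the paper: assume some $x \in B(1) \setminus K$, separate $x$ from $K$ by a hyperplane, and observe that the resulting halfspace containing $K$ has Gaussian mass at most $\Phi(1) \approx 0.8413 < 0.9 \leq \Vol(K)$, a contradiction. The only cosmetic difference is that the paper places the separating hyperplane through $x$ itself rather than invoking strict separation, but the mass bound and conclusion are identical.
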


We now turn to a high-level description of the approximator $L$. As in the proof of \Cref{thm:gaussian-dudley}, $L$ will be an outer approximator to the set $K$ (i.e. $K\sse L$). We will construct $L$ by placing tangent hyperplanes to $\partial K$ that ``lop off'' {at least a} $(1-\eps)$-fraction of the total $\delta$ amount of Gaussian mass that lies outside the set $K$. In more detail, call a direction $v\in \S^{n-1}$ ``good'' if $m(\ell(v)) \leq (\eps\delta/4)$, and ``bad'' otherwise. 
\begin{itemize}
	\item For all ``good'' directions, we will not worry about lopping off mass. Indeed, even if we failed to lop any probability mass off in all such directions, the total contribution to our overall error would be at most $(\eps\delta/4)$. 
	\item In order to handle ``bad'' directions, we first bucket them according to the value of $m(\ell(v))$ (which, as stated earlier, is {proportional to} the amount of Gaussian mass along $v$ outside of $K$). We will obtain an appropriate covering of {the points in} each bucket, and {we will} place tangent hyperplanes to $K$ at the points in our covering. A {geometric argument} (which will make use of the fact that $B(1)\sse K$), together with {suitable} tail bounds, will then give that the {total} error incurred from {all} ``bad'' directions is at most $(\eps\delta/4)$.
\end{itemize}
Putting everything together, we {will} obtain an $(\eps\delta/2)$-approximation to the set $K$; since we assumed $K \sse B(R)$, we already incurred $(\eps\delta/2)$-error, which overall gives  
\[\dG(K,L) \leq \eps\delta.\]

\subsubsection{Proof of \Cref{thm:rel-err-dud}}
\label{subsec:rel-err-dud-proof}

{As mentioned earlier, we may suppose that $\delta \leq 1/10.$}
We start by describing the bucketing alluded to in the proof overview.

\begin{definition}[Bucketing $\S^{n-1}$] \label{def:buckets}
	Let $\tau>0$ be a parameter that we will fix later. For $k\in\N$, we define 	\[\bucket(k) := \cbra{v\in\S^{n-1} : m(\ell(v)) \in \left((1+\tau)^{-k}, (1+\tau)^{-k+1}\right]}.\]
\end{definition}

We will fix $\tau$ (which will be a small quantity between $0$ and $1$) later in the course of the proof. Informally, $\bucket(k)$ is the set of directions for which the probability that a Gaussian vector in that direction lies outside $K$ is roughly $(1+\tau)^{-k}$. 

For large enough $k$, the amount of mass outside $K$ along the directions in $\bucket(K)$ will be small, and so intuitively we should be able to ignore such directions. (We called such directions ``good'' in the proof overview.) To make this precise, set a parameter $k^\ast$ to be
\[k^\ast := {\frac{1}{\tau}\ln\pbra{\frac{4}{\epsilon\delta}}}.\]
We then have for all $v\in \bucket(k)$ with $k > k^\ast$ that 
\[m(\ell(v)) \leq \pbra{1+\tau}^{-k+1} \leq \exp(-\tau k^\ast) \leq \frac{\eps\delta}{4}.\]
As discussed previously, we will only have to worry about the ``bad'' directions corresponding to 
\begin{equation} \label{eq:bad-buckets}
	\bigsqcup_{k=1}^{k^\ast} \bucket(k)
\end{equation}
and ignore the ``good'' directions; this will incur at most $\eps\delta/4$ error. We will use the following lemma to {separately handle each bucket $\bucket(k)$ for $k \leq k^\ast$}.

\begin{lemma} \label{lemma:bucket-covering}
	Suppose $A\sse\S^{n-1}$ and $\theta \in (0,\pi/2)$. There exists a set $S_A(\theta) \sse A$ consisting of at most ${(12/\theta)^n}$ points such that for all $x \in A$, there exists a point $y \in S_A(\theta)$ such that 
	\[\angle(x,y) = \arccos\abra{x,y} \leq \theta.\]
\end{lemma}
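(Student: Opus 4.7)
I would prove this via the standard greedy packing-and-covering argument on the sphere. First, construct $S_A(\theta)$ as a maximal $\theta$-angular-separated subset of $A$: start with $S_A(\theta) = \emptyset$ and greedily add points $y \in A$ satisfying $\angle(y, y') > \theta$ for every $y'$ already in $S_A(\theta)$, stopping when no such $y$ exists. (The process terminates because the size bound below makes the set finite; alternatively one can invoke Zorn's lemma, since the separation condition is pairwise and hence preserved under unions of chains.) By the maximality of $S_A(\theta)$, for every $x \in A$ either $x$ already lies in $S_A(\theta)$ or else adding $x$ would violate the $\theta$-separation property, which forces some $y \in S_A(\theta)$ with $\angle(x,y) \leq \theta$. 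This already gives the covering property asserted in the lemma.

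The remaining task is to bound the cardinality of $S_A(\theta)$, and for this I would use a volume-comparison packing argument in $\R^n$. Write $S_A(\theta) = \{y_1, \dots, y_N\}$. The angular separation $\angle(y_i, y_j) > \theta$ is equivalent to the Euclidean separation $\|y_i - y_j\| > 2\sin(\theta/2)$, so the open Euclidean balls $B(y_i, \sin(\theta/2)) \subseteq \R^n$ are pairwise disjoint. Since $\|y_i\| = 1$, each such ball is contained in the Euclidean ball $B(0, 1 + \sin(\theta/2))$. Comparing Lebesgue volumes in $\R^n$,
\[
N \cdot \sin(\theta/2)^n \;\leq\; (1 + \sin(\theta/2))^n,
\]
which rearranges to $N \leq (1 + 1/\sin(\theta/2))^n$.

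Finally, using the elementary inequality $\sin(\theta/2) \geq \theta/\pi$ for $\theta \in [0,\pi/2]$ (a consequence of concavity of $\sin$ on $[0,\pi/2]$), I obtain
\[
N \;\leq\; \left(1 + \frac{\pi}{\theta}\right)^n \;\leq\; \left(\frac{12}{\theta}\right)^n,
\]
where the last inequality is a crude numerical bound using $\theta < \pi/2$, so that $1 \leq 2/\theta$ and thus $1 + \pi/\theta \leq (2+\pi)/\theta < 12/\theta$. There is no real obstacle here: the argument is entirely standard (maximal net plus volume comparison), and the only step that requires any care is the final constant-chasing step, which is routine.
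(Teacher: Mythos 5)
Your proof is correct and takes essentially the same route as the paper: both reduce the angular covering condition to a Euclidean $2\sin(\theta/2)$-net of $A$ and bound its size by a maximal-separated-set plus volume-comparison argument, with the paper citing Vershynin's Exercise~4.2.10 and Corollary~4.2.13 for these two steps while you spell them out directly. The only cosmetic difference is in the final constant-chasing (you use $\sin(\theta/2)\geq\theta/\pi$ and $(1+\pi/\theta)^n$, the paper uses $\sin(\theta/2)\geq\theta/4$ applied to $(3/\sin(\theta/2))^n$), and both reach the stated bound of $(12/\theta)^n$.
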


\begin{proof}
	Note that $\angle(x,y) \leq \theta$ for $x,y\in\S^{n-1}$ if and only if $\|x-y\| = 2\sin(\theta/2)$. It thus suffices to construct a $2\sin(\theta/2)$-net (cf.~Section 4.2 of~\cite{vershynin2018high}) of $A$. From Exercise~4.2.10 of~\cite{vershynin2018high},  the size of such a net is upper bounded by the size of a $\sin(\theta/2)$-net for $\S^{n-1}$, which in turn is at most 
	\[\pbra{\frac{3}{\sin\pbra{\theta/2}}}^n.\]
	This bound is standard; see, for example, Corollary 4.2.13 of~\cite{vershynin2018high}. Finally, since $\sin(\cdot)$ is concave on the interval $[0,\pi/2]$, we have that 
	\[\sin\pbra{\frac{\theta}{2}} \geq \frac{\theta}{4},\]
	and so we have that the desired net consists of at most
	\[\pbra{\frac{12}{\theta}}^n~\text{points},\]
	which completes the proof.
\end{proof}

We will construct the approximator $L$ by placing tangent hyperplanes to $K$ at the points in $S$ where we define
\[
	S := \bigsqcup_{k=1}^{k^\ast} S_{\bucket(k)}(\theta^\ast)
\]
where $S_{\bucket(k)}(\theta^\ast)$ is as in \Cref{lemma:bucket-covering} for an appropriate choice of $\theta^\ast$. It follows that the total number of halfspaces in our approximator $L$ is at most 
\begin{equation} \label{eq:num-halfspaces}
	|S| \leq k^\ast \cdot {\pbra{\frac{12}{\theta^\ast}}^n} = \frac{1}{\tau}\log\pbra{\frac{4}{\eps\delta}}\cdot{\pbra{\frac{12}{\theta^\ast}}^n},
\end{equation}
where $\tau$ and $\theta^\ast$ are parameters that we will set below.

\paragraph{Setting Parameters.} Setting the parameters $\tau$ and $\theta^\ast$ fixes our approximator $L$. We will take 
\begin{equation} \label{eq:setting-tau-theta}
	\tau := \frac{\eps\delta}{8}
	\qquad\text{and}\qquad 
	\theta^\ast := \frac{\eps^2}{64}\pbra{R\pbra{2 + \frac{16}{\epsilon}}\pbra{2R^2\pbra{2 + \frac{16}{\eps}} -\frac{n}{4}}}^{-1}.
\end{equation}
Our choices above are dictated by the error analysis below; we note that it follows from \Cref{eq:setting-tau-theta,eq:num-halfspaces} that $L$ has the desired number of halfspaces. Before proceeding, we introduce the following notation:
%
\begin{notation}
	For $k\in[k^\ast]$, we define 	
	\[
	\ell(k)_{\max} := \sup_{v\in\bucket(k)} \ell(v) 
	\qquad\text{and}\qquad  
	\ell(k)_{\min} := \inf_{v\in\bucket(k)} \ell(v).
	\]
\end{notation}

%
%
%

\paragraph{Error Analysis.} We will analyze the error in approximating $K$ by $L$ on a direction-by-direction basis; recall that we only need to worry about the ``bad'' directions (\Cref{eq:bad-buckets}). Fix a ``bad'' direction $v\in\S^{n-1}$, and suppose $v \in \bucket(k)$ (where $k \leq k^\ast$). Consider the two-dimensional setup {(corresponding to the two-dimensional plane $\spann\{O,U,W\}$)} as in \Cref{fig:RED-error-analysis}: 
\begin{itemize}
	\item Let $O$ denote the origin $0^n$;
	\item Let $V \in \partial K$ be the point of intersection of the ray $\{tv : t \geq 0\}$ and $\partial K$; 
	\item Let $U \in \partial K$, {$U \in S_{\bucket(k)}(\theta^\ast)$} be {a} point in $\bucket(k)$ such that $\theta := \angle(\vec{OL},\vec{ON}) \leq \theta^\ast$; and 
	\item By construction, we put down a halfspace tangent to $K$ at $U$; let $W$ be the point of intersection of this halfspace with the ray $\{tv : t \geq 0\}$.
\end{itemize}

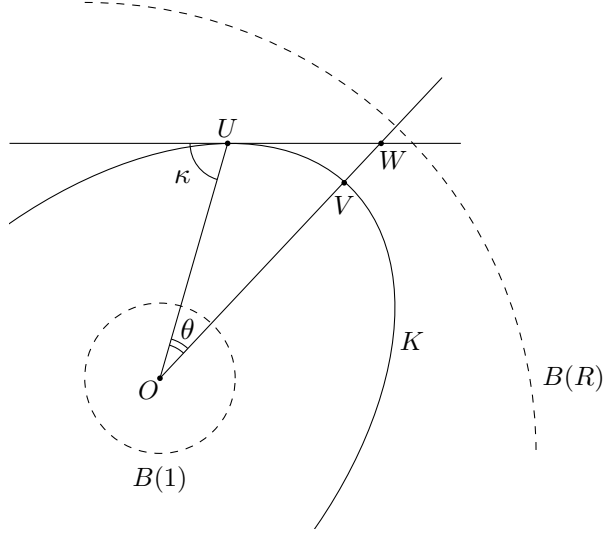
\begin{figure}
\centering
\begin{tikzpicture}
	\node[fill=black,inner sep=0.75pt,circle] (ori)  at (1, 1) {};
	\begin{scope}
	\clip(-1,-1) rectangle (6, 6);
	\draw[rotate=45] (0,0) ellipse (5 and 3);
	\draw[dashed] (ori) ellipse (1 and 1);
	\end{scope}
	\draw[dashed] (0, 6) to [out = 0, in = 90] (6, 0);
	\node (tline) at (4.75,5) {};
	\node (origin) at (1,1) {};
	\node (suline-right) at (5,4.125) {};
	\node (suline-left) at (-1,4.125) {};
	\node (osseg) at (1.9, 4.125) {};
	\pic [draw, -, "$\theta$", double, double distance = 2pt, angle eccentricity=1.5] {angle = tline--origin--osseg};
	\pic [draw, -, "$\kappa$", angle eccentricity=1.5] {angle = suline-left--osseg--origin};
	\draw (1, 1) -- (1.9, 4.125);
	\draw (-1, 4.125) -- (5, 4.125);
	\draw (1,1) -- (4.75, 5);
	\node (O) at (0.85,0.85) {\small $O$};
	\node (U) at (1.9, 4.325) {\small $U$};
	\node (W) at (4.1,3.9) {\small $W$};
	\node (V) at (3.45,3.3) {\small $V$};
	\node (smallBall) at (1, -0.35) {\small $B(1)$};
	\node (bigBall) at (6.5, 1) {\small $B(R)$};
	\node (kay) at (4.35, 1.5) {\small $K$};
	\node[fill=black,inner sep=0.75pt,circle] (Udot)  at (1.9, 4.125) {};
	\node[fill=black,inner sep=0.75pt,circle] (Udot)  at (3.94, 4.125) {};
	\node[fill=black,inner sep=0.75pt,circle] (Udot)  at (3.45, 3.6) {};
\end{tikzpicture}
\caption{A two-dimensional setup to analyze the error of our approximator. Here $V \in \bucket(k)$ and $U \in S_{\bucket(k)}(\theta^\ast)$ and so $\angle(\vec{OV},\vec{OU}){=\theta} \leq \theta^\ast$}
\label{fig:RED-error-analysis}
\end{figure}

\newcommand{\error}{\mathrm{error}}

With this setup in hand, note that the error incurred along direction $v$ is given by 
\[\error(v) := m\pbra{\|V\|} - m\pbra{\|W\|}.\]
We will next establish that this quantity is at most $\eps\delta/4$, which will imply that the total error incurred from all the ``bad'' directions is at most $\eps\delta/4$. Note that 
\begin{align}
	\error(v) &= m\pbra{\|V\|} - m\pbra{\|W\|} \nonumber\\  
	&\leq m\pbra{\ell(k)_{\min}} - m\pbra{\|W\|} \nonumber\\
	&= \pbra{m\pbra{\ell(k)_{\min}} - m\pbra{\ell(k)_{\max}}} + \pbra{m\pbra{\ell(k)_{\max}} - m\pbra{\|W\|}}. \label{eq:okra}
\end{align}
It suffices to show that \Cref{eq:okra} is upper bounded by $\eps\delta/4$. Indeed, recall that we incurred (a) $\eps\delta/2$ error from ``capping'' the set $K$ by intersecting it with $B(R)$; (b) $\eps\delta/4$ error from the ``bad'' directions from $\bucket(k)$ for $k > k^\ast$; and finally (c) $\eps\delta/4$ error from all of the ``good'' directions from \Cref{eq:okra} (as we will establish in the remainder of the argument); this will complete the proof of~\Cref{thm:rel-err-dud}.

We will bound each term in \Cref{eq:okra} separately. By definition of the buckets (\Cref{def:buckets}), we have that 
\begin{equation} \label{eq:bucket-width}
	m\pbra{\ell(k)_{\min}} - m\pbra{\ell(k)_{\max}} \leq \tau(1+\tau)^{-k} \leq \tau = \frac{\eps\delta}{8}.
\end{equation}

In order to bound the second term in \Cref{eq:okra}, recall that by Brahmagupta's formula (cf.~\Cref{fig:RED-error-analysis}), we have 
\[\frac{\sin(\pi-\kappa)}{\|W\|}  = \frac{\sin(\kappa - \theta)}{\|U\|}.\]
This can be rewritten as 
\begin{equation} \label{eq:sin-ratios}
	\|W\| = \frac{\sin\kappa}{\sin(\kappa -\theta)} \cdot \|U\| \leq \frac{\sin\kappa}{\sin(\kappa -\theta)} \cdot \ell(k)_{\max}.
\end{equation}
Recalling~\Cref{fact:K-contains-b1}, since $B(1)\sse K \sse B(R)$, we have that
\begin{gather*}
		\tan\kappa = \frac{\dist\pbra{O, \overleftrightarrow{UW}}}{\dist\pbra{U, O\perp \overleftrightarrow{UW}}} \geq \frac{1}{R} 
\end{gather*}
where we write $\dist(O, \overleftrightarrow{UW})$ to mean the distance between the origin and the line $\overleftrightarrow{UW}$, which is guaranteed to be at least $1$ thanks to the convexity of $K$ and the fact that $B(1)\sse K$; and $\dist(U, O\perp \overleftrightarrow{UW})$ is the distance between $U$ and the point on the line $\overleftrightarrow{UW}$ closest to $O$, which must be at most $\dist(O,U)$ (since the hypotenuse is the longest side in a right triangle), which in turn is at most $R$ by out setup. (See~\Cref{fig:RED-error-analysis}.)
Returning to \Cref{eq:sin-ratios}, the above lower bound on $\tan\kappa$ implies that 
\[\frac{\sin\kappa}{\sin(\kappa - \theta)} = \frac{\sin\kappa}{\sin\kappa\cos\theta-\cos\kappa\sin\theta} = \frac{1}{\cos\theta - \frac{\sin\theta}{\tan\kappa}} \leq \frac{1}{\cos\theta - R\sin\theta}.\]
For $\theta \geq 0$, recalling the standard trigonometric inequalities 
\[\sin\theta \leq \theta \qquad\text{and}\qquad \cos\theta \geq 1 - \frac{\theta^2}{2},\]
we thus get that 
\[\frac{\sin\kappa}{\sin(\kappa - \theta)} \leq \frac{1}{\cos\theta - R\sin\theta} \leq \frac{1}{1 - R\theta - \frac{\theta^2}{2}} \leq 1 + 2R\theta \leq 1 + 2R\theta^\ast\]
where the penultimate inequality can be readily verified {from the definitions of $R$ and $\theta$}, and the final inequality relies on the fact that $\theta \leq \theta^\ast$ since $U \in S_{\bucket(k)}(\theta^\ast)$ and $V\in\bucket(k)$.

Plugging the above bound back into~\Cref{eq:sin-ratios}, we get that 
\[\|W\| \leq \ell(k)_{\max}\pbra{1 + 2R\theta^\ast}.\]
As $m$ (cf. \Cref{def:rel-dud-m}) is a decreasing function, we have that 
\begin{align}
	m\pbra{\ell(k)_{\max}} - m\pbra{\|W\|} &\leq m\pbra{\ell(k)_{\max}} - m\pbra{\ell(k)_{\max}\pbra{1 + 2R\theta^\ast}} \nonumber \\
	&= m\pbra{\ell(k)_{\max}}\pbra{1- \frac{m\pbra{\ell(k)_{\max}\pbra{1 + 2R\theta^\ast}}}{m\pbra{\ell(k)_{\max}}}} \nonumber \\
	&\leq \delta\cdot \pbra{1- \frac{m\pbra{\ell(k)_{\max}\pbra{1 + 2R\theta^\ast}}}{m\pbra{\ell(k)_{\max}}}} \label{eq:from-sin-ratios-to-error}
\end{align}
where \Cref{eq:from-sin-ratios-to-error} relies on the observation that $m(\ell(v)) \leq \delta$ for any direction $v \in \S^{n-1}$. To see this, note that if $tv \in \partial K$ is such that $m(t)> \delta$, then the supporting hyperplane tangent to $K$ at $tv$ will ``lop off'' strictly greater than $\delta$ mass. Consequently, $\vol(K) < 1-\delta$, which is a contradiction. 

Now, note that it suffices to show that 
\begin{equation} \label{eq:RED-final-goal}
	\frac{m\pbra{\ell(k)_{\max}\pbra{1 + 2R\theta^\ast}}}{m\pbra{\ell(k)_{\max}}} \geq 1 -{\frac{\eps}{8}}
\end{equation}
in order to complete the proof. 
The remainder of the proof establishes \Cref{eq:RED-final-goal}. We split into two cases depending on $\ell(k)_{\max}$:

\paragraph{Case 1: $\ell(k)_{\max} \geq {\sqrt{n-1}}$.}.
Let $\Delta >0$ be a parameter that we will set later. We have
\begin{equation} \label{eq:bitter-melon}
	\frac{m\pbra{\ell(k)_{\max}(1 + 2R\theta^*)}}{m(\ell(k)_{\max})} \geq \frac{m\pbra{\ell(k)_{\max}(1 + 2R\theta^*)} - m\pbra{\ell(k)_{\max}(1 + (2+\Delta)R\theta^*)}}{m(\ell(k)_{\max}) - m\pbra{\ell(k)_{\max}(1 + (2+\Delta)R\theta^*)}}.	
\end{equation}
Note that the right hand side of the above equation is 
\begin{align}
	\text{R.H.S. of~\eqref{eq:bitter-melon}}&=\pbra{\int_{\ell(k)_{\max}(1+2R\theta^\ast)}^{\ell(k)_{\max}(1+(2+\Delta)R\theta^\ast)} \mathrm{pdf}_{\chi(n)}(x)\,dx}\pbra{\int_{\ell(k)_{\max}}^{\ell(k)_{\max}(1+(2+\Delta)R\theta^\ast)} \mathrm{pdf}_{\chi(n)}(x)\,dx}^{-1} \nonumber \\
	&\geq \frac{\ell(k)_{\max}\Delta R\theta^\ast \cdot \mathrm{pdf}_{\chi(n)}\pbra{\ell(k)_{\max}(1+(2+\Delta)R\theta^\ast)}}{\ell(k)_{\max}(2+\Delta)R\theta^\ast \cdot \mathrm{pdf}_{\chi(n)}(\ell(k)_{\max})} \label{eq:budweiser}\\
	&=\pbra{\frac{\Delta}{2+\Delta}}\frac{\mathrm{pdf}_{\chi(n)}\pbra{\ell(k)_{\max}(1+(2+\Delta)R\theta^\ast)}}{\mathrm{pdf}_{\chi(n)}(\ell(k)_{\max})} \nonumber \\
	&\geq \pbra{\frac{\Delta}{2+\Delta}}\underbrace{\pbra{(1+(2+\Delta)R\theta^\ast)^{n-1} \exp\pbra{\frac{-R^2}{2}\pbra{2(2+\Delta)R\theta^\ast + (2+\Delta)^2R^2{\theta^\ast}^2}}}}_{=:\Upsilon(n,R,\theta^\ast)},  \label{eq:miller}
\end{align}
where (\ref{eq:budweiser}) used the fact that {the $\mathrm{pdf}_{\chi(n)}(x)$ is decreasing in $x$ for $x\geq \sqrt{n-1}$}
and (\ref{eq:miller}) used the formula for the density function of the $\chi(n)$-distribution, which is given by 
\[
\mathrm{pdf}_{\chi(n)}(x) = 
\begin{cases}
	\frac{1}{2^{n/2-1}\Gamma(n/2)} x^{n-1}\exp\pbra{-\frac{x^2}{2}} & \mathrm{if}~x \geq 0\\
	0 & \mathrm{if}~x < 0
\end{cases},
\]
 and the fact that $\ell(k)_{\max} \leq R$. (This is because we truncated $K$ by intersecting it with $B(R)$.)

We will take 
\[\Delta := \frac{16}{\epsilon},\]
as a consequence of which, \Cref{eq:RED-final-goal} {follows from} showing that 
\[\Upsilon(n,R,\theta^\ast) \geq 1-\frac{\eps^2}{64}.\]
In particular, using the inequality $1+x\leq e^x$, it suffices to show that 
\[\Upsilon(n,R,\theta^\ast) \geq \exp\pbra{-\frac{\eps^2}{64}}.\]
We have that 
\begin{align*}
	\Upsilon(n,R,\theta^\ast) & \geq \pbra{(1+(2+\Delta)R\theta^\ast)^{n-1} \exp\pbra{-2R^3\theta^\ast(2+\Delta)^2}} \\
	& \geq \exp\pbra{\theta^\ast\pbra{\pbra{\frac{2+\Delta}{4}}nR - 2R^3(2+\Delta)^2}} \\ 
	& = \exp\pbra{-\frac{\eps^2}{64}},
\end{align*}
where the first inequality uses $R\theta^\ast \ll1$, the second inequality uses $1+x \geq e^{x/2}$ for $x\in[0,1]$, and the third inequality uses the choice of $\theta^\ast$, which we recall was
\[\theta^\ast := \frac{\eps^2}{64}\pbra{R\pbra{2 + \frac{16}{\epsilon}}\pbra{2R^2\pbra{2 + \frac{16}{\eps}} -\frac{n}{4}}}^{-1}.\]

\paragraph{Case 2: $\ell(k)_{\max} < {\sqrt{n-1}}$.} First, note that $\mathrm{pdf}_{\chi(n)}(x)$ is at most $1$ for all $x$. It follows that 
\begin{equation} \label{eq:RED-case-2-interval-mass}
	\int_{\ell(k)_{\max}}^{\ell(k)_{\max}(1+2R\theta^\ast)} \mathrm{pdf}_{\chi(n)}(x)\,dx \leq 2\ell(k)_{\max}R\theta^\ast \leq 2R\theta^\ast\sqrt{n}.
\end{equation}
Furthermore, since $\ell(k)_{\max} \leq {\sqrt{n-1}}$ {which is the mode of the $\chi(n)$-distribution} it follows that 
\[m(\ell(k)_{\max}) \geq \frac{1}{2}.\]
This, together with~\Cref{eq:RED-case-2-interval-mass}, lets us write
\begin{align*}
	\frac{m\pbra{\ell(k)_{\max}\pbra{1 + 2R\theta^\ast}}}{m\pbra{\ell(k)_{\max}}} &\geq \frac{m\pbra{\ell(k)_{\max}} - 2R\theta^\ast\sqrt{n}}{m\pbra{\ell(k)_{\max}}}\\
	&\geq 1 - 4R\theta^\ast\sqrt{n}\\
	&\geq 1 - \frac{\eps}{8},
\end{align*}
establishing \Cref{eq:RED-final-goal} (the final inequality above is straightforward to verify from our choice of $\theta^\ast$).

Putting both cases together completes the proof of the theorem.
\qed 

%
%


\section{Improved Approximation for the $\ell_2$ Ball}
\label{sec:nazarov-ub}

For the Hausdorff and Lebesgue distance metrics (cf.~\Cref{subsec:distance-metrics}), the $\ell_2$ ball is often an extremal example for known upper bounds {on the vertex or facet complexity of polyhedral approximators} ~\cite{Dudley1974,Bronshteyn1976,ball1997elementary,LSW06}. In this section, we show that---perhaps surprisingly---the $\sqrt{n}$-radius $\ell_2$ ball  $B(\sqrt{n})$, which has Gaussian volume $1/2 + o_n(1)$, can be approximated to within any constant error by an intersection of only $2^{O(\sqrt{n})}$ many halfspaces. This is a substantial improvement on the  exponential-in-$n$ approximation upper bounds obtained in \Cref{sec:generic-ubs} for general convex bodies.

Throughout this section, we will write ${B_2} :=B(\sqrt{n})$ for ease of notation {(the ``2'' subscript is because we are dealing with the $\ell_2$ ball)}.

\begin{theorem} \label{thm:de-nazarov}
Let $0<\eps<c$ {for some sufficiently small absolute constant $c$.} Then there exists a polytope $K$ which is the intersection of 
	\[ s = \exp\pbra{\Theta\pbra{\sqrt{n}\cdot\frac{1}{\epsilon}\log\pbra{\frac{1}{\eps}}}}~\text{halfspaces}\]
	such that 
	$\dG(B_2, K) \leq \eps.$
\end{theorem}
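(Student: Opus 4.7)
The plan is a probabilistic construction inspired by Nazarov's random polytope~\cite{Nazarov:03}.  I would sample i.i.d.\ vectors $\bv_1,\ldots,\bv_s$ uniformly from $\S^{n-1}$ and form the random polytope
\[
\bK \;:=\; \bigcap_{i=1}^{s}\cbra{x \in \R^n \,:\, \abra{\bv_i, x} \leq \tau},
\]
where $s$ and $\tau>0$ will be chosen depending on $\epsilon$ and $n$.  Then I would compute $\Ex_{\bv_1,\ldots,\bv_s}[\dG(\bK, B_2)]$ and extract a deterministic realization via the probabilistic method.  By Fubini and the rotational invariance of $N(0,I_n)$, conditioned on $\|\bx\|=R$ the events $\cbra{\abra{\bv_i,\bx}\leq \tau}$ are i.i.d.\ with common probability $p(R) := \Prx_{v\sim\S^{n-1}}\sbra{\abra{v,e_1}\leq \tau/R}$, so
\[
\Ex\sbra{\dG(\bK, B_2)}
\;=\; \int_0^{\sqrt{n}} \pbra{1 - p(R)^s}\, f_{\chi(n)}(R)\, dR \;+\; \int_{\sqrt{n}}^{\infty} p(R)^s\, f_{\chi(n)}(R)\, dR,
\]
where $f_{\chi(n)}$ is the chi density.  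Using that $\abra{v,e_1}$ for $v \sim \S^{n-1}$ has density $\propto (1-z^2)^{(n-3)/2}$ on $[-1,1]$ (approximately $N(0,1/n)$), together with the tail estimate of~\Cref{prop:gaussian-tails}, I would show $1 - p(R) \approx \varphi_1(\tau\sqrt{n}/R)\cdot R/(\tau\sqrt{n})$ whenever $\tau\sqrt{n}/R$ is large.  Setting $\tau := \sqrt{2\ln s}$ then places the ``transition'' of $p(R)^s$ exactly at $R = \sqrt{n}$: at that radius $s(1-p(\sqrt{n})) = \Theta(1)$ and hence $p(\sqrt{n})^s = \Theta(1)$.

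The main analytic step is controlling the \emph{width} of this transition.  Writing $R = \sqrt{n}(1+u)$ for small $|u|$ and expanding the Gaussian tail, a short calculation gives
\[
s\pbra{1 - p(R)} \;\approx\; (\ln 2)\cdot \exp\pbra{u\tau^2},
\qquad\text{so}\qquad
p(R)^s \;\approx\; \exp\pbra{-(\ln 2)\, e^{u\tau^2}}.
\]
On the ``inside'' ($u<0$), enforcing $1 - p(R)^s \leq \epsilon$ forces $|u| \geq \Omega(\log(1/\epsilon)/\tau^2) = \Omega(\log(1/\epsilon)/\ln s)$, producing a transition region of width $\Delta R = O(\sqrt{n}\log(1/\epsilon)/\ln s)$ in $R$.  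On the ``outside'' ($u>0$), the doubly-exponential decay gives a sharper transition of width $O(\sqrt{n}\log\log(1/\epsilon)/\ln s)$, which is not the binding constraint.  Since $f_{\chi(n)}(R) = \Theta(1)$ near $R = \sqrt{n}$, the Gaussian mass of the transition region is $\Theta(\Delta R)$; enforcing $\Delta R \leq \epsilon$ forces $\ln s \geq \Omega(\sqrt{n}\log(1/\epsilon)/\epsilon)$ and hence $s = \exp\pbra{\Theta(\sqrt{n}\log(1/\epsilon)/\epsilon)}$, matching the theorem.

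Contributions from $R$ far from $\sqrt{n}$ (say $|R - \sqrt{n}| \geq C\sqrt{\log(1/\epsilon)}$ for a suitable constant $C$) contribute $O(\epsilon)$ by the chi-squared tail bound~\Cref{prop:chi-squared-tail}, and the remaining ``intermediate'' range contributes $O(\epsilon)$ since the per-point integrand is already at most $\epsilon$ there by construction.  Combining the three regions yields $\Ex[\dG(\bK, B_2)] \leq \epsilon$, so by the probabilistic method there exists a deterministic $K \in \Facet(n,s)$ with $\dG(K, B_2)\leq \epsilon$; passing to two-sided slabs $\cbra{|\abra{\bv_i,x}|\leq \tau}$ (doubling the facet count) would in addition yield an origin-symmetric approximator if desired.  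I expect the main obstacle to be the transition analysis itself: because $p(R)^s$ is doubly-exponentially sensitive to $R$ near $R = \sqrt{n}$, one needs sharp (not merely asymptotic) control on $p(R)$ on both sides, and the Gaussian-tail approximation must be reconciled with the exact spherical density of $\abra{v,e_1}$ to the precision afforded by~\Cref{prop:gaussian-tails}.  The asymmetry between the ``soft'' inside decay and the ``sharp'' outside decay is exactly what produces the $\log(1/\epsilon)/\epsilon$ factor in the exponent of $s$.
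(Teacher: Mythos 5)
Your proposal takes essentially the same approach as the paper: a probabilistic construction of a Nazarov-style random polytope, integration commuted to a one-dimensional analysis of $\Pr[x\in\bK]$ as a function of $\|x\|$, and parameters tuned so the ``transition'' from full to zero coverage is confined to an annulus of Gaussian mass $O(\eps)$ around $\|x\|=\sqrt{n}$. The only cosmetic difference is that the paper draws the normal vectors $\g{i}\sim N(0,I_n)$ rather than uniformly from $\S^{n-1}$, so that $\Pr[x\in\bK] = \Phi(w/\|x\|)^s$ holds exactly with no spherical-to-Gaussian approximation, and rather than asymptotically expanding $p(R)^s$ to locate the transition width, it directly solves for $w$ and $s$ from the two-sided constraints at $\din = \sqrt{n}-\eps/4$ and $\dout = \sqrt{n}+\eps/4$, which neatly sidesteps the ``sharp control of $p(R)$ near the transition'' obstacle you correctly flag at the end.
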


\Cref{thm:de-nazarov} is inspired by Theorem~2.1 of O'Donnell and Wimmer~\cite{o2007approximation}, which shows that the $n$-bit majority function $\MAJ_n : \zo^n\to\zo$, defined as 
\[\MAJ_n(x) := \mathbf{1}\cbra{\sumi x_i \geq n/2},\]
can be approximated by a monotone CNF formula of size $2^{O(\sqrt{n})}$. O'Donnell and Wimmer's construction is probabilistic and {bears a close resemblance to} Talagrand's random CNF~\cite{Talagrand:96}. Our approach for approximating $B$  
employs a modification of a probabilistic construction of a convex body due to Nazarov~\cite{Nazarov:03}. 
Looking ahead, in \Cref{sec:conv-inf} we will show that~\Cref{thm:de-nazarov} is tight for constant $\eps$; more precisely, we will show that any $\eps$-approximation to $B_2$
must have $2^{\Omega(\sqrt{n})}$ facets for a suitable small constant $\eps>0$. 

To prove \Cref{thm:de-nazarov}, we begin by defining a suitable distribution over intersections of randomly chosen halfspaces:
\begin{definition} \label{def:de-nazarov}
	For $w, s > 0$, we write $\Naz(w,s)$ to be the distribution over $s$-facet polytopes in $\R^n$
where draw from $\Naz(w,s)$ is obtained as follows:
	\begin{enumerate}
		\item For $i\in[s]$, draw i.i.d. $\g{i}\sim N(0,I_n)$ and let $\bH_i$ denote the halfspace
		\[\bH_i := \cbra{x\in\R^n : \langle x, \g{i}\rangle\leq w }.\]
		\item Output the convex set $\bK := \bigcap_{i=1}^s \bH_i$.
	\end{enumerate}
\end{definition}

\usetikzlibrary{shapes.geometric}
\begin{figure}
	\centering
	\begin{tikzpicture}

		\node (v1) at (3,0) {};
		\node (v2) at (1.5,2.598) {};
		\node (v3) at (-1.5,2.598) {};
		\node (v4) at (-3,0) {};
		\node (v5) at (-1.5,-2.598) {};
		\node (v6) at (1.5,-2.598) {};
	
		\def\A{(2.3,0) to 
		(1.5,1) to (0.7,1.9) to 
		(-1.15,1.8) to (-2.2,-0.05) to (-1.3,-1.8) to (1.3,-1.8) to (2.3,0);}
		\def\B{(v1) [out=180,in=240] to (v2) [out=240,in=-60] to (v3) [out=-60,in=0] to (v4) [out=0,in=60] to (v5) [out=60,in=120] to (v6) [out=120,in=180] to (v1);}
		
		\filldraw[pattern=crosshatch,opacity=0.125,rotate=90] \A; \filldraw[pattern=crosshatch,opacity=0.125] \B;	
		
		\begin{scope}
			\clip\B;
			\fill[white,even odd rule, rotate=90] \A;
		\end{scope}
		
		\draw[rotate=90] \A; \draw \B;
		
		\node[fill, inner sep=1pt,circle] (ori) at (0,0) {};
		\draw (ori) -- (v2);
		\draw[rotate=90] (ori) -- (0,-1.8);
		\node (rad) at (0.2, 1) {\small $\sqrt{n}$};
		\node (naz) at (1, -0.25) {\small $\approx n^{1/4}$};
	\end{tikzpicture}
	\caption{A cartoon of how a polytope drawn from $\Naz(w,s)$, for suitable $s = 2^{\Theta(\sqrt{n})}$, $w \approx n^{3/4}$, approximates the radius-$\sqrt{n}$ ball in $\R^n$. Our depiction of ${B_2} = B(\sqrt{n})$ is inspired by Milman's ``hyperbolic'' drawings of high-dimensional convex sets~\cite{milman1998surprising}.
	}
	\label{fig:de-nazarov}
\end{figure}
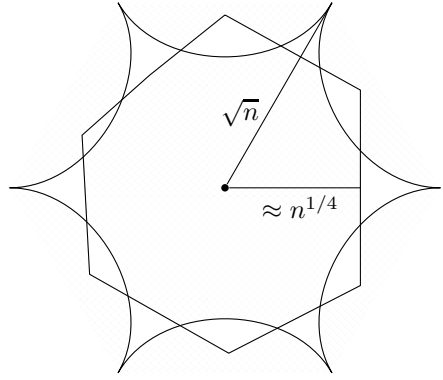

With \Cref{def:de-nazarov} in hand, we turn to the proof of \Cref{thm:de-nazarov}:

\begin{proof}[Proof of \Cref{thm:de-nazarov}]
 We will show that there exists an outcome $K$ in the support of $\Naz(w,s)$ for an appropriate choice of parameters $w$ and $s$ that has the desired properties.
	
	To show this, it suffices to show that 
	\[\Ex_{\bK\sim\Naz(w,s)}\sbra{\Prx_{\bx\sim N(0,I_n)}\sbra{\bK(\bx) \neq B_2(\bx)}} \leq \eps,\]
	which, by commuting the order of integration, is equivalent to showing that 
	\begin{equation} \label{eq:naz-goal}
		\Ex_{\bx\sim N(0,I_n)}\sbra{\Prx_{\bK\sim\Naz(w,s)}\sbra{\bK(\bx)\neq B_2(\bx)}} \leq \epsilon.
	\end{equation}
	\Cref{eq:naz-goal} allows us to control the error on an ``$x$-by-$x$ basis.'' We set parameters
	\begin{equation} \label{eq:dindout}
	\din := \sqrt{n} - \frac{\eps}{4} \qquad\text{and}\qquad \dout := \sqrt{n} + \frac{\eps}{4}.
	\end{equation}
	We will show that when $\|x\| \leq \din$ or $\|x\| \geq \dout$, 
	\begin{equation} \label{eq:new-naz-goal}
		\Prx_{\bK\sim\Naz(w,s)}\sbra{\bK(x) \neq B_2(x)} \leq \frac{\eps}{2}.
	\end{equation}
	Since the Gaussian volume of the annulus $\{y \in \R^n : \din\leq \|y\| \leq \dout\}$ is at most $\frac{\eps}{2}$ {(this is an easy consequence of the standard fact that for $n >1$, the pdf of the $\chi^2(n)$-distribution is everywhere at most 1)}, this establishes \Cref{eq:naz-goal} which in turn completes the proof.
	
	Note that by construction (\Cref{def:de-nazarov}), for any fixed $x\in\R^n$ we have that 
	\[\Prx_{\bK\sim\Naz(w,s)}\sbra{x \in \bK} = \Phi\pbra{\frac{w}{\|x\|}}^s\]
	where $\Phi:\R \to [0,1]$ denotes the cumulative density function of the univariate Gaussian distribution $N(0,1)$. To see this, note that 
	\[\langle x, \g{j} \rangle = \sumi x_i\g{j}_i \sim N(0, \|x\|^2)\quad\text{and so}\quad \Prx_{\g{1},\ldots,\g{s}}\sbra{\langle x, \g{j} \rangle \leq w~\text{for all}~j\in[s]} = \Phi\pbra{\frac{w}{\|x\|}} ^s\]
	due to independence.
	
	This observation informs our setting of the parameters $w$ and $s$. We take $w$ to satisfy the equation
	\begin{equation} \label{eq:naz-w-choice}
		\frac{1-\Phi(w/\dout)}{1-\Phi(w/\din)} = \frac{4}{\eps}\,{\ln}\pbra{\frac{4}{\eps}}
	\end{equation}
	and take $s$ to be 
	\begin{equation} \label{eq:naz-s-choice}
		s:= \frac{\eps}{4}\pbra{1-\Phi\pbra{\frac{w}{\din}}}^{-1}.
	\end{equation}
(We will argue that there is a valid solution to \Cref{eq:naz-w-choice} later on in the proof.) Given \Cref{eq:naz-w-choice,eq:naz-s-choice},
	
	\begin{itemize}
		\item For $x\in\R^n$ with $\|x\|\leq\din$, we have
	\[\Prx_{\bK\sim\Naz(w,s)}\sbra{x\in\bK} = \Phi\pbra{\frac{w}{\|x\|}}^s \geq \Phi\pbra{\frac{w}{\din}}^s = \pbra{1-\frac{\epsilon}{4s}}^s \geq 1-\frac{\epsilon}{4}\]
	where the first inequality relies on the fact that $\Phi(\cdot)$ is an increasing function, the following equality relies on our choice of $s$ in \Cref{eq:naz-s-choice}, and the final inequality makes use of the fact that $(1-a)^b \geq 1-ab$. As $\|x\|\leq \din$ implies that $x\in B_2$, we pick up at most $\eps/4$ error on such points. 
	
		\item For $x\in\R^n$ with $\|x\|\geq\dout$, we have 
	\begin{align*}
		\Prx_{\bK\sim\Naz(w,s)}\sbra{x\in\bK} = \Phi\pbra{\frac{w}{\|x\|}}^s &\leq \Phi\pbra{\frac{w}{\dout}}^s = \pbra{1 - \frac{4}{\epsilon}\pbra{1 - \Phi\pbra{\frac{w}{\din}}} {\ln}\pbra{\frac{4}{\eps}}}^s \\ 
		&= \pbra{1 - \frac{1}{s}\,{\ln}\pbra{\frac{4}{\eps}}}^s\\
		&\leq \frac{\epsilon}{4}
	\end{align*}
	where we once again used the fact that $\Phi(\cdot)$ is increasing in the first inequality. The following equalities follow from rearranging \Cref{eq:naz-w-choice,eq:naz-s-choice}, and the final inequality is due to the fact that $(1+x)\leq \exp(x)$. As $x\geq \dout$ implies that $x\notin B_2$, we pick up at most $\eps/4$ error on such points as well.
	\end{itemize}
	In particular, the above establishes \Cref{eq:new-naz-goal}, which in turn establishes \Cref{eq:naz-goal} and so there exists a set $K$ in the support of $\Naz(w,s)$ that $\eps$-approximates $B_2$. 
	
	It remains to argue that there exists a valid solution to \Cref{eq:naz-w-choice}, and to bound the number of facets of $K$ (i.e. the parameter $s$); we start with the former. Using standard Gaussian tail bounds (\Cref{prop:gaussian-tails}), 
{provided that $w/{\din},w/{\dout}=\Omega(1)$ (which holds with room to spare; below we will see that these quantities are $\Omega(n^{1/4})$),} 	
	we can write \Cref{eq:naz-w-choice} as
	\[\exp\pbra{\frac{w^2}{2}\pbra{\frac{1}{\din^2} - \frac{1}{\dout^2}}}\cdot \Theta(1) =  \frac{4}{\eps}\,{\ln}\pbra{\frac{4}{\eps}}.\]
	Taking logarithms on both sides, {for $\eps$ at most some sufficiently small constant we get that}
%
	\[w^2 = \Theta\pbra{{\ln}\pbra{\frac{4}{\eps}}+ {\ln\ln}\pbra{\frac{4}{\epsilon}}}\cdot \frac{(\din\dout)^2}{\dout^2-\din^2}.\]
	Recalling our choices of $\dout$ and $\din$, we have 
	\[\frac{(\din\dout)^2}{\dout^2-\din^2} = \frac{\pbra{n - \eps^2/16}^2}{\eps\sqrt{n}} = \Theta\pbra{\frac{n^{3/2}}{\epsilon
	}}.\]
	Plugging this back into the previous expression and taking square roots on both sides gives 
	\begin{equation} \label{eq:wvalue}w := \Theta\pbra{{n^{3/4}}\sqrt{\frac{1}{\epsilon}\pbra{ {\ln}\pbra{\frac{4}{\eps}}+  {\ln\ln}\pbra{\frac{4}{\epsilon}}}}}.\end{equation}
	
	This lets us bound $s$, which is the number of facets of $K$. From \Cref{eq:naz-s-choice} and once again using standard tail bounds (\Cref{prop:gaussian-tails}), we have
	\begin{align*}
		s = \frac{\eps}{4}\pbra{1-\Phi\pbra{\frac{w}{\din}}}^{-1} 
		& {\leq} \frac{\eps}{4}\exp\pbra{\frac{w^2}{2\din^2}}\pbra{\frac{\din}{w} - \frac{\din^3}{w^3}}\\
		&\leq \exp\pbra{\Theta\pbra{\frac{w^2}{\din^2}}}.
	\end{align*}
	From \Cref{eq:dindout,eq:wvalue} we get that
	\[\frac{w}{\din} = \Theta\pbra{n^{1/4}\sqrt{\frac{1}{\epsilon}\pbra{\log\pbra{\frac{4}{\eps}}+ \log\log\pbra{\frac{4}{\epsilon}}}}},\]
	from which the claimed bound on $s$ is immediate.	
\end{proof}


\section{Improved Approximations for $\ell_p$ Balls for {$p\in[1,2)$}}
\label{sec:ell-p}

The main result of this section is a proof of the existence of approximators with a sub-exponential number of facets for the $\ell_p$ balls $B_p$, where $1 \leq p < 2$:

\begin{theorem} \label{thm:ell-p-approx}
For $1 \leq p < 2$, let $B_p$ denote the origin-centered $\ell_p$ ball in $\R^n$  
{defined in \Cref{eq:Bp} below}, which has $\Vol(B_p)={\frac 1 2} \pm o_n(1).$  
Let $\paramset < \tau$ where $\tau\in(0,1)$ is some absolute constant.
 Then for $1 \leq p < 2$, there exists a polytope $K_p$ which is the intersection of 
	\begin{equation} \label{eq:ell-p-bound} {s = 
	\exp\pbra{\Theta\pbra{\pbra{{\frac {\log^{9/4}(1/\eps)}{\eps^4}}}\cdot \log \pbra{{\frac n \eps}} \cdot n^{3/4}}}
	~\text{halfspaces}}
	\end{equation}
	such that 
	$\dG(B_p, K_p) \leq \eps.$
In the special case of $p=1$, we have a slightly improved bound of 
\begin{equation} \label{eq:ell-one-bound}
s=\exp\pbra{\Theta\pbra{\pbra{\frac{\log(1/\epsilon)}{\epsilon}}^{3/2} n^{3/4}}}~\text{halfspaces}
\end{equation}
(see~\Cref{remark:ell-1-done}).
\end{theorem}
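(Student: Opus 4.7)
The plan is to prove \Cref{thm:ell-p-approx} via a probabilistic ``random junta'' construction combined with a crucial appeal to \Cref{thm:rel-err-dud} for the polyhedralization step. Write $S_p(x) := |x_1|^p + \cdots + |x_n|^p$ and let $c_p > 0$ be the constant such that $B_p = \{x \in \R^n : S_p(x) \leq c_p n\}$ has Gaussian volume $1/2 \pm o_n(1)$; such a $c_p$ exists because the $|\bx_i|^p$ are i.i.d.\ with finite mean. Since $|\bx_i|^p$ has finite third moment for $p<2$, the Berry--Esseen theorem (\Cref{thm:berry-esseen}) applied to $S_p$ implies that for a margin $\tau = \Theta(\eps\sqrt{n})$, only an $O(\eps)$ fraction of Gaussian mass lies in the strip $|S_p(x) - c_p n| \leq \tau$. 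Thus at the cost of $O(\eps)$ error we may restrict attention to \emph{light} points satisfying $S_p(x) \leq c_p n - \tau$ and \emph{heavy} points satisfying $S_p(x) \geq c_p n + \tau$.

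Next I would analyze the random set $\mathcal{S}(\bi) := \{x \in \R^n : |x_{\bi_1}|^p + \cdots + |x_{\bi_m}|^p \leq \theta_p\}$, where $\bi = (\bi_1,\ldots,\bi_m)$ is a uniformly random $m$-tuple drawn without replacement from $[n]$ and $\theta_p \approx (m/n)\cdot c_p n$. For any fixed $x$, the inner sum is a without-replacement sample from the finite population $\{|x_j|^p\}_{j=1}^n$ with population mean $S_p(x)/n$. The key technical step is to establish a sharp \emph{multiplicative Cram\'er-type gap}: quantities $p_L > p_H$ with $\log(p_L/p_H) = \Omega(\eps/\sqrt{m})$ such that $\Pr_{\bi}[x \in \mathcal{S}(\bi)] \geq p_L$ for Gaussian-typical light $x$ while $\Pr_{\bi}[x \in \mathcal{S}(\bi)] \leq p_H$ for Gaussian-typical heavy $x$. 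Once this gap is in hand, intersecting $M = \tilde{\Theta}(\sqrt{m}/\eps)$ independent copies $\mathcal{S}(\bi^{(1)}),\ldots,\mathcal{S}(\bi^{(M)})$ amplifies it so that $p_L^M \geq 1-\eps$ while $p_H^M \leq \eps$, and a standard averaging argument extracts a single deterministic choice of $\bi^{(1)},\ldots,\bi^{(M)}$ whose intersection $O(\eps)$-approximates $B_p$ on all typical points.

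For $p=1$ each $\mathcal{S}(\bi)$ is exactly a $2^m$-halfspace intersection (a junta $\ell_1$ ball supported on $m$ coordinates), so the total count $M \cdot 2^m$ evaluates to $2^{\tilde{O}(n^{3/4})}$ after optimizing $m \approx n^{3/4}$, giving \eqref{eq:ell-one-bound}. For $1<p<2$ each $\mathcal{S}(\bi)$ is a convex body but not polyhedral, so I would invoke \Cref{thm:rel-err-dud} to approximate it by an intersection of halfspaces. Crucially, the Cram\'er tail bound applied in the opposite direction shows that each $\mathcal{S}(\bi)$ has Gaussian volume $1-\delta$ with $\delta$ exponentially small in $m$; only the $\log^{O(m)}(1/\delta)$-type dependence provided by \Cref{thm:rel-err-dud} is affordable here, since the naive $(1/\delta)^{O(m)}$ factor coming from \Cref{thm:gaussian-dudley} would blow up catastrophically. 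Approximating each $\mathcal{S}(\bi)$ to relative error $\eps/M$ produces an intersection of $\mathrm{poly}(m,1/\eps,\log(1/\delta))^{O(m)}$ halfspaces, and ANDing $M$ such approximators together yields the total count recorded in \eqref{eq:ell-p-bound}.

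The main technical obstacle I anticipate is establishing the multiplicative Cram\'er gap for sums sampled without replacement. Standard Hoeffding--Serfling-style inequalities give only \emph{additive} concentration, which is too weak to produce a clean \emph{ratio} between tail probabilities at two nearby thresholds; extracting such a ratio uniformly over Gaussian-typical populations $\{|x_j|^p\}_{j=1}^n$ appears to require MGF-level estimates and a careful comparison between the without-replacement cumulant generating function and its with-replacement analogue. The remaining challenge is parameter balancing: the sample size $m$, amplification factor $M$, anticoncentration margin $\tau$, and the relative-error parameter fed to \Cref{thm:rel-err-dud} must be chosen jointly so that each of the four error sources (anticoncentration, Cram\'er gap, amplification, polyhedralization) contributes $O(\eps)$, with the optimum falling at $m = \tilde{\Theta}(n^{3/4})$.
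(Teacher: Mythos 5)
Your high-level strategy is exactly the one the paper follows: use Berry--Esseen anticoncentration to dismiss the $O(\eps)$-mass boundary strip, use a random junta construction $\mathcal{S}(\bi) = \{x : \sum_k |x_{\bi_k}|^p \le \theta_p\}$ with a threshold that creates a multiplicative Cram\'er-type gap in acceptance probability between light and heavy points, amplify by ANDing $M$ independent draws, and then (for $p>1$) polyhedralize each junta via \Cref{thm:rel-err-dud}, which is affordable precisely because the junta has volume $1-\delta$ with $\delta$ tiny. The without-replacement multiplicative Cram\'er bound you anticipate needing does exist (the paper uses \Cref{thm:hrw}, due to Hu, Robinson and Wang), and the ``good points'' conditioning you would need to apply it is handled in the paper by restricting to Gaussian-typical populations.

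The genuine gap is in your parameter choices, which if carried through do not yield $m = \widetilde\Theta(n^{3/4})$. You take $M = \widetilde\Theta(\sqrt{m}/\eps)$, which forces the single-junta rejection probability for light points to be $1-p_L \approx \eps/M$, only polynomially small in $m$; this places the threshold at only $t = \Theta(\sqrt{\log m})$ standard deviations from the light sample mean. At that moderate deviation, the ratio of the two tail probabilities is roughly $e^{t\Delta}$ where $\Delta = \Theta(\eps\sqrt{m/n})$ is the shift (in standard-deviation units) caused by the anticoncentration margin, and the amplification step needs $e^{t\Delta} \gtrsim \log(1/\eps)/\eps$ so that $p_L^M \ge 1-\eps$ while $p_H^M \le \eps$. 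Plugging in your $t$ forces $m\log m \gtrsim n$, i.e.\ $m = \widetilde\Omega(n)$, which destroys the subexponential bound. The resolution is to push $t$ all the way to the ceiling of validity of the without-replacement Cram\'er bound, $t = \Theta(m^{1/6})$; then $1-p_L \approx e^{-\Theta(m^{1/3})}$ is \emph{doubly}-exponentially small and $M = \exp(\Theta(m^{1/3}))$ is exponential in $m^{1/3}$, not polynomial in $m$. With $t = \Theta(m^{1/6})$ the constraint $t\Delta \gtrsim \log(1/\eps)$ reads $m^{2/3} \gtrsim \sqrt{n}\,\log(1/\eps)/\eps$, i.e.\ $m \gtrsim n^{3/4}(\log(1/\eps)/\eps)^{3/2}$, which is exactly the origin of the $n^{3/4}$ exponent.

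A second, smaller issue: feeding relative error $\eps/M$ into \Cref{thm:rel-err-dud} as you propose produces an $M^{O(m)}$ blow-up (since the $m/\tau$ inside the $O(m)$-th power becomes $mM/\eps$), which with the corrected $M = \exp(\Theta(m^{1/3}))$ and $m = \Theta(n^{3/4})$ is $\exp(\Theta(n))$ and useless. The paper instead proves (via Petrov's i.i.d.\ Cram\'er theorem, \Cref{thm:petrov}) the tight two-sided estimate $\delta = \Theta\big((\log M)/M\big)$ and uses the much milder relative-error parameter $\tau = \eps/\log M$; this gives $\tau\delta = O(\eps/M)$ per junta while keeping the per-junta facet count at $\mathrm{poly}(m,\log M,1/\eps)^{O(m)} = 2^{\widetilde O(n^{3/4})}$. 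The need for this tight volume estimate (the paper's \Cref{thm:Estimate-cj}) is one step that your sketch leaves implicit.
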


{We remark that the approximation results of \Cref{thm:ell-p-approx} can be extended to the $\ell_p$ balls of volume exactly $1/2$; we work with $B_p$ instead because it is more convenient.}
In \Cref{subsec:cramer-bounds} we give some basic setup and recall some \Cramer-type tail bounds on sums of  random variables which we will use; the key property of these bounds which we will exploit is that they give tight \emph{multiplicative} control on the relevant tails.
We will then prove~\Cref{thm:ell-p-approx} in~\Cref{subsec:ell-p-balls}.

\subsection{\Cramer-Type Large Deviations Bounds and General Setup}
\label{subsec:cramer-bounds}

Roughly speaking, ``\Cramer-type'' tail bounds are results which state that under suitable conditions, a sum $\bS$ of random variables for which $\E[\bS]=0$ and $\Var[\bS]=1$ will have tail probabilities $\Pr[\bS \geq x]$ which are \emph{multiplicatively} close to the corresponding Gaussian tail bound $\Pr_{\bg \sim N(0,1)}[\bg \geq x].$
The most standard results of this type are for sums of \emph{independent} random variables, as in the following theorem:

\begin{theorem}[Theorem~1 from Chapter VIII of \cite{petrov2012sums}] \label{thm:petrov}
Let $\bX_1, \ldots, \bX_n$ be i.i.d. random variables that satisfy \Cramer's condition, 
i.e., there exists constant $H>0$ such that $\mathbf{E} [e^{H\bX_1}] <\infty$.

Let us assume that the variables $\bX_i$ are centered --- i.e. $\mathbf{E}[\bX_1]=0$ and $\Var[\bX_1] = \sigma^2$. Let $\bS_n := \sum_{i=1}^n \bX_i$ and define $F_n(x) := \Pr[\bS_n \le x \sigma \sqrt{n}]$. If $x >0$ and $x = o(\sqrt{n})$, then 
\[
\frac{1-F_n(x)}{1-\Phi(x)} =\exp \bigg( \frac{x^3}{\sqrt{n}} \cdot \lambda \bigg( \frac{x}{\sqrt{n}} \bigg)\bigg) \cdot \bigg( 1 + O\bigg(\frac{x+1}{\sqrt{n}} \bigg) \bigg). 
\]
Here $\lambda(t) : =\sum_{k=0}^\infty a_k t^k$ is a power series whose coefficients $\{a_k\}$ depend only on the cumulants of the random variable $\bX_1$. Furthermore, the disc of convergence of this series has a positive radius.
\end{theorem}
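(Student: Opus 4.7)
The plan is to prove Petrov's theorem by the classical method of exponential tilting (the Esscher/Cram\'er transform) combined with an Edgeworth-type local expansion for the tilted random walk, followed by a Taylor expansion of the cumulant generating function in the cumulants of $\bX_1$. First I would set up the cumulant generating function $\psi(t) := \log \mathbf{E}[e^{t\bX_1}]$. Cram\'er's condition guarantees that $\psi$ is finite, convex, and in fact analytic in a complex neighborhood of $0$, with $\psi(0)=0$, $\psi'(0)=0$, $\psi''(0)=\sigma^2$, and $\psi(t) = \sigma^2 t^2/2 + \sum_{k\geq 3} \kappa_k t^k/k!$, where $\kappa_k$ denotes the $k$-th cumulant of $\bX_1$.

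Next, for the target level $u := x \sigma \sqrt{n}$, I would introduce the saddle-point tilt $h=h(x,n)$ determined by $\psi'(h) = u/n = x\sigma/\sqrt{n}$. Since $\psi''(0) > 0$, the implicit function theorem produces an analytic solution $h(y)$ of $\psi'(h) = y$ near $y=0$, and the range $x = o(\sqrt{n})$ keeps $h$ inside this neighborhood. I then pass to the tilted probability measure $d\tilde{\mathbf{P}}/d\mathbf{P} = e^{h\bS_n - n\psi(h)}$, under which the $\bX_i$ remain i.i.d.\ with mean $\psi'(h) = u/n$ and variance $\psi''(h)$, so the rare event $\{\bS_n \geq u\}$ becomes typical. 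Writing
\[
1 - F_n(x) = e^{-hu + n\psi(h)}\,\tilde{\mathbf{E}}\bigl[ e^{-h(\bS_n - u)} \mathbf{1}_{\bS_n \geq u}\bigr]
\]
isolates the large-deviation prefactor from a well-behaved ``local'' expectation.

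The local expectation is evaluated by an Edgeworth expansion of the tilted random walk: Fourier-inversion of the characteristic function $\varphi_{\bX_1}^{(h)}(t) = \mathbf{E}[e^{it\bX_1 + h\bX_1 - \psi(h)}]$, together with a Taylor expansion of $\log \varphi_{\bX_1}^{(h)}$ at $t=0$ and Cram\'er's tail-decay bound on $|\varphi_{\bX_1}^{(h)}|$ away from $0$, yields a density for $\bS_n - u$ that is Gaussian with variance $n\psi''(h)$ plus a multiplicative error of size $1 + O((x+1)/\sqrt{n})$. Integrating $e^{-h(\bS_n - u)}$ against this density (the one-sided Laplace integral) produces the expected prefactor $(h\sqrt{2\pi n\psi''(h)})^{-1}\cdot(1 + O((x+1)/\sqrt{n}))$. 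Dividing by the Mills-ratio asymptotic $1-\Phi(x) = \varphi(x)/x \cdot (1+O(1/x^2))$, the Gaussian prefactors cancel to leading order (using $h\sqrt{n\psi''(h)} \to x$), and I am left with
\[
\frac{1-F_n(x)}{1-\Phi(x)} = \exp\bigl( n\psi(h) - hu + x^2/2\bigr) \cdot \bigl(1 + O((x+1)/\sqrt{n})\bigr).
\]
Substituting the analytic expansion of $h$ in $x/\sqrt{n}$ into $\psi$ and collecting, the exponent becomes a convergent power series starting at order $x^3/\sqrt{n}$; factoring out $x^3/\sqrt{n}$ leaves precisely the Cram\'er series $\lambda(t) = \sum_{k\geq 0} a_k t^k$, whose coefficients $a_k$ are explicit polynomials in $\sigma^{-1}, \kappa_3, \kappa_4, \ldots, \kappa_{k+3}$, and whose positive radius of convergence is inherited from the analyticity of $\psi$ and of $(\psi')^{-1}$ at $0$.

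The main obstacle is the uniform error control in the Edgeworth/local step: one must show that the multiplicative error $1+O((x+1)/\sqrt{n})$ holds uniformly for $h$ ranging through a shrinking $o(1)$-neighborhood of $0$ as $n\to\infty$ (corresponding to the full range $x = o(\sqrt{n})$), and not merely pointwise in $x$. This requires a careful split of the Fourier-inversion integral into a small-$t$ regime, where a third-order Taylor expansion of the tilted log-characteristic function is admissible with explicit remainder, and a large-$t$ regime, where Cram\'er's condition furnishes an exponentially small bound $\sup_{|t|\geq\delta}|\varphi_{\bX_1}^{(h)}(t)| \leq 1 - c$ uniformly in $h$. One also needs to verify that the Laplace-method conversion from a local density estimate to the tail integral preserves the multiplicative form of the error, rather than producing an additive $o(1)$ term that would be too weak. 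Once these uniform Fourier estimates are in place, the remaining steps are bookkeeping with analytic power series.
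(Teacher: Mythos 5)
This statement is not proved in the paper at all: it is quoted, with attribution, from Petrov's book, so there is no in-paper argument to compare against. Judged on its own terms, your outline follows the same classical route that Petrov uses --- exponential tilting at the saddle point $\psi'(h)=x\sigma/\sqrt{n}$, a normal approximation for the tilted sum, and an analytic expansion of $n\psi(h)-hu+x^2/2$ in powers of $x/\sqrt{n}$, which is exactly where the Cram\'er series $\lambda$ and its positive radius of convergence come from. Those outer layers (the analytic inversion of $\psi'$ via the implicit function theorem, the change of measure identity, the bookkeeping that produces $\lambda$) are sound; note only that analyticity of $\psi$ near $0$ requires the two-sided exponential moment condition, which is Petrov's actual hypothesis and what the paper's one-sided restatement is abbreviating.

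The genuine gap is in your local step. You extract the factor $1+O((x+1)/\sqrt{n})$ from an Edgeworth expansion of the \emph{density} of the tilted walk, using a bound $\sup_{|t|\ge\delta}|\varphi^{(h)}_{\bX_1}(t)|\le 1-c$. But the hypothesis is only an exponential moment bound; it does not imply Cram\'er's condition (C) on the characteristic function, and for lattice laws (e.g.\ Rademacher $\bX_i$, which satisfy the hypothesis) the tilted sum has no density and $|\varphi^{(h)}_{\bX_1}|$ returns to modulus $1$ periodically --- so both ingredients of your local step fail, while the theorem as stated still holds for such $\bX_i$. The standard repair, and Petrov's actual argument, stays at the level of distribution functions: apply the Berry--Esseen theorem to the standardized tilted sum to get a uniform additive error $\Delta=O(1/\sqrt{n})$, and then handle the one-sided Laplace integral $\int_0^\infty e^{-\lambda t}\,dG(t)$ with $\lambda=h\bar{\sigma}\sqrt{n}=\Theta(x)$ by integration by parts: the discrepancy is at most $\Delta+\lambda\int_0^\infty e^{-\lambda t}\Delta\,dt=2\Delta$, while the main term $e^{\lambda^2/2}(1-\Phi(\lambda))$ has size $\Theta(1/(1+\lambda))$, so the additive $O(1/\sqrt{n})$ converts into exactly the multiplicative $1+O((x+1)/\sqrt{n})$. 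With that substitution your plan proves the theorem in its stated generality; as written, it only covers laws with a density and characteristic-function decay (which happens to include the paper's application to $|\bg_j|^p-\calA_p$, but not the theorem as quoted).
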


We will also require a \Cramer-type bound, due to \cite{hu2007cramer}, for when the sum is a sum of randomly chosen values that are drawn \emph{without} replacement from a finite population. In more detail, let $\bX_1,\ldots,\bX_\newN$ be a random sample drawn without replacement from a finite population $\{a_1, \ldots, a_n\}$ with 
\[\sumi a_i = 0 \qquad\text{and}\qquad \sumi a_i^2 = n,\]
 where $\newN < n$.  We 
 will write 
\[\bS_\newN := \sum_{i=1}^\newN \bX_i, \quad p := \frac{\newN}{n}, \quad q := 1-p, \quad\text{and}\quad \omega_n^2 := npq.\]

\begin{theorem}[Theorem~1.2 of~\cite{hu2007cramer}] \label{thm:hrw}
	There exists an absolute constant $A > 0$ such that 
	\[
	{\Prx_{}\sbra{\bS_\newN  \geq t\omega_n} = \pbra{1 - \Phi(t)}\pbra{1 + O(1)(1+t)^3\frac{\beta_{3n}}{\omega_n}}
	}
	\]
	for 
	 {$0\leq t \leq {{\frac 1 A} \cdot} \min\cbra{\frac{\omega_n}{\max_k |a_k|}, \pbra{\frac{\omega_n}{\beta_{3n}}}^{1/3}}$}, where
{$\beta_{3n} := \Ex\sbra{|\bX_1|^3}$}.
\end{theorem}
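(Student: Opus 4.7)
The plan is to follow the classical Esscher tilt / conjugate distribution approach for \Cramer-type large deviations, suitably adapted to the dependence introduced by sampling without replacement. First I would reformulate $\bS_\newN$ as $\sum_{i=1}^n a_i Z_i$ where $(Z_1, \ldots, Z_n) \in \{0,1\}^n$ is the uniformly random indicator vector of an $\newN$-subset of $[n]$. I would then introduce a one-parameter family of tilted sampling distributions: for $h \geq 0$, reweight each index $i$ by $e^{h a_i}$ (with appropriate normalization so that an $\newN$-subset is still sampled) to obtain a tilted law $\mathbb{P}_h$, and choose $h = h(t)$ so that the $\mathbb{P}_h$-mean of $\bS_\newN$ equals the target value $t\omega_n$. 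This turns the rare event $\{\bS_\newN \geq t\omega_n\}$ into a central event under $\mathbb{P}_h$, and the first-order optimality condition gives $h \approx t/\omega_n$ up to higher-order corrections.

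Next, I would write $\Prx[\bS_\newN \geq t\omega_n]$ as an expectation over $\mathbb{P}_h$ of the Radon--Nikodym derivative $d\Prx/d\mathbb{P}_h$, which factors into an explicit exponential weight times a local term. Expanding the cumulant generating function of the finite population through third order and substituting the chosen $h$, the exponential factor reproduces the Gaussian tail $1-\Phi(t)$ up to a \Cramer-series correction that, upon re-expansion in the regime where $(1+t)^3\beta_{3n}/\omega_n$ is small, yields precisely the multiplicative error $1 + O(1)(1+t)^3\beta_{3n}/\omega_n$ stated in the theorem. The quadratic cumulant is absorbed into the normalization of the local term, and the cubic cumulant (which involves $\beta_{3n}$) produces the leading correction.

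The main obstacle, and the reason this is considerably harder than the independent-summand case of \Cref{thm:petrov}, is the local limit step for $\sum_i a_i Z_i$ under the tilted sampling-without-replacement law. In the independent case one has a clean product formula for characteristic functions; here one must control the characteristic function of a hypergeometric-type sum, where the constraint $\sum_i Z_i = \newN$ introduces long-range dependence. The standard route is a combinatorial CLT in the Hoeffding tradition together with Bikelis-type sharpenings, estimating the characteristic function via comparison to the Poissonized (independent) model and correcting the discrepancy by a saddle-point / contour argument. Tracking the error terms through this expansion, together with the smoothing effect of the tilt, produces the validity range $0\leq t \leq A^{-1}\min\{\omega_n/\max_k|a_k|,\ (\omega_n/\beta_{3n})^{1/3}\}$: the first factor ensures that the tilted population values $h a_i$ remain bounded in the exponent so that the conjugate distribution is well-defined and regular, while the second factor controls the remainder when truncating the Edgeworth expansion after the cubic term.

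Finally, I would assemble the three contributions---the exponential factor from tilting (giving $1-\Phi(t)$), the Gaussian second-order piece absorbed into the local-CLT normalization, and the cubic cumulant which delivers the $(1+t)^3\beta_{3n}/\omega_n$ relative correction---and verify that all neglected higher-order terms are of strictly smaller order throughout the stated range of $t$. I do not expect to improve on the absolute constant $A$; the main burden throughout is uniform control of the local density estimates for the hypergeometric sum as $t$ approaches the boundary of the validity range, where the tilt becomes large and the Edgeworth remainder becomes the dominant source of error.
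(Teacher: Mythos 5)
You should first be aware that the paper contains no proof of \Cref{thm:hrw} at all: the statement is imported verbatim as Theorem~1.2 of \cite{hu2007cramer} and is used purely as a black box in the analysis of \Cref{subsec:ell-p-balls}, so there is no in-paper argument to compare your attempt against. The relevant benchmark is the original proof of Hu, Robinson and Wang, together with the conjugate-method/saddlepoint literature on finite-population sampling that it builds on.

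Measured against that benchmark, your outline follows the same general route (an Esscher/conjugate tilt chosen so that the tilted mean hits $t\omega_n$, a cumulant expansion producing the Gaussian factor $1-\Phi(t)$, and a third-cumulant correction responsible for the $\beta_{3n}$ term), but as written it is a roadmap rather than a proof, and the genuinely hard content is precisely what you defer to ``standard'' machinery. Concretely: (i) the tilted law for sampling without replacement is not a product measure, so the local limit/Berry--Esseen-type estimate for $\sum_i a_i Z_i$ under the constraint $\sum_i Z_i = \newN$ (equivalently, for independent tilted Bernoullis conditioned on their sum) has to be proved, not invoked; (ii) the theorem asserts an \emph{equality} with a two-sided multiplicative error $1+O(1)(1+t)^3\beta_{3n}/\omega_n$, uniformly over the entire range $0\leq t\leq A^{-1}\min\{\omega_n/\max_k|a_k|,\ (\omega_n/\beta_{3n})^{1/3}\}$, and your sketch never tracks the remainders sharply enough to see why the relative error is cubic in $1+t$ and why both truncation conditions suffice --- it only asserts plausibly that they should. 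These conditional characteristic-function and uniform-remainder estimates are the substance of \cite{hu2007cramer}; without them the stated bound is not established. So, judged as a blind proof there is a genuine gap at the tilted local-limit step, although as a plan of attack your proposal is consistent with how results of this type (including the independent-summand analogue, \Cref{thm:petrov} from \cite{petrov2012sums}) are actually proved.
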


{\paragraph{The $\ell_p$ Balls $B_p$.}
For $1 \leq p \leq 2$, let $\calA_p$ denote the $p$-absolute moment of a standard univariate Gaussian random variable, 
\[
\calA_p := \Ex_{\bg \sim N(0,1)}\sbra{|\bg|^p},
\]
so e.g.~$\calA_1= \sqrt{\frac 2 \pi}, \calA_2 = 1$, and in general we have 
\begin{equation} \label{eq:moment-formula}
	\calA_p = \sqrt{\frac{2^p}{\pi}}\Gamma\pbra{\frac{p+1}{2}}.
\end{equation}
(See, for example, Equation~18 of~\cite{winkelbauer2012moments}.)
We define the ball $B_p \subset \R^n$ to be 
\begin{equation} \label{eq:Bp}
B_p := \cbra{
x \in \R^n \ : \ \sum_{i=1}^n |x_i|^p = n\calA_p
}, \quad \text{i.e.} \quad
B_p := \cbra{
x \in \R^n \ : \ \|x\|_p = n^{1/p}\calA_p^{1/p}
}.
\end{equation}
(Note that this agrees with our definition of $B_2 = B(\sqrt{n})$ from \Cref{sec:nazarov-ub}.)
It follows from the Berry--Esseen theorem~\cite{berry,esseen} that for each value of $1 \leq p \leq 2$ we have $\Vol(B_p) = {1/2} \pm o_n(1)$. 
}

\subsection{Proof of \Cref{thm:ell-p-approx}}
\label{subsec:ell-p-balls}


The proof will proceed in three steps:
\begin{itemize}
	\item First, we show (\Cref{thm:approx-Bp-by-intersection-of-juntas}) that $B_p$ can be $\eps$-approximated by an intersection of $\newN$-dimensional ``juntas'' that each have very large volume. As in our earlier construction of an approximator for the $\ell_2$ ball, the construction will be probabilistic, but it is a very different probabilistic construction from the $\ell_2$ ball construction of \Cref{thm:de-nazarov} (in particular, it makes essential use of the~\Cramer-type bound (\Cref{thm:hrw}) for sums of independent random variables drawn without replacement from a finite population).
	\item Next (\Cref{thm:Estimate-cj}) we get a high-accuracy estimate of the Gaussian volume $1-\delta$ of each of these juntas, by determining the value of $\delta$ up to a small multiplicative factor. This argument
	uses Petrov's~\Cramer-type bound~(\Cref{thm:petrov}) for sums of i.i.d~random variables.
	\item Finally, we combine our high-accuracy estimate of $\delta$ with our ``relative-error'' universal approximation bound (\Cref{thm:rel-err-dud}) for sets of volume $1-\delta$ to show that each of the juntas can be approximated by an intersection of not-too-many halfspaces.
\end{itemize}
The theorem then follows immediately. 
We note that for the special case when $p = 1$, the second and third steps are not required to obtain a $2^{O_\epsilon(n^{3/4})}$-facet approximator for the cross-polytope; this is because the juntas used in the approximator themselves are polytopes, and hence do not need to be approximated by a polytope using the universal ``relative-error'' bound. 
(See~\Cref{remark:ell-1-done} for more on this.) 


\subsubsection{Approximating $B_p$ by an Intersection of Juntas}
\label{subsubsec:ell-p-juntas}


Let us say that a set $S \subseteq \R^n$ is an \emph{$m$-junta} if $S$ depends on at most $m$ directions in $\R^n$.  More precisely, ``$S$ is an $m$-junta'' means that there are $m$ orthogonal directions $v_1,\dots,v_m \in \R^m$ such that for every $x,y \in \R^n$, if $v_i \cdot x = v_i \cdot y$ for all $i \in [m]$ then $S(x)=S(y).$

The main result of this section is a proof that the set $B_p$ is close to an intersection of ``not too many'' $m$-juntas, where $m$ is ``not too large'' (and moreover, all of the directions for all of the $m$-juntas belong to $\{e_1,\dots,e_m\}$):

\begin{theorem} [Approximating $B_p$ by an intersection of $m$-juntas] \label{thm:approx-Bp-by-intersection-of-juntas}
Let $1 \leq p < 2$ and let $\eps$ be as in \Cref{thm:ell-p-approx}.
There is an intersection $L = \bigcap_{j=1}^M C_j$ of $M$ many $m$-juntas in the support of the distribution $\Dp(M,\newN,\theta)$ defined in \Cref{def:junta-distribution} below which has $\dG(L,B_p) \leq O(\eps)$, where $M$ is as given in \Cref{eq:choiceofM} and $m$ is as given in \Cref{eq:choiceofm}.
\end{theorem}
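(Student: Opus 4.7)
The plan is to implement the probabilistic sampling-based approach sketched in the introduction. First, I would define the distribution $\Dp(M,\newN,\theta)$ so that a draw consists of $M$ independent ``junta-ball'' candidates of the form
\[
\bC_j \;=\; \cbra{\, x \in \R^n \;:\; |x_{\bi^{(j)}_1}|^p + \cdots + |x_{\bi^{(j)}_\newN}|^p \,\leq\, \theta \,},
\]
where for each $j \in [M]$ the indices $\bi^{(j)}_1,\dots,\bi^{(j)}_\newN$ are sampled uniformly at random without replacement from $[n]$, and $\bL := \bigcap_{j=1}^M \bC_j$. Each $\bC_j$ is an $\newN$-junta (depending only on the coordinates $\bi^{(j)}_1,\dots,\bi^{(j)}_\newN$), so the output of any draw from $\Dp(M,\newN,\theta)$ is an intersection of $M$ juntas of arity $\newN$. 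Using anti-concentration for $\sum_i |\bg_i|^p$ (via Berry--Esseen, \Cref{thm:berry-esseen}), I would then fix a margin $\tau$ and argue that, up to $O(\eps)$ error under $N(0,I_n)$, every point $x$ is either \emph{light} ($\sum_i |x_i|^p \leq n\calA_p - \tau$) or \emph{heavy} ($\sum_i |x_i|^p \geq n\calA_p + \tau$); so it suffices to show that $\bL$ correctly classifies almost all light and heavy points.

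Next I would analyze, for a fixed deterministic $x$, the probability
\[
q(x) \;:=\; \Prx_{\bi_1,\dots,\bi_\newN}\!\sbra{\,|x_{\bi_1}|^p + \cdots + |x_{\bi_\newN}|^p \leq \theta\,}.
\]
By normalizing the values $a_k = |x_k|^p - \tfrac{1}{n}\sum_\ell |x_\ell|^p$, this is exactly a probability about a sum of $\newN$ values drawn without replacement from a finite mean-zero population, which is the setting of the Hu--Robinson--Wang~\Cramer-type bound (\Cref{thm:hrw}). Choosing $\theta$ so that the Gaussian ``$z$-score'' sits at an appropriate scale $t$ with $t = \Theta(n^{1/4}/\polylog)$, this bound yields \emph{multiplicatively} tight Gaussian-tail estimates $q(x) \approx 1 - \Phi(t_x)$, where $t_x$ is a monotone function of $\sum_i |x_i|^p$. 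In particular, I would set the parameters so that
\[
q(x_{\text{light}}) \;\geq\; 1-\alpha
\qquad\text{and}\qquad
q(x_{\text{heavy}}) \;\leq\; 1-\beta \cdot \alpha
\]
for some multiplicative gap $\beta = \poly(1/\eps)$ and $\alpha$ polynomially small in $n$. The acceptance probability of the intersection is $q(x)^M$, so taking $M = \Theta\!\pbra{\tfrac{1}{\alpha}\log(1/\eps)}$ makes light points be accepted w.p. $\geq 1-\eps/4$ and heavy points be accepted w.p. $\leq \eps/4$, after averaging over the draw of $\bL$.

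Finally, commuting expectations (exactly as in the proof of \Cref{thm:de-nazarov}) and integrating the pointwise error bound against $N(0,I_n)$, I would conclude that
\[
\Ex_{\bL \sim \Dp(M,\newN,\theta)}\sbra{\dG(\bL, B_p)} \;\leq\; O(\eps),
\]
so some $L$ in the support of $\Dp(M,\newN,\theta)$ attains this error. A calculation then sets $\newN = \Theta(\sqrt{n}\cdot\polylog(n,1/\eps))$ (to make the without-replacement Gaussian approximation valid and to secure the multiplicative gap) and $M = \exp(\tilde{\Theta}(n^{1/4}))$ (to amplify the gap), yielding the bounds on $M$ and $m=\newN$ needed for \Cref{thm:ell-p-approx}. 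The main obstacle is parameter-balancing in the invocation of \Cref{thm:hrw}: the $(1+t)^3 \beta_{3n}/\omega_n$ relative-error term must be kept $\ll 1$ in the regime where $t$ is as large as $\Theta(n^{1/4})$, which forces careful choices of $\newN$ and a control of $\beta_{3n} = \Ex[|\bX_1|^3]$ in terms of the empirical distribution of $\{|x_i|^p\}$; handling the (small) fraction of $x$'s for which this bound degrades will require a separate truncation argument conditioning on $\max_i |x_i|$ being not too large (which itself holds w.h.p.\ under $N(0,I_n)$).
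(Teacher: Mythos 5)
Your overall strategy --- commuting expectations, decomposing into light/heavy/boundary points via Berry--Esseen, applying the Hu--Robinson--Wang Cram\'er-type bound to the without-replacement sample sum, and then amplifying by taking an intersection of $M$ independent junta-balls --- is essentially the paper's approach. However, the parameter balance you propose is broken, and in a way that is not just a constant-factor slip but changes the exponent of $n$.

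The issue is the applicability range of \Cref{thm:hrw}. With $\omega_n^2 = m(1-m/n) = \Theta(m)$ and $\beta_{3n} = \Theta(1)$ (for a ``good'' $x$), the bound only holds for $t \leq O\bigl((\omega_n/\beta_{3n})^{1/3}\bigr) = O(m^{1/6})$. You propose $t = \Theta(n^{1/4}/\polylog)$ together with $m = \Theta(\sqrt{n}\cdot\polylog)$, which gives $m^{1/6} = \Theta(n^{1/12}\polylog)$; so $t$ exceeds the allowed range by a polynomial factor $n^{1/6}$ and \Cref{thm:hrw} simply does not apply. Your stated $M$ is also internally inconsistent with your $t$: with $t=\Theta(n^{1/4})$ one would get $M \approx 1/(1-\Phi(t)) = \exp(\Theta(\sqrt{n}))$, not $\exp(\tilde\Theta(n^{1/4}))$. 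The correct balancing is to run $t$ up against the HRW boundary, i.e.\ take $t = \Theta_\eps(m^{1/6})$, and then observe that the relevant ``score gap'' between light and heavy points after subsampling $m$ of $n$ coordinates is $t-t' = \Theta(\eps\sqrt{m}/\sqrt{n})$; for the $M$-fold intersection to separate light and heavy by a constant one needs $t(t-t') = \Theta_\eps(1)$, i.e.\ $t \cdot \eps\sqrt{m}/\sqrt{n} = \Theta_\eps(1)$. Substituting $t = \Theta_\eps(m^{1/6})$ forces $m^{2/3} = \Theta_\eps(\sqrt{n})$, i.e.\ $m = \Theta_\eps(n^{3/4})$ (\Cref{eq:choiceofm}), and then $M = \Theta(\eps/(1-\Phi(t))) = \exp(\Theta_\eps(m^{1/3})) = \exp(\Theta_\eps(n^{1/4}))$ (\Cref{eq:choiceofM}). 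Your $m = \Theta(\sqrt{n}\cdot\polylog)$ cannot be made to work: with that $m$ the HRW-admissible $t$ is too small to give any nontrivial amplification gap, and if you insist on $t = n^{1/4}$ anyway the tail estimates you want to use are no longer valid. (Your final ``truncation argument conditioning on $\max_i |x_i|$'' gestures at the right idea, but the paper's ``good'' condition also needs multiplicative control of the empirical second and third moments of $\{|x_i|^p\}$ --- items 1 and 3 of \Cref{def:ell-p-good} --- so that the constants $\omega_n$ and $\beta_{3n}$ in \Cref{thm:hrw} are uniformly controlled; controlling only $\max_i |x_i|$ is not sufficient.)
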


In the rest of this section we prove \Cref{thm:approx-Bp-by-intersection-of-juntas}. Fix a value $1 \leq p < 2.$
We start by introducing the following distribution over convex sets in $\R^n$:

\begin{definition} \label{def:junta-distribution}
	Let $M, \newN \geq 0$ be integers and let $\theta \geq 0$.  We define $\Dp(M,\newN,\theta)$ to be the distribution over convex sets in $\R^n$ where a draw $\bL\sim\Dp(M,\newN,\theta)$ is obtained as follows:
	\begin{enumerate}
		\item Draw $M$ $\newN$-tuples from $[n]$ without replacement, i.e. draw 
		\[
		\pbra{\i{j}_1, \ldots, \i{j}_\newN} \in {[n] \choose \newN} \quad\text{for}~1\leq j\leq M.\]
		\item For $j\in[M]$, define the convex set 
		\[\bC_j := \cbra{x\in\R^n : \sum_{k=1}^\newN |x_{\i{j}_{k}}|^p \leq \theta}.\]
		\item Output the convex set
		\[\bL := \bigcap_{j=1}^M \bC_j.\]
	\end{enumerate}	
\end{definition}

As in the proof of~\Cref{thm:de-nazarov}, the existence of a set $L$ in the support of the distribution ${\Dp(M,\newN,\theta)}$ that $O(\eps)$-approximates $B_p$ follows from
\begin{equation}~\label{eq:ell-p-goal-0}\Ex_{\bL\sim\Dp(M,\newN,\theta)}\sbra{\Prx_{\bx\sim N(0,I_n)}\sbra{\bL(\bx) \neq B_p(\bx)}} \leq O(\eps),\end{equation}
which, by commuting the order of integration, is equivalent to showing that 
\begin{equation} \label{eq:ell-p-goal-1}
	\Ex_{\bx\sim N(0,I_n)}\sbra{\Prx_{\bL\sim\Dp(M,\newN,\theta)}\sbra{\bL(\bx) \neq B_p(\bx)}} \leq O(\eps).
\end{equation}

As in the earlier argument, \Cref{eq:ell-p-goal-1} allows us to analyze the error incurred by our approximator on an ``$x$-by-$x$ basis.'' For convenience, we define 
\begin{equation} \label{eq:mean-var-ell-p}
	\mu_p := \Ex_{\bx\sim N(0,I_n)}\sbra{\|\bx\|_p^p} = n\calA_p \qquad\text{and}\qquad \sigma_p^2 := \Varx_{\bx\sim N(0,I_n)}\sbra{\|\bx\|_p^p} = n\pbra{\calA_{2p} - \calA_{p}^2}.
\end{equation}
By the Berry-Esseen central limit theorem, we have that for $\bx\sim N(0,I_n)$, 
\[d_{\mathrm{CDF}}\pbra{\|\bx\|_p^p, N\pbra{\mu_p, \sigma_p^2}} = O\pbra{\frac{1}{\sqrt{n}}}\]
where $d_{\mathrm{CDF}}$ denotes Kolmogorov or CDF distance. This observation in turn motivates the following parameter definitions:
\begin{equation}~\label{eq:def-din}
\din := \mu_p - \eps\sigma_p
\qquad\text{and}\qquad 
\dout := \mu_p + \eps\sigma_p.
\end{equation}
By the Berry--Esseen theorem, for $\eps$ as in the statement of~\Cref{thm:ell-p-approx}, we have that 
\begin{equation} \label{eq:boundary-pts}
	\Prx_{\bx\sim N(0,I_n)}\sbra{\din \leq \|\bx\|_p^p \leq \dout} = O(\eps).
\end{equation}
Intuitively, we will ``give up'' on points $x$ with $\din\leq \|x\|_p^p \leq \dout$, and in the rest of the argument our focus will be on achieving high accuracy on most other points. Towards this end, the following definition will be useful:

\begin{definition} \label{def:ell-p-good}
	Let $x\in \R^n$ be a point with $\|x\|^p_p \notin (\din,\dout)$. If $\|x\|_p^p\leq \din$, let $z = sx$ be the point with $\|z\|_p^p = \din$ (so $s\geq 1$ in this case), and if $\|x\|_p\geq\dout$, let $z = sx$ be the point with $\|z\|_p^p = \dout$ (so $s \leq 1$ in this case). We say that $x$ is \emph{good} if $z$ satisfies the following:
	\begin{enumerate}
		\item $\|z\|_{2p}^{2p} - \frac{\|z\|_p^{2p}}{n} \in \sbra{n\pbra{\calA_{2p} - \calA_p^2}\pbra{1 - \frac{O(1)}{n^{5/12}}}, n\pbra{\calA_{2p} - \calA_p^2}\pbra{1 + \frac{O(1)}{n^{5/12}}}}$;
		
		\item $\max_{i\in[n]} |z_i|^p = \pbra{\Theta(\log n)}^{p/2}$; and 
		
		\item $n^{-1}\sumi |z_i|^{3p} = \Theta(1)$. 
	\end{enumerate}
	If $x$ is not good (this includes the possibility that $\|x\|_p^p \in (\din,\dout)$), we say that it is \emph{bad}.
\end{definition}

We claim that with probability at least $1-O(\eps)$, a draw from $N(0, I_n)$ is good: 

\begin{claim} \label{claim:good-prob}
	For $\bx\sim N(0,I_n)$, 
	$\Prx\sbra{\bx~\text{is bad}} = O(\eps).$
\end{claim}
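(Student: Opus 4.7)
The plan is to bound $\Pr[\bx \text{ is bad}]$ by a union bound over the four ways $\bx$ can be bad. The first way --- $\|\bx\|_p^p \in (\din, \dout)$ --- is already known to have probability $O(\eps)$ by the Berry--Esseen bound stated in~\Cref{eq:boundary-pts}. For the remaining three (corresponding to the three conditions in~\Cref{def:ell-p-good}), the task is to show that the rescaled point $z = s \bx$ fails each condition with probability $o(\eps)$.

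The key enabling observation is that the scaling factor $s$ is very close to $1$ whenever $\|\bx\|_p^p$ is close to $\mu_p$. Since $\din$ and $\dout$ differ from $\mu_p$ by only $\eps \sigma_p = \Theta(\eps\sqrt{n})$ while $\mu_p = \Theta(n)$, further conditioning on $\|\bx\|_p^p$ lying within $(\log n) \cdot \sigma_p$ of $\mu_p$ --- which by Chebyshev holds with probability $1 - O(1/\log^2 n) = 1 - o(\eps)$ under the hypothesis $\eps \geq 1/\sqrt{\log n}$ --- forces $|s^{kp} - 1| = O((\log n)/\sqrt{n}) = o(n^{-5/12})$ for each $k \in \{1,2,3\}$. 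On this good event, I would verify each of the three conditions via Chebyshev's inequality applied to an appropriate sum of i.i.d.\ bounded-moment random variables: (1) $\|\bx\|_{2p}^{2p} - \|\bx\|_p^{2p}/n$ equals $\sum_i (Y_i - \bar Y)^2$ with $Y_i := |\bx_i|^p$, and a direct variance computation shows it has mean $(n-1)(\calA_{2p}-\calA_p^2)$ and variance $O(n)$, so it deviates from its mean by more than $n^{7/12}$ with probability only $O(n^{-1/6}) = o(\eps)$; (2) $\max_i |\bx_i|$ concentrates at $\Theta(\sqrt{\log n})$ with probability $1 - n^{-\Omega(1)}$ by standard Gaussian tail bounds; and (3) $n^{-1}\sum_i |\bx_i|^{3p}$ concentrates at $\calA_{3p} = \Theta(1)$ with failure probability $O(1/n)$ by Chebyshev. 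In all three cases, multiplying by the appropriate power of $s$ preserves the bounds within their allowed slacks.

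The main technical point is the careful choice of the conditioning window for $\|\bx\|_p^p$: the window must be wide enough that it holds with probability $1 - o(\eps)$, yet narrow enough that $s$ is close enough to $1$ to accommodate the tight $O(n^{-5/12})$ slack appearing in condition~(1). A window of width $\Theta(\log n) \cdot \sigma_p$ comfortably meets both requirements given $\eps \geq 1/\sqrt{\log n}$, and summing the contributions from the four events completes the proof.
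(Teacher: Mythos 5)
Your proposal is correct and follows essentially the same route as the paper: bound each failure mode separately by Chebyshev or a Gaussian tail bound, then use the near-unity of the rescaling factor $s$ (forced by $\|\bx\|_p^p$ concentrating near $\mu_p$) to transfer the conclusions from $\bx$ to $z = s\bx$ with only $o(n^{-5/12})$ multiplicative loss. The only noteworthy variation is in Item 1, where you apply Chebyshev once, directly to $\sum_i (Y_i - \bar Y)^2 = \|\bx\|_{2p}^{2p} - \|\bx\|_p^{2p}/n$ (obtaining an $O(n^{-1/6})$ failure probability), whereas the paper applies Chebyshev separately to $\|\bx\|_p^p$ and $\|\bx\|_{2p}^{2p}$ and recombines to get $O(\eps)$ — your version is marginally tidier and gives a sharper bound, but the content and conclusion are the same.
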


\begin{proof}
%
%
%
%
{
		Let $\bx \sim N(0,I_n)$ and let $\bs'>0$ be the rescaling factor such that $\bz := \bs'\bx$ has $\|\bz\|^p_p={\mu_p}$.
The Berry-Esseen theorem, together with standard tail bounds on the $N(\mu_p,\sigma_p^2)$ Gaussian distribution (recall~\Cref{eq:mean-var-ell-p}),
 implies that with probability $1-O(\eps)$,
\[\bs' = \pbra{1 \pm {\frac {O(1)}{n^{5/12}}}}.\]
Since both $\din$ and $\dout$ are within a multiplicative $(1 \pm O({\frac 1 {\sqrt{n}}}))$-factor of $\mu_p$, it 
suffices to show that each of Items 1, 2, and 3 from~\Cref{def:ell-p-good} hold with $\bx$ rather than $\bz=\bs\bx$ in the relevant expression on the left-hand side, with probability $1-O(\eps)$; we do this below.
}

	For Item~1, note that by Chebyshev's inequality, with probability at least $1-\eps$, 
	\[\|\bx\|_{p}^{p} \in \sbra{\mu_p - \frac{\sigma_p}{\sqrt{\eps}}, \mu_p + \frac{\sigma_p}{\sqrt{\eps}}}.\]
	In particular, it follows that with probability at least $1-2\eps$, we have 
	\begin{align*}
		\|{\bx}\|_{2p}^{2p} - \frac{\|{\bx}\|_p^{2p}}{n}
		&\in \sbra{\mu_{2p} - \frac{\mu_p^2}{n} - \frac{\sigma_p^2}{n\epsilon} \pm \pbra{\frac{n\sigma_{2p} + 2\mu_p\sigma_p}{n\sqrt{\eps}}}}.\\
		\intertext{Recalling the expressions for $\mu_p$ and $\sigma^2_{2p}$ in terms of $\calA_p$ and $\calA_{2p}$ (cf.~\Cref{eq:mean-var-ell-p}), we have}
		\|{\bx}\|_{2p}^{2p} - \frac{\|{\bx}\|_p^{2p}}{n}
		&\in \sbra{n\pbra{\calA_{2p} - \calA_p^2} - \frac{(\calA_{2p} - \calA_p^2)}{\epsilon} \pm \frac{\Theta_p(\sigma_{2p})}{\sqrt{\eps}}} \\ 
		&= \sbra{n\pbra{\calA_{2p} - \calA_p^2}\pbra{1 \pm \frac{\Theta_p(1)}{\sqrt{n}{\eps}}}} \\ 
		&\sse \sbra{n\pbra{\calA_{2p} - \calA_p^2}\pbra{1 \pm \frac{\Theta_p(1)}{n^{5/12}}}}
	\end{align*}
	where the final containment is due to our lower bound on $\eps$ in the statement of~\Cref{thm:ell-p-approx}.
	
	For Item~2, by a standard Gaussian tail bound (\Cref{prop:gaussian-tails}) we have that 
	\[\Pr\sbra{|\bx_i|^p > \pbra{2\sqrt{(\ln n)}}^p} = \Pr\sbra{|\bx_i| > 2\sqrt{\ln n}} < \frac{1}{n^2}\] 
	for each $i \in [n],$ so a union bound over all $i \in [n]$ gives that 
	\[\Pr\sbra{\max_{i \in [n]} |\bx_i|^p \leq \pbra{2\sqrt{(\ln n)}}^p} > 1 - {\frac 1 n},\] which is at least $1 - \eps$.
	
		Turning to Item~3, note that
	\[\Ex_{\bx\sim N(0, I_n)}\sbra{\sumi|\bx_i|^{3p}} = n\calA_{3p} \qquad\text{and}\qquad \Varx_{\bx\sim N(0, I_n)}\sbra{\sumi |\bx_i|^3} \leq n\calA_{6p}.\]
	Consequently, by Chebyshev's inequality we have that with probability at least $1-\eps$ over the draw of $\bx\sim N(0,I_n)$, the following holds:
	\[
	\calA_{3p} - \frac{{\sqrt{\calA_{6p}}}}{\sqrt{\eps n}} \leq \frac{1}{n}\sumi |\bx_i|^3 \leq \calA_{3p} + \frac{{\sqrt{\calA_{6p}}}}{\sqrt{\eps n}}.
	\]
	Since $p\in [1,2)$, we have that both ${\cal M}_{3p}$ and ${\cal M}_{6p}$ are $\Theta(1)$ (cf.~\Cref{eq:moment-formula}). Recalling the lower bound on $\eps$ then immediately implies Item 3.
\end{proof}

 We will show that for an appropriate choice of parameters $M, \newN,$ and $\theta$, we have that whenever $\|x\|_1 \notin (\din,\dout)$ and $x$ is good,
\begin{equation} \label{eq:ell-p-goal-2}
	\Prx_{\bL\sim\Dp(M, \newN, \theta)}\sbra{\bL(x)\neq B_1(x)} \leq O(\eps). 
\end{equation}
\Cref{eq:ell-p-goal-2,eq:boundary-pts}, together with~\Cref{claim:good-prob}, immediately imply~\Cref{eq:ell-p-goal-1}, which completes the proof of~\Cref{thm:ell-p-approx}. 


\paragraph{Case 1: $\|x\|_p^p\leq\din$ {and $x$ is good}.}

In this case, note that $B_p(x) = 1$, and so it suffices to show that for any {fixed $j \in [M]$}, we have 
\begin{equation} \label{eq:ell-p-yes-goal}
	\Prx_{\Dp(M,\newN,\theta)}\sbra{x \in \bC_j} \geq 1-\frac{O(1)\cdot\eps}{M},
\end{equation}
since this immediately implies that 
\[\Prx_{\Dp(M,\newN,\theta)}\sbra{x \in \bigcap_{j=1}^M \bC_j} \geq \pbra{1-\frac{O(1)\cdot\eps}{M}}^M \geq 1-O(\eps),\]
as desired. 

We will require the following monotonicity claim, which allows us to consider points with maximal $\ell_p$ norm. (This claim is why the definition of a ``good'' point $x$, \Cref{def:ell-p-good} above, used the rescaled point $z = sx$.)

\begin{claim} \label{claim:monotonicity}
	Let $x\in\R^n$ have $\|x\|_p^p \leq \din$. Let $z = sx$ for $s\geq 1$ be such that $\|z\|_p^p = \din$. Then 
	\[\Prx_{\Dp(M, \newN, \theta)}\sbra{\sum_{k=1}^n |z_{\i{j}_k}|^p \geq \theta} \geq \Prx_{\Dp(M, \newN, \theta)}\sbra{\sum_{k=1}^n |x_{\i{j}_k}|^p \geq \theta}.\]	
\end{claim}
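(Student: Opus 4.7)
The plan is to prove this by a coupling argument that exploits the pointwise monotonicity of the summands. Since $z = sx$ with $s \geq 1$ and $p \geq 1$, we have $|z_i|^p = s^p |x_i|^p \geq |x_i|^p$ for every coordinate $i \in [n]$. Hence for \emph{any} fixed $\newN$-tuple of indices $(i_1, \ldots, i_\newN) \in [n]^\newN$, it holds deterministically that
\[
\sum_{k=1}^{\newN} |z_{i_k}|^p \;\geq\; \sum_{k=1}^{\newN} |x_{i_k}|^p.
\]

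Now couple the two random processes: draw a single random $\newN$-tuple $(\i{j}_1, \ldots, \i{j}_\newN)$ from $[n]$ without replacement (as specified in \Cref{def:junta-distribution}), and use this same tuple to evaluate both sums. Under this coupling, the event
\[
\cbra{\sum_{k=1}^{\newN} |x_{\i{j}_k}|^p \geq \theta}
\]
is contained in the event
\[
\cbra{\sum_{k=1}^{\newN} |z_{\i{j}_k}|^p \geq \theta},
\]
so the probability of the latter is at least the probability of the former, which is exactly the claim.

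There is really no obstacle here; the statement is a straightforward pointwise monotonicity observation and the proof is essentially one line once one observes $s^p \geq 1$. The only thing to notice is that the claim as written has a typographical slip (the upper index of summation should be $\newN$ rather than $n$, matching the definition of $\bC_j$ in \Cref{def:junta-distribution}), but this does not affect the argument. The role of this claim in the broader proof of \Cref{thm:approx-Bp-by-intersection-of-juntas} is to reduce the Case~1 analysis (points $x$ with $\|x\|_p^p \leq \din$) to the ``worst-case'' rescaled point $z$ sitting exactly on the boundary surface $\|z\|_p^p = \din$, so that subsequent \Cramer-type tail bounds (\Cref{thm:hrw}) need only be invoked for this single boundary value rather than for a range of norms.
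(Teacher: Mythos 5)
Your proof is correct and matches the paper's argument: the paper also observes that the map $|x_{\i{j}_k}|^p \mapsto s^p|x_{\i{j}_k}|^p = |z_{\i{j}_k}|^p$ gives a monotone coupling, which is exactly the coupling you describe. Your version just spells it out in slightly more detail (and you're right that the summation's upper index should read $\newN$ rather than $n$).
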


\begin{proof}
	The claim is immediate from the fact that the map 
	\[|x_{\i{j}_k}|^p \mapsto s^p|x_{\i{j}_k}|^p = |z_{\i{j}_k}|^p\]
	gives a monotone coupling between the two random variables, and hence $|z_{\i{j}_k}|^p$ stochastically dominates $|x_{\i{j}_k}|^p$ (cf.~Theorem~4.2.3 of~\cite{roch-ch4}).
\end{proof}

As in~\Cref{claim:monotonicity}, let $z = sx$ for $s\geq 1$ be such that $\|z\|_p^p = \din$. Let $z^{p}$ denote the vector
\[z^{(p)} := \pbra{|z_1|^p,\ldots,|z_n|^p}.\]
In particular, note that $\|z^{{(p)}}\|_1 = \|z\|_p^p$. We observe that since $x$ is good, the point $z$ satisfies Items~1 through 3 of~\Cref{def:ell-p-good}. 

We will make use of the \Cramer{}-type bound given by~\Cref{thm:hrw} to prove~\Cref{eq:ell-p-yes-goal}. In order to apply~\Cref{thm:hrw}, let
\[v := z^{(p)} - \frac{\|z\|_p^p(1,\ldots, 1)}{n} \qquad\text{and so}\qquad \sumi v_i = 0.\]
Note that 
\begin{align*}
	\sumi v_i^2 &= \|z\|_{2p}^{2p} - \sumi \pbra{\frac{ 2|z_i|^p\|z\|_p^p}{n} - \frac{\|z\|_p^{2p}}{n^2}} \\
	&= \|z\|_{2p}^{2p} - \frac{\|z\|_p^p}{n} \pbra{ \sumi 2|z_i|^p - \frac{\|z\|_p^p}{n}} \\ 
	&= \|z\|_{2p}^{2p} - \frac{\|z\|_p^{2p}}{n} \\ 
	&\in \sbra{n\pbra{\calA_{2p} - \calA_p^2}\pbra{1 \pm \frac{\Theta_p(1)}{n^{5/12}}}}
\end{align*}
by Item~1 of~\Cref{def:ell-p-good}. In particular, let $c_x$ be such that 
\[
\sumi v_i^2 = c_x n, 
\quad\text{i.e.}\quad 
c_x = \pbra{\calA_{2p} - \calA_p^2}\pbra{1\pm\frac{\Theta_p(1)}{n^{5/12}}}.
\]
For notational convenience, we define 
\begin{equation}
\label{eq:ccx}
c := \pbra{\calA_{2p} - \calA_p^2}
\qquad\text{and so}\qquad
c_x = c\pbra{1\pm\frac{\Theta_p(1)}{n^{5/12}}}.
\end{equation}

We will apply~\Cref{thm:hrw} to the finite population $\{a_i\}_{i\in[n]}$, where
\[a_i := \frac{v_{i}}{\sqrt{c_x}}.\]
It is immediate that $\sumi a_i = 0$ and that\ $\sumi a_i^2 =n$, as required by~\Cref{thm:hrw}. Furthermore, we have
\begin{equation}~\label{eq:def-omega}\omega_n^2 = \newN\pbra{1-\frac{\newN}{n}}
.\end{equation}
Since $x$ is good, by Item~2 of~\Cref{def:ell-p-good} we have that $\max_k |a_k| = \pbra{\Theta(\log n)}^{p/2}$, and by Item~3 of ~\Cref{def:ell-p-good} we have that $\beta_{3n} = \Theta(1)$. It follows from~\Cref{thm:hrw} that for any fixed $j \in [M],$ with $\bC_j$ as in~\Cref{def:junta-distribution}, we have 
\begin{equation}
\Prx_{\Dp(M,\newN,\theta)}\sbra{a_{\i{j}_1} + \ldots + a_{\i{j}_\newN}\geq {t\omega_n}} = \pbra{1-\Phi(t)}\pbra{1 + \frac{O(1)(1+t)^3}{\sqrt{\newN}}}
\label{eq:byhrw}
\end{equation}
for $t \leq O\pbra{\min\cbra{\sqrt{\newN/(\log n)^p}, \newN^{1/6}}} = O(\newN^{1/6})$,
where we used $\omega_n^2=\Theta(m)$ which holds {given our ultimate choice of $m$, see \Cref{eq:choiceofm}, using the lower bound 
{\[\paramset\]
from the theorem statement}.}

Let us relate the probability in the L.H.S. of \Cref{eq:byhrw} to our goal probability, which is the L.H.S. of \Cref{eq:ell-p-yes-goal}. 
Recalling that 
\[a_i = c_x^{-1/2}\pbra{z^{(p)}_i - \frac{\|z\|_p^p}{n}},\] 
the event $a_{\i{j}_1} + \ldots + a_{\i{j}_\newN} \geq {t \omega_n}$ is the same as the event 
\[\frac{1}{\sqrt{c_x}}\sum_{k=1}^\newN \pbra{z^{(p)}_{\i{j}_k} - \frac{\|z\|_p^p}{n}} \geq {t \omega_n}, \qquad\text{or equivalently}\qquad  \sum_{k=1}^\newN z_{\i{j}_k}^{(p)} \geq \|{z}\|_p^p\frac{\newN}{n} + \sqrt{c_x}{t\omega_n}.\]
This motivates our choice of $\theta$, which corresponds to the RHS of the second inequality above (but with $c$ in place of $c_x$, since the definition of $\theta$ cannot depend on the specific point $x$), where we fix 
\begin{equation} \label{eq:choiceoft}
t:=c_1 m^{1/6}, \quad \quad \text{where~}c_1 := c_1(\eps) = {c_2 \cdot {\frac {\eps^2} {\ln(1/\eps)}}}
\end{equation} for an absolute constant $c_2>0$.  In more detail, recalling that $\|z\|_p^p=\din$, we take
\begin{equation}
\label{eq:choice-of-theta}
\theta:= \din\frac{\newN}{n} + {\sqrt{c} c_1 m^{1/6} \omega_n},
\end{equation}
and we take $M$ to be
\begin{equation}
\label{eq:def-of-M}
M := {\frac \eps {1-\Phi(t)}}.
\end{equation}
Note that by a standard Gaussian tail bound and our choice of $t$, we have 
\begin{equation}
\label{eq:choiceofM}
M = \eps \cdot \exp(\Theta(c_1^2m^{1/3})) 
= \eps \cdot \exp\pbra{\Theta\pbra{{\frac {\eps^4}{\ln^2(1/\eps)}}m^{1/3}}}.
\end{equation}

With these choices, we have
\begin{align}
	\Prx_{\Dp(M,\newN,\theta)}\sbra{x \notin \bC_j} &\leq \Prx_{\Dp(M,\newN,\theta)}\sbra{z\notin \bC_j}
	\tag{\Cref{claim:monotonicity}} \nonumber\\
	 &= \Prx_{\Dp(M,\newN,\theta)}\sbra{\sum_{k=1}^\newN z^{(p)}_{\i{j}_k} \geq \theta}
	 \tag{Definition of $\bC_j$} \nonumber\\
	 &= \Prx_{\Dp(M,\newN,\theta)}\sbra{\sum_{k=1}^\newN z^{(p)}_{\i{j}_k} \geq \|z\|_p^p{\frac m n} + 
	 \sqrt{c_x} c_1 m^{1/6} \omega_n + \pbra{\sqrt{c} - \sqrt{c_x}}c_1 m^{1/6} \omega_n} \tag{\Cref{eq:choice-of-theta}, using that here $\|z\|_1=\din$}\nonumber\\
	 &= \Prx_{\Dp(M,\newN,\theta)}\sbra{{\frac 1 {\sqrt{c_x}}}\sum_{k=1}^\newN \pbra{z^{(p)}_{\i{j}_k} - {\frac {\|z\|_p^p} n}} \geq t \omega_n + \pbra{{\frac {\sqrt{c}}{\sqrt{c_x}}} - 1}t \omega_n} \tag{Definition of $t$}\\
	&=\Prx_{\Dp(M,\newN,\theta)}\sbra{\sum_{k=1}^\newN a_{\i{j}_k} \geq 
	t'  \omega_n}, \label{eq:happy}
\end{align}
where $t' = {\pbra{1  \pm {\frac {O(1)}{n^{5/12}}}}} \cdot t$ by \Cref{eq:ccx}.

To bound \Cref{eq:happy}, we observe that applying \Cref{eq:byhrw} with $t'$ in place of $t$ throughout, and recalling our choice of $t$, we have that
\[
(\ref{eq:happy})=\Theta(1) \cdot \pbra{1-\Phi(t')}.
\]
Since $t \leq m^{1/6}$ and we will ensure that $m < n$, we have $t < n^{1/6}$, so $O(1)/n^{5/12} = o(1/t)$.  Hence since $t' = {\pbra{1  \pm {\frac {O(1)}{n^{5/12}}}}} \cdot t$, we have that  $1-\Phi(t') = \Theta(1 - \Phi(t))$.  So 
\[(\ref{eq:happy}) = 
\Theta(1) \cdot (1 - \Phi(t)) = \Theta\pbra{{\frac \eps M}},\] 
recalling the definition of $M$ from \Cref{eq:def-of-M}; in other words, we have that 
\[\Prx_{\Dp(M,\newN,\theta)}\sbra{x\notin \bC_j} \leq {\frac {O(\eps)}{M}}\]
as desired (cf.~\Cref{eq:ell-p-yes-goal}).


\paragraph{Case 2: $\|x\|_p^p\geq\dout$ {and $x$ is good}.}

In this case, we will show that for  
\ignore{a good $x$ and}any fixed $j\in[M]$, we have 
\begin{equation} \label{eq:ell-p-no-goal}
	\Prx_{\Dp(M,\newN,\theta)}\sbra{x \in \bC_j } \leq 1 - \frac{\ln(1/\eps)}{M}.
\end{equation}
Indeed, if~\Cref{eq:ell-p-no-goal} holds, then we have that 
\[\Prx_{\Dp(M,\newN,\theta)}\sbra{x \in \bigcap_{j=1}^M \bC_j} \leq \pbra{1 - \frac{\ln(1/\eps)}{M}}^M \leq \eps\]
as desired, completing the proof. 

We will use the following analogue of~\Cref{claim:monotonicity}:

\begin{claim} \label{claim:ell-p-dout-monotone}
	Let $x \in \R^n$ have $\|x\|_p^p\geq \dout$. Let $z = sx$ for $s \leq 1$ be such that $\|z\|_p^p = \dout$. Then
	\[\Prx_{\Dp(M,\newN,\theta)}\sbra{\sum_{k=1}^n |x_{\i{j}_k}|^p \geq \theta } \geq \Prx_{\Dp(M,\newN,\theta)}\sbra{\sum_{k=1}^n |z_{\i{j}_k}|^p \geq \theta }.\]
\end{claim}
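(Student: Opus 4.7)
The plan is to prove this essentially identically to \Cref{claim:monotonicity}, by exhibiting a trivial monotone coupling between the two sums. Since $z = sx$ with $s \leq 1$, for every coordinate $i\in[n]$ we have
\[
|z_i|^p = s^p |x_i|^p \leq |x_i|^p,
\]
so the map $|z_{\i{j}_k}|^p \mapsto |x_{\i{j}_k}|^p$ is pointwise nondecreasing on each coordinate.

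First I would fix an arbitrary realization of the random tuple $(\i{j}_1, \ldots, \i{j}_\newN)$ drawn without replacement from $[n]$; under this single realization, summing the above coordinatewise inequality over $k = 1, \ldots, \newN$ gives
\[
\sum_{k=1}^\newN |z_{\i{j}_k}|^p \;\leq\; \sum_{k=1}^\newN |x_{\i{j}_k}|^p.
\]
Consequently, the event $\{\sum_{k=1}^\newN |z_{\i{j}_k}|^p \geq \theta\}$ is contained in the event $\{\sum_{k=1}^\newN |x_{\i{j}_k}|^p \geq \theta\}$ as subsets of the underlying sample space of index tuples. Taking probabilities under $\Dp(M,\newN,\theta)$ yields the desired inequality.

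There is no real obstacle here; the argument is a one-line stochastic domination via coupling, entirely analogous to the $s \geq 1$ case of \Cref{claim:monotonicity}, and the only thing that has flipped is the direction of the inequality between $|z_i|^p$ and $|x_i|^p$ (which in turn flips the direction of the final inequality between probabilities), matching the fact that we are now controlling the ``bad'' event $x \in \bC_j$ from above rather than bounding the ``good'' event from below.
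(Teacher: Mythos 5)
Your proof is correct and matches the paper's (implicit) argument: the paper proves \Cref{claim:monotonicity} by a monotone coupling and simply states \Cref{claim:ell-p-dout-monotone} as its obvious analogue with the inequality direction flipped, which is exactly what you do. Your version is if anything slightly cleaner, since you observe directly that for every fixed realization of the index tuple the sums are pointwise ordered, so the two events are nested and the probability inequality is immediate — no stochastic-domination language is needed (and indeed the same simplification applies to \Cref{claim:monotonicity} itself).
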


{As in \Cref{claim:ell-p-dout-monotone}, let $z = sx$ for $s \leq 1$ be such that $\|z\|_p^p = \dout$.} As before, let 
\[z^{(p)} = (|z_1|^p, \ldots, |z_n|^p).\]
{We observe that since $x$ is good, $z$ satisfies items (1)-(3) of \Cref{def:ell-p-good}.
{Similar to Case~1, we let
\[v := {z}^{(p)} - \frac{\|{z}\|_1(1, \ldots, 1)}{n}.
\]
Consequently, we have 
\[
\sumi v_i = 0, \qquad 
\text{and}
\qquad 
\sumi \pbra{v_i}^2 = \|z\|_{2p}^{2p} - \frac{\|z\|_p^{2p}}{n} \in \sbra{n\pbra{\calA_{2p} - \calA_p^2}\pbra{1 \pm \frac{\Theta_p(1)}{n^{5/12}}}}.
\]
As before, let $c_x$ be such that 
\[\sumi v_i^2 = c_xn, 
\]
and so $c_x = c\cdot \pbra{1  \pm {\frac {\Theta_p(1)}{n^{5/12}}}}$ where $c = \calA_{2p} - \calA_p^2$ as in Case 1. 
As before, we will apply~\Cref{thm:hrw} to the finite population $\{a_i\}_{i\in[n]}$, where
\[a_i := \frac{v_{{i}}}{\sqrt{c_x}}.\]
As before we have $\sumi a_i = 0$ and $\sumi a_i^2 =n$, as required by~\Cref{thm:hrw}, and as before we have $\omega_n^2 = \newN\pbra{1-{\frac \newN n}}$.
}
This lets us write 
\begin{align}
	\Prx_{\Dp(M,\newN,\theta)}\sbra{x \notin \bC_j}
	&\geq \Prx_{\Dp(M,\newN,\theta)}\sbra{z\notin \bC_j}
	\tag{\Cref{claim:ell-p-dout-monotone}} \nonumber\\
	 &= \Prx_{\Dp(M,\newN,\theta)}\sbra{\sum_{k=1}^\newN z^{(p)}_{\i{j}_k} \geq \theta}
	 \tag{Definition of $\bC_j$} \nonumber\\
	 &= \Prx_{\Dp(M,\newN,\theta)}\sbra{\sum_{k=1}^\newN z^{(p)}_{\i{j}_k} \geq \|z\|_p^p{\frac m n} + 
	 \sqrt{c_x} c_1 m^{1/6} \omega_n + \pbra{\sqrt{c} - \sqrt{c_x}}c_1 m^{1/6} \omega_n - {\frac {2 m \eps}{\sqrt{n}}} } \tag{\Cref{eq:choice-of-theta} and using that now $\|z\|_1=\dout$, where $\dout = \din + {\frac {2m\eps}{\sqrt{n}}}$}\nonumber\\
	 &= \Prx_{\Dp(M,\newN,\theta)}\sbra{{\frac 1 {\sqrt{c_x}}}\sum_{k=1}^\newN \pbra{z^{(p)}_{\i{j}_k} - {\frac {\|z\|_p^p} n}} \geq t \omega_n + \pbra{{\frac {\sqrt{c}}{\sqrt{c_x}}} - 1}t \omega_n - {\frac {2m \eps}{\sqrt{c_x}{\sqrt{n}}}}} \tag{Definition of $t$}\\
	&=\Prx_{\Dp(M,\newN,\theta)}\sbra{\sum_{k=1}^\newN a_{\i{j}_k} \geq 
	t'  \omega_n}, \label{eq:happyout}
\end{align}
where
\begin{equation} 
t' = t\pbra{1  \pm {\frac {O(1)}{n^{5/12}} - {\frac {2\newN\eps}{\sqrt{c_x}{\sqrt{n}}t\omega_n}}}},
\label{eq:newtprime}
\end{equation}
using $\pbra{{\frac {\sqrt{c}}{\sqrt{c_x}}} - 1} = \pm {\frac {O(1)}{n^{5/12}}}.$
Let us rewrite this as 
\begin{equation} \label{eq:myDelta}
t' = t\pbra{1 - \Delta},
\quad
\text{where}
\quad
\Delta := {\frac {2\newN\eps}{\sqrt{c_x}{\sqrt{n}}t\omega_n}} \pm {\frac {O(1)}{n^{5/12}}}
= {\frac {\Theta(\eps \sqrt{\newN})}{\sqrt{n} t }} \pm {\frac {O(1)}{n^{5/12}}},
\end{equation}
where the last equality is because $\omega_n = \Theta(\sqrt{\newN})$.
We observe that because we will ultimately take 
\[\newN = \Theta\pbra{{\frac{(\log(1/\eps))^{9/4}}{{\eps^4}}} \cdot n^{3/4}}\]
and we have $t=c_1 \newN^{1/6}$ and {$\paramset$}, we have 
${\frac {\Theta(\eps \sqrt{\newN})}{\sqrt{n} t }} = \omega(1)/n^{5/12}$, so $\Delta$ is simply equal to 
${\frac {\Theta(\eps \sqrt{\newN})}{\sqrt{n} t }}$.
Now we apply \Cref{thm:hrw} (equivalently, \Cref{eq:byhrw} with $t'$ in place of $t$ throughout).
We get that 
\[
(\ref{eq:happyout}) = \Theta(1) \cdot (1-\Phi(t')).
\]
By \Cref{eq:myDelta}, 
\begin{align}
(\ref{eq:happyout}) = \Theta(1) \cdot
\pbra{1-\Phi(t')} &= \Theta(1)\cdot(1-\Phi(t)) \cdot \exp(t^2\Delta - t^2\Delta^2/2) \nonumber\\
&=\Theta(1) \cdot {\frac \eps M} \exp\pbra{\Theta\pbra{{\frac {\eps \sqrt{\newN}t}{\sqrt{n}}}} - \Theta\pbra{{\frac {\eps^2 \newN}{n}}}},
 \label{eq:happy2}
\end{align}
where the last equality uses $(1-\Phi(t))={\frac \eps M}$ and $\Delta = {\frac {\Theta(\eps \sqrt{\newN})}{\sqrt{n} t }}$.
Now we choose
\begin{equation} \label{eq:choiceofm}
\newN := c_3 \cdot {\frac{(\log(1/\eps))^{9/4}}{{\eps^4}}} \cdot n^{3/4} \quad \text{for a suitable positive constant $c_3$}
\end{equation}
(note that by the bound on $\eps$ in the statement of the theorem we have $m \leq n/2$), 
and {using $t=c_1 \newN^{1/6}$} we get that 
$(\ref{eq:happy2}) \geq {\frac {\ln(1/\eps)}{M}}$, as required to establish \Cref{eq:ell-p-no-goal}.

In particular, it follows from~\Cref{eq:ell-p-goal-2,eq:ell-p-yes-goal,eq:ell-p-no-goal} that there exists a set $L$ in the support of the distribution ${\Dp(M,\newN,\theta)}$ that $O(\eps)$-approximates $B_p$.
This concludes the proof of \Cref{thm:approx-Bp-by-intersection-of-juntas}.

\begin{remark} \label{remark:ell-1-done}
{Note that the above argument already establishes the special case of~\Cref{thm:ell-p-approx} when $p = 1$ (i.e. the existence of a {polytopal} $O(\eps)$-approximator to the $\ell_1$-ball $B_1$) with the weaker bound of 
	\[M\cdot 2^\newN = {\exp\pbra{\Theta\pbra{\frac{(\log(1/\epsilon))^{9/4}}{{\epsilon^4}} n^{3/4}}}}~\text{halfspaces}.\] 
	 This is simply because when $p=1$, each $\newN$-junta $\bC_j$ is itself a $2^{\newN}$-facet polytope.

	 The stronger bound claimed in \Cref{eq:ell-one-bound} can be achieved by a slight adjustment of the parameters used in this section.  For the case $1<p<2$ we need the arguments in \Cref{subsubsec:hello-petrov}, and those arguments demand that we set $m$ and $t$ as in \Cref{eq:choiceofm} and \Cref{eq:choiceoft}; but for the $p=1$ case, where we only require \Cref{thm:approx-Bp-by-intersection-of-juntas}, it can be verified that it suffices to take {$m = \Theta\pbra{\pbra{\frac{\log(1/\epsilon)}{\epsilon}}^{3/2} n^{3/4}}$ and $t$ to be $\Theta(m^{1/6})$.}
	 }	 
\end{remark}


\subsubsection{Bounds on $\Vol(\bC_j)$}
\label{subsubsec:hello-petrov}

We observe that for every outcome of the random set $\bC_j$ from \Cref{def:junta-distribution}, the volume is the same.  So let us fix $C_j$ to be any outcome of $\bC_j$. 
The goal of this section is to prove the following rather tight estimate on $\Vol(C_j)$: 

\begin{theorem} [Upper and lower bounds on $\Vol(C_j)$] \label{thm:Estimate-cj}
For $C_j$ as defined above, 
\[1-\frac{\ln M}{M} \le\Vol(C_j) \le 1- \frac{1}{3M}.\]
\end{theorem}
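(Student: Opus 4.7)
The plan is to express $\Vol(C_j)$ as the upper tail probability for an i.i.d.\ sum of $m$ random variables and then apply Petrov's \Cramer-type bound (\Cref{thm:petrov}) to control that tail multiplicatively. First I would observe that by the symmetry of $N(0,I_n)$ under coordinate permutations, $\Vol(C_j)$ takes the same value for every realization of $\bC_j$ and equals
\[
\Vol(C_j) = \Prx_{\bg \sim N(0,I_m)}\sbra{\sum_{k=1}^m |\bg_k|^p \leq \theta}.
\]
Setting $\boldsymbol{Y}_k := (|\bg_k|^p - \calA_p)/\sqrt{c}$ with $c := \calA_{2p} - \calA_p^2$ produces i.i.d.\ random variables of mean $0$ and variance $1$, and since $|\bg|^p \leq 1 + \bg^2$ for $p\in[1,2]$ the exponential moment $\Ex[e^{H \boldsymbol{Y}_1}]$ is finite in a neighborhood of the origin, so \Cramer's condition is met.

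Next I would rewrite the event $\sum_k |\bg_k|^p > \theta$ as $\sum_k \boldsymbol{Y}_k > \sqrt{m}\,t^*$, where
\[
t^* := \frac{\theta - m\calA_p}{\sqrt{mc}} = c_1\, m^{1/6}\sqrt{1 - m/n}\;-\;\eps\sqrt{m/n},
\]
obtained by plugging in the definition of $\theta$ from \Cref{eq:choice-of-theta} and of $\din$ from \Cref{eq:def-din}. Because $t^* = O(c_1 m^{1/6})$, we have $t^*/\sqrt{m} = O(m^{-1/3}) = o(1)$, so \Cref{thm:petrov} applies and gives
\[
\Prx\sbra{\tfrac{1}{\sqrt{m}}\sum_k \boldsymbol{Y}_k > t^*} = \pbra{1 - \Phi(t^*)}\cdot\exp\!\pbra{\tfrac{(t^*)^3}{\sqrt{m}}\lambda\!\pbra{\tfrac{t^*}{\sqrt{m}}}}\cdot\pbra{1 + O\!\pbra{\tfrac{t^*+1}{\sqrt{m}}}}.
\]
Both correction factors are $\Theta(1)$: the first because $(t^*)^3/\sqrt{m} \leq c_1^3 = O(\eps^6/\log^3(1/\eps)) = O(1)$ and $\lambda$ is analytic at $0$, and the second because $(t^*+1)/\sqrt{m} = o(1)$. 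Consequently, $1 - \Vol(C_j) = \Theta(1 - \Phi(t^*))$.

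Finally I would compare $1 - \Phi(t^*)$ to $1 - \Phi(t) = \eps/M$, where $t := c_1 m^{1/6}$ is the threshold used to define $M$ in \Cref{eq:def-of-M}. By the standard Gaussian tail asymptotic $1-\Phi(z)\asymp \phi_1(z)/z$, one has
\[
\frac{1-\Phi(t^*)}{1-\Phi(t)} = \Theta\pbra{\exp\pbra{\tfrac{t^2 - t^{*2}}{2}}},
\]
and direct expansion gives $t^2 - t^{*2} = c_1^2 m^{4/3}/n + 2c_1\eps\, m^{2/3}/\sqrt{n} - \eps^2 m/n$. Substituting $c_1 = \Theta(\eps^2/\log(1/\eps))$ and $m = \Theta\pbra{\log^{9/4}(1/\eps)\cdot n^{3/4}/\eps^{4}}$ shows this is $\Theta(\log(1/\eps)/\eps^{4/3})$. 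Putting everything together we arrive at
\[
1-\Vol(C_j) = \Theta\pbra{\tfrac{\eps}{M}\cdot\exp\pbra{\Theta\pbra{\log(1/\eps)/\eps^{4/3}}}},
\]
and the final task is to verify that this quantity sits inside the window $[1/(3M),\,\ln M/M]$ across the entire range $1/\sqrt{\log n} \leq \eps \leq \tau$, using the explicit form $\log M = \Theta\pbra{\eps^{8/3}\cdot n^{1/4}/\log^{5/4}(1/\eps)}$ from \Cref{eq:choiceofM}. The main obstacle is precisely this last step: the \Cramer{} correction $\exp\pbra{(t^2-t^{*2})/2}$ and the scale $M$ depend on $\eps$ in opposite directions, so one must check uniformly over $\eps$ that $\eps\cdot\exp(\Theta(\log(1/\eps)/\eps^{4/3}))$ sits between $1/3$ (for the lower bound on $1-\Vol(C_j)$) and $\log M$ (for the upper bound), which reduces to a routine but careful numerical comparison.
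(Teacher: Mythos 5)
For the bound $1-\Vol(C_j)\leq(\ln M)/M$ (equivalently the lower bound on $\Vol(C_j)$), your argument is essentially the paper's: write $\Vol(C_j)$ as a lower-tail probability for an i.i.d.\ sum, apply \Cref{thm:petrov} to get $1-\Vol(C_j)=\Theta(1-\Phi(t^*))$ with $t^*$ equal to the paper's $z$, and compare $1-\Phi(t^*)$ to $1-\Phi(t)=\eps/M$ via Gaussian tail estimates. The paper packages the tail comparison as a separate lemma (\Cref{fact:gaussian-decay}, via a hazard-rate/Rolle's theorem bound) while you invoke $1-\Phi(u)\asymp\phi_1(u)/u$ directly, but these are interchangeable, and your computation $t^2-t^{*2}=\Theta(\log(1/\eps)/\eps^{4/3})$ matches the paper's $t(t-z)$ estimate.

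Where you diverge is the other direction, $\Vol(C_j)\leq 1-\tfrac{1}{3M}$, which you propose to extract from the same Petrov/tail estimate. That is more work than needed, and the last step is not actually ``routine'': the two $\Theta$'s in your $1-\Vol(C_j)=\Theta\bigl(\tfrac{\eps}{M}\exp(\Theta(\log(1/\eps)/\eps^{4/3}))\bigr)$ hide constants coming from the Petrov correction factors and from $c_1,c_2,c_3$, and verifying $\eps\cdot\exp\bigl(c'\log(1/\eps)/\eps^{4/3}\bigr)\geq 1/3$ uniformly over $1/\sqrt{\log n}\leq\eps\leq\tau$ for the \emph{actual} hidden constant $c'$ requires either tracking those constants explicitly or shrinking $\tau$ in a way that depends on them — neither of which you do. The paper instead proves this direction by a two-line union bound (\Cref{thm:Estimate-cj-ub}): every outcome of $\bC_j$ has the same Gaussian volume, so $\Vol(C_j)>1-\tfrac{1}{3M}$ would force $\Vol\bigl(\bigcap_j C_j\bigr)\geq 2/3$, contradicting $\Ex_{\bL}\sbra{\dG(\bL,B_p)}=O(\eps)$ from \eqref{eq:ell-p-goal-0} together with $\Vol(B_p)=1/2\pm o_n(1)$. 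That argument is constant-sharp, independent of the Cram\'er asymptotics, and reuses a bound already established; you should use it for this half rather than trying to derive the sharp constant from Petrov.
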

One direction of this bound is easy to prove. In particular, we have the following simple claim. 
\begin{claim}~\label{thm:Estimate-cj-ub}
$\Vol(C_j) \le  1 - 1/(3M)$.
\end{claim}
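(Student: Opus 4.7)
The plan is to deduce the bound $\Vol(C_j) \le 1 - 1/(3M)$ from the Case~2 bound already established in the proof of \Cref{thm:approx-Bp-by-intersection-of-juntas}, combined with a short exchangeability-plus-Fubini argument. The key observation is that by exchangeability of the Gaussian coordinates, $\Vol(\bC_j)$ does not depend on the realization of the random indices, so
\[
1-\Vol(C_j) \;=\; \Ex_{\bC_j}\sbra{1-\Vol(\bC_j)} \;=\; \Ex_{\bx \sim N(0,I_n)}\sbra{\Prx_{\bC_j}\sbra{\bx \notin \bC_j}}
\]
after swapping the order of the expectation via Fubini.

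Restricting the outer expectation to $\bx$ satisfying $\|\bx\|_p^p \ge \dout$ and ``good'' (in the sense of \Cref{def:ell-p-good}),
\[
1 - \Vol(C_j) \;\ge\; \Prx_{\bx}\sbra{\|\bx\|_p^p \ge \dout \text{ and } \bx \text{ good}} \cdot \inf_{x\,:\,\|x\|_p^p \ge \dout,\, x \text{ good}} \Prx_{\bC_j}\sbra{x \notin \bC_j}.
\]
By the Case~2 bound \Cref{eq:ell-p-no-goal} established in the preceding subsection, the infimum on the right is at least $\ln(1/\eps)/M$.

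For the outer probability, the Berry--Esseen theorem (\Cref{thm:berry-esseen}) gives $\Prx\sbra{\|\bx\|_p^p \ge \dout} \ge 1/2 - O(\eps) - O(1/\sqrt{n})$ (using $\dout = \mu_p + \eps\sigma_p$ from \Cref{eq:def-din}), while \Cref{claim:good-prob} gives $\Prx\sbra{\bx \text{ bad}} = O(\eps)$. A union bound then yields $\Prx_\bx\sbra{\|\bx\|_p^p \ge \dout \text{ and } \bx \text{ good}} \ge 1/3$, provided $\eps$ is at most a suitably small absolute constant $\tau$. Combining the two bounds,
\[
1 - \Vol(C_j) \;\ge\; \tfrac{1}{3}\cdot \frac{\ln(1/\eps)}{M} \;\ge\; \frac{1}{3M},
\]
where the final inequality uses $\ln(1/\eps) \ge 1$, which is ensured by taking $\tau \le 1/e$. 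This establishes $\Vol(C_j) \le 1 - 1/(3M)$.

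The heavy lifting is already done by the Case~2 bound, which relied on the Hu--Robbins--Wu \Cramer-type theorem for sampling without replacement (\Cref{thm:hrw}); the step above is just a short bookkeeping argument on top, so there is no substantial further obstacle to overcome.
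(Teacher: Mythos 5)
Your proof is correct, but it takes a genuinely different route from the paper. The paper's proof is a two-line contradiction argument using only the union bound and the already-established approximation guarantee (\ref{eq:ell-p-goal-0}): if every $C_j$ had volume $>1-1/(3M)$, then by the union bound every realization $L=\cap_j C_j$ would have $\Vol(L)>2/3$; since $\Vol(B_p)=1/2\pm o_n(1)$, this would force $\dG(L,B_p)>1/6-o_n(1)$, contradicting the fact that the expected error is $O(\eps)$. Your argument instead re-derives the bound directly by exchangeability, Fubini, and integrating the pointwise Case~2 estimate (\ref{eq:ell-p-no-goal}) over the good outer half of Gaussian space, together with Berry--Esseen and \Cref{claim:good-prob} to control the measure of that region. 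Both are sound, and your route actually yields the slightly sharper lower bound $1-\Vol(C_j)\ge \ln(1/\eps)/(3M)$, but it re-invokes more machinery (Berry--Esseen, the good-point probability, and the full Case~2 \Cramer-type computation) where the paper's argument is a near-immediate consequence of bookkeeping already done in (\ref{eq:ell-p-goal-0}).
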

\begin{proof}
Towards contradiction, if $\Vol(C_j) > 1-1/(3M)$, then for $\Vol(L) = \Vol(\cap_{j=1}^M C_j) \ge 2/3$ (by a union bound). This contradicts \eqref{eq:ell-p-goal-0}, thus proving the claim. 
\end{proof}

The remainder of this section is devoted to proving the following claim (which will complete the proof of \Cref{thm:Estimate-cj}). 
\begin{claim}~\label{claim:Estimate-cj-lb}
$\Vol(C_j) \ge 1 - (\ln M)/M$.
\end{claim}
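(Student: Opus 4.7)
The key observation is that $\Vol(C_j)$ does not depend on the particular indices $i_1^{(j)},\dots,i_\newN^{(j)}$ drawn in~\Cref{def:junta-distribution}: since $\bx \sim N(0,I_n)$ has i.i.d.~coordinates, the conditional law of $\bx_{i_1^{(j)}},\dots,\bx_{i_\newN^{(j)}}$ is that of $\newN$ i.i.d.~standard Gaussians $\bg_1,\dots,\bg_\newN$. Writing $\bY_k := |\bg_k|^p$, we therefore need to upper bound
\[
1 - \Vol(C_j) \;=\; \Prx\sbra{\sum_{k=1}^\newN \bY_k > \theta}.
\]
The variables $\bY_k$ are i.i.d.~with $\E[\bY_k]=\calA_p$ and $\Var[\bY_k]=c=\calA_{2p}-\calA_p^2$, and since $p<2$ they satisfy \Cramer's condition $\E[e^{H\bY_k}]<\infty$ for some $H>0$ (indeed $|\bg|^p$ is dominated by $|\bg|^2$ in the tail, which itself has finite mgf for $H<1/2$). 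This places us squarely in the setting of Petrov's \Cramer-type theorem (\Cref{thm:petrov}).

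The plan is to apply \Cref{thm:petrov} to $\bS_\newN := \sum_k(\bY_k - \calA_p)$, evaluated at the standardized threshold
\[
t^\ast \;:=\; \frac{\theta - \newN\calA_p}{\sqrt{\newN c}}.
\]
Substituting $\theta = \din\cdot\newN/n + \sqrt{c}\,c_1 m^{1/6}\omega_n$ from~\Cref{eq:choice-of-theta}, using $\din = n\calA_p - \eps\sqrt{nc}$ from~\Cref{eq:def-din}, and $\omega_n = \sqrt{\newN(1-\newN/n)}$ from~\Cref{eq:def-omega}, a direct computation gives
\[
t^\ast \;=\; c_1 m^{1/6}\sqrt{1-\newN/n} \;-\; \eps\sqrt{\newN/n}.
\]
The first step will be to show that $t^\ast = t\cdot(1 - o(1))$ for $t := c_1 m^{1/6}$: in the parameter regime of~\Cref{eq:choiceofm} (with $\newN \ll n$ and $\eps$ small enough), both $\newN/n$ and $\eps\sqrt{\newN/n}/t$ are small.

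The second step will be to verify that the multiplicative correction factor supplied by Petrov, namely $\exp\pbra{(t^\ast)^3/\sqrt{\newN}\cdot \lambda(t^\ast/\sqrt{\newN})}\cdot\pbra{1 + O((t^\ast+1)/\sqrt{\newN})}$, is bounded by an absolute constant. Since $t^\ast/\sqrt{\newN} = \Theta(c_1 \newN^{-1/3}) = o(1)$, the power series $\lambda(\cdot)$ is within its disk of convergence and evaluates to $O(1)$; and $(t^\ast)^3/\sqrt{\newN} = O(c_1^3) = O(1)$ by choice of $t$ and $c_1$. Hence \Cref{thm:petrov} yields
\[
1 - \Vol(C_j) \;=\; O(1)\cdot\pbra{1 - \Phi(t^\ast)} \;=\; O(1)\cdot\pbra{1 - \Phi(t)} \;=\; O(\eps/M),
\]
where the middle equality uses standard Mill's-ratio estimates together with $t^\ast = t(1-o(1))$, and the final equality uses the definition~\eqref{eq:def-of-M} of $M$. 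Since $\eps$ can be taken smaller than any fixed constant and $\ln M \geq 1$, this gives $1 - \Vol(C_j) \leq (\ln M)/M$ as desired.

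The main obstacle will be tracking parameters carefully enough to confirm that (i)~$t^\ast \geq t\cdot(1-o(1))$ with the $o(1)$ small enough that $\exp\pbra{(t^2-(t^\ast)^2)/2} = O(1)$, and (ii)~the cubic-order error exponent in Petrov's formula is genuinely bounded, as opposed to merely subpolynomial. Both reduce to routine but delicate inequalities once the definitions of $m$, $t$, and $c_1$ from~\Cref{eq:choiceofm,eq:choiceoft} are substituted.
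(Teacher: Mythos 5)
Your proposal follows the same route as the paper (apply Petrov's theorem to $\bS_m$, identify the standardized threshold $z:=t^\ast$, compare $1-\Phi(t^\ast)$ with $1-\Phi(t)$), but the step you flag as ``the main obstacle'' --- showing $\exp\bigl((t^2-(t^\ast)^2)/2\bigr)=O(1)$ --- is in fact false in the parameter regime of the theorem, and the final bound $1-\Vol(C_j)=O(\eps/M)$ does not hold.

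Concretely, write $A:=t\sqrt{m/n}$. One computes $t(t-t^\ast)\approx \eps A+A^2/2$, and plugging in $t=c_1 m^{1/6}$ with $c_1=\Theta(\eps^2/\ln(1/\eps))$ and $m=\Theta\bigl((\log(1/\eps))^{9/4}\eps^{-4}n^{3/4}\bigr)$ gives
\[
A^2=\frac{t^2 m}{n}=\Theta\!\left(\frac{\eps^4 m^{4/3}}{\ln^2(1/\eps)\,n}\right)=\Theta\!\left(\frac{\ln(1/\eps)}{\eps^{4/3}}\right),
\]
which tends to infinity as $\eps\to 0$ (and the theorem allows $\eps$ as small as $1/\sqrt{\log n}$). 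Hence $t^2-(t^\ast)^2\approx 2t(t-t^\ast)=\Theta(\ln(1/\eps)\,\eps^{-4/3})$, so the correction factor is $2^{\Theta(\eps^{-4/3}\ln(1/\eps))}$ rather than $O(1)$. In other words, $t^\ast=t(1-o(1))$ holds, but the $o(1)$ is not nearly small enough to make the Mill's-ratio comparison a constant: the two tails $1-\Phi(t^\ast)$ and $1-\Phi(t)$ genuinely differ by a super-constant multiplicative factor. The paper handles this by proving a quantitative hazard-rate comparison (its \Cref{fact:gaussian-decay}), showing $(1-\Phi(t^\ast))/(1-\Phi(t))=2^{\Theta(\eps^{-4/3}\ln(1/\eps))}$, and then invoking the lower bound $\eps\geq(\log n)^{-1/2}$ together with the formula $\ln M=\Theta\bigl(\eps^4 m^{1/3}/\ln^2(1/\eps)\bigr)$ to show that this potentially large factor is still at most $\ln M$. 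Your plan omits this reconciliation; without it, the claimed inequality $1-\Vol(C_j)\leq(\ln M)/M$ does not follow from what you've argued.

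Everything before that point --- the reduction to i.i.d.\ Gaussians, verifying \Cramer's condition, identifying $t^\ast$, and checking that Petrov's correction factor $\exp((t^\ast)^3\lambda(t^\ast/\sqrt m)/\sqrt m)(1+O((t^\ast+1)/\sqrt m))$ is $\Theta(1)$ because $t^\ast=O(m^{1/6})$ --- matches the paper and is correct.
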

The main workhorse in proving this claim will be \Cref{thm:petrov}. To apply this theorem, we begin with a simple and easy to verify fact. 
\begin{fact}~\label{fact:finitemoment}
Let $p \le 2$. Then, for $\bg \sim N(0,1)$ and $\by := |\bg|^p$, there exists $H>0$ such that 
$\mathbf{E}[e^{H\by}]<\infty$. In fact, for $p<2$, this is true for any $H>0$. 
\end{fact}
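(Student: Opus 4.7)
The plan is to prove the fact by directly estimating the integral
\[
\E[e^{H|\bg|^p}] \;=\; \frac{1}{\sqrt{2\pi}}\int_{-\infty}^{\infty} e^{H|x|^p - x^2/2}\, dx,
\]
handling the two regimes $p=2$ and $p<2$ separately. Both cases reduce to comparing the polynomial growth of $H|x|^p$ to the quadratic decay $-x^2/2$ coming from the Gaussian density, and no machinery beyond elementary calculus is needed.

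For the case $p=2$, the integral simplifies to $\frac{1}{\sqrt{2\pi}}\int e^{-(1/2-H)x^2}\, dx$, which is (a rescaling of) a standard Gaussian integral. It converges precisely when $H < 1/2$, so choosing any $H \in (0,1/2)$ (e.g., $H=1/4$) establishes the existence claim in this case.

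For the case $p<2$, the key observation is that $p-2<0$, so $H|x|^{p-2} \to 0$ as $|x|\to\infty$ for \emph{every} fixed $H>0$. Consequently there exists a threshold $R = R(H,p)$ such that $H|x|^p \le x^2/4$ whenever $|x|\ge R$. I would then split the integral at $\pm R$. On the compact region $|x|<R$, the integrand is uniformly bounded by $e^{HR^p}$, giving a contribution of at most $2R\, e^{HR^p}$. On the tail region $|x|\ge R$, the integrand is bounded by $e^{-x^2/4}$, giving a contribution of at most $\int_{-\infty}^{\infty} e^{-x^2/4}\,dx = 2\sqrt{\pi}$. Both pieces are finite, so $\E[e^{H|\bg|^p}]<\infty$ for every $H>0$, which is the stronger statement claimed for $p<2$.

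There is no real obstacle here; the proof is essentially a textbook Gaussian tail computation. The only ``content'' is noticing the strict inequality $p<2$ in the second part, which is exactly what ensures that the quadratic penalty in the Gaussian density eventually dominates $H|x|^p$ regardless of how large $H$ is chosen.
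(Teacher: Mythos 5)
Your proof is correct, and it is exactly the verification the paper has in mind: the paper states this as ``a simple and easy to verify fact'' and supplies no proof at all, so there is nothing to diverge from. Your two-case computation (Gaussian integral with $H<1/2$ for $p=2$; splitting at a threshold $R$ where $H|x|^p \le x^2/4$ for $p<2$, so that any $H>0$ works) is the standard argument and is complete as written.
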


Note that the parameters $\calA_p$ and $\calA_{2p}$ (following \eqref{eq:ell-p-goal-1}) are set so that 
\[
\Ex_{\bx\sim N(0,1)}\sbra{|\bx|^p} = \calA_p \qquad\text{and}\qquad \Varx_{\bx\sim N(0,1)}\sbra{|\bx|^p} = \calA_{2p} - \calA_{p}^2. 
\]
By definition,  for any $j$, 
\begin{equation}~\label{eq:volume-Cj}
\Vol(C_j) = \Prx_{\bg \sim N(0,I_m)} \sbra{\sum_{j=1}^m |\bg_j|^p \leq  \theta }. 
\end{equation}

To analyze \eqref{eq:volume-Cj}, our principal tool will be  \Cref{thm:petrov}. In particular, for each $1 \le j \le m$, define
$$
\bX_j = |\bg_j|^p  - \calA_{p}.
$$
From \Cref{fact:finitemoment}, it follows that each random variable $\bX_j$ satisfies \Cramer's condition. Further, note that $\Ex[\bX_j]=0$ and $\sigma^2:=\frac{\sigma_p^2}{n}=\Var(\bX_j) = \calA_{2p} - \calA_{p}^2 =c$. We now note that 
\begin{equation} \nonumber
\Vol(C_j) =\Pr\big[\sum_{j=1}^m |\bg_j|^p \le \theta\big]  =\Pr\big[\sum_{j=1}^m \bX_j \le \theta -m \cdot \calA_p\big] 
\end{equation}
Defining $\mathbf{S}_m: =\sum_{j=1}^m \bX_j$, we get that 
\[
\Vol(C_j) = \Pr[\mathbf{S}_m \le z \cdot \sigma \cdot \sqrt{m}], 
\]
where $z$ is defined as 
\begin{eqnarray}
z = \frac{\theta - m \cdot \calA_p}{\sigma \cdot \sqrt{m}}
= \frac{\din\frac{\newN}{n} + {\sqrt{c} c_1 m^{1/6} \omega_n} - m \cdot \calA_p}{\sigma \cdot \sqrt{m}} = \frac{(n \calA_p - \epsilon \cdot \sigma_p)\frac{\newN}{n} + {\sqrt{c} c_1 m^{1/6} \omega_n} - m \cdot \calA_p}{\sigma \cdot \sqrt{m}}.
 \nonumber 
\end{eqnarray}
Here the first expression uses 
\eqref{eq:choice-of-theta} and 
second expression uses the definition of $\din$ \eqref{eq:def-din}. 
The last expression in the above equation simplifies to 
\[
\frac{ {\sqrt{c} t \omega_n}- \epsilon \cdot \sigma_p\frac{\newN}{n}  }{\sigma \cdot \sqrt{m}} =  \frac{ {\sqrt{c} t \omega_n}- \frac{\epsilon \cdot \sigma \cdot \newN}{\sqrt{n}} }{\sigma \cdot \sqrt{m}}
\] 
where we plug in the expression for $t$ \eqref{eq:choiceoft}. As $c =\sigma^2$, plugging in the value of $\omega_n$ from \eqref{eq:def-omega}, we get that 
\[
z= t \cdot \sqrt{1-\frac{m}{n}} - \epsilon \cdot \sqrt{\frac{m}{n}}. 
\]
As $t =\Theta(\epsilon \cdot m^{1/6}/\sqrt{\log(1/\epsilon)})$ \eqref{eq:choiceoft} and $\epsilon \le 1$, it means that $z =O(m^{1/6})$. Thus, we can apply \Cref{thm:petrov} to get that 
\begin{equation}~\label{eq:applpetrov}
\frac{1-\Vol(C_j)}{1-\Phi(z)} = \frac{1-\Pr[\mathbf{S}_m \le z \cdot \sigma \cdot \sqrt{m}]}{1-\Phi(z)} = \exp \bigg( \frac{z^3}{\sqrt{m}} \cdot \lambda \bigg(\frac{z}{\sqrt{m}} \bigg)\bigg) \cdot \bigg(1 + O \bigg(\frac{z+1}{\sqrt{m}} \bigg) \bigg).
\end{equation}
As $z=
O( m^{1/6})$, it follows that 
the right hand side of \eqref{eq:applpetrov} is $\Theta(1)$. Note that we are using here that $\lambda(\cdot)$ is a power series with a positive radius of convergence. Thus, we have that 
\begin{equation}
\frac{1-\Vol(C_j)}{1-\Phi(z)} = \Theta(1).\label{eq:applpetrov2} 
\end{equation}
Next, we record the following simple fact.

\bigskip

\ignore{
\blue{
Let me try to line up what is going on in the file with what I think we want.

\medskip

The thing we have been calling $\delta$ is $1-\Vol(\bC_j).$ We already know by the union bound that $\delta \geq 0.49/M$ and what we want to show is that $\delta < 1000/M$. Anindya has established for us that 
\[\delta = 1-\Vol(\bC_j) = \Theta(1-\Phi(z)),\]
and so we have 
\[\delta = \Theta(1-\Phi(t))\frac{1-\Phi(z)}{1-\Phi(t)} = \frac{\eps}{M} \cdot\Theta\pbra{\frac{1-\Phi(z)}{1-\Phi(t)}}.
\]
So we would be done if we could show that 
\[\frac{1-\Phi(z)}{1-\Phi(t)} \leq {\frac{1000}{\eps}}.\]
So don't we want a variant of~\Cref{fact:gaussian-decay} where you lower bound 
\[\frac{1-\Phi(b)}{1-\Phi(a)}\]
}
\bigskip

}

\begin{fact}~\label{fact:gaussian-decay}
Suppose $a$, $b$ and $\eta$ are parameters such that $a>b> \ln(1/\eta) >2$ 
and  $a-b \le \frac{\ln (1/\eta)}{a}$. Then, 
\[
\frac{1-\Phi(b)}{1-\Phi(a)} = O(1/\eta). 
\]
\end{fact}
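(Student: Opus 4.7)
The plan is to use the standard two-sided Gaussian tail bound (Proposition~\ref{prop:gaussian-tails}) to replace $1-\Phi(\cdot)$ by its closed-form approximation and then do a direct calculation. Concretely, for $r>2$ we have $1-\Phi(r) = \Theta(\varphi_1(r)/r)$ (both the upper and lower bounds from Proposition~\ref{prop:gaussian-tails} are within a constant factor of $\varphi_1(r)/r$ in this regime). So
\[
\frac{1-\Phi(b)}{1-\Phi(a)} \;=\; \Theta(1) \cdot \frac{a}{b} \cdot \exp\!\pbra{\tfrac{1}{2}(a^2-b^2)}.
\]

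The two factors on the right are both easy to control from the hypotheses $b>\ln(1/\eta)>2$ and $a-b \le \ln(1/\eta)/a$. First I would bound the exponential: since $a^2-b^2 = (a-b)(a+b)$ and $a+b \le 2a$, the hypothesis $(a-b) \le \ln(1/\eta)/a$ gives
\[
\tfrac{1}{2}(a^2-b^2) \;\le\; \tfrac{1}{2} \cdot \frac{\ln(1/\eta)}{a} \cdot 2a \;=\; \ln(1/\eta),
\]
so $\exp\!\pbra{\tfrac{1}{2}(a^2-b^2)} \le 1/\eta$. Next I would bound $a/b$: using $a-b \le \ln(1/\eta)/a \le 1$ (the last inequality since $a>b>\ln(1/\eta)$ gives $\ln(1/\eta)/a < 1$) together with $b>2$, we get $a/b \le 1 + (a-b)/b \le 1 + 1/2 = 3/2$. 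Combining these two bounds yields $(1-\Phi(b))/(1-\Phi(a)) \le O(1/\eta)$.

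I don't foresee any real obstacle here; this is a routine Gaussian tail estimate. The only slightly delicate point is verifying that the Mills-ratio approximation $1-\Phi(r) = \Theta(\varphi_1(r)/r)$ really is two-sided in the regime $r>2$, which is immediate from Proposition~\ref{prop:gaussian-tails} since for $r \ge 2$ one has $1/r - 1/r^3 \ge (1/2)(1/r)$ and $1/r - 1/r^3 + 3/r^5 \le 2/r$. The cleanest write-up will simply quote Proposition~\ref{prop:gaussian-tails}, state the resulting ratio estimate, perform the two elementary bounds above on $\exp((a^2-b^2)/2)$ and $a/b$, and conclude.
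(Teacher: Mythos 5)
Your proposal is correct, and it takes a genuinely different (though morally equivalent) route from the paper's proof. You invoke the two-sided Mills-ratio estimate $1-\Phi(r)=\Theta(\varphi_1(r)/r)$ for $r\geq 2$, rewrite the ratio as $\Theta(1)\cdot(a/b)\cdot\exp(\tfrac12(a^2-b^2))$, and bound each factor separately: the exponential by $1/\eta$ via $(a-b)(a+b)\leq 2\ln(1/\eta)$, and the prefactor $a/b$ by $3/2$. Both steps check out. The paper instead applies the mean value theorem to $g(x)=\ln(1-\Phi(x))$, obtaining $\ln\bigl(\tfrac{1-\Phi(b)}{1-\Phi(a)}\bigr)=\tfrac{\phi(c)}{1-\Phi(c)}(a-b)$ for some $c\in[b,a]$, and then controls the hazard rate $\phi(c)/(1-\Phi(c))\leq c^3/(c^2-1)\leq b+2/b$ via a Feller lemma before plugging in $a-b\leq\ln(1/\eta)/a$. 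The two arguments are really bounding the same quantity (the derivative of $-\ln(1-\Phi)$ over $[b,a]$), but your version is a bit more self-contained here: it reuses \Cref{prop:gaussian-tails}, which the paper already states, and avoids both the mean value theorem and the external hazard-rate lemma. The paper's version has the small advantage that it never needs to bound $a/b$ separately, since the MVT handles the multiplicative structure at once — but this is a minor saving given that the $a/b$ bound is trivial under the stated hypotheses. Either write-up is fine; yours is arguably the more elementary one.
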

\begin{proof}
Since  $1-\Phi(x)$ is a decreasing function, it suffices to prove this for $b = a- \frac{\ln (1/\eta)}{a}$. 
Consider the function $g(x) = \ln(1-\Phi(x))$. Then, $g'(x) =\frac{-1}{1-\Phi(x)} \cdot \phi(x)$, where $\phi(x)$ is the density of $N(0,1)$ at $x$.  Hence, by Rolle's theorem, it follows that for some $c \in [b,a]$,
\[
\frac{\phi(c)}{1-\Phi(c)} \cdot (a-b) = \ln \bigg( \frac{1-\Phi(b)}{1-\Phi(a)} \bigg). 
\]
The factor $\phi(c)/(1-\Phi(c))$ is known as the hazard rate of the normal distribution and Lemma~2, Chapter VII in \cite{Feller} shows that 
\[
\frac{\phi(c)}{1-\Phi(c)} \le \frac{c^3}{c^2-1}. 
\] 
Plugging this back, we see that 
\[
\ln \bigg( \frac{1-\Phi(b)}{1-\Phi(a)} \bigg) \le (a-b) \cdot \frac{b^3}{b^2-1} \le (a-b) \cdot \bigg( b + \frac{2}{b} \bigg). 
\]
Using the setting of our parameters, the right hand side is at most $\ln (1/\eta) +2$. Exponentiating both sides, we get the claimed bound. 
\end{proof}

\ignore{
\begin{fact}~\label{fact:gaussian-decay}
Suppose $a$, $b$ and $\epsilon$ are parameters such that $a>b> \ln(1/\epsilon) >4$ 
and  $a-b \le \frac{\ln (1/\epsilon)}{a}$. \blue{(we think the inequality in the assumption will be
 $a-b \le \frac{\ln (1/\epsilon)}{a}$)
} Then, \blue{(we think the conclusion will be LHS $\leq$ RHS)}
\[
\frac{1-\Phi(b)}{1-\Phi(a)} = \Omega(1/\epsilon). 
\]
\end{fact}
\begin{proof}
Since  $1-\Phi(x)$ is a decreasing function, it suffices to prove this for $b = a- \frac{\ln (1/\epsilon)}{a}$. 
Consider the function $g(x) = \ln(1-\Phi(x))$. Then, $g'(x) =\frac{-1}{1-\Phi(x)} \cdot \phi(x)$, where $\phi(x)$ is the density of $N(0,1)$ at $x$.  Hence, by Rolle's theorem, it follows that for some $c \in [b,a]$,
\[
\frac{\phi(c)}{1-\Phi(c)} \cdot (a-b) = \ln \bigg( \frac{1-\Phi(b)}{1-\Phi(a)} \bigg). 
\]
Furthermore, for $c \ge 4$, \blue{(we'll use Feller reference in bottom of  \href{https://math.stackexchange.com/questions/1326879/limit-of-normal-hazard-rate}{https://math.stackexchange.com/questions/1326879/limit-of-normal-hazard-rate} to flip the direction)}
\[
\frac{\phi(c)}{1-\Phi(c)} \ge c. 
\]
This implies that 
\[
\ln \bigg( \frac{1-\Phi(b)}{1-\Phi(a)} \bigg) \ge b (a-b)  = \bigg(a-\frac{\ln(1/\epsilon)}{a} \bigg) \cdot \frac{\ln(1/\epsilon)}{a} \ge \ln(1/\epsilon)-1. 
\]
Exponentiating both sides, this finishes the proof. 

\end{proof}
}

We now observe that 
\begin{eqnarray}
t(t-z) = t \bigg( \epsilon \cdot \sqrt{\frac{m}{n}} + t \bigg(1-\sqrt{1-\frac{m}{n}} \bigg)\bigg)  \le t \bigg( \epsilon \cdot \sqrt{\frac{m}{n}} + \frac{tm}{2n}\bigg) = \epsilon \cdot A + \frac{A^2}{2}, \nonumber
\end{eqnarray} 
where $A:= t\sqrt{m/n}$. By our setting of our parameters (once $\epsilon$ is less than a sufficiently small constant), $A>1$ and thus, the right hand side of the above equation is upper bounded by $A^2$. Next, plugging in the values of $m$ and $t$ in terms of $n$ and $\epsilon$, 
$$
A^2 = \frac{t^2 m}{n} = \Theta \pbra{\frac{\epsilon^4 m^{1/3} \cdot m}{\ln^2(1/\epsilon) \cdot n}} =\Theta \pbra{ \frac{ \epsilon^4 \ln^3(1/\epsilon)}{\ln^2(1/\epsilon) \cdot \epsilon^{16/3}}} = \Theta\pbra{\frac{1}{\epsilon^{4/3}} \cdot \ln(1/\epsilon)}. 
$$
This implies that 
\[
t(t-z) = \Theta\pbra{\frac{1}{\epsilon^{4/3}} \cdot \ln(1/\epsilon)}, 
\]
and thus applying \Cref{fact:gaussian-decay}, we have that 
\begin{equation}
\frac{1-\Phi(z)}{1-\Phi(t)} = 2^{\Theta(\epsilon^{-4/3} \cdot \ln(1/\epsilon))}. \nonumber
\end{equation}
Combining the above with \eqref{eq:applpetrov2}, it follows that 
\[
1-\Vol(C_j) = 2^{\epsilon^{-4/3} \cdot \ln(1/\epsilon)} \cdot (1-\Phi(t)) = \frac{2^{\Theta(\epsilon^{-4/3} \cdot \ln(1/\epsilon))}}{M},
\]


\ignore{
{\color{red} Anindya: we have to make sure the parameters satisfy this.} 
This implies that 
\[
t(t-z) \le \ln(1/\epsilon) + \epsilon \cdot \sqrt{2 \ln(1/\epsilon)} \le 1 + \ln(1/\epsilon), 
\]
for a sufficiently small $\epsilon$. Applying \Cref{fact:gaussian-decay}, we have that 
\begin{equation}
\frac{1-\Phi(z)}{1-\Phi(t))} = O(1/\epsilon). \nonumber
\end{equation}
Combining the above with \eqref{eq:applpetrov2}, it follows that 
\[
1-\Vol(C_j) = O((1/\epsilon)) \cdot (1-\Phi(t)) = O(1/M),
\]
}
Using the bound on $\epsilon$
{(i.e. $\epsilon \ge (\log n)^{-1/2}$)} from the theorem statement, it follows that the right hand side is bounded by $(\ln M)/M$.
 This proves \Cref{claim:Estimate-cj-lb} and thus, \Cref{thm:Estimate-cj}.


\subsubsection{Putting Everything Together via \Cref{thm:rel-err-dud}}
\label{subsubsec:bye-bye-ell-p}

\Cref{thm:Estimate-cj} gives us that for each $j \in [M]$, we have $\Vol(C_j)=1-\delta$ where $\Omega(1/M) = \delta= O((\log M)/M)$
 Since each $C_j$ is not an intersection of any finite number of halfspaces, it remains to approximate each $C_j$ to high \emph{relative-error} accuracy, which we do using \Cref{thm:rel-err-dud}.  \Cref{thm:rel-err-dud} (where we write ``$\tau$'' for the parameter which is ``$\epsilon$'' in its theorem statement) gives us that for each $C_j$, which is a closed convex set in $\R^\newN$, there is an intersection of halfspaces in $\R^\newN$, which we denote $L_j$, which is an intersection of
\begin{equation} \label{eq:almost-done}
\frac{1}{\delta} \cdot \pbra{\frac{\newN}{\tau}\log\pbra{\frac{1}{\delta}}}^{O(\newN)}
=O(M) \cdot \pbra{\frac{\newN}{\tau}\log\pbra{\Theta(M)}}^{O(\newN)}
\end{equation}
halfspaces and which satisfies 
$\dG(C_j,L_j) \leq \tau \delta = O(\tau \cdot (\log M))/M$.
We choose $\tau = \epsilon/(\log M)$,
which gives us that $\tau \delta = O(\eps/M)$, and gives us that
\[
(\ref{eq:almost-done}) = 
O(M) \cdot \pbra{\frac{\newN}{\eps}\log\pbra{\Theta(M)}}^{O(\newN)}.
\]
Since $L := \bigcap_{j \in [M]} C_j$ satisfies $\dG(B_p,L)\leq O(\eps),$
the intersection $L^\ast := \bigcap_{j \in [M]} L_j$ of these $M$ approximators
satisfies $\dG(L,L^\ast) \leq M \tau \delta \leq O(\eps)$, and we get
\[
\dG(B_p,L) \leq \dG(B_p,L) + \dG(L,L^\ast) \leq O(\eps) + M \tau \delta \leq O(\eps).
\]
Recalling the value of $\newN$ from \Cref{eq:choiceofm} and the value of $M$ from \Cref{eq:choiceofM}, we see that the number of halfspaces in $L^\ast$ is at most
\[
M \cdot (\ref{eq:almost-done}) = 
\Theta(M^2) \cdot \pbra{\frac{\newN}{\eps}\log\pbra{\Theta(M)}}^{O(\newN)}
=
\exp\pbra{\Theta\pbra{\pbra{{\frac {\log^{9/4}(1/\eps)}{\eps^4}}}\cdot \log \pbra{{\frac n \eps}} \cdot n^{3/4}}}
\]
as claimed, and the proof of \Cref{{thm:ell-p-approx}} is complete. \qed

\part{Lower Bounds}

In \Cref{sec:nonexplicit} we leverage recent information-theoretic lower bounds from \cite{DS21colt} on the sample complexity of weak learning convex sets under $N(0,I_n)$ to give a non-constructive average-case lower bound: we show that for any $\eps = \omega (n^{-1/2} \cdot \log n)$, there exists some convex set $K \subset \mathbb{R}^n$ that requires $2^{\Omega(\eps\cdot \sqrt{n})}$ halfspaces to approximate to within error $1/2 - \eps$.

In \Cref{sec:noise-sensitivity} we give a sufficient condition for a convex set to be hard to approximate, by showing that if a convex set $K$ has $\GNS_{1/\poly(n)}(K) \geq 1/2 - 1/\poly(n)$, then any intersection of halfspaces that achieves some particular $1/2 - 1/\poly(n)$ error must use at least $2^{n^{\Omega(1)}}$ halfspaces.  We also establish the existence of convex sets $K$ with $\GNS_{1/\poly(n)}(K) \geq 1/2 - 1/\poly(n)$, again using the information-theoretic weak learning lower bounds of \cite{DS21colt}.

In \Cref{sec:conv-inf} we show that if $L$ is an intersection of $s$ halfspaces, then the total convex influence of $L$ is at most $O(\log s)$.  We use this structural result to prove that if a symmetric convex set $K$ has total convex influence $\TInf[K]$ at least $\Omega(\sqrt{n})$, then any intersection of halfspaces that approximates $K$ to some sufficiently small constant accuracy must use $2^{\Omega(\sqrt{n})}$ many halfspaces.  This lets us conclude that several explicit convex sets, including the volume-$1/2$ $\ell_2$ ball and the volume-$1/2$ $\ell_1$ ball, require $2^{\Omega(\sqrt{n})}$ halfspaces for (sufficiently small) constant error approximation.


\section{Non-Explicit Average-Case Lower Bounds} \label{sec:nonexplicit}

The goal of this section is to prove the following theorem:

\begin{theorem} [Non-explicit average-case lower bound.] \label{lb:non-explicit-convex}
There is an absolute constant $C>0$ such that for any sufficiently large $n$ and any $\epsilon = \omega (n^{-1/2} \cdot \log n)$, there exists some convex set $K \subset \mathbb{R}^n$ such that for any $L$ which is an intersection of at most $M:=2^{C \cdot \epsilon \cdot \sqrt{n}}$ halfspaces, 
\[
\Vol(K \, \triangle \, L) > \frac12 -\epsilon. 
\]
\end{theorem}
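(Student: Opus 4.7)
The strategy is to argue by contradiction: assume every convex set in $\R^n$ admits a good approximator in $\Facet(n,M)$, convert this into a sample-efficient weak learner for convex sets via empirical risk minimization, and then contradict the information-theoretic weak-learning lower bound for convex sets under $N(0,I_n)$ from \cite{DS21colt}.

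\textbf{Step 1 (ERM over $\Facet(n,M)$).} The class $\Facet(n,M)$ of intersections of $M$ halfspaces in $\R^n$ has VC dimension $d = O(Mn\log M)$ by standard arguments (a single halfspace has VC dimension $n+1$, and taking AND of $M$ concepts multiplies the shattering-function exponent by $M$ up to logarithmic factors). By the fundamental theorem of statistical learning, for any distribution, drawing
\[
s \;=\; O\!\left(\frac{d + \log(1/\delta)}{\alpha^2}\right) \;=\; O\!\left(\frac{Mn\log M + \log(1/\delta)}{\alpha^2}\right)
\]
i.i.d.\ samples $(\bx_1,K(\bx_1)),\dots,(\bx_s,K(\bx_s))$ from $N(0,I_n)$ suffices to guarantee that, with probability at least $1-\delta$, every $L \in \Facet(n,M)$ has empirical error within an additive $\alpha$ of its true $\dG$-error against $K$.

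\textbf{Step 2 (Converting the assumption into a weak learner).} Suppose for contradiction that $\FC(K, \tfrac12 - \epsilon) \le M = 2^{C\epsilon\sqrt n}$ for \emph{every} convex set $K \subset \R^n$. Then for the unknown target $K$ there exists $L^\dagger \in \Facet(n,M)$ with $\dG(K,L^\dagger) \le \tfrac12 - \epsilon$. Running empirical risk minimization over $\Facet(n,M)$ on the sample and outputting the minimizer $\bL^\ast$, a standard two-sided uniform convergence argument with $\alpha := \epsilon/4$ yields, with probability at least $2/3$,
\[
\dG(K,\bL^\ast) \;\le\; \dG(K,L^\dagger) + 2\alpha \;\le\; \tfrac12 - \tfrac{\epsilon}{2}.
\]
Thus, under the contradictory assumption, there is a (computationally unbounded) algorithm that, using
\[
s \;=\; O\!\left(\frac{Mn\log M}{\epsilon^2}\right) \;=\; 2^{O(C\epsilon\sqrt n)} \cdot \mathrm{poly}(n,1/\epsilon)
\]
samples from $N(0,I_n)$, weakly learns any convex set to $\dG$-error at most $\tfrac12 - \epsilon/2$ with constant success probability.

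\textbf{Step 3 (Contradicting the information-theoretic lower bound of \cite{DS21colt}).} The weak-learning lower bound from \cite{DS21colt} asserts (in the form we need) that any algorithm which, given $s$ i.i.d.\ samples from $N(0,I_n)$ labeled by an unknown convex set $K$, outputs a hypothesis achieving $\dG$-error at most $\tfrac12 - \eta$ with constant probability, must have $s \ge 2^{\Omega(\eta\sqrt n)}$. Applying this with $\eta = \epsilon/2$ and combining with the sample bound from Step~2 forces
\[
2^{O(C\epsilon\sqrt n)} \cdot \mathrm{poly}(n,1/\epsilon) \;\ge\; 2^{\Omega(\epsilon\sqrt n)}.
\]
Taking logarithms and using the hypothesis $\epsilon = \omega(n^{-1/2}\log n)$ (which ensures $\epsilon\sqrt n = \omega(\log n)$, so the $\mathrm{poly}(n,1/\epsilon)$ factor is absorbed), this inequality fails for any sufficiently small absolute constant $C>0$, yielding a contradiction.

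\textbf{Main obstacle.} The main technical point is bookkeeping: matching the VC bound $d = O(Mn\log M)$ against the exponential lower bound of \cite{DS21colt} so that both the polynomial overhead $\mathrm{poly}(n,1/\epsilon)$ and the internal $\log M = O(C\epsilon\sqrt n)$ factor are genuinely dominated by $\epsilon\sqrt n$. This is exactly where the hypothesis $\epsilon = \omega(n^{-1/2}\log n)$ is used; without it, the $\log n$ slack from VC-dimension considerations would swamp the target exponent. A secondary care point is invoking the \cite{DS21colt} bound in its ``$1/2-\eta$''-error form rather than the constant-error form, to get a lower bound that scales linearly with $\eta$ in the exponent.
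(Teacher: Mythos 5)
Your proposal matches the paper's proof essentially step for step: assume for contradiction that every convex set has a good $\Facet(n,M)$ approximator, bound $\VC$-$\mathrm{dim}(\Facet(n,M)) = O(nM\log M)$, run a VC-based agnostic learner (ERM) to get a weak learner with sample complexity $O(nM\log M/\epsilon^2)$, and contradict the information-theoretic weak-learning lower bound of \cite{DS21colt} (\Cref{thm:our-BBL-lb}). The bookkeeping at the end is also the same.

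The one place where your write-up is slightly looser than the paper's is in the form of the \cite{DS21colt} lower bound you invoke. You state it as: any algorithm using $s$ samples that outputs a hypothesis of error at most $\tfrac12-\eta$ \emph{with constant probability} must have $s \geq 2^{\Omega(\eta\sqrt n)}$. But the version stated in this paper (\Cref{thm:our-BBL-lb}) is an \emph{expected-error} lower bound: any $s$-query algorithm has expected error at least $\tfrac12 - O((\log s)/\sqrt n)$. A learner that achieves error $\tfrac12-\epsilon/2$ with probability $2/3$ has expected error at most $\tfrac23(\tfrac12-\epsilon/2) + \tfrac13 \cdot 1 > \tfrac12$, which does not contradict the expected-error bound. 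The fix is exactly what the paper does: take the confidence parameter $\delta$ to be $\Theta(\epsilon)$ rather than a constant, so the expected error drops to $\tfrac12 - \Omega(\epsilon)$; your sample complexity already scales as $\log(1/\delta)/\epsilon^2$, so this costs only a $\mathrm{poly}(n,1/\epsilon)$ factor that is absorbed exactly as in your Step~3. With that one-line adjustment your argument is complete and identical in substance to the paper's.
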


We note that \Cref{lb:non-explicit-convex} gives an \emph{average-case} lower bound, i.e.~it establishes ``strong inapproximability'' by intersections of not-too-many halfspaces. For example, taking $\eps = n^{-1/4}$, it shows that there is an $n$-dimensional convex set $K$ such that no intersection of $2^{cn^{1/4}}$ many halfspaces can approximate $K$ to accuracy even as large as $1/2 + n^{-1/4}$.  

The proof of this theorem will go via the notion of \VC-dimension, a fundamental measure of complexity in statistical learning theory. Recall that given a set ${\cal F}$ of Boolean functions over some domain $X$, the \emph{\VC-dimension of ${\cal F}$} is the largest size of any $S \subseteq X$ which is {\em shattered} by ${\cal F}$, meaning that every possible $0/1$ labeling of the points in $S$ is achieved by some function in ${\cal F}.$  (See the book \cite{KearnsVazirani:94} for more details.)

To use \VC-dimension in our context, we will also need the notion of agnostic learning.  We recall the following standard definition:

\begin{definition}
A class ${\cal F}$ of Boolean functions over $\mathbb{R}^n$ is said to be {\em agnostically PAC learnable} with sample complexity $m(\epsilon,\delta)$ if for every $\epsilon, \delta>0$, the following holds. There is an algorithm ${\cal A}$ which for any $f: \mathbb{R}^n \rightarrow \{0,1\}$ and any distribution ${\cal D}$ over $\mathbb{R}^n$, given $m(\eps,\delta)$ many i.i.d.~labeled samples of the form $(\bx, f(\bx))$ (where each $\bx \sim {\cal D}$), outputs a hypothesis $h: \R^n \to \{0,1\}$ such that with probability $1-\delta$,
\[
\Prx_{\bx \sim {\cal D}} \sbra{h(\bx) \not = f(\bx)} \le \epsilon + \min_{h^\ast \in {\cal F}} \Prx_{\bx \sim {\cal D}} \sbra{h^\ast(\bx) \not = f(\bx)}. 
\]
\end{definition}

A central result of statistical learning theory is that any concept class ${\cal F}$ of Boolean functions is agnostically PAC learnable where the sample complexity is proportional to the 
\VC-dimension of ${\cal F}$.  We state a sharp form of the bound below, which is due to Talagrand~\cite{Tal94}:
\begin{theorem}~\label{VCdim-learnability}
Any concept class ${\cal F}$ of Boolean functions over $\mathbb{R}^n$ is agnostically PAC learnable with sample complexity 
\[
m_{{\cal F}}(\epsilon, \delta): = \Theta \bigg( \frac{\VC\text{-}\mathrm{dim}({\cal F})}{\epsilon^2} + \frac{\log (1/\delta)}{\epsilon^2} \bigg), 
\]
where $\VC\text{-}\mathrm{dim}({\cal F})$ is the \VC-dimension of ${\cal F}$. 
\end{theorem}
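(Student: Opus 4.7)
The plan is to establish this classical result via empirical risk minimization (ERM) combined with a sharp uniform convergence bound over the concept class ${\cal F}$. The algorithm ${\cal A}$ is the obvious one: given $m = m_{{\cal F}}(\epsilon,\delta)$ i.i.d.~labeled samples $S = \{(\bx_i, f(\bx_i))\}_{i=1}^m$, output the hypothesis $\widehat{h} \in {\cal F}$ minimizing the empirical error $\mathrm{err}_S(h) := \frac{1}{m}\sum_{i=1}^m \mathbf{1}\{h(\bx_i) \neq f(\bx_i)\}$. Conditional on the uniform convergence event ``$|\mathrm{err}_S(h) - \mathrm{err}(h)| \leq \epsilon/2$ for every $h \in {\cal F}$,'' a two-line calculation comparing $\widehat{h}$ to the optimal $h^\ast \in {\cal F}$ yields $\mathrm{err}(\widehat{h}) \leq \mathrm{err}(h^\ast) + \epsilon$, which is the desired agnostic guarantee. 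So the whole game reduces to proving uniform convergence with failure probability $\delta$, given the stated sample size.

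Next I would establish uniform convergence by the standard symmetrization trick: introduce an independent ghost sample $S'$ of the same size and argue that
\[
\Prx_{S}\!\sbra{\sup_{h \in {\cal F}} |\mathrm{err}_S(h) - \mathrm{err}(h)| > \epsilon/2} \leq 2\Prx_{S,S'}\!\sbra{\sup_{h \in {\cal F}} |\mathrm{err}_S(h) - \mathrm{err}_{S'}(h)| > \epsilon/4}.
\]
Now conditioning on the pooled sample $S \cup S'$ of size $2m$ and invoking the Sauer--Shelah lemma, the family ${\cal F}$ induces at most $(2em/d)^d$ distinct labelings, where $d := \VC\text{-}\mathrm{dim}({\cal F})$. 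A union bound combined with Hoeffding's inequality over these finitely many labelings gives the classical Vapnik--Chervonenkis bound with sample complexity $O\pbra{(d\log(m/d) + \log(1/\delta))/\epsilon^2}$.

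The hard part, and the reason the result is credited to Talagrand~\cite{Tal94}, is to shave off the extraneous $\log(m/d)$ factor and obtain the sharp bound claimed in the theorem. The naive Sauer--Shelah / union-bound route cannot do this, because it charges a full $\log(m/d)$ penalty per ``effective'' dimension. The standard route to the optimal constant is to control the supremum of the centered empirical process $\{\mathrm{err}_S(h) - \mathrm{err}(h) : h \in {\cal F}\}$ via a chaining argument using Dudley's entropy integral together with the fact that $L_2$ empirical covering numbers of a VC class of dimension $d$ are bounded by $(C/\tau)^{O(d)}$ \emph{independent of the sample size} $m$; the resulting Dudley integral converges without the extra logarithm. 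Talagrand's sharp concentration inequality for suprema of empirical processes then converts the expected supremum bound into a high-probability bound with the correct additive $\log(1/\delta)/\epsilon^2$ dependence.

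The main obstacle is precisely this last step: the sample-size-free control of $L_2$ covering numbers of VC classes (a theorem of Haussler building on Dudley) and the invocation of Talagrand's inequality are both nontrivial and lie well outside the combinatorial toolkit used in the rest of the paper. Since this theorem is a black-box input to the non-explicit lower bound of \Cref{lb:non-explicit-convex}, my proposal would be to cite \cite{Tal94} (and the textbook treatments in, e.g., van der Vaart--Wellner) for the sharp form, while presenting the ERM + symmetrization + Sauer--Shelah skeleton above to make the dependence on $\VC\text{-}\mathrm{dim}({\cal F})$ transparent to the reader.
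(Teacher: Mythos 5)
Your proposal is sound, and it is worth noting that the paper itself gives no proof of this statement at all: Theorem~\ref{VCdim-learnability} is invoked purely as a black-box citation to Talagrand~\cite{Tal94}, exactly as you ultimately propose to do for the sharp, logarithm-free form. Your sketch of the reduction — ERM plus uniform convergence, symmetrization with a ghost sample, and Sauer--Shelah with Hoeffding giving the classical $O\big((d\log(m/d)+\log(1/\delta))/\epsilon^2\big)$ bound — is the standard and correct skeleton, and you correctly identify that removing the $\log(m/d)$ factor is the genuinely nontrivial step, requiring sample-size-free $L_2$ covering-number bounds for VC classes (Haussler/Dudley) together with Talagrand's concentration inequality for empirical processes. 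One small remark: the theorem is stated with a $\Theta(\cdot)$, but only the upper-bound direction (learnability with that many samples) is ever used in the proof of \Cref{lb:non-explicit-convex}, so your treatment, which addresses only the upper bound, suffices for the role this theorem plays in the paper.
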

Recall that $\Facet(n,M)$ denotes the class of convex sets in $\R^n$ which are intersection of at most $M$ halfspaces.  The following is shown in \cite{BEH+:89} (and is now a standard fact, see e.g. Exercise~3.4 of \cite{KearnsVazirani:94}):
\begin{fact}~\label{fact:vcdim-inter}
$\VC\text{-}\mathrm{dim}(\Facet(n,M)) = O(nM \log M)$. 
\end{fact}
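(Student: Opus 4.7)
The plan is to prove the bound via the standard two-step Sauer--Shelah argument: first bound the shatter function of a single halfspace, then use the fact that intersecting $M$ classes multiplies shatter functions, and finally invert to obtain the VC-dimension bound.

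First I would recall the classical fact that the class of affine halfspaces in $\mathbb{R}^n$ has VC-dimension exactly $n+1$. One clean way to see the upper bound is Radon's theorem: any $n+2$ points in $\mathbb{R}^n$ admit a partition into two sets whose convex hulls intersect, and such a partition cannot be realized by a single halfspace. The matching lower bound is exhibited by taking $n+1$ points in general position together with the origin.

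Next I would invoke the Sauer--Shelah--Perles lemma: for any concept class $\mathcal{C}$ of VC-dimension $d$ over a domain $X$, and any finite $S \subseteq X$ with $|S| = m \geq d$, the number of distinct restrictions $\mathcal{C}|_S$ is at most $\sum_{i=0}^{d}\binom{m}{i} \leq (em/d)^{d}$. Applied to halfspaces over $\mathbb{R}^n$ with $d = n+1$, this bounds the number of labelings of any $m$ points by halfspaces by $\bigl(em/(n+1)\bigr)^{n+1}$.

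The key step is the product observation: for any fixed $S \subseteq \mathbb{R}^n$ of size $m$, a set $K = H_1 \cap \cdots \cap H_M \in \Facet(n,M)$ acts on $S$ by $K(x) = \bigwedge_{j=1}^{M} H_j(x)$, so the restriction $K|_S$ is completely determined by the $M$-tuple $(H_1|_S, \ldots, H_M|_S)$. Therefore
\[
\bigl|\Facet(n,M)|_S\bigr| \;\leq\; \bigl(em/(n+1)\bigr)^{M(n+1)}.
\]
For $\Facet(n,M)$ to shatter $S$ we need $2^m \leq \bigl(em/(n+1)\bigr)^{M(n+1)}$, i.e.
\[
m \;\leq\; M(n+1)\log_2\!\bigl(em/(n+1)\bigr).
\]

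Finally, I would invert this inequality. Setting $a := M(n+1)$, the inequality reads $m \leq a \log_2(em/(n+1)) \leq a\log_2(em)$, and a standard manipulation (using that if $m \leq a\log_2(em)$ then $m = O(a \log a)$) yields $m = O(Mn \log(Mn))$, which, in the regime where $\log(Mn) = O(\log M)$ (e.g.\ $n \leq \mathrm{poly}(M)$, as is the relevant setting), gives the stated bound $\VC\text{-}\mathrm{dim}(\Facet(n,M)) = O(nM\log M)$. The only ``obstacle'' is the routine bookkeeping in inverting the Sauer--Shelah inequality; no further ideas are needed.
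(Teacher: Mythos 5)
Your approach is the standard one — and in fact it is essentially the argument behind the reference the paper points to for this fact ([BEH+:89], cf.\ Exercise~3.4 of Kearns--Vazirani); the paper itself only cites the result rather than proving it. The halfspace VC-dimension bound, Sauer--Shelah, and the product/tuple observation for $M$-fold intersections are all correct as you state them.

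The one place you lose something is the final inversion. By discarding the denominator and bounding $\log_2\bigl(em/(n+1)\bigr) \leq \log_2(em)$, you only get $m = O\bigl(nM\log(nM)\bigr)$, and you then need the side condition $\log(nM)=O(\log M)$ (i.e.\ $n \leq \mathrm{poly}(M)$) to recover the stated bound. But the fact as stated is unconditional, and the extra assumption is unnecessary: keep the inequality in the form $m \leq M(n+1)\log_2\bigl(em/(n+1)\bigr)$ and substitute $u := m/(n+1)$, so that it reads $u \leq M\log_2(eu)$. Inverting \emph{this} inequality gives $u = O(M\log M)$ (for $M\geq 2$), hence $m = O\bigl((n+1)M\log M\bigr) = O(nM\log M)$ with no relation between $n$ and $M$ assumed. (Incidentally, in the paper's application $M$ is $2^{\Omega(\epsilon\sqrt{n})}$ with $\epsilon = \omega(n^{-1/2}\log n)$, so your weaker conclusion would still suffice there, but the cleaner inversion proves the fact as stated.)
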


The final ingredient we will require is a recent result of De and Servedio (Theorem~2 in ~\cite{DS21colt}) which establishes a lower bound on the query complexity of any algorithm which ``weakly learns'' (meaning that the output hypothesis has an error rate only slightly less than $1/2$) convex sets over the Gaussian space. (This lower bound of course also holds for the sample complexity of any algorithm which only receives independent labeled samples $(\bx,f(\bx))$ where each $\bx \sim N(0,I_n)$.)

\begin{theorem}[Theorem~2 of \cite{DS21colt}] \label{thm:our-BBL-lb}
For sufficiently large $n$, for any $s \geq n$, there is a distribution ${\cal D}$ over centrally symmetric convex sets $\bK \subset \R^n$ with the following property:  for a target convex set $\bK \sim {\cal D},$ for any black box query algorithm $A$ making at most $s$ many queries to $\bK$, the expected error of $A$ (the probability over $\bK \sim {\cal D}$, over any internal randomness of $A$, and over a random Gaussian $\bx \sim N(0,1^n)$, that the output hypothesis $h$ of $A$ predicts incorrectly on $\bx$) is at least $1/2 - {\frac {O(\log s )}{n^{1/2}}}$.
\end{theorem}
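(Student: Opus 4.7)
The plan is to prove Theorem~\ref{lb:non-explicit-convex} by contradiction, combining the three ingredients that have been set up: the VC-dimension bound for $\Facet(n,M)$ (Fact~\ref{fact:vcdim-inter}), Talagrand's agnostic PAC learning guarantee (Theorem~\ref{VCdim-learnability}), and the sample/query lower bound for weakly learning convex sets under $N(0,I_n)$ (Theorem~\ref{thm:our-BBL-lb}). Suppose for contradiction that \emph{every} convex set $K \subset \R^n$ admits some $L \in \Facet(n,M)$ with $\Vol(K \triangle L) \leq 1/2 - \epsilon$, where $M := 2^{C \epsilon \sqrt{n}}$ and $C$ is a small absolute constant to be pinned down. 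I will derive a weak learner for convex sets under $N(0,I_n)$ that beats the lower bound of Theorem~\ref{thm:our-BBL-lb}.

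Concretely, I would apply Theorem~\ref{VCdim-learnability} to the concept class $\calF = \Facet(n,M)$, with distribution $\calD = N(0,I_n)$, target accuracy parameter $\epsilon/2$ and constant failure probability $\delta = 1/4$. By the agnostic guarantee, for any target convex set $K$, the returned hypothesis $h$ satisfies
\[
\Prx_{\bx \sim N(0,I_n)}\sbra{h(\bx) \neq K(\bx)} \leq \epsilon/2 + \min_{L \in \Facet(n,M)} \Vol(K \triangle L) \leq \epsilon/2 + \pbra{\tfrac12 - \epsilon} = \tfrac12 - \tfrac{\epsilon}{2},
\]
with probability at least $3/4$ over the samples. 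The sample complexity is
\[
m = \Theta\pbra{\frac{\VC\text{-}\mathrm{dim}(\Facet(n,M)) + \log(1/\delta)}{(\epsilon/2)^2}} = \Theta\pbra{\frac{n M \log M}{\epsilon^2}} = \Theta\pbra{\frac{C \cdot n^{3/2} \cdot 2^{C \epsilon \sqrt{n}}}{\epsilon}}
\]
using Fact~\ref{fact:vcdim-inter} and $\log M = C \epsilon \sqrt{n}$. Since every sample-based algorithm is a special case of a black-box query algorithm (one that makes its queries at i.i.d.\ Gaussian points), this yields a query algorithm for learning convex sets to expected error at most $\tfrac12 - \tfrac{\epsilon}{2} + \tfrac14$; a standard boosting/averaging argument lets me convert the constant failure probability into an expected-error bound of $\tfrac12 - \tfrac{\epsilon}{4}$ (say) at the cost of a constant factor in sample complexity.

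Now I invoke Theorem~\ref{thm:our-BBL-lb}: any query algorithm making $s$ queries must have expected error at least $\tfrac12 - O(\log s)/\sqrt{n}$ against the hard distribution $\calD$ on centrally symmetric convex sets. Matching this against the learner just constructed, the existence of an $m$-sample learner of expected error $\tfrac12 - \tfrac{\epsilon}{4}$ forces
\[
\frac{\epsilon}{4} \leq \frac{O(\log m)}{\sqrt{n}}, \qquad \text{i.e.,} \qquad \log m = \Omega(\epsilon \sqrt{n}).
\]
Plugging in the value of $m$,
\[
\log m = C \epsilon \sqrt{n} + O(\log n) + O(\log(1/\epsilon)),
\]
so I need $C \epsilon \sqrt{n} + O(\log n) = \Omega(\epsilon \sqrt{n})$ to be \emph{violated}. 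This fails, producing the desired contradiction, provided (i) $C$ is chosen sufficiently small relative to the implicit constants in Theorem~\ref{thm:our-BBL-lb}, and (ii) the lower-order $O(\log n)$ term is dominated by $\epsilon \sqrt{n}$. Condition (ii) is exactly why the hypothesis $\epsilon = \omega(n^{-1/2} \log n)$ appears in the theorem statement.

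The main obstacle, and the step I would spend the most care on, is condition (ii) above: verifying that all the additive lower-order contributions to $\log m$ (the $\log n$ from the VC-dimension factor of $n$, the $\log(1/\epsilon)$ from the inverse accuracy, and any $O(1)$ losses from constant-probability boosting) are simultaneously $o(\epsilon \sqrt{n})$ under the hypothesis $\epsilon = \omega(n^{-1/2} \log n)$. The rest (applying the VC bound, invoking Talagrand's agnostic PAC theorem, and reducing samples to queries) is essentially mechanical given the results already in the excerpt.
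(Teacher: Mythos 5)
There is a genuine gap, and it is at the level of what is being proved rather than a step inside the argument. The statement under review is \Cref{thm:our-BBL-lb}, the information-theoretic lower bound (quoted from Theorem~2 of \cite{DS21colt}) stating that a hard distribution over centrally symmetric convex sets exists against which any black-box algorithm making $s$ queries has expected error at least $1/2 - O(\log s)/\sqrt{n}$. Your proposal does not prove this statement at all: it proves the downstream result \Cref{lb:non-explicit-convex}, and it does so by \emph{invoking} \Cref{thm:our-BBL-lb} as a black box. Relative to the assigned statement the argument is therefore circular --- the one ingredient you would need to establish is precisely the one you assume. A proof of \Cref{thm:our-BBL-lb} would have to (i) construct an explicit distribution over centrally symmetric convex sets in $\R^n$ (in \cite{DS21colt} this is done via intersections of exponentially many random halfspaces at a carefully tuned threshold scale), and (ii) give an information-theoretic argument that the answers to any $s$ black-box queries are nearly uninformative --- e.g., that each queried point's label is close to deterministic or close to independent of the labels relevant to a fresh Gaussian test point --- so that the achievable advantage over random guessing is only $O(\log s)/\sqrt{n}$. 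Nothing in your proposal addresses either component; note also that the paper itself does not reprove this theorem but imports it from \cite{DS21colt}.

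As a secondary remark: the argument you did write is essentially the paper's own proof of \Cref{lb:non-explicit-convex} in \Cref{sec:nonexplicit} (VC dimension of $\Facet(n,M)$ via \Cref{fact:vcdim-inter}, Talagrand's agnostic bound \Cref{VCdim-learnability}, then playing the resulting weak learner against \Cref{thm:our-BBL-lb}), with only cosmetic differences (the paper takes $\delta = \eps/4$ and compares expected errors directly, rather than boosting a constant failure probability). So that portion is sound, but it answers a different question than the one posed.
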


We now combine these ingredients to establish \Cref{lb:non-explicit-convex}:

\begin{proofof}{\Cref{lb:non-explicit-convex}}
First, let us assume that for every convex set $K$, there is a convex set $L$ which is an intersection of at most $M$ halfspaces such that 
\[
\Vol(K \, \triangle \, L) \le \frac12 -\epsilon. 
\]
Now, consider the task of ``weak learning'' an unknown convex set $K \subseteq \mathbb{R}^n$  given i.i.d.~labeled samples of the form $(\bx, K(\bx))$ where $\bx \sim N(0,I_n)$ and $K(\cdot)$ is identified with the indicator function of the convex set $K$. To do this, we run the agnostic PAC learning algorithm from \Cref{VCdim-learnability} for the class $\Facet(n,M)$ with parameters $\epsilon/2$ and $\delta$ on the samples. 

Now, given that there is a convex set $L \in \Facet(n,M)$ such that $\Vol(K \, \triangle \, L) \le \frac12 -\epsilon$, it follows that 
with probability $1-\delta$, the algorithm from \Cref{VCdim-learnability} outputs a hypothesis $h$ such that 
\[
\Prx_{\bx \sim N(0,I_n)} [h(\bx) \not = K(\bx)] \le \frac{1}{2} - \epsilon + \epsilon/2 = \frac{1}{2} - \epsilon/2. 
\]
Further, the sample complexity of this algorithm is given by $S (\epsilon, \delta)$ defined as 
\[
S(\epsilon,\delta) = \Theta \bigg(\frac{\VC\text{-}\mathrm{dim}(\Facet(n,M))}{\epsilon^2} + \frac{\log(1/\delta)}{\epsilon^2} \bigg)
\]
Applying~\Cref{fact:vcdim-inter}, we get that 
\[
S(\epsilon,\delta) = O \bigg( \frac{n M \log M}{\epsilon^2} + \frac{\log(1/\delta)}{\epsilon^2}\bigg) 
\]
In particular, if we set $\delta = \epsilon/4$, then we get that there is a PAC learning  algorithm for convex sets (where the data $\bx \sim N(0,I_n)$) with sample complexity 
\[
S(\epsilon, \epsilon/4) = O \bigg( \frac{n M \log M}{\epsilon^2} + \frac{\log(1/\epsilon)}{\epsilon^2}\bigg) 
\] 
which has expected error at most $1/2 - 3 \epsilon/4$ (where the expectation is over the internal randomness as well as the randomness of the data points $\bx \sim N^n(0,1)$). On the other hand, by~\Cref{thm:our-BBL-lb}, we get that the expected error must be at least $1/2 - O(n^{-1/2} \cdot \log S(\epsilon, \epsilon/4))$. Combining these bounds, we get that 
\[
\frac{1}{2} - \frac{3\epsilon}{4} \ge \frac{1}{2} - O(n^{-1/2} \cdot \log S(\epsilon, \epsilon/4)). 
\]
This implies that $\log S(\epsilon, \epsilon/4)) = \Omega(\epsilon \cdot \sqrt{n})$. This implies that 
\[
\frac{n M \log M}{\epsilon^2} + \frac{\log(1/\epsilon)}{\epsilon^2}  = 2^{\Omega(\epsilon \cdot \sqrt{n})}. 
\]
Since $\epsilon = \omega( n^{-1/2} \cdot \log n )$ and $M \ge 1$, the first term on the left hand side is the dominant one, and thus we have
\[
M \log M = \frac{\epsilon^2}{n} \cdot 2^{\Omega(\epsilon \cdot \sqrt{n})}. 
\]
Again using $\epsilon = \omega( n^{-1/2} \cdot \log n )$, we have that 
$$
M \log M = 2^{\Omega(\epsilon \cdot \sqrt{n})} \quad \quad \textrm{and thus} \quad \quad M = 2^{\Omega(\epsilon \cdot \sqrt{n})}. 
$$
This finishes the proof of \Cref{lb:non-explicit-convex}.
\end{proofof}




	


\section{Average-Case Lower Bounds via Gaussian Noise Sensitivity} \label{sec:noise-sensitivity}

One drawback of \Cref{lb:non-explicit-convex} is that it is non-constructive: not only does it not exhibit any particular convex set which is hard to approximate, it does not give any criterion that suffices to make a convex set hard to approximate.

In this section we describe a general structural property --- namely, having high \emph{Gaussian noise sensitivity} (see \Cref{def:GNS}) at low noise rates --- and show that if a convex set has this property, then any intersection-of-halfspaces approximator for it must use many halfspaces.  
More precisely, we prove the following:

\begin{theorem} \label{thm:high-GNS-hard}
Fix any constant $c_1>0$ and let $0 \leq \kappa = \kappa(n)<1/2$ (note that $\kappa$ may depend on $n$).
Let $K \subset \R^n$ be a convex set such that $\GNS_{n^{-c_1}}(K) \geq 1/2 - \kappa.$
Then there is a constant $c_2$ (depending on $c_1$) such that if $L \subset \R^n$ is any intersection of $2^{n^{c_2}}$ many LTFs, we have
\[
\Vol(K \ \triangle \ L) \geq {\frac 1 2} - {2} \kappa^{1/4} - \frac{3}{2}n^{-c_2}.
\]
\end{theorem}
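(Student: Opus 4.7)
The plan is to carry out a random-zoom argument in the spirit of the classical correlation bounds for shallow Boolean circuits~\cite{Hastad86,Cai86}, leveraging the tension already flagged in the excerpt: intersections of few halfspaces are close to low-degree Hermite polynomials by~\cite{KOS:08}, which in turn become essentially constant after a random zoom by~\cite{OSTK21}, whereas a set of very high Gaussian noise sensitivity at very low noise rates remains nearly perfectly balanced after a random zoom at the matching scale. Fix the zoom scale $\sigma := n^{-c_1/2}$ and, for a measurable set $S \subseteq \R^n$, define $F_S(\bz) := \Pr_{\bw \sim N(0,I_n)}\sbra{\sqrt{1-\sigma^2}\,\bz + \sigma \bw \in S}$ with $\bz \sim N(0,I_n)$. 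The key property is that the pair $\pbra{\sqrt{1-\sigma^2}\bz + \sigma \bw_1,\, \sqrt{1-\sigma^2}\bz + \sigma \bw_2}$ with $\bw_1,\bw_2$ independent is distributed as a pair of $(1 - n^{-c_1})$-correlated standard Gaussians.

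\textbf{Step 1: The zoom of $K$ is nearly balanced.} For the correlated pair $(\bx,\by)$ just described, a direct calculation yields $\Pr[\bx,\by \in K] = \Vol(K) - \tfrac 1 2 \GNS_{n^{-c_1}}(K)$; combined with $\E_{\bz}[F_K(\bz)^2] = \Pr[\bx,\by \in K]$ and $\E_{\bz}[F_K(\bz)] = \Vol(K)$, this gives the clean identity
\begin{equation*}
\E_{\bz}\sbra{\pbra{F_K(\bz) - \tfrac 1 2}^{2}} \;=\; \tfrac 1 4 - \tfrac 1 2 \GNS_{n^{-c_1}}(K) \;\leq\; \tfrac{\kappa}{2}.
\end{equation*}
A Markov bound at threshold $\kappa^{1/4}$ then shows that $|F_K(\bz) - 1/2| \leq \kappa^{1/4}$ except on an event of probability at most $\kappa^{1/2}/2$, which is enough to conclude $\E_{\bz}\sbra{\min(F_K(\bz),\,1 - F_K(\bz))} \geq 1/2 - 2\kappa^{1/4}$.

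\textbf{Step 2: The zoom of $L$ is close to a constant.} Nazarov's bound on the Gaussian surface area of intersections of halfspaces (\Cref{thm:nazarov}) combined with the KOS argument gives $\GNS_{\eta}(L) \leq O(\sqrt{\eta \log M})$, so by the Hermite formula $\GNS_\eta(L) = 2 \sum_{S \neq \emptyset}(1 - (1-\eta)^{|S|}) \widehat{L}_S^{2}$ the Hermite tail beyond degree $d = 1/\eta$ satisfies $\sum_{|S| > d} \widehat{L}_S^{2} = O(\sqrt{\eta \log M})$. Plugging in $\eta = n^{-c_1}$ and $M = 2^{n^{c_2}}$ makes this $O(n^{(c_2-c_1)/2})$, which we drive below $n^{-2c_2}$ by choosing $c_2$ sufficiently small relative to $c_1$. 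Let $p_L$ denote the degree-$d$ Hermite truncation of $L$, a polynomial of degree $n^{c_1}$. Invoking the OSTK21 hypervariance bound for random zooms of low-degree Hermite polynomials, the expected post-zoom hypervariance $\E_{\bz}\sbra{\HV((p_L)_{\bz})}$ can be made $o(n^{-2c_2})$ by the same parameter choice. Combining this with the KOS08 $L_2$-approximation of $L$ by $p_L$ and rounding $(p_L)_{\bz}$ to the nearest constant $c(\bz) \in \{0,1\}$, a Markov and Cauchy--Schwarz argument yields $\E_{\bz}\sbra{\Vol(L_{\bz} \triangle c(\bz))} \leq n^{-c_2}$.

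\textbf{Step 3: Assembly and main obstacle.} Since a random zoom preserves the standard Gaussian law, $\E_{\bz}\sbra{\Vol(K_{\bz} \triangle L_{\bz})} = \Vol(K \triangle L)$. Using the pointwise inequality
\begin{equation*}
\Vol(K_{\bz} \triangle L_{\bz}) \;\geq\; \Vol(K_{\bz} \triangle c(\bz)) - \Vol(L_{\bz} \triangle c(\bz)) \;\geq\; \min(F_K(\bz),\,1-F_K(\bz)) - \Vol(L_{\bz} \triangle c(\bz)),
\end{equation*}
averaging over $\bz$, and substituting the bounds from Steps 1 and 2 yields $\Vol(K \triangle L) \geq 1/2 - 2\kappa^{1/4} - 3 n^{-c_2}/2$ after absorbing the lower-order KOS08 and OSTK21 errors into the $3/2$ slack. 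The main technical obstacle is Step 2: namely, carefully choosing the absolute constant $c_2$ relative to $c_1$ so that the KOS08 Hermite tail bound, the OSTK21 after-zoom hypervariance, and the rounding loss from the real-valued polynomial $(p_L)_{\bz}$ to the $0/1$-valued zoom $L_{\bz}$ all simultaneously sit at $o(n^{-c_2})$; Steps 1 and 3 are essentially routine once this quantitative coupling is nailed down.
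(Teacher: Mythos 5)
Your overall strategy matches the paper's: zoom $K$ and $L$ at a common scale, show the zoom of $K$ stays nearly balanced while the zoom of $L$ collapses to a constant, and assemble via the fact that zooms preserve the Gaussian law. Step~1 is essentially correct (the identity $\E[(F_K-\tfrac12)^2]=\tfrac14-\tfrac12\GNS$ is a valid rephrasing of the paper's Lemma on expected post-zoom variance), modulo a factor-of-two slip: with $\sigma^2=n^{-c_1}$ the pair $(\bx,\by)$ is $(1-n^{-c_1})$-correlated, which by Definition~\ref{def:GNS} corresponds to noise rate $n^{-c_1}/2$, not $n^{-c_1}$; you want $\sigma^2 = 2n^{-c_1}$. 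Step~3 is fine.

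Step~2, however, has a genuine gap. You take $p_L$ to be the degree-$d$ Hermite truncation of $L$ with $d = 1/\eta = n^{c_1}$, and then invoke the OSTK21 hypervariance theorem on $(p_L)_{\bz}$ at zoom scale $\lambda = n^{-c_1}$. But \Cref{thm:OSTK-main-theorem} requires $\lambda \leq c\eps\beta/(Rd^{9/2})$; with $d = n^{c_1}$ this forces $\lambda \lesssim n^{-9c_1/2}$, and no choice of $\eps,\beta \in (0,1)$ can make $\lambda = n^{-c_1}$ admissible (one would need $\eps\beta \gtrsim n^{7c_1/2}$). The degree of your low-degree proxy must be \emph{much} smaller than $n^{c_1}$: concretely at most $O(n^{2c_1/9})$. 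The paper's proof sidesteps this by not truncating the Hermite expansion at the zoom-matching degree; instead it uses \Cref{thm:KOS}, which yields a $\{-1,1\}$-valued PTF approximator of $L$ of degree $O(\log k/\eps^2) = O(n^{5c_2})$ for $k = 2^{n^{c_2}}$, and then calibrates $c_2 = 2c_1/49$ precisely so that the OSTK admissible zoom scale $\Theta(\delta^{20}/(\log k)^{9/2})$ works out to $\Theta(n^{-c_1})$. (Using a PTF rather than a real-valued truncation also eliminates the rounding/Cauchy--Schwarz bookkeeping you gesture at.) Your approach is salvageable if you decouple the two roles of $\eta$: pick the Hermite-tail cutoff degree to be $n^{\alpha}$ for a small $\alpha < 2c_1/9$, not $n^{c_1}$, and then re-run the parameter balance; but as written, the proposal's central step fails because the OSTK hypothesis on $\lambda$ cannot be met.
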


\Cref{thm:high-GNS-hard} raises the question of whether there exist convex sets that have very high Gaussian noise sensitivity at very low noise rates. This is a natural question, but it does not seem to have been explicitly considered in prior work~\cite{OD23pc}. In \cite{Nazarov:03} Nazarov gave a randomized construction of an $n$-dimensional convex set that has Gaussian surface area $\Omega(n^{1/4})$, thus showing that Ball's $O(n^{1/4})$ upper bound \cite{Ball:93} on the Gaussian surface area of any $n$-dimensional convex sets is best possible up to the hidden constant.  A straightforward geometric argument applied to Nazarov's construction gives that there are $n$-dimensional convex sets $K$ for which $\GNS_{n^{-1/4}}(K)$ is at least some absolute constant, but these arguments do not seem to give a stronger lower bound of the form $1/2 - o_n(1)$.  

In \Cref{sec:existence} we show how known results on the information-theoretic hardness of \emph{weak learning} convex sets under the Gaussian distribution (the main lower bound result of \cite{DS21colt}, which we stated as \Cref{thm:our-BBL-lb} in \Cref{sec:nonexplicit}) can be leveraged to establish the existence of convex sets with very high Gaussian noise sensitivity (inverse-polynomially close to $1/2$) at very low noise rates (inverse polynomial). This result, stated below, may be of independent interest:

\begin{theorem} \label{thm:highGNSlownoise}
Fix any constant $\tau < 1/4$.  For sufficiently large $n$, there are convex bodies $K \subset \R^n$ 
that have $\GNS_{n^{-\tau}}(K) \geq 1/2 - n^{-\tau}.$
\end{theorem}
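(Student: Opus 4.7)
My plan is to prove \Cref{thm:highGNSlownoise} by contradiction, combining the Hermite identity that rewrites noise sensitivity in terms of the Gaussian noise operator $T_{1-\epsilon}$ with the information-theoretic weak-learning lower bound \Cref{thm:our-BBL-lb} from \cite{DS21colt}. The ``learning algorithm'' to be plugged into \Cref{thm:our-BBL-lb} is a thresholded low-degree Hermite estimator of $T_{1-\epsilon}(2K-1)$, and the quantitative feature that makes the argument close the gap to $\tau<1/4$ is that $T_{1-\epsilon}$ is a contraction on $L^\infty$ (independently of the degree needed to approximate it well in $L^2$).

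Let $\calD$ be the hard distribution over centrally symmetric convex bodies produced by \Cref{thm:our-BBL-lb}, and suppose for contradiction that \emph{every} $K$ in the support of $\calD$ satisfies $\GNS_{n^{-\tau}}(K) < 1/2 - n^{-\tau}$. Writing $\epsilon=\eta=n^{-\tau}$ and $f_K := 2\,\mathbf{1}_K - 1 \in \{-1,1\}$, the identity
\[
	\GNS_\epsilon(K) = \tfrac12\Bigl(1 - \sum_{S}(1-\epsilon)^{|S|}\hat f_K(S)^2\Bigr)
\]
rewrites the assumption as $\Ex[f_K \cdot T_{1-\epsilon} f_K] \ge 2\eta$. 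Thus $G_K := T_{1-\epsilon} f_K$ obeys $\Ex[f_K G_K] \ge 2\eta$ and the crucial pointwise bound $\|G_K\|_\infty \le 1$, since it is the conditional expectation of a $\{-1,1\}$-valued function (assuming, or arranging by restricting to a positive-probability sub-distribution of $\calD$, that the bodies are essentially balanced so $\hat f_K(\emptyset)^2 \ll \eta$).

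I would then build a weak learner as follows. Truncate the Hermite expansion of $G_K$ at degree $d := \Theta(\epsilon^{-1}\log(1/\eta)) = \Theta(n^\tau \log n)$, so that the tail contribution is $\|G_K - P_{\le d}G_K\|_2 \le (1-\epsilon)^d \le \eta/10$. Estimate the $\binom{n+d}{d}\le n^d$ relevant Hermite coefficients empirically from $s=n^{O(d)}=2^{O(n^\tau \log^2 n)}$ i.i.d.~Gaussian samples, assemble the clipped polynomial $\tilde g(x) := \mathrm{clip}_{[-1,1]}\bigl(\sum_{|S|\le d}(1-\epsilon)^{|S|}\tilde c(S)h_S(x)\bigr)$, and output $h(x):=\mathbf{1}\{\tilde g(x)>t^\star\}$ for a threshold $t^\star$ found by searching a fine grid on held-out samples. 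Standard Chebyshev bounds give $\|\tilde g-G_K\|_2 \le \eta/5$ at the claimed sample count; because $\|\tilde g\|_\infty \le 1$, the ``integrated CDF gap'' identity
\[
	\int_{-1}^{1}\!\Bigl(\Prx[\tilde g>t\mid f_K\!=\!1] - \Prx[\tilde g>t\mid f_K\!=\!-1]\Bigr)\,dt \;=\; 2\,\Ex[f_K \tilde g] \;=\; \Omega(\eta)
\]
forces some $t^\star \in [-1,1]$ to yield an advantage of $\Omega(\eta)$ over random guessing on every $K$ in $\mathrm{supp}(\calD)$. Averaging over $\bK\sim\calD$ gives an $s$-sample algorithm with advantage $\Omega(n^{-\tau})$; on the other hand \Cref{thm:our-BBL-lb} caps the advantage by $O(\log s/\sqrt n) = O(n^{\tau-1/2}\log^2 n)$. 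The inequality $n^{-\tau} = O(n^{\tau-1/2}\log^2 n)$, equivalently $n^{1/2-2\tau}=O(\log^2 n)$, is violated for every fixed $\tau<1/4$ once $n$ is large enough, producing the contradiction and hence the existence of $K\in\mathrm{supp}(\calD)$ with $\GNS_{n^{-\tau}}(K)\ge 1/2-n^{-\tau}$.

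The main obstacle is the real-valued-to-$\{0,1\}$ rounding in the learning step: for a generic degree-$d$ polynomial $\tilde g$ with only $O(\eta)$ correlation to $f_K$, hypercontractivity only bounds $\|\tilde g\|_\infty$ by $O(\log^{d/2} n)$, which would dilute the advantage of any threshold predictor by the same factor and ruin the quantitative match to \Cref{thm:our-BBL-lb}. The resolution is to work with $G_K=T_{1-\epsilon}f_K$ rather than a raw low-degree polynomial: as a conditional expectation of a bounded function, $G_K$ is \emph{deterministically} $L^\infty$-bounded by $1$ irrespective of $d$, and this is precisely what lets the threshold search in $[-1,1]$ extract the full $\Omega(\eta)$ advantage. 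A secondary worry is ensuring near-balance of the sets in $\calD$, which can be handled either by observing that the bodies constructed in \cite{DS21colt} have Gaussian volume $1/2\pm o(1)$ at the relevant scale or by conditioning on a positive-probability subset on which balance holds.
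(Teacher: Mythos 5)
Your overall strategy matches the paper's: assume for contradiction that every convex body has low Gaussian noise sensitivity, use the Hermite identity $\Stab_\rho[f_K]=\sum_\alpha \rho^{|\alpha|}\widetilde{f_K}(\alpha)^2$ to locate $\Omega(n^{-\tau})$ Hermite weight below degree $\ell=\Theta(n^\tau\log n)$, build an $n^{O(\ell)}$-sample weak learner, and contradict \Cref{thm:our-BBL-lb}. (One small notational point: with the paper's \Cref{def:GNS}, $\GNS_\epsilon[f]<1/2-\eta$ gives $\Stab_{1-2\epsilon}[f]>2\eta$, so the relevant operator is $\U_{1-2\epsilon}$, not $\U_{1-\epsilon}$; this doesn't affect any of the estimates.)

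Where you genuinely diverge from the paper is in the real-valued-to-Boolean rounding step, and I think you have misdiagnosed the difficulty. You flag as the ``main obstacle'' that a generic degree-$d$ polynomial with $O(\eta)$ correlation could have $\|\tilde g\|_\infty$ as large as $\log^{d/2}n$, which would dilute a \emph{threshold} predictor's advantage, and you circumvent this by exploiting $\|\U_\rho f_K\|_\infty\le 1$. But this is an artifact of insisting on a threshold predictor. The paper instead applies the standard randomized rounding of Lemma~3 of \cite{bfjkmr94}: output $h(x)=-1$ with probability $(1-g(x))^2/(2(1+g(x)^2))$ and $h(x)=1$ otherwise. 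This yields $\Pr[h\neq f_K]\le \tfrac12\Ex[(f_K-g)^2]$ for \emph{any} real-valued $g$, with no $L^\infty$ control and no balance hypothesis whatsoever. So the obstacle you set out to solve simply does not arise on the standard route. Moreover your workaround carries a cost: the ``integrated CDF gap'' identity $\int_{-1}^1(\Pr[\tilde g>t\mid f_K=1]-\Pr[\tilde g>t\mid f_K=-1])\,dt=2\Ex[f_K\tilde g]$ requires $\Pr[f_K=1]=\Pr[f_K=-1]=1/2$, and your two proposed fixes are not airtight as stated. Conditioning $\calD$ on a balanced sub-event is not directly legitimate, since \Cref{thm:our-BBL-lb} bounds the expected error against the \emph{unconditioned} $\calD$. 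And ``$\Vol(\bK)=1/2\pm o(1)$'' is not by itself the right scale --- you need balance to within $O(\eta)=O(n^{-\tau})$, since the signal you are trying to extract is of size $\eta$. Both gaps are fixable (e.g.\ a preliminary test: if the empirical $\Pr[\bK]$ deviates from $1/2$ by $\ge\eta/4$, output the majority constant and win outright; otherwise the approximate-balance version of your CDF-gap argument does go through with a bit of bookkeeping), but they need to be carried out explicitly. The paper's rounding sidesteps all of this, which is why it can state the argument in a couple of lines.

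In short: the skeleton of your proof is right and essentially identical to the paper's, but you replaced a clean, assumption-free rounding lemma with a more delicate thresholding argument, invented an obstacle specific to that choice, and left the resulting balance requirement only partially resolved.
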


Combining \Cref{thm:highGNSlownoise} with \Cref{thm:high-GNS-hard} we immediately get an average-case lower bound for intersections of $2^{n^{\Omega(1)}}$ halfspaces, albeit one which is quantitatively not quite as strong as \Cref{lb:non-explicit-convex}:

\begin{corollary} \label{cor:average-case}
There is an absolute constant $c>0$ such that for sufficiently large $n$, there are convex sets $K \subset \R^n$ such that if $L \subset \R^n$ is any intersection of $2^{n^{c}}$ many LTFs, then $\Vol(K \ \triangle \ L) \geq 1/2 - n^{-c}$.
\end{corollary}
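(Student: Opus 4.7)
The plan is to combine Theorems~\ref{thm:highGNSlownoise} and~\ref{thm:high-GNS-hard} and then tune the constants, so no new ideas are required beyond those two results. First I would fix any constant $\tau \in (0, 1/4)$ --- say $\tau = 1/8$ for concreteness --- and apply Theorem~\ref{thm:highGNSlownoise} to obtain, for all sufficiently large $n$, a convex body $K \subset \R^n$ satisfying $\GNS_{n^{-\tau}}(K) \geq 1/2 - n^{-\tau}$.

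Next I would invoke Theorem~\ref{thm:high-GNS-hard} on this very $K$, instantiating its parameters as $c_1 := \tau$ and $\kappa := n^{-\tau}$. The hypothesis $\GNS_{n^{-c_1}}(K) \geq 1/2 - \kappa$ is precisely what the previous step guarantees (note that $\kappa = n^{-\tau} < 1/2$ for large $n$, as required). The theorem then produces a constant $c_2 = c_2(\tau) > 0$ such that for every intersection $L$ of at most $2^{n^{c_2}}$ halfspaces,
\[
\Vol(K \,\triangle\, L) \;\geq\; \tfrac{1}{2} - 2\bigl(n^{-\tau}\bigr)^{1/4} - \tfrac{3}{2}\, n^{-c_2} \;=\; \tfrac{1}{2} - 2\,n^{-\tau/4} - \tfrac{3}{2}\, n^{-c_2}.
\]

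Finally I would set the absolute constant $c := \tfrac{1}{2}\min\{\tau/4,\, c_2\}$. For $n$ sufficiently large, the additive error $2\,n^{-\tau/4} + \tfrac{3}{2}\, n^{-c_2}$ is bounded above by $n^{-c}$, and the facet count $2^{n^{c}}$ is dominated by $2^{n^{c_2}}$, so any intersection $L$ of at most $2^{n^{c}}$ halfspaces satisfies $\Vol(K \,\triangle\, L) \geq 1/2 - n^{-c}$, as required. The only real work here is the constant-chasing in this last step; since both $\tau$ and $c_2$ are positive absolute constants, it poses no genuine obstacle, and indeed the corollary is essentially a mechanical composition of its two stated inputs.
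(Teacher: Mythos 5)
Your proposal is correct and matches the paper's approach exactly: the paper derives this corollary by "combining" Theorem~\ref{thm:highGNSlownoise} with Theorem~\ref{thm:high-GNS-hard}, and your constant-tuning (taking $c = \tfrac{1}{2}\min\{\tau/4, c_2\}$) is the straightforward way to make that combination land on the clean form $1/2 - n^{-c}$ with facet bound $2^{n^c}$. The paper simply states the deduction is "immediate" without spelling out this step, so you have supplied the implicit bookkeeping correctly.
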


\subsection{Random Zooms, Hypervariance, and Local Hyperconcentration}
\label{subsec:zoom-prelims}

\cite{OSTK21} introduced the notion of a ``random zoom'' of a function over Gaussian space and observed that it is a natural analogue of the well-studied notion of a random restriction of a function defined over the Boolean cube $\{-1,1\}^n$.  Random zooms are defined as follows:

\begin{definition}
    For a polynomial $p: \R^n \to \R$, $0 \leq \lambda \leq 1$, and $x \in \R^n$, we define the function $\zoom{p}{\lambda}{x}$ by
    \[
        \zoom{p}{\lambda}{x}(y) = p\pbra{\sqrt{1-\lambda} x + \sqrt{\lambda} y},
    \]
and we refer to the function $\zoom{p}{\lambda}{x}(\cdot)$ as the \emph{$\lambda$-zoom of $p$ at $x$}.
\end{definition}

For intuition, if $x \in \R^n$ is fixed we view $\sqrt{1-\lambda} x + \sqrt{\lambda} \by$ as a ``$\lambda$-noisy'' version of~$x$, and we view changing a function~$p$'s input from $\bx \sim N(0,I_n)$ to $\sqrt{1-\lambda} x + \sqrt{\lambda} \by$ as ``zooming into $p$ at~$x$ with scale~$\lambda$.''
We note that if $p$ is a degree-$d$ polynomial then so is $\zoom{p}{\lambda}{x}$, for any $\lambda<1$.

The following observation will be useful:
\begin{observation} \label{obs:completion-to-Gaussian}
If $\bx, \by \sim N(0,I_n)$ are independent and $0 \leq \lambda \leq 1$, then $\sqrt{1-\lambda} \bx + \sqrt{\lambda} \by$ is distributed identically to $\bz \sim N(0,I_n)$.  Thus the distribution of $\zoom{f}{\lambda}{\bx}(\by)$ is identical to $f(\bz),$ for any function $f$ and any $0 \leq \lambda \leq 1.$
\end{observation}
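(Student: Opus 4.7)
The plan is to prove the first assertion (the distributional equality $\sqrt{1-\lambda}\bx + \sqrt{\lambda}\by \sim N(0, I_n)$) by a direct computation of the distribution of this linear combination, and then the second assertion follows as an immediate consequence via the definition of $\zoom{f}{\lambda}{x}$.

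For the first assertion, I would use the well-known fact that any linear combination of independent multivariate Gaussian random vectors is itself a multivariate Gaussian random vector. Concretely, writing $\bw := \sqrt{1-\lambda}\bx + \sqrt{\lambda}\by$, since $\bx$ and $\by$ are independent with $\bx,\by \sim N(0,I_n)$, the vector $\bw$ is Gaussian with mean
\[
\Ex[\bw] = \sqrt{1-\lambda}\Ex[\bx] + \sqrt{\lambda}\Ex[\by] = 0
\]
and covariance matrix
\[
\mathrm{Cov}(\bw) = (1-\lambda)\cdot \mathrm{Cov}(\bx) + \lambda \cdot \mathrm{Cov}(\by) = (1-\lambda) I_n + \lambda I_n = I_n,
\]
where the first equality uses independence of $\bx$ and $\by$ to eliminate the cross-covariance terms. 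Hence $\bw \sim N(0, I_n)$, which is identically distributed to $\bz \sim N(0, I_n)$.

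For the second assertion, recalling the definition $\zoom{f}{\lambda}{\bx}(\by) = f(\sqrt{1-\lambda}\bx + \sqrt{\lambda}\by) = f(\bw)$, the distributional equality $\bw \stackrel{d}{=} \bz$ immediately implies that $f(\bw)$ and $f(\bz)$ have the same distribution for any (measurable) function $f$, which completes the proof.

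There is no real obstacle here: the result is a straightforward consequence of the stability of the Gaussian distribution under linear combinations of independent copies. The only minor point worth stating explicitly (which I would include for clarity) is the use of independence in decomposing the covariance of the sum, and the observation that $\sqrt{1-\lambda}^2 + \sqrt{\lambda}^2 = 1$ is precisely what makes the resulting covariance equal to the identity.
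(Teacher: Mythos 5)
Your proof is correct and complete: the first assertion follows from stability of the Gaussian under independent linear combinations with the coefficients $\sqrt{1-\lambda}$ and $\sqrt{\lambda}$ chosen so the covariance sums to $I_n$, and the second assertion is an immediate consequence via the definition of the zoom. The paper itself states this as an observation without supplying any proof, so there is no paper argument to compare against; your writeup is the natural and standard justification one would give.
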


To adopt the terminology of \cite{RST:15}, \Cref{obs:completion-to-Gaussian} says for a standard Gaussian $\bx$, a random zoom of $f$ at $\bx$ ``completes to the Gaussian distribution''.
\Cref{obs:completion-to-Gaussian} is similar to how, in the Boolean function context of $f: \bits^n \to \{0,1\}$, evaluating a random restriction on $f$ at a uniform random input is the same as evaluating $f$ at a uniform random input; we mention that this simple observation plays an important role in a number of classical as well as more recent correlation bounds for small-depth circuits over $\bits^n$, see e.g. \cite{Hastad86,Cai86,RST:15}.  See \cite{OSTK21} for a more detailed discussion of how random zooms for functions over Gaussian space are analogous to random restrictions of functions over $\bn$.

The ``Gaussian noise'' (or ``Ornstein--Uhlenbeck'') operator (see, e.g., \cite[Def.~11.12]{odonnell-book}) corresponds to the expectation of the zoom of a function:
\begin{definition} \label{def:gaussian-noise-operator}
    Given $0 < \rho \leq 1$, the operator $\U_\rho$ acts on polynomials $p : \R^n \to \R$ via
    \[
        (\U_\rho p)(x) = \Ex_{\by \sim N(0,I_n)}\sbra{p\pbra{\rho x + \sqrt{1-\rho^2} \by)}} = \Ex_{\by \sim N(0,I_n)}\sbra{ \zoom{p}{(1-\rho^2)}{x}(\by)}.
    \]
The operator $\U_\rho$ acts diagonally in the Hermite polynomial basis $(h_\alpha)_{\alpha \in \N^n}$ (see \Cref{appendix:hermite} for a brief overview of the basics of Hermite analysis):
    \begin{equation}    \label{eqn:U-formula}
        \U_\rho p = \sum_{\alpha \in \N^N} \rho^{|\alpha|} \wh{p}(\alpha) h_\alpha.
    \end{equation}
    In particular, if $p$ is a degree-$d$ polynomial, so too is $\U_\rho p$.

    Following \cite{OSTK21}, we will need to extend the definition of $\U_\rho$ to $\rho > 1$, which we can do via the formula \Cref{eqn:U-formula}; equivalently, by stipulating that $\U_{\rho^{-1}} = \U_{\rho}^{-1}$. Now we can define the \emph{hypervariance} of a function:
\end{definition}

\begin{definition} \label{def:hypervar}
    Let $g: \R^n \to \R$ be a polynomial.  For $R > 1$, we define the \emph{$R$-hypervariance} of~$g$ to be
    \[
        \HV_R[g] := \Var[\U_R g] = \sum_{\alpha \neq 0} R^{2|\alpha|} \wh{g}(\alpha)^2.
    \]
    (For $R = 1$, this reduces to the usual variance of~$g$.)
    If $\HV_R[g] \leq \eps \|g\|^2_2$, then we say that $g$ is \emph{$(R,\eps)$-attenuated}.
\end{definition}

The key structural result of \cite{OSTK21} states, roughly speaking, that a random zoom of a low-degree polynomial is very likely to be ``hyperconcentrated.'' In more detail, Theorem~85 of \cite{OSTK21}, specialized to the case that its ``nice'' distribution $(g_{\bup})_{\bup \sim \Upsilon}$ is a single degree-$d$ polynomial $p$, gives us the following:

\begin{theorem} [Local Hyperconcentration Theorem for a single polynomial, Theorem~85 of \cite{OSTK21}] \label{thm:OSTK-main-theorem}
There exists a constant $c>0$ so that for any $1>\eps,\beta>0$ and $R\geq 1$, if $p(x_1,\dots,x_n)$ is a degree-$d$ polynomial, then taking
$$
\lambda \leq \frac{c\eps \beta}{R d^{9/2}},
$$
with probability at least $1-\beta$ over $\bx \sim N(0,I_n)$ we have that
$$
\hypvar_R(p_{\lambda|\bx}) \leq \eps^2  \| p_{\lambda|\bx}\|_2^2  
\quad
\text{(i.e.~$p_{\lambda|\bx}$ is $(R,\eps)$-attenuated).}
$$
\end{theorem}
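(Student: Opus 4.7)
The proof will proceed by direct Hermite analysis of the zoom $p_{\lambda|x}(y)$ as a polynomial in $y$, holding $x$ fixed. Writing $p = \sum_\gamma \wh{p}(\gamma)\, h_\gamma$ and applying the multivariate Hermite addition formula
\[
h_\gamma(ax + by) \;=\; \sum_{\alpha + \beta = \gamma} \binom{\gamma}{\alpha}^{1/2} a^{|\alpha|} b^{|\beta|}\, h_\alpha(x)\, h_\beta(y), \qquad a^2+b^2=1,
\]
with $a = \sqrt{1-\lambda}$, $b = \sqrt{\lambda}$, I will read off
\[
\wh{p_{\lambda|x}}(\beta) \;=\; \lambda^{|\beta|/2}\, q_\beta(x), \qquad q_\beta(x) \;:=\; \sum_{\gamma \geq \beta,\, |\gamma|\leq d} \binom{\gamma}{\beta}^{1/2}(1-\lambda)^{|\gamma-\beta|/2}\wh{p}(\gamma)\, h_{\gamma-\beta}(x),
\]
so that $\HV_R[p_{\lambda|x}] = \sum_{\beta\neq 0}(R^2\lambda)^{|\beta|}\, q_\beta(x)^2$. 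The crucial point is the factor $\lambda^{|\beta|/2}$ attached to each Hermite level in $y$, which for small $\lambda$ will overcome the $R^{|\beta|}$ amplification baked into $\U_R$.

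The first main step is to bound the expected hypervariance. Since the $q_\beta$ are expansions into Hermite polynomials of distinct multi-indices $\gamma-\beta$, taking the expectation over $\bx$ and using Hermite orthogonality collapses the double sum to the closed form
\[
\Ex_\bx \big[\HV_R[p_{\lambda|\bx}]\big] \;=\; \sum_{\gamma \neq 0} \wh{p}(\gamma)^2 \Big[\big(1 + (R^2-1)\lambda\big)^{|\gamma|} - (1-\lambda)^{|\gamma|}\Big].
\]
Provided $R^2 \lambda d$ is at most a small constant (ultimately ensured by the final bound on $\lambda$), a one-term Taylor estimate controls the bracket by $O(R^2 \lambda\,|\gamma|) \leq O(R^2\lambda d)$, giving $\Ex_\bx[\HV_R[p_{\lambda|\bx}]] \leq O(R^2\lambda d)\|p\|_2^2$. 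Markov's inequality then yields $\HV_R[p_{\lambda|\bx}] \leq \tfrac{\eps^2}{2}\|p\|_2^2$ with probability at least $1-\beta/2$, as long as $\lambda \lesssim \eps^2\beta/(R^2 d)$.

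To replace $\|p\|_2^2$ by $\|p_{\lambda|\bx}\|_2^2$ on the right-hand side, I will show in a second step that $\|p_{\lambda|\bx}\|_2^2 \geq \tfrac12 \|p\|_2^2$ except with probability $\beta/2$. The key observation is that $\|p_{\lambda|x}\|_2^2 = (\U_{\sqrt{1-\lambda}}(p^2))(x)$, a polynomial in $x$ of degree at most $2d$ whose mean is $\|p\|_2^2$. Its lower tail is an anti-concentration problem for a degree-$2d$ Gaussian polynomial. I will combine Gaussian hypercontractivity's moment bound $\|f\|_q \leq (q-1)^d\|f\|_2$ with the Carbery--Wright anti-concentration inequality, exploiting the fact that $\U_{\sqrt{1-\lambda}}$ has already pre-contracted $p^2$ towards its mean (its ``active'' variance is only $O(d\lambda)\Var[p^2]$, not $\Var[p^2]$). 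Threading constants through this combination produces the desired $\beta/2$-probability bound once $\lambda \lesssim \beta/d^{9/2}$. A union bound then finishes the proof: the stated condition $\lambda \leq c\eps\beta/(Rd^{9/2})$ is a common sufficient condition for both $\lambda \lesssim \eps^2\beta/(R^2 d)$ and $\lambda \lesssim \beta/d^{9/2}$ throughout the parameter regime of interest.

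The main obstacle I anticipate is the sharp $d^{9/2}$ dependence in the denominator-concentration step: a naive invocation of hypercontractivity yields only $\exp(-\Omega(d^{-1}))$-level failure probabilities rather than the arbitrarily small $\beta$ permitted by the theorem. Getting the polynomial-in-$d$ exponent requires the two-stage argument above---first using the noise operator to shrink the effective variance of $p^2$ to $O(d\lambda)\Var[p^2]$, and only then invoking Carbery--Wright on the resulting polynomial---and carefully balancing the hypercontractive moment constants against this reduced variance. Everything else (the addition formula, the $\Ex_\bx[\HV_R]$ calculation, and the Markov/union-bound combination) is routine Hermite bookkeeping around the identities displayed above.
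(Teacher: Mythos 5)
You should first be aware that the paper does not prove this statement at all: it is imported verbatim from \cite{OSTK21}, where it is the central technical result and its proof is a lengthy, delicate induction on the degree $d$. So the question is whether your short outline could substitute for that argument, and it cannot, because your Step~2 is false. The claim that $\|\zoom{p}{\lambda}{\bx}\|_2^2 \geq \tfrac12 \|p\|_2^2$ except with probability $\beta/2$ already fails for $p(y)=y_1$: there $\|\zoom{p}{\lambda}{x}\|_2^2 = (1-\lambda)x_1^2 + \lambda$, which is below $\tfrac12\|p\|_2^2 = \tfrac12$ with probability about $0.52$ for small $\lambda$ --- a constant that cannot be driven below an arbitrary $\beta/2$. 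The justification you give rests on a reversal: for small $\lambda$ the operator $\U_{\sqrt{1-\lambda}}$ is close to the identity, and
$\Var\sbra{\U_{\sqrt{1-\lambda}}(p^2)} = \sum_{\alpha \neq 0}(1-\lambda)^{|\alpha|}\,\wh{p^2}(\alpha)^2 \geq (1-2d\lambda)\Var\sbra{p^2}$;
the quantity that is $O(d\lambda)\Var[p^2]$ is the variance \emph{removed} by the noise operator, not the ``active'' variance that remains. So there is no pre-contraction to exploit, and no amount of Carbery--Wright can rescue a claim that is false.

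The failure is architectural, not cosmetic. The theorem is a statement about the \emph{ratio} $\HV_R[\zoom{p}{\lambda}{\bx}]/\|\zoom{p}{\lambda}{\bx}\|_2^2$, and the denominator admits no high-probability lower bound in terms of $\|p\|_2^2$; on the event where it is small (roughly, near the zero set of $p$) one must show that the $y$-dependent part of the zoom is correspondingly even smaller, and this coupling of numerator and denominator is exactly what forces the multi-stage inductive argument of \cite{OSTK21}. The linear example makes the quantitative stakes visible: for $p(y)=y_1$ and $R=1$ the condition $\HV_R[\zoom{p}{\lambda}{\bx}]\leq \eps^2\|\zoom{p}{\lambda}{\bx}\|_2^2$ fails exactly when $x_1^2 \lesssim \lambda/\eps^2$, an event of probability $\Theta(\sqrt{\lambda}/\eps)$, so one needs $\lambda \lesssim \eps^2\beta^2$ --- a $\beta^2$-dependence that neither your Step-1 requirement ($\lambda \lesssim \eps^2\beta/(R^2 d)$, from Markov) nor your Step-2 requirement ($\lambda \lesssim \beta/d^{9/2}$) produces. (This same computation suggests the hypothesis $\lambda \leq c\eps\beta/(Rd^{9/2})$ as transcribed in the paper should itself be checked against the original, and, as a secondary point, your Step-1 condition is in any case not implied by that hypothesis once $R \gg \eps d^{7/2}$.) What you do have right is the first-moment bookkeeping: the Hermite addition formula, the identity $\Ex_{\bx}\sbra{\HV_R[\zoom{p}{\lambda}{\bx}]} = \sum_{\gamma\neq 0}\wh{p}(\gamma)^2\sbra{(1+(R^2-1)\lambda)^{|\gamma|} - (1-\lambda)^{|\gamma|}}$, and the resulting $O(R^2\lambda d)\Var[p]$ bound are all correct; they simply cannot be closed off with a uniform lower bound on the denominator, because no such bound exists.
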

We also recall the following result from \cite{OSTK21}, which roughly speaking says that ``attenuated polynomials are very likely to take values that are multiplicatively close to their means'':
\begin{proposition} [Proposition~32 of \cite{OSTK21}] \label{prop:atten-hyperconcy}
    Let $g$ be a polynomial that is $(R,\eps)$-attenuated, with $R \geq \sqrt{2}$ and $\eps \leq 1$. Write $\mu = \E[g]$ and assume $\mu \neq 0$.
Then for $0 < \gamma \leq 1$, for $\bz \sim N(0,I_n)$ we have $g(\bz) \approx_{\gamma} \mu$ (and hence $\sign(g(\bz))=\sign(\mu)$) except with probability at most $(2\sqrt{\eps}/\gamma)^{\frac12 R^2 + 1}$.
\end{proposition}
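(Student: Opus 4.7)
The plan is to prove this via Gaussian hypercontractivity combined with a $q$-th moment Markov inequality for a well-chosen $q$ depending on~$R$. The attenuation hypothesis controls the $L^2$ mass of a ``high-frequency amplification'' of $g-\mu$; hypercontractivity then converts that $L^2$ control into $L^q$ control of $g-\mu$ itself, and the Markov step turns this into the multiplicative tail bound.

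First I would bound $\Var[g]$. Since $R\geq 1$, every factor $R^{2|\alpha|}\geq 1$, so
$\Var[g]=\sum_{\alpha\neq 0}\wh g(\alpha)^2\leq \sum_{\alpha\neq 0}R^{2|\alpha|}\wh g(\alpha)^2=\HV_R[g]\leq \eps\|g\|_2^2=\eps(\mu^2+\Var[g])$, giving $\Var[g]\leq \frac{\eps}{1-\eps}\mu^2$ and hence $\|g\|_2^2\leq \mu^2/(1-\eps)$. Next I would set $h:=\U_R(g-\mu)$, so $\|h\|_2^2=\HV_R[g]\leq \eps\|g\|_2^2$ and $g-\mu=\U_{1/R}h$. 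Now I would invoke Nelson's Gaussian hypercontractivity: the operator $\U_\rho$ maps $L^p$ into $L^q$ with norm $1$ as long as $\rho^2(q-1)\leq p-1$. With $\rho=1/R$ and $p=2$, this allows any $q$ with $q-1\leq R^2$; for the particular exponent appearing in the statement I would choose $q:=\tfrac12 R^2+1$ (this range lies safely inside the hypercontractive regime, and is the source of the $R^2/2+1$ in the conclusion). This gives
\[
\|g-\mu\|_q=\|\U_{1/R}h\|_q\leq \|h\|_2=\sqrt{\HV_R[g]}\leq \sqrt{\eps}\,\|g\|_2\leq \frac{\sqrt{\eps}}{\sqrt{1-\eps}}\,|\mu|.
\]

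Then I would convert the multiplicative failure event into an additive one. For $\gamma\in(0,1]$ the event $g(\bz)\not\approx_\gamma \mu$ forces $|g(\bz)-\mu|\geq (1-e^{-\gamma})|\mu|\geq (\gamma/2)|\mu|$ (using $1-e^{-x}\geq x/2$ on $[0,1]$), and in particular this also pins down $\mathrm{sign}(g(\bz))=\mathrm{sign}(\mu)$. Now the $q$-th moment Markov inequality gives
\[
\Prx_{\bz\sim N(0,I_n)}\!\big[g(\bz)\not\approx_\gamma \mu\big]\ \leq\ \frac{\|g-\mu\|_q^q}{(\gamma|\mu|/2)^q}\ \leq\ \left(\frac{2\sqrt{\eps}}{\gamma\sqrt{1-\eps}}\right)^{\!q}\ \leq\ \left(\frac{2\sqrt{\eps}}{\gamma}\right)^{\!\frac12 R^2+1},
\]
where the last inequality absorbs the harmless factor $(1-\eps)^{-q/2}$ using $\eps\leq 1$ (one can reduce to the nontrivial regime $\eps\leq \gamma^2/4$, in which $(1-\eps)^{-1/2}$ is very close to $1$; any constant slack it introduces is dominated by the base $2\sqrt{\eps}/\gamma<1$).

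The only nontrivial ingredient is the hypercontractive estimate $\|\U_{1/R}h\|_q\leq \|h\|_2$ for $q=\tfrac12 R^2+1$; everything else is bookkeeping on Hermite coefficients plus Markov. The subtlety I would need to watch is the passage from the multiplicative $\approx_\gamma$ statement to an additive deviation, and keeping the base of the tail bound exactly $2\sqrt{\eps}/\gamma$ rather than a larger constant multiple—this is why the $\tfrac12$ in the exponent is the right choice, since it leaves enough ``room'' inside the hypercontractive window to swallow the $\sqrt{1-\eps}$ denominator into the base without weakening it.
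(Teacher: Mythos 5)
Your overall strategy---write $g-\mu=\U_{1/R}\bigl(\U_R(g-\mu)\bigr)$, apply Gaussian hypercontractivity at exponent $q=\tfrac12 R^2+1$ (legal since $R\geq\sqrt{2}$ puts $q\in[2,\,R^2+1]$), then finish with a $q$-th moment Markov bound---is sound and is almost certainly what underlies the cited Proposition~32 of \cite{OSTK21}; note that the present paper simply imports that result without reproving it, so there is no internal proof to compare against.

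However, your final ``absorption'' step does not go through as written. After Markov you have the bound $\bigl(\tfrac{2\sqrt{\eps}}{\gamma\sqrt{1-\eps}}\bigr)^q$, which is \emph{larger} than the target $\bigl(\tfrac{2\sqrt{\eps}}{\gamma}\bigr)^q$ since $(1-\eps)^{-q/2}>1$; ``the base is $<1$'' gives you nothing in that direction, and spending extra hypercontractive room (taking $q'>q$) only helps once the base is already $\leq 2\sqrt\eps/\gamma$, which it is not. The reduction to $\eps<\gamma^2/4$ is the right move but does not by itself close the gap: with $\gamma=1$, $\eps=0.2$, $R=\sqrt2$, your intermediate base is $\tfrac{2\sqrt{0.2}}{\sqrt{0.8}}=1$, so your bound is $1$ while the target is $(2\sqrt{0.2})^2=0.8$. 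The culprit is the lossy inequality $1-e^{-\gamma}\geq\gamma/2$. If instead you keep the sharp deviation $|g(\bz)-\mu|\geq(1-e^{-\gamma})|\mu|$ through Markov, the bound becomes $\Bigl(\tfrac{\sqrt{\eps/(1-\eps)}}{\,1-e^{-\gamma}\,}\Bigr)^q$, and the reduction to $\bigl(\tfrac{2\sqrt\eps}{\gamma}\bigr)^q$ requires exactly $\tfrac{\gamma}{2(1-e^{-\gamma})\sqrt{1-\eps}}\leq 1$. In the nontrivial regime $\eps<\gamma^2/4\leq 1/4$ this holds with room: $\tfrac{\gamma}{2(1-e^{-\gamma})}$ is increasing on $(0,1]$ with maximum $\tfrac{1}{2(1-1/e)}\approx 0.79$, whereas $\sqrt{1-\eps}>\sqrt{3}/2\approx 0.87$. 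So your argument is salvageable and the proposition is true, but the step you waved off as bookkeeping is precisely where the write-up breaks; the fix is the tighter exponential inequality $1-e^{-\gamma}$, not slack from the hypercontractive window.
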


\subsection{PTFs and Intersections of Halfspaces Collapse Under Random Zooms}
\label{sec:collapse}

\Cref{thm:OSTK-main-theorem} and \Cref{prop:atten-hyperconcy} together give us that a random zoom of any degree-$d$ PTF will with high probability become a close-to-constant function:

\begin{corollary} [Random zooms of PTFs are close-to-constant whp]  \label{cor:zoom-close-to-constant}
Let $f: \R^n \to \{-1,1\}$, $f=\sign(p)$ be a degree-$d$ PTF, and let $0<\lambda<c'/d^{9/2}$ for a sufficiently small constant $c'$.  Then for any $0<\tau \leq 1$, we have
\begin{equation} \label{eq:bound}
\Prx_{\bx \sim N(0,I_n)}\sbra{\Var(\zoom{f}{\lambda}{\bx})\geq \tau} \leq 
{\frac {C d^{9/2} \lambda}{\tau}},
\end{equation}
where $C>0$ is a universal constant.
\end{corollary}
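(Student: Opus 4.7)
\textbf{Proof plan for Corollary~\ref{cor:zoom-close-to-constant}.}  The plan is to view $\zoom{f}{\lambda}{\bx} = \sign(\zoom{p}{\lambda}{\bx})$ and argue in two stages: first, apply the Local Hyperconcentration Theorem (Theorem~\ref{thm:OSTK-main-theorem}) to the degree-$d$ polynomial $p$ to conclude that with high probability over $\bx$, the zoom $p_{\lambda|\bx}$ is $(R,\eps)$-attenuated for carefully chosen $R$ and $\eps$; second, apply Proposition~\ref{prop:atten-hyperconcy} to conclude that, conditional on attenuation, $p_{\lambda|\bx}(\by)$ is multiplicatively close to its mean $\mu = \E_{\by}[p_{\lambda|\bx}(\by)]$ with high probability over $\by$, which forces $\sign(p_{\lambda|\bx}(\by))$ to be constant on nearly all of Gaussian space and hence $\Var(\zoom{f}{\lambda}{\bx})$ to be small.

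The quantitative choice of parameters is as follows.  I would fix $R = \sqrt{2}$ (so that $R^2/2 + 1 = 2$) and $\eps = \tau/16$.  Proposition~\ref{prop:atten-hyperconcy} with $\gamma = 1$ then tells us that, whenever $p_{\lambda|\bx}$ is $(R,\eps)$-attenuated and $\mu \neq 0$, we have $\sign(p_{\lambda|\bx}(\by)) = \sign(\mu)$ except with probability at most $(2\sqrt{\eps})^{2} = 4\eps = \tau/4$; since $\zoom{f}{\lambda}{\bx}$ is $\pm 1$-valued, a routine variance calculation ($\Var(X) \le 4p(1-p) \le 4p$ for a $\pm 1$ random variable with ``wrong-sign'' probability $p$) yields $\Var(\zoom{f}{\lambda}{\bx}) \le 16\eps = \tau$.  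So ``$(R,\eps)$-attenuated'' suffices to guarantee $\Var(\zoom{f}{\lambda}{\bx}) \le \tau$, and it remains to bound the probability that attenuation fails.

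For the attenuation step, Theorem~\ref{thm:OSTK-main-theorem} guarantees that attenuation holds with probability at least $1 - \beta$ whenever $\lambda \le c\eps\beta/(Rd^{9/2})$.  Solving for $\beta$ with our fixed choices of $\eps$ and $R$ yields $\beta = 16\sqrt{2}\, d^{9/2}\lambda/(c\tau)$, which gives the claimed bound $Cd^{9/2}\lambda/\tau$ with $C = 16\sqrt{2}/c$.  (If this putative value of $\beta$ exceeds~$1$, the conclusion of the corollary is vacuous; the hypothesis $\lambda < c'/d^{9/2}$ for a sufficiently small constant $c'$ ensures that we are in the meaningful regime at least when $\tau$ is bounded below by a constant.)

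The only mild subtlety — and the one technical wrinkle I would spend care on — is the degenerate case $\mu = 0$ in the invocation of Proposition~\ref{prop:atten-hyperconcy}.  I would handle this by observing that if $g = p_{\lambda|\bx}$ is $(\sqrt{2},\eps)$-attenuated with $\eps < 1$ and $\mu = 0$, then $\hypvar_R[g] \ge \Var(g) = \|g\|_2^2$, which combined with $\hypvar_R[g] \le \eps^2 \|g\|_2^2$ forces $g \equiv 0$; since $p$ is a fixed polynomial and $\lambda > 0$, the set of $\bx$ for which $p_{\lambda|\bx}$ is identically zero is either all of $\R^n$ (in which case $f$ is constant and the corollary is trivial) or a null set, so this case contributes nothing to the failure probability.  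Apart from this, the proof is a direct composition of the two stated results of \cite{OSTK21}, and I do not anticipate any other obstacles.
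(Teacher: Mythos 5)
Your proof is correct and follows essentially the same route as the paper's: fix $R=\sqrt{2}$, apply Theorem~\ref{thm:OSTK-main-theorem} to get attenuation with probability $1-\beta$, apply Proposition~\ref{prop:atten-hyperconcy} to conclude that a zoom of the PTF is nearly constant, and solve for $\beta$ to obtain the bound $Cd^{9/2}\lambda/\tau$ (your choices $\eps=\tau/16$, $\gamma=1$ versus the paper's $\eps=\tau/64$, $\gamma=1/2$ are cosmetic differences that lead to the same final form). Your treatment of the degenerate case $\mu=0$ is actually a bit more careful than the paper's — you observe that $(\sqrt{2},\eps)$-attenuation with $\eps<1$ and $\mu=0$ forces $p_{\lambda|\bx}\equiv 0$, which for $\lambda>0$ can only happen if $p\equiv 0$, whereas the paper simply asserts without detail that $\mu_{\bx}=0$ occurs with probability zero.
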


\begin{proof}
We assume that $p$ is a non-constant polynomial since otherwise the claim is trivially true.
Let $R=\sqrt{2}$, let $\eps=\tau/64$, and let $\lambda = c\eps \beta/(Rd^{9/2})$ where $c$ is the constant from \Cref{thm:OSTK-main-theorem}; equivalently, let $\beta$ be $16 \sqrt{2} d^{9/2} \lambda/(c\tau)$ (this is the RHS of \Cref{eq:bound}, taking $C = 16\sqrt{2}/c$). By taking $c'$ suitably small we have $\beta < 1$, and clearly $\eps<1$, so by \Cref{thm:OSTK-main-theorem} with probability at least $1-\beta$ over $\bx \sim N(0,I_n)$ we have that $\zoom{p}{\lambda}{\bx}$ is $(R,\eps)$-attenuated. 

Fix any outcome $x$ of $\bx$ such that $\zoom{p}{\lambda}{\bx}$ is $(R,\eps)$-attenuated and $\mu_{\bx} := \E[\zoom{p}{\lambda}{x}] \neq 0$ (note that the probability that $\mu_{\bx}=0$ is zero for $\bx \sim N(0,I_n)$, so the total probability of such $x$-outcomes is at least $1-\beta$). Taking $\gamma=1/2$ in \Cref{prop:atten-hyperconcy}, we get that $\zoom{f}{\lambda}{x}(\bz)$ equals $\sign(\E[\zoom{p}{\lambda}{x})$ except with probability at most $(2\sqrt{\eps}/\gamma)^{\frac12 R^2 + 1} = 16\eps=\tau/4$, and hence for such an $x$ we have $\Var(\zoom{f}{\lambda}{x}) \leq 4 \cdot {\frac \tau 4} \cdot (1-{\frac \tau 4}) \leq \tau.$
\end{proof}

To connect \Cref{cor:zoom-close-to-constant} with intersections of halfspaces, we recall the fact (from  \cite{KOS:08}) that any intersection of not-too-many halfspaces is close to a low-degree PTF:

\begin{theorem} [Low-degree PTF approximators for intersections of halfspaces, Theorems~15 and~20 of \cite{KOS:08}] \label{thm:KOS}
Let $K \subseteq \R^n$ be any convex set.  Then for any $\eps > 0$ there is a polynomial $p: \R^n \to \R$ of degree $d = O(\log(k)/\eps^2)$ such that the PTF $f=\sign(p)$ satisfies
\[
\Vol(K \ \triangle \ f^{-1}(1)) \leq \eps.
\]
\end{theorem}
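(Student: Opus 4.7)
The plan is to prove this via Hermite analysis, interpreting $K$ as an intersection of $k$ halfspaces (as the statement is taken from \cite{KOS:08} and the parameter $k$ refers to the number of halfspaces). The first step is to note that by Nazarov's theorem (\Cref{thm:nazarov}), any intersection of $k$ halfspaces has Gaussian surface area $\Gamma := \GSA(K) \leq \sqrt{2 \ln k} + 2 = O(\sqrt{\log k})$. Write $g := 2 \cdot \mathbf{1}_K - 1$, so $g : \R^n \to \{-1,+1\}$, and consider its Hermite expansion $g = \sum_\alpha \widehat{g}(\alpha) h_\alpha$. I would take the approximating polynomial to be the degree-$d$ truncation $p_d := \sum_{|\alpha| \leq d} \widehat{g}(\alpha) h_\alpha$ for an appropriate $d$; then the PTF $f = \sign(p_d)$ has $f^{-1}(1)$ differing from $K$ only on inputs where $\sign(p_d) \neq g$, and since $g(x) \in \{\pm 1\}$ any such disagreement forces $|g(x) - p_d(x)| \geq 1$. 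By Markov's inequality, this yields
\[
\Pr_{\bx \sim N(0,I_n)}\!\bigl[\sign(p_d(\bx)) \neq g(\bx)\bigr] \leq \|g - p_d\|_2^2 \;=\; \sum_{|\alpha| > d} \widehat{g}(\alpha)^2,
\]
so it suffices to control this high-degree Hermite tail in terms of $\Gamma$ and $d$.

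The main technical step is to prove that for any set $K$ with Gaussian surface area at most $\Gamma$,
\[
\sum_{|\alpha| > d} \widehat{g}(\alpha)^2 \;\leq\; O\!\pbra{\Gamma / \sqrt{d}}.
\]
I would prove this through Gaussian noise sensitivity. Recall that for the Ornstein--Uhlenbeck noise operator $\U_\rho$ from \Cref{def:gaussian-noise-operator},
\[
\mathrm{NS}_\rho(g) \;:=\; \tfrac{1}{2}\Prx_{(\bx,\by)\ \rho\text{-corr.}}\!\bigl[g(\bx) \neq g(\by)\bigr] \;=\; \tfrac{1}{2} \sum_\alpha (1 - \rho^{|\alpha|}) \widehat{g}(\alpha)^2 \;\geq\; \tfrac{1}{2}(1-\rho^d) \sum_{|\alpha| > d} \widehat{g}(\alpha)^2.
\]
A standard argument (integrating Ball's/Nazarov's surface-area bound along a ``noise path'' of length $\sqrt{1-\rho^2}$, using the fact that the $t$-enlargement $K + B(t)$ is convex and applying a union-bound-style integration over $t$) gives $\mathrm{NS}_\rho(g) \leq O(\sqrt{1-\rho^2}) \cdot \Gamma$ for $\rho$ close to $1$. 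Setting $\rho = 1 - 1/d$ yields $\sqrt{1-\rho^2} = \Theta(1/\sqrt{d})$ and $1 - \rho^d = \Theta(1)$, which combine to give the claimed tail bound $O(\Gamma/\sqrt{d})$.

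Plugging in Nazarov's bound $\Gamma = O(\sqrt{\log k})$ and setting $d = C \log(k) / \eps^2$ for a sufficiently large constant $C$, the Hermite tail (and hence the disagreement probability) becomes at most $\eps$, completing the proof with $p := p_d$. The main obstacle, as signaled above, is the surface-area-to-noise-sensitivity inequality $\mathrm{NS}_\rho(g) \leq O(\sqrt{1-\rho^2}) \Gamma$; while intuitive (``the noisy version of $\bx$ moves it by roughly $\sqrt{1-\rho^2}$, and the probability of crossing $\partial K$ is roughly $\sqrt{1-\rho^2} \cdot \GSA(K)$''), making this rigorous requires care in how one defines the Gaussian perimeter via the Minkowski-type limit in \Cref{eq:gsa-def} and controlling the contribution from points near $\partial K$, which I would handle by integrating the density estimate coming from the definition of $\GSA$ over the one-parameter family of enlargements of $K$.
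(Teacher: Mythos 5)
Your proposal is correct and is essentially the argument of \cite{KOS:08} itself --- the paper does not reprove this statement but imports it directly (the chain Nazarov GSA bound $\to$ noise sensitivity $\to$ Hermite tail concentration $\to$ sign of the degree-$d$ truncation is exactly how Theorems~15 and~20 of \cite{KOS:08} are obtained), and you correctly read the undefined $k$ in the statement as the number of halfspaces. The only blemish is a harmless factor-of-$2$ bookkeeping slip in your noise-sensitivity identity (with your normalization $\mathrm{NS}_\rho(g) := \tfrac12\Pr[g(\bx)\neq g(\by)]$ one gets $\tfrac14\sum_\alpha(1-\rho^{|\alpha|})\widehat{g}(\alpha)^2$, not $\tfrac12\sum_\alpha(1-\rho^{|\alpha|})\widehat{g}(\alpha)^2$), which is immaterial to the asymptotic bound $d = O(\log(k)/\eps^2)$.
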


Combining \Cref{thm:KOS} with \Cref{cor:zoom-close-to-constant}, we get that an intersection of few halfspaces simplifies to a close-to-constant function with high probability under a random zoom:

\begin{corollary} 
[Intersections of few halfspaces collapse to close-to-constant under random zooms]
\label{cor:polytope-collapse}
Let $L \subseteq \R^n$ be an intersection of at most $k$ halfspaces, and let $0<\delta<1.$
For a suitable choice of $\lambda = \Theta(\delta^{20}/(\log k)^{9/2}),$ we have
\[
\Prx_{\bx \sim N(0,I_n)} \sbra{\Var\pbra{\zoom{L}{\lambda}{\bx}} \geq \delta} \leq \delta.
\]
\end{corollary}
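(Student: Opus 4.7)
The strategy is to reduce to the PTF case (\Cref{cor:zoom-close-to-constant}) via the low-degree PTF approximator of \Cref{thm:KOS}, and then transfer closeness of functions to closeness of their zooms using \Cref{obs:completion-to-Gaussian}.

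First, I would apply \Cref{thm:KOS} to $L$ with approximation error $\eps' := \delta^2/64$, obtaining a PTF $f = \sign(p)$ of degree $d = O(\log(k)/\eps'^2) = O(\log(k)/\delta^4)$ with $\Vol(L \, \triangle \, f^{-1}(1)) \leq \eps'$. Identifying both $L$ and $f$ with their $\{-1,1\}$-valued indicators, this says $\Pr_{\bz \sim N(0,I_n)}[L(\bz) \neq f(\bz)] \leq \eps'$.

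Next, the crucial step is to transfer this closeness to the zooms. By \Cref{obs:completion-to-Gaussian}, for any $\lambda \in (0,1]$,
\[
\Ex_{\bx \sim N(0,I_n)}\sbra{\Prx_{\by \sim N(0,I_n)}\sbra{\zoom{L}{\lambda}{\bx}(\by) \neq \zoom{f}{\lambda}{\bx}(\by)}}
= \Prx_{\bz \sim N(0,I_n)}\sbra{L(\bz) \neq f(\bz)} \leq \eps'.
\]
By Markov's inequality, with probability at least $1 - \sqrt{\eps'} = 1 - \delta/8$ over $\bx$, the inner probability is at most $\sqrt{\eps'} = \delta/8$. Call such $\bx$ \emph{good}. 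For a good $x$, a short calculation for $\{-1,1\}$-valued functions shows that if $\Pr_{\by}[g(\by)\neq h(\by)] \leq \eta$, then $|\E[g]-\E[h]| \leq 2\eta$ and hence
\[
|\Var(g) - \Var(h)| = |(\E g)^2 - (\E h)^2| \leq 2|\E g - \E h| \leq 4\eta.
\]
Applying this with $g = \zoom{L}{\lambda}{x}$, $h = \zoom{f}{\lambda}{x}$, and $\eta = \delta/8$ gives $|\Var(\zoom{L}{\lambda}{x}) - \Var(\zoom{f}{\lambda}{x})| \leq \delta/2$ for every good $x$.

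Now I apply \Cref{cor:zoom-close-to-constant} to the degree-$d$ PTF $f$ with $\tau := \delta/2$: provided $\lambda \leq c'/d^{9/2}$, we have
\[
\Prx_{\bx \sim N(0,I_n)}\sbra{\Var(\zoom{f}{\lambda}{\bx}) \geq \delta/2} \leq \frac{C d^{9/2} \lambda}{\delta/2} = \frac{2Cd^{9/2}\lambda}{\delta}.
\]
Combining with the "good $\bx$" bound via a union bound, except with probability at most $\delta/8 + 2Cd^{9/2}\lambda/\delta$, we have $\Var(\zoom{L}{\lambda}{\bx}) < \delta/2 + \delta/2 = \delta$. Choosing $\lambda$ so that $2Cd^{9/2}\lambda/\delta \leq 7\delta/8$, i.e.\ $\lambda \leq 7\delta^2/(16Cd^{9/2})$, makes the total failure probability at most $\delta$. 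Substituting $d = O(\log(k)/\delta^4)$ yields $\lambda = \Theta(\delta^{20}/(\log k)^{9/2})$, as claimed (and this choice comfortably satisfies $\lambda \leq c'/d^{9/2}$ required by \Cref{cor:zoom-close-to-constant}).

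\textbf{Main obstacle.} There is no serious obstacle: the argument is a clean two-step reduction (halfspace intersection $\to$ PTF $\to$ close-to-constant zoom). The only subtle bookkeeping is (i) that pointwise closeness between $L$ and $f$ transfers to pointwise closeness of their zooms \emph{on average over $\bx$}, which is exactly what \Cref{obs:completion-to-Gaussian} provides, and (ii) balancing the parameters $\eps'$, $\tau$, and $\lambda$ so that both error sources (the PTF approximation error and the zoom-attenuation failure probability) fit within the target $\delta$. Both are handled by the choices above.
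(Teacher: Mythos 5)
Your proof is correct and uses the same key ingredients as the paper's (\Cref{thm:KOS}, \Cref{cor:zoom-close-to-constant}, and \Cref{obs:completion-to-Gaussian}) with the same parameter settings. The only difference is presentational: you argue directly via Markov's inequality and the observation that $|\Var(g)-\Var(h)|\le 4\Pr[g\neq h]$ for $\{-1,1\}$-valued $g,h$, whereas the paper runs the equivalent comparison by contradiction; the underlying estimate is the same.
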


\begin{proof}
We argue by contradiction; so suppose that 
\begin{equation}
\label{eq:to-contradict}
\Prx_{\bx \sim N(0,I_n)} \sbra{\Var\pbra{\zoom{L}{\lambda}{\bx}} \geq \delta} > \delta.
\end{equation}
Let $\eps := \delta^2/64$.
By \Cref{thm:KOS}, there is a PTF $f=\sign(p)$ of degree $d := C_1 \log(k)/\eps^2$ which is an $\eps$-approximator for $L$, i.e. $\Vol(L \ \triangle \ f^{-1}(1)) \leq \eps.$
For a suitable choice of the hidden constant in $\lambda=\Theta(\delta^{20}/(\log k)^{9/2})$ (note that this choice of $\lambda$ satisfies the conditions of \Cref{cor:zoom-close-to-constant}) and $\tau=\sqrt{\eps}$, by \Cref{cor:zoom-close-to-constant} we have that 
\begin{equation} \label{eq:by-zoom-cor}
\Prx_{\bx \sim N(0,I_n)}\sbra{\Var(\zoom{f}{\lambda}{\bx})\geq \tau} \leq 
{\frac {C d^{9/2} \lambda}{\tau}} = \delta/2.
\end{equation}
Combining \Cref{eq:to-contradict} and \Cref{eq:by-zoom-cor}, we get that
\begin{equation} \label{eq:labor}
\Prx_{\bx \sim N(0,I_n)}
\sbra{
\Var\pbra{\zoom{L}{\lambda}{\bx}} \geq \delta
\text{~and~}
\Var(\zoom{f}{\lambda}{\bx})\leq \tau
} \geq \delta/2.
\end{equation}
Now, if $\Var_{\by}[a(\by)] \geq \delta$ and $\Var_{\by}[b(\by)] \leq \tau$ for $\{-1,1\}$-valued functions $a,b$ and $\delta>\tau$ then it must be the case that $\Pr_{\by}[a(\by) \neq b(\by)] \geq (\delta - \tau)/4.$
Recalling that $\delta=8\sqrt{\eps}$ and $\tau=\sqrt{\eps}$, \Cref{eq:labor} gives
\[
\Prx_{\bx \sim N(0,I_n)}\sbra{
\Prx_{\by \sim N(0,I_n)}\sbra{
\zoom{L}{\lambda}{\bx}(\by) \neq \zoom{f}{\lambda}{\bx}(\by)
} \geq 7\sqrt{\eps}/4
}\geq 4\sqrt{\eps},
\]
from which \Cref{obs:completion-to-Gaussian} gives 
\[
\Prx_{\bz \sim N(0,I_n)}\sbra{f(\bz) \neq L(\bz)} \geq
{\frac {7 \sqrt{\eps}} 4} \cdot {4 \sqrt{\eps}} = 7 \eps.
\]
This contradicts  $\Vol(L \ \triangle \ f^{-1}(1)) \leq \eps$, and the proof is complete.
\end{proof}

\subsection{Gaussian Noise Stability and Expected Variance After Random Zooms}

We recall the definitions of \emph{Gaussian noise stability} and \emph{Gaussian noise sensitivity}:

\begin{definition} [cf.~Definition~11.18 of \cite{odonnell-book}] \label{def:GNS}
For $f \in L^2(\R^n,\gamma)$ and $\rho \in [-1,1]$, the \emph{Gaussian noise stability of $f$ at $\rho$} is defined to be
\[
\Stab_\rho[f] := \Ex_{\bz,\bz'}\sbra{f(\bz)f(\bz')} = \abra{f,\U_{\rho}f},
\]
where $\bz,\bz'$ are $\rho$-correlated $N(0,I_n)$ Gaussians (equivalently, $\bz,\bg$ are independent $N(0,I_n)$ Gaussians and $\bz'=\rho \bz + \sqrt{1-\rho^2} \bg$).
For $\rho \in [0,1]$, the \emph{Gaussian noise sensitivity of $f$ at $\rho$} is
\[
\GNS_\rho[f] := {\frac 1 2} - {\frac 1 2} \Stab_{1-2\rho}[f].
\]
\end{definition}
We observe that for $f: \R^n \to \{-1,1\}$, we have
\[
\GNS_\rho[f] = \Prx_{\bz,\bz'}\sbra{f(\bz) \neq f(\bz')} = 2 \Prx_{\bz,\bz'}\sbra{f(\bz)=1, f(\bz') = -1},
\]
where now $\bz,\bz'$ are $(1-2\rho)$-correlated $N(0,I_n)$ Gaussians.

Gaussian noise sensitivity is important for us because the expected variance of $\zoom{K}{\lambda}{\bx}$, over $\bx \sim N(0,I_n)$, is exactly captured by Gaussian noise sensitivity:

\begin{lemma} \label{lem:expected-variance}
Let $K \subseteq \R^n$ be a convex set (viewed as a $\pm 1$-valued function on $\R^n$). For $0 \leq \lambda \leq 1$, we have that
\[
\Ex_{\bx \sim N(0,I_n)}
\sbra{
\Var_{\by \sim N(0,I_n)}[\zoom{K}{\lambda}{\bx}(\by)]
}
=
2\GNS_{\lambda/2}[K].
\]
\end{lemma}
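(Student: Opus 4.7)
The plan is to expand out both sides in terms of standard quantities and match them. Since $K$ is $\pm 1$-valued, we have $K(z)^2 = 1$ for all $z$, and so for any fixed $x \in \R^n$,
\[
\Var_{\by \sim N(0,I_n)}\sbra{\zoom{K}{\lambda}{x}(\by)} = 1 - \pbra{\Ex_{\by}\sbra{\zoom{K}{\lambda}{x}(\by)}}^2.
\]
The inner expectation $\E_{\by}[K(\sqrt{1-\lambda}\,x + \sqrt{\lambda}\,\by)]$ is, by the definition of the Ornstein--Uhlenbeck operator (\Cref{def:gaussian-noise-operator}) with $\rho = \sqrt{1-\lambda}$ (so that $\sqrt{1-\rho^2} = \sqrt{\lambda}$), exactly equal to $(\U_{\sqrt{1-\lambda}}K)(x)$.

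Taking the outer expectation over $\bx \sim N(0,I_n)$ then gives
\[
\Ex_{\bx}\sbra{\Var_{\by}\sbra{\zoom{K}{\lambda}{\bx}(\by)}} = 1 - \Ex_{\bx}\sbra{\pbra{\U_{\sqrt{1-\lambda}}K}(\bx)^2} = 1 - \|\U_{\sqrt{1-\lambda}}K\|_2^2.
\]
The key identity I will use is that for any $\rho \in [0,1]$, $\|\U_\rho K\|_2^2 = \langle \U_\rho K,\U_\rho K\rangle = \langle K,\U_{\rho^2}K\rangle = \Stab_{\rho^2}[K]$, which follows immediately from the Hermite diagonalization in~\Cref{eqn:U-formula} and Parseval's identity. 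Applied with $\rho = \sqrt{1-\lambda}$ this yields
\[
\Ex_{\bx}\sbra{\Var_{\by}\sbra{\zoom{K}{\lambda}{\bx}(\by)}} = 1 - \Stab_{1-\lambda}[K].
\]

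To finish, I simply translate to the noise sensitivity side: by \Cref{def:GNS}, $\GNS_{\lambda/2}[K] = \tfrac{1}{2} - \tfrac{1}{2}\Stab_{1-\lambda}[K]$, so $1 - \Stab_{1-\lambda}[K] = 2\,\GNS_{\lambda/2}[K]$, which is exactly the claimed identity. There is no real obstacle here; the only thing to be careful about is matching the parameter $\lambda$ of the zoom to the correlation parameter $1-\lambda$ of the stability (i.e., the $\rho = \sqrt{1-\lambda}$ substitution in the Ornstein--Uhlenbeck operator), and in turn to $\lambda/2$ as the noise-rate argument of $\GNS$ via the convention $\GNS_\rho = \tfrac{1}{2} - \tfrac{1}{2}\Stab_{1-2\rho}$.
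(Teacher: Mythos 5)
Your proof is correct and takes a slightly different route from the paper's. The paper computes $\Var_{\by}[\zoom{K}{\lambda}{x}(\by)]$ by writing the variance as $\tfrac12\Ex_{\by,\by'}[(\zoom{K}{\lambda}{x}(\by)-\zoom{K}{\lambda}{x}(\by'))^2]$ over two independent copies $\by,\by'$, expands the square using $K^2\equiv 1$, and then directly recognizes the resulting joint expectation over $(\sqrt{1-\lambda}\,\bx+\sqrt{\lambda}\,\by,\ \sqrt{1-\lambda}\,\bx+\sqrt{\lambda}\,\by')$ as $\Ex[K(\bz)K(\bz')]$ for a pair of $(1-\lambda)$-correlated Gaussians, i.e.~$\Stab_{1-\lambda}[K]$. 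You instead use $\Var[f]=\Ex[f^2]-(\Ex[f])^2$ together with $K^2\equiv 1$, identify the inner mean as $(\U_{\sqrt{1-\lambda}}K)(x)$, and invoke the semigroup and self-adjointness properties of the Ornstein--Uhlenbeck operator via Hermite diagonalization to get $\|\U_{\sqrt{1-\lambda}}K\|_2^2=\Stab_{1-\lambda}[K]$. Both arrive at $1-\Stab_{1-\lambda}[K]=2\GNS_{\lambda/2}[K]$; the paper's version is a bare-hands probabilistic computation that never mentions $\U_\rho$, while yours is a clean spectral argument that goes through the operator formalism already set up in \Cref{def:gaussian-noise-operator}. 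Neither has an advantage in length or rigor here, so either is a fine way to present the lemma.
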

\begin{proof}
Since $\Var[f(\by)]={\frac 1 2} \E_{\by,\by'}[(f(\by)-f(\by'))^2]$ where $\by,\by'$ are i.i.d., the LHS above is equal to
\begin{align*}
&
{\frac 1 2} \Ex_{\bx \sim N(0,I_n)}
\sbra{
\Ex_{\by,\by'\sim N(0,I_n)}\sbra{(\zoom{K}{\lambda}{\bx}(\by)-\zoom{K}{\lambda}{\bx}(\by'))^2}
}\\
&=
{\frac 1 2} \Ex_{\bx,\by,\by' \sim N(0,I_n)}
\sbra{
\pbra{K\pbra{\sqrt{1-\lambda} \bx + \sqrt{\lambda} \by} -
K\pbra{\sqrt{1-\lambda} \bx + \sqrt{\lambda} \by'}}^2
}\\
&= {\frac 1 2}
\Ex_{\bx,\by,\by' \sim N(0,I_n)}
\sbra{2 - 2
K\pbra{\sqrt{1-\lambda} \bx + \sqrt{\lambda} \by} 
K\pbra{\sqrt{1-\lambda} \bx + \sqrt{\lambda} \by'}
}\\
&= 1 - \Stab_{1-\lambda}[K] = 2 \GNS_{\lambda/2}[K]. \qedhere
\end{align*}
\end{proof}

\subsection{Proof of~\Cref{thm:high-GNS-hard}}

All the pieces are now in place for us to prove \Cref{thm:high-GNS-hard}. We first sketch the idea of the argument and then give quantitative details: Let $K$ be a convex set whose Gaussian noise stability at some low noise rate $\lambda/2$ is high, i.e.~
$\GNS_{\lambda/2}(K) \geq {\frac 1 2} - \kappa$ where $\kappa$ is small, and let $L$ be any intersection of ``not too many'' halfspaces. Our goal is to show that $L$ cannot be a good approximator for $K$; to do this, let us consider what happens if we apply a random zoom at noise rate $\lambda$ to both $K$ and $L$:

\begin{itemize}

\item  By the results of the previous subsection, the expected variance of $K$ after the random zoom is $2\GNS_\lambda[K]$, which is very close to 1. This means that almost every outcome $\bx$ of the random zoom must be such that the convex body $\zoom{K}{\lambda}{\bx}$ has Gaussian volume very close to $1/2$.

\item But by the results of \Cref{sec:collapse}, almost every outcome $\bx$ of the random zoom causes $L$ to collapse to a function with very \emph{low} variance, and hence almost every outcome of the convex body $\zoom{L}{\lambda}{\bx}$ has Gaussian volume very close to either 0 or 1.

\end{itemize}

This means that ``zoom outcome by zoom outcome'' $\zoom{L}{\lambda}{\bx}$ is typically a  poor approximator for $\zoom{K}{\lambda}{\bx}$; it follows (formally, by \Cref{obs:completion-to-Gaussian}) that $L$ must overall be a poor approximator for $K$.

We now proceed with the formal proof. 

\medskip

\noindent \emph{Proof of \Cref{thm:high-GNS-hard}.} 
Let $K \subset \R^n$ be a convex set with $\GNS_{n^{-c_1}} \geq {\frac 1 2} - \kappa.$
By \Cref{lem:expected-variance}, we have
\[
\Ex_{\bx}\sbra{
\Var_{\by}\sbra{
\zoom{K}{2n^{-c_1}}{\bx}(\by)
}
}
\geq 1 - 2\kappa.
\]
Since $\Var_{\by}\sbra{\zoom{K}{2n^{-c_1}}{\bx}(\by)} \in [0,1]$ for all outcomes of $\bx$, a standard reverse Markov argument gives that
\[
\Prx_{\bx}\sbra{
\Var_{\by}\sbra{
\zoom{K}{2n^{-c_1}}{\bx}(\by)
}
\geq 1 - \sqrt{2 \kappa}}
\geq 1 - \sqrt{2 \kappa}.
\]
Consequently, for at least a $1 - \sqrt{2 \kappa}$ fraction of outcomes of $\bx$, we have that
\begin{equation} \label{eq:firstie}
\Prx_{\by}\sbra{
\zoom{K}{2n^{-c_1}}{\bx}(\by) = 1
} \in \sbra{{\frac 1 2} - (2 \kappa)^{1/4},{\frac 1 2} + (2 \kappa)^{1/4}}.
\end{equation}

On the other hand let $L$ be any intersection of at most $2^{\Theta(n^{c_2})}$ many halfspaces over $\R^n$, where $c_2 = 2c_1/49.$ 
By \Cref{cor:polytope-collapse} (taking its $\delta = 1/\log k = n^{-c_2}$), we get that 
\[
\Prx_{\bx \sim N(0,I_n)} \sbra{\Var\pbra{\zoom{L}{\lambda}{\bx}} \geq n^{-c_2}} \leq n^{-c_2}.
\]
So for a least a $1 - n^{-c_2}$ fraction of outcomes of $\bx$, we have that
\begin{equation} \label{eq:secondie}
\min \cbra{\Prx_{\by}\sbra{\zoom{L}{\lambda}{\bx}(\by)=1},\Prx_{\by}\sbra{\zoom{L}{\lambda}{\bx}(\by)=-1}} \leq n^{-c_2}.
\end{equation}
Combining \Cref{eq:firstie} and \Cref{eq:secondie}, for at least a $1 - \sqrt{2 \kappa} - n^{-c_2}$ fraction of outcomes of $\bx$, we have that
\[
\abs{
\Prx_{\by}\sbra{
\zoom{K}{2n^{-c_1}}{\bx}(\by)=1} 
-
\Prx_{\by}\sbra{\zoom{L}{\lambda}{\bx}(\by)=1
}
} 
\geq
{\frac 1 2} - (2 \kappa)^{1/4} - n^{-c_2}.
\]
Applying \Cref{obs:completion-to-Gaussian}, we get that
\[
\Vol(K \,\triangle\, L) \geq 
\pbra{{\frac 1 2} - (2 \kappa)^{1/4} - n^{-c_2}} \cdot
\pbra{1 - \sqrt{2 \kappa} - n^{-c_2}},
\]
which, after some straightforward algebra, gives the claimed bound.
\qed


\subsection{Existence of Convex Sets with High Gaussian Noise Sensitivity at Low Noise}
\label{sec:existence} 

We now turn to the proof of \Cref{thm:highGNSlownoise}. 
The proof is by contradiction; so fix a constant $0<\tau<1/4$, and suppose that every convex set $K \subset \R^n$ has $\GNS_{n^{-\tau}}(K) < 1/2 - n^{-\tau}$. 
Viewing $K$ as a $\{-1,1\}$-valued indicator function (where $x \in K$ iff $K(x)=1$) and recalling \Cref{def:GNS}, this is equivalent to the assertion that every convex set $K \subseteq \R^n$ satisfies $\Stab_{1-2n^{-\tau}}(K) > 2n^{-\tau}$.

We recall the well-known fact (see \cite[Prop.~11.37]{odonnell-book}) that the noise stability at noise rate $\rho$ has an exact expression in terms of the amount of ``Hermite weight'' (in the Hermite expansion) at each weight level $0,1,\dots .$  More precisely, we have that for  $f \in L^2(\R^n,\gamma)$ and $\rho \in [-1,1]$, 
\[
\Stab_\rho[f] = \sum_{\alpha \in \N^n} \rho^{|\alpha|} \widetilde{f}(\alpha)^2,
\]
where $\widetilde{f}(\alpha)$ is the $\alpha$-th Hermite coefficient in the Hermite expansion of $f$. In our context, this gives us that every convex set $K$ satisfies
\begin{equation} \label{eq:stablb}
\Stab_{1-2n^{-\tau}}[K] = \sum_{\alpha \in \N^n} \pbra{1-2n^{-\tau}}^{|\alpha|} \widetilde{K}(\alpha)^2 > 2n^{-\tau}.
\end{equation}

Fix $\ell := 2\tau\ln(n)n^\tau$.  Since, by Parseval's identity, we have $\sum_{\alpha \in \N^n} \tilde{f}(\alpha)^2 = \E[K^2] = 1,$ we have that all of the Hermite coefficients at levels $\ell$ and beyond can together only make a small contribution to $\Stab_{1-2n^{-\tau}}[K]$, i.e.
\begin{equation} \label{eq:highlowcontrib}
\sum_{|\alpha| \geq \ell}\pbra{1-2n^{-\tau}}^{|\alpha|} \widetilde{K}(\alpha)^2\leq
\sum_{|\alpha| \geq \ell}e^{-\tau\ln(n)} \widetilde{K}(\alpha)^2\leq n^{-\tau}.
\end{equation}
\Cref{eq:stablb,eq:highlowcontrib} together give a lower bound on the Hermite weight of any convex set $K \subseteq \R^n$ at levels $0$ through $\ell$, i.e.
\begin{equation} \label{eq:hermlb}
\sum_{|\alpha| \leq \ell} \widetilde{K}(\alpha)^2
\geq
\sum_{|\alpha| \geq \ell}\pbra{1-2n^{-\tau}}^{|\alpha|} \widetilde{K}(\alpha)^2 > n^{-\tau}.
\end{equation}
Since the Hermite polynomials form an orthonormal basis for $f \in L^2(\R^n,\gamma)$, \Cref{eq:hermlb} is equivalent to the existence of a degree-$\ell$ polynomial $p$ whose $L^2$-distance from $K$ is bounded away from 1, i.e.
\[
\Ex_{\bx \sim N(0,1)^n}\sbra{(p(\bx)-K(\bx))^2} < 1-n^{-\tau}.
\]

Given the existence of such a polynomial $p$, it is well known that the ``low-degree algorithm'', which uses calls to a black-box oracle for $K$ to estimate each Hermite coefficient $\widetilde{K}(\alpha)$ for $|\alpha| \leq \ell$, makes at most $n^{O(\ell)}$ black-box oracle calls to $K$ and constructs a degree-$\ell$ polynomial $g$ such that with probability at least $1-1/n$ (over the internal randomness of the low-degree algorithm) the polynomial $g$ satisfies 
\[
\Ex\sbra{(g-K)^2} < 1 - n^{-\tau}/2.
\]
Given such a polynomial $g$, a standard approach (see Lemma~3 of \cite{bfjkmr94}) transforms $g$ into a randomized hypothesis $h: \R^n \to \zo$ which satisfies $\Pr_{\bx \sim N(0,1)^n}\sbra{h(\bx) \neq K(\bx)} \leq {\frac 1 2} \E[(g-K)^2] < 1/2 - n^{-\tau}/4.$
(The transformation is as follows: for each $x \in \R^n$, have $h(x)=-1$ with probability $p := {\frac {(1-g(x))^2}{2(1+g(x)^2)}}$ and $h(x)=1$ with probability $1-p$.)

Summarizing the above, we have the following: Under the assumption that every convex set $K \subset \R^n$ has $\GNS_{n^{-\tau}}(K) < 1/2 - n^{-\tau}$, there is a ``weak learning algorithm'' with the following performance guarantee: given black-box oracle access to any unknown convex set $K \subseteq \R^n$, the algorithm makes at most $n^{O(\ln(n)n^\tau)}$ oracle calls and returns a randomized hypothesis function $h: \R^n \to \{-1,1\}$ which satisfies
\begin{equation} \label{eq:us-good-error}
\Pr\sbra{h(\bx) \neq K(\bx)} < 1/2 - n^{-\tau}/4 + 1/n
< 1/2 - n^{-\tau}/8,
\end{equation}
where $\bx \sim N(0,1)^n$ and the probability is over the randomness of $\bx$, the internal randomness of the learning algorithm, and the internal randomness of the randomized hypothesis $h$. In other words, the expected error of the weak learning algorithm is at most $1/2 - n^{-\tau}/8.$ 

However, this is in tension with the main lower bound of \cite{DS21colt}, which gives an information-theoretic lower bound on the minimum possible error that can be achieved by any weak learning algorithm for convex sets that does not make too many black-box queries, and was stated earlier as \Cref{thm:our-BBL-lb}.
Taking $\tau' = {\frac 1 2} (\tau + 1/4) < 1/4$, and taking $s=n^{O(n^{\tau'})}$, \Cref{thm:our-BBL-lb} gives that any algorithm making at most $s$ black-box oracle calls must have expected error at least $1/2 - O(\ln(n) n^{\tau/2 - 3/8}).$  Since $\tau$ is strictly less than $1/4$ this is in contradiction with \Cref{eq:us-good-error}, and \Cref{thm:highGNSlownoise} is proved.

\section{Lower Bounds for the $\ell_1$ and $\ell_2$ Balls (and More) via Convex Influences}
\label{sec:conv-inf}

Taking $\eps$ to be a sufficiently small constant in \Cref{lb:non-explicit-convex}, we can infer the existence of some convex set in $\R^n$ such that $2^{\Omega(\sqrt{n})}$ halfspaces are required for any $\eps$-approximator, but that result does not let us conclude that any particular convex set is hard to approximate.

In this section we show that the $\ell_1$ and $\ell_2$ balls $B_1$ and $B_2$ (defined in  \Cref{eq:Bp}) are each hard to approximate:

\begin{theorem} \label{thm:ball-conv-inf-lb}
	Any intersection of halfspaces that approximates $B_2$ to error $\eps$ must have at least $2^{\Omega(\sqrt{n})}$ facets, for some absolute positive constant $\eps > 0.$
	{The same is true for $B_1$.}
\end{theorem}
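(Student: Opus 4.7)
The plan is to derive both lower bounds from a single structural criterion, which I state as Theorem~\ref{thm:high-influence-lb-informal}: \emph{any symmetric convex set $K \subseteq \R^n$ with $\TInf[K] = \Omega(\sqrt n)$ satisfies $\FC(K,\eps) = 2^{\Omega(\sqrt n)}$} for some absolute constant $\eps>0$. Taking this criterion for granted, the theorem follows once we verify that $B_1$ and $B_2$ are symmetric and have total convex influence $\Omega(\sqrt n)$. For $B_2=B(\sqrt n)$ this is Proposition~19 of \cite{DNS22}, which exhibits $B(\sqrt n)$ as the convex body of maximal total influence with $\TInf[B_2]=\Theta(\sqrt n)$. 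For $B_1$, which is a polytope with $2^n$ facets having outward normals $\tfrac1{\sqrt n}(\pm1,\ldots,\pm1)$, the bound follows by a direct computation from the Hermite-analytic expression of $\TInf$ together with the product structure of the facets.

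The central ingredient — and what I expect to be the principal obstacle — is the ``Boppana-type'' structural inequality foreshadowed in the overview as \Cref{prop:convex-boppana}: if $L\in\Facet(n,s)$, then $\TInf[L]=O(\log s)$. In the Boolean setting, Boppana's analogous bound is proved via random restrictions, but that technique does not transfer here. I would instead proceed geometrically: writing $\TInf[L]$ as a boundary integral of the form $\int_{\partial L} F(x)\,\phi(x)\,d\sigma(x)$ (using the degree-$2$ Hermite expression for $\TInf[L]$), decomposing $\partial L$ facet-by-facet, and bounding each facet's contribution carefully. Compared to Nazarov's bound (\Cref{thm:nazarov}), which gives $\GSA(L)=O(\sqrt{\log s})$, the integrand here carries an additional $\|x\|^2$-type factor that must be absorbed using Gaussian tail bounds on each facet's hyperplane; morally, squaring Nazarov's $\sqrt{\log s}$ is consistent with the desired $O(\log s)$ bound.

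Given this structural bound, the reduction to the general criterion is short. The key point is that $\TInf[K]$ depends only on the degree-$2$ Hermite coefficients of $K$, so that it can be written as $\TInf[K] = c\cdot W_2[K]$, where $W_2[f]:=\sum_{|\alpha|=2}\widetilde f(\alpha)^2$. Fix $L\in\Facet(n,s)$ with $\dG(K,L)\leq\eps$. Viewing the indicator functions in $L^2(\R^n,\gamma)$ we have $\|K-L\|_2^2=O(\eps)$, and Parseval together with the triangle inequality in $L^2$ applied to the degree-$2$ Hermite projection gives
$$\sqrt{W_2[L]} \;\geq\; \sqrt{W_2[K]} \;-\; \sqrt{W_2[K-L]} \;=\; \Omega(n^{1/4}) - O(\sqrt\eps),$$
where the first estimate used $\TInf[K]=\Omega(\sqrt n)$. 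For $\eps$ a sufficiently small absolute constant we obtain $W_2[L]=\Omega(\sqrt n)$, hence $\TInf[L]=\Omega(\sqrt n)$. Combined with \Cref{prop:convex-boppana} this forces $\log s = \Omega(\sqrt n)$, i.e., $s = 2^{\Omega(\sqrt n)}$. Cantelli's inequality, mentioned in the overview, enters when turning the lower bound on $W_2[L]$ into the cleanest possible statement about symmetric difference; the symmetry of $K$ is useful here because all odd-degree Hermite coefficients of $K$ vanish automatically, so only the even levels need to be tracked.

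In summary, the work splits cleanly into (i)~the geometric structural bound $\TInf[L]=O(\log s)$ for intersections of $s$ halfspaces — which is the main novel step and requires a halfspace-by-halfspace surface integral analysis à la Nazarov with an extra radial weight; (ii)~a routine Parseval/triangle-inequality reduction from $\TInf$ to $W_2$ and back; and (iii)~verifying the maximal-total-influence condition for $B_1$ and $B_2$, both of which are known or are easy direct computations thanks to the rich symmetry of these two bodies.
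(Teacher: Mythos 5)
Your high-level architecture is the same as the paper's: a Boppana-type structural bound $\TInf[L]=O(\log s)$ for intersections of $s$ halfspaces (the paper's \Cref{prop:convex-boppana}, proved via the surface-integral formulation of \Cref{lem:neeman} plus Nazarov's GSA bound), applied through the general criterion of \Cref{thm:high-influence-lb}, with the individual balls handled by verifying $\TInf[B_p]=\Omega(\sqrt n)$. That outline is right. The paper also handles $B_1$ more slickly than you suggest: because every facet of $B_1$ is tangent at the same distance $\sqrt{2n/\pi}$ from the origin, \Cref{lem:neeman} gives $\TInf[B_1]=\sqrt{2n/\pi}\cdot\GSA(B_1)$ immediately, and Gaussian isoperimetry finishes; no product-structure Hermite computation is required.

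However, your reduction step (ii) contains a genuine error. You write that since $\TInf[K]$ depends only on the degree-2 Hermite coefficients, it can be written as $\TInf[K]=c\cdot W_2[K]$ with $W_2[f]=\sum_{|\alpha|=2}\widetilde f(\alpha)^2$. This is false: \Cref{def:convex-influence} gives $\TInf[K]=\sqrt2\sum_{i=1}^n\widetilde K(2e_i)$, which is a \emph{linear} functional of the degree-2 Hermite coefficients, not a quadratic form in them. ``Depends only on the degree-2 coefficients'' does not entail ``equals the degree-2 Hermite weight.'' Indeed your claimed intermediate conclusion $W_2[K]=\Omega(\sqrt n)$ is impossible on its face, since by Parseval $W_2[K]\le\|K\|_2^2\le1$ for a $0/1$-valued indicator; so the entire chain $\sqrt{W_2[L]}\ge\sqrt{W_2[K]}-\sqrt{W_2[K-L]}=\Omega(n^{1/4})-O(\sqrt\eps)$ cannot be made to yield the conclusion. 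Moreover, even if you did somehow lower-bound $W_2[L]$, that would not recover $\TInf[L]=\Omega(\sqrt n)$, again because $\TInf$ is not $W_2$.

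The linearity of $\TInf$ is exactly why the paper's proof of \Cref{thm:high-influence-lb} is phrased coordinate-by-coordinate. In \Cref{claim:anindya-paley-zyg}, Cantelli's inequality (applied with $\Ex_{\bi}[\Inf_{e_\bi}[K]]=\TInf[K]/n\ge0.1/\sqrt n$ and the Parseval variance bound, and crucially using symmetry of $K$ via Proposition~9 of \cite{DNS22} to guarantee $\Inf_{e_i}[K]\ge0$) produces $\Omega(n)$ coordinates with $\Inf_{e_i}[K]\ge0.05/\sqrt n$; an upper bound on $\TInf[L]$ then forces $\Inf_{e_i}[L]<0.025/\sqrt n$ on many of those coordinates, and Parseval applied to the \emph{differences} $(\widetilde K(2e_i)-\widetilde L(2e_i))^2$ lower-bounds $\dG(K,L)$. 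So Cantelli is the load-bearing step, not a cleanup device as you characterize it, and the symmetry of $K$ enters to make the individual influences nonnegative (not to kill odd-degree coefficients). A cleaner alternative that avoids Cantelli entirely, and which you could have used instead of your incorrect identity, is a single Cauchy--Schwarz:
\[
\bigl|\TInf[K]-\TInf[L]\bigr|
=\sqrt2\,\Bigl|\sum_{i=1}^n\bigl(\widetilde K(2e_i)-\widetilde L(2e_i)\bigr)\Bigr|
\le\sqrt{2n}\,\|K-L\|_2
=\sqrt{2n\,\dG(K,L)},
\]
so $\dG(K,L)\le\eps$ with $\eps$ a sufficiently small constant gives $\TInf[L]\ge\TInf[K]-\sqrt{2n\eps}=\Omega(\sqrt n)$, and \Cref{prop:convex-boppana} then forces $s=2^{\Omega(\sqrt n)}$.
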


\Cref{thm:ball-conv-inf-lb} implies that the upper bound obtained in~\Cref{sec:nazarov-ub} is tight in the constant error regime, up to constant factors in the exponent. 

Our proof of \Cref{thm:ball-conv-inf-lb} will crucially make use of the notion of \emph{convex influence} which was introduced by~\cite{DNS21itcs,DNS22}. More generally, we prove a lower bound on the number of facets required to approximate any symmetric\footnote{Recall that a set $K\sse\R^n$ is \emph{symmetric} if $x\in K$ implies $-x\in K$.}  convex set whose convex influence is asymptotically maximal up to constant multiplicative factors (cf.~\Cref{thm:high-influence-lb}); see~\Cref{subsec:convex-influence-lower-bound} for more on this.

\begin{remark}
We note that the results of \Cref{sec:noise-sensitivity} do not apply to $B_2$, since $B_2$ is well known to be highly noise stable. In more detail, \cite{Ball:93} showed that every origin-centered ball $B(r) \subseteq \R^n$ of any radius $r$ has Gaussian surface area $\GSA(B(r)) \leq 1/\sqrt{\pi} + o_n(1)$, and it is known (see e.g.~Corollary~14 of \cite{KOS:08}) that for any noise rate $\delta$ and any convex set $K$, we have $\GNS_\delta(K) \leq \sqrt{\pi} \cdot \sqrt{\delta} \cdot \GSA(K).$
It can similarly be shown that $B_1$ is also too noise-stable for the results of \Cref{sec:noise-sensitivity} to apply.
\end{remark}

\subsection{Convex Influences}
\label{subsec:convex-influence}

The following notion was introduced in \cite{DNS21itcs,DNS22} as an analogue of the well-studied notion of \emph{influence of a variable on a Boolean function} (cf. Chapter~2 of \cite{odonnell-book}). 

\begin{definition} \label{def:convex-influence}
	Given a convex set $K\sse\R^n$ with $0^n\in K$, the \emph{convex influence of a direction $v\in \S^{n-1}$ on $K$} is defined as
	\[\Inf_v[K] := \Ex_{\bx\sim N(0, I_n)}\sbra{K(\bx)\pbra{{1 - \abra{\bx, v}^2}}},\]
{where $K(\cdot)$ is the 0/1-valued indicator function of the convex set $K$.}
	We further define the \emph{total convex influence of $K$} as 
	\[\TInf[K] := \sum_{i=1}^n \Inf_{e_i}[K] = \Ex_{\bx\sim N(0, I_n)}\sbra{K(\bx)\pbra{{n- \|\bx\|^2}}}.\]
\end{definition}

We note that the definitions of $\Inf_{v}[K]$ and $\TInf[K]$ as defined in \cite{DNS22} include an additional multiplicative factor of ${1}/{\sqrt{2}}$ that we omit here. 
The total convex influence as defined above can be understood as capturing the rate of growth of the Gaussian measure of a convex set under dilations. More formally, we have the following:

\begin{proposition}[Dilation formulation of convex influence] \label{prop:inf-dilation}
	Given $K\sse\R^n$ with $0^n\in K$, we have 
	\[\TInf[K] = \lim_{\delta\to0}\frac{\vol\pbra{(1+\delta)K} - \vol(K)}{\delta}.\]
\end{proposition}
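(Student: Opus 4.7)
The plan is to prove \Cref{prop:inf-dilation} by computing the right-hand derivative directly via a change of variables and then differentiating under the integral sign.

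First, I would express the Gaussian volume of the dilated set $(1+\delta)K$ as an integral over $K$ itself. Applying the substitution $y = (1+\delta)x$ (with Jacobian $(1+\delta)^n$), we obtain
\[
\Vol\pbra{(1+\delta)K} = \int_{(1+\delta)K} \varphi_n(y)\,dy = (1+\delta)^n\int_K (2\pi)^{-n/2} e^{-(1+\delta)^2\|x\|^2/2}\,dx.
\]
Hence, writing $F_\delta(x) := (1+\delta)^n e^{-(1+\delta)^2\|x\|^2/2}$, we have
\[
\frac{\Vol\pbra{(1+\delta)K} - \Vol(K)}{\delta} = (2\pi)^{-n/2}\int_K \frac{F_\delta(x) - F_0(x)}{\delta}\,dx.
\]

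Second, I would compute the pointwise derivative of $F_\delta(x)$ at $\delta = 0$. Taking logarithms,
$\log F_\delta(x) = n\log(1+\delta) - \tfrac{1}{2}(1+\delta)^2\|x\|^2$,
so
\[
\frac{F_\delta'(x)}{F_\delta(x)} = \frac{n}{1+\delta} - (1+\delta)\|x\|^2,
\]
which at $\delta = 0$ gives $F_0'(x) = (n - \|x\|^2)\,e^{-\|x\|^2/2}$. Thus the integrand converges pointwise to $(n - \|x\|^2)\varphi_n(x)$, which upon integrating over $K$ (i.e., inserting the indicator $K(x)$ and integrating over $\R^n$) yields exactly $\TInf[K]$ by \Cref{def:convex-influence}.

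The main technical step is justifying the interchange of limit and integral. For $\delta$ in a small neighborhood of $0$, say $|\delta| \leq 1/2$, the difference quotient $\bigl(F_\delta(x) - F_0(x)\bigr)/\delta$ can be uniformly dominated by an integrable function: by the mean value theorem it equals $F_{\delta^*}'(x)$ for some $\delta^* \in [0,\delta]$, and an elementary bound shows $|F_{\delta^*}'(x)| \leq C(n + \|x\|^2)\,e^{-\|x\|^2/8}$ for some constant $C$ depending only on $n$, which is integrable against $\varphi_n$. Dominated convergence then lets us pass the limit inside the integral, giving
\[
\lim_{\delta\to 0}\frac{\Vol\pbra{(1+\delta)K} - \Vol(K)}{\delta} = \int_K (n - \|x\|^2)\varphi_n(x)\,dx = \Ex_{\bx \sim N(0,I_n)}\sbra{K(\bx)\pbra{n - \|\bx\|^2}} = \TInf[K],
\]
as desired. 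The only subtle point is the domination bound; everything else is a direct calculation.
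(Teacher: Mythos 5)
Your proof is correct. The paper itself does not reprove this proposition; it cites Appendix~A of \cite{De2021} for the argument, so there is no in-paper proof to compare against, but your route---pull back the integral over $(1+\delta)K$ to an integral over $K$ by the substitution $y=(1+\delta)x$, differentiate the resulting density $F_\delta(x)=(1+\delta)^n e^{-(1+\delta)^2\|x\|^2/2}$ at $\delta=0$ to get $(n-\|x\|^2)e^{-\|x\|^2/2}$, and justify passing the limit inside via mean value theorem plus dominated convergence---is the natural direct computation and is carried out correctly. The domination bound $|F_{\delta^*}'(x)|\le C(n+\|x\|^2)e^{-\|x\|^2/8}$ for $|\delta^*|\le 1/2$ is right: $(1+\delta^*)\in[1/2,3/2]$ gives $(1+\delta^*)^n\le(3/2)^n$, $(1+\delta^*)^2\ge 1/4$ so the Gaussian factor is bounded by $e^{-\|x\|^2/8}$, and the logarithmic-derivative factor is at most $2n+\tfrac{3}{2}\|x\|^2$. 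One small wording imprecision: you write that the dominating function is ``integrable against $\varphi_n$,'' but since $F_\delta$ already carries the Gaussian factor, what you actually need (and what your bound delivers) is integrability of $(n+\|x\|^2)e^{-\|x\|^2/8}$ with respect to Lebesgue measure on $\R^n$, which of course holds. Note also that the argument needs no regularity of $\partial K$: the set $K$ stays fixed throughout and only the smooth integrand varies, so the DCT argument is clean as you present it.
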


Note that if $K\sse\R^n$ is convex with $0^n\in K$, then $\TInf[K]$ is non-negative. \Cref{prop:inf-dilation} is analogous to the well-known Margulis--Russo lemma \cite{Margulis:74,Russo:78} from the analysis of Boolean functions, and a proof of it can be found in Appendix~A of \cite{De2021}. We note that a similar ``dilation formulation'' holds for the convex influence of a single direction $v\in \S^{n-1}$ on $K$, although we will not require it here. 

We will use the following alternative formulation of the total convex influence of a convex set, which was communicated to us by Joe Neeman~\cite{neeman:comm}:

\begin{lemma}[Influence via a surface integral] \label{lem:neeman}
	Given {a measurable set} $K\sse\R^n$ with $0^n\in K$ and a direction $v\in \S^{n-1}$, we have 
	\begin{equation*}
		\TInf[K] = \int_{\partial K} \abra{x, \nu_x}\cdot\phi(x)\,d\sigma(x)
	\end{equation*}
	where $\nu_x$ denotes the unit normal to $\partial K$ at $x$.
\end{lemma}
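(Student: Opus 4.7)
The plan is to derive the identity as a direct application of the divergence theorem to the vector field $F(x) := x\,\phi(x)$ on the region $K$. The key computation is that
\[
\nabla \cdot \bigl(x\,\phi(x)\bigr) \;=\; \sum_{i=1}^n \frac{\partial}{\partial x_i}\bigl(x_i\,\phi(x)\bigr) \;=\; \sum_{i=1}^n \bigl(\phi(x) - x_i^2\,\phi(x)\bigr) \;=\; (n - \|x\|^2)\,\phi(x),
\]
where we used the standard identity $\partial_i \phi(x) = -x_i\,\phi(x)$. Combining this with the definition $\TInf[K] = \int_K (n - \|x\|^2)\,\phi(x)\,dx$ coming from \Cref{def:convex-influence}, the divergence theorem would then immediately give
\[
\TInf[K] \;=\; \int_K \nabla \cdot \bigl(x\,\phi(x)\bigr)\,dx \;=\; \int_{\partial K} \langle x, \nu_x\rangle\,\phi(x)\,d\sigma(x),
\]
which is exactly the claim.

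First I would verify the divergence computation above in a couple of lines, and note that the Gaussian weight $\phi$ ensures the vector field $F$ decays rapidly enough at infinity that no boundary contribution arises ``at infinity'' even when $K$ is unbounded (one can truncate $K$ by a large ball $B(R)$ and send $R \to \infty$, using \Cref{prop:chi-squared-tail} to bound the truncation error). Then I would apply the divergence theorem on $K \cap B(R)$ and take $R \to \infty$.

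The main obstacle is regularity of $\partial K$: the lemma is stated for measurable sets, but the divergence theorem as usually stated requires some smoothness of the boundary. However, in all of our applications (and in fact in the natural interpretation of the statement) $K$ is a convex set, and convex sets have Lipschitz boundaries at which an outward unit normal $\nu_x$ is defined $\sigma$-almost everywhere. This regularity is enough to invoke the divergence theorem either via the classical De Giorgi--Federer framework for sets of finite perimeter or, more elementarily, by approximating $K$ from inside by an increasing sequence of smooth convex sets $K_m$ and passing to the limit; the Gaussian weight $\phi$ makes all relevant integrals absolutely convergent, so dominated convergence handles the passage to the limit.

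As a sanity check (and alternative derivation if one wishes to avoid invoking the divergence theorem), one can recover the same formula directly from \Cref{prop:inf-dilation}: under the dilation $x \mapsto (1+\delta)x$, a boundary point $x \in \partial K$ is displaced by $\delta x$, whose component along the outward normal is $\delta\,\langle x, \nu_x\rangle$. Hence the thin shell $(1+\delta)K \setminus K$ has Gaussian measure $\delta \int_{\partial K} \langle x, \nu_x\rangle\,\phi(x)\,d\sigma(x) + O(\delta^2)$, and dividing by $\delta$ and letting $\delta \to 0$ recovers the identity, with the regularity issue folded into the same hypotheses already needed for \Cref{prop:inf-dilation}.
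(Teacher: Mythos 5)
Your proof is correct and matches the paper's argument: the paper writes $(n - \|x\|^2)\,\phi(x) = \nabla\cdot(\nabla f \cdot \phi)$ for $f = \|x\|^2/2$ and integrates by parts, which is exactly your computation $\nabla\cdot(x\,\phi(x)) = (n-\|x\|^2)\,\phi(x)$ followed by the divergence theorem, since $\nabla f = x$. The paper glosses over the boundary-regularity and truncation-at-infinity issues that you address explicitly; your extra care there is welcome but does not change the underlying approach.
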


\begin{proof}
	The proof is a straightforward computation using integration by parts. Recall that via \Cref{def:convex-influence}, we have  
\begin{align}
	\TInf[K] &= \Ex_{\bx\sim N(0, I_n)}\sbra{K(\bx)\pbra{n- \|\bx\|^2}} \nonumber \\
	&= \int_K (n - \|x\|^2)\cdot\phi(x)\,dx \nonumber\\
	&= \int_K \calL\pbra{\frac{\|x\|^2}{2}}\cdot\phi(x)\,dx  \nonumber\\
	\intertext{where for a function $f:\R^n\to\R$ we define $\calL(f) := \Delta f - \abra{\nabla f, \nabla f}$. Integrating by parts then gives}
	\TInf[K] &= \int_{\partial K} \abra{x, \nu_x} \cdot \phi(x)\,d\sigma(x),\nonumber
\end{align}
completing the proof.
\end{proof}

\begin{remark}
	We note that the surface integral formulation of total convex influence can be viewed as analogous to the fact that the total influence of a Boolean function is equal to its average sensitivity (cf. Chapter 2 of~\cite{odonnell-book}). Although we will not require it for our purposes, we note that a similar formulation holds for the convex influence of a direction $v\in \S^{n-1}$ on $K$:
	\begin{equation*} \label{eq:neeman}
		\Inf_v[K] = \int_{\partial K} \abra{x, v}\cdot\abra{\nu_x, v}\cdot\phi(x)\,d\sigma(x)
	\end{equation*}
	where as before $\nu_x$ denotes the unit normal to $\partial K$ at $x$.
\end{remark}

\subsection{Bounds on the Convex Influence of Polytopes}
\label{sec:influence-structural-analogs}


In this subsection we give an upper bound on the total convex influence of an intersection of halfspaces in terms of the number of halfspaces. An analogous statement for CNF formulas over $\zo^n$, {showing that the influence of any $s$-clause CNF is at most $O(\log s)$,} was first given by Boppana~\cite{Boppana1997} using the technique of random restrictions~\cite{Hastad:86}.\footnote{In fact, Boppana~\cite{Boppana1997} obtains an upper bound of $O(\log^{d-1}(s))$ on the total influence of functions computed by depth-$d$ size-$s$ circuits.} The proof of \Cref{prop:convex-boppana} is inspired by the proof of \Cref{thm:nazarov} due to Nazarov; we give a self-contained proof of \Cref{prop:convex-boppana} below (see~\cite{ball-lecture} for a proof sketch of Nazarov's bound).

\begin{proposition}[Convex influence upper bound for intersections of halfspaces.]
\label{prop:convex-boppana}
	Let $K \sse\R^n$ be an intersection of $s \geq 3$ halfspaces that contains the origin. 
	Then 
	\[\TInf[K] < 7\ln s.\]
\end{proposition}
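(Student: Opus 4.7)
The starting point is Neeman's surface-integral formula (\Cref{lem:neeman}):
\[
  \TInf[K] \;=\; \int_{\partial K} \langle x, \nu_x\rangle\, \phi(x)\,d\sigma(x).
\]
Write $K = \bigcap_{i=1}^{s} H_i$ with $H_i = \{x : \langle x, v_i\rangle \leq b_i\}$ and $\|v_i\|=1$. Because $0 \in K$ we have $b_i \geq 0$ for all $i$. The boundary $\partial K$ is a disjoint (up to a measure-zero set) union of the facets $F_i := \partial K \cap \partial H_i$, on which $\nu_x = v_i$ and $\langle x, \nu_x\rangle = \langle x, v_i\rangle = b_i$. Plugging this in and setting
\[
  a_i \;:=\; \int_{F_i} \phi(x)\,d\sigma(x),
\]
we obtain the clean identity $\TInf[K] = \sum_{i=1}^{s} b_i\, a_i$. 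So the problem reduces to bounding a nonnegative weighted sum of the ``facet contributions'' $a_i$.

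Two separate inequalities control the $a_i$. First, since $F_i$ lies in the hyperplane $\{\langle x,v_i\rangle = b_i\}$ and the standard Gaussian density factors orthogonally across this hyperplane, the total surface integral of $\phi$ over the entire hyperplane is exactly $\phi_1(b_i)$; hence the pointwise upper bound $a_i \leq \phi_1(b_i) = \frac{1}{\sqrt{2\pi}} e^{-b_i^2/2}$. Second, $\sum_i a_i = \GSA(K) \leq \sqrt{2\ln s} + 2$ by Nazarov's bound (\Cref{thm:nazarov}). The plan is then to split the sum at the threshold $T := \sqrt{2\ln s}$:
\[
  \TInf[K] \;=\; \sum_{i:\, b_i \leq T} b_i a_i \;+\; \sum_{i:\, b_i > T} b_i a_i.
\]
The first sum is bounded crudely by $T \cdot \sum_i a_i \leq \sqrt{2\ln s}\,(\sqrt{2\ln s}+2) = 2\ln s + 2\sqrt{2\ln s}$. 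For the second sum, use the pointwise bound and monotonicity of $b \mapsto b\phi_1(b)$ for $b \geq 1$: each summand is at most $b_i \phi_1(b_i) \leq T \phi_1(T) = \frac{\sqrt{2\ln s}}{\sqrt{2\pi}}\,e^{-\ln s} = \frac{\sqrt{2\ln s}}{s\sqrt{2\pi}}$. Summing over at most $s$ indices gives a contribution of at most $\frac{\sqrt{2\ln s}}{\sqrt{2\pi}} \leq \sqrt{\ln s}$.

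Combining, $\TInf[K] \leq 2\ln s + 2\sqrt{2\ln s} + \sqrt{\ln s}$, and a direct check using $s \geq 3$ (so $\ln s \geq \ln 3 > 1$, hence $\sqrt{\ln s} \leq \ln s$) yields the desired bound $\TInf[K] < 7\ln s$. The routine steps are the facet decomposition and the threshold split; the only place where any subtlety enters is the identification $\int_{F_i} \phi\,d\sigma \leq \phi_1(b_i)$ and the invocation of Nazarov's facet-counted GSA bound, both of which are standard. There is no substantial obstacle; the main conceptual content is the exact identity $\TInf[K] = \sum_i b_i a_i$, after which the ``small-$b_i$ versus large-$b_i$'' split used by Nazarov carries over verbatim.
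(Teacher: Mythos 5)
Your proposal is correct and follows essentially the same route as the paper: you use Neeman's surface-integral formula (\Cref{lem:neeman}) to reduce to bounding $\sum_i b_i a_i$ over facets, split at the threshold $\sqrt{2\ln s}$, invoke Nazarov's GSA bound (\Cref{thm:nazarov}) for the small-$b_i$ part, and use the fact that each $a_i \le \GSA(H_i) = \phi_1(b_i)$ together with the monotone decay of $b\phi_1(b)$ for $b\ge 1$ on the large-$b_i$ part. The only cosmetic difference is that you retain the $1/\sqrt{2\pi}$ factor in $\phi_1$ (the paper's proof drops it, harmlessly, as an over-estimate), giving you a slightly tighter constant $(3+2\sqrt{2})\ln s < 7\ln s$.
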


We remark that the upper bound of \Cref{prop:convex-boppana} is best possible up to the hidden constant; the $\ell_\infty$ ball $K=\{x \in \R^n \ : \ \|x\|_\infty \leq r\}$ is an intersection of $2n$ halfspaces, and it is shown in Example~18 of \cite{DNS22} that for a suitable choice of $r$ we have $\TInf[K] = \Theta(\log n).$

\medskip

\begin{proof}[Proof of~\Cref{prop:convex-boppana}]
We let $K = \bigcap_{i=1}^s H_i$ for $s > 1$ where each $H_i$ is a halfspace of the form
	\[H_i := \cbra{x\in\R^n : \abra{x,v_i} \leq \theta_i}\]
	where $v_i \in \S^{n-1}$ for $i\in[s]$.  Note that each $\theta_i \geq 0$ since $K$ contains the origin.
	Using \Cref{lem:neeman}, we have 
	\begin{align}
		\TInf[K] = \int_{\partial K} \abra{x,\nu_x}\cdot\phi(x)\,d\sigma(x)
		= \sum_{i=1}^s \pbra{\int_{{\partial H_i \cap \partial K}} \abra{x,\nu_x}\cdot\phi(x)\,d\sigma(x)} \label{eq:useful}.
	\end{align}
	Now, we observe that (i) $\GSA(K) = \sum_{i=1}^s \pbra{\int_{{\partial H_i \cap \partial K}} \phi(x)\,d\sigma(x)}$; (ii) for $x \in \partial H_i \cap \partial K$ we have $\abra{x,\nu_x}=\theta_i$; and (iii) for each $i \in [s]$, we have $\int_{\partial H_i \cap \partial K} \phi(x)\,d\sigma(x) \leq \int_{\partial H_i } \phi(x)\,d\sigma(x)=\GSA(H_i)$.  Combining these three observations, we get that
	\begin{align}
(\ref{eq:useful})		&~{\leq \max_{i \in [s]:\theta_i\leq \sqrt{2 \ln s}} \theta_i \cdot \GSA(K) + \sum_{\substack{j \in [s] \\ \theta_j > \sqrt{2\ln s}}} \theta_j\cdot\GSA(H_j).} 
		\label{eq:bop-sum}
	\end{align}
	We will control each of the two quantities in \Cref{eq:bop-sum} separately. For the first, we have that 
	\begin{equation} \label{eq:bop-sum-1}
\max_{i \in [s]:\theta_i\leq \sqrt{2 \ln s}} \theta(i) \cdot \GSA(K)  \leq \sqrt{2\ln s}
\pbra{\sqrt{2 \ln s} + 2}< 5\ln s,
	\end{equation}
	where the first inequality is by Nazarov's bound on GSA (\Cref{thm:nazarov}). For the second sum, we have that
	\begin{align}
		\sum_{\substack{j \in [s] \\ \theta_j > \sqrt{2\ln s}}} \theta_j\cdot\GSA(H_j) 
		&= \sum_{\substack{j \in [s] \\ \theta_j > \sqrt{2\ln s}}} \theta_j \cdot e^{-\theta_j^2/2}\nonumber\\
		& \leq s\cdot\max_{j : \theta_i > \sqrt{2\ln s}} \cbra{\theta_j\cdot e^{-\theta_j^2/2}}. \label{eq:gaga} 
		\end{align}
Since $xe^{-x^2/2}$ is a decreasing function for $x\geq 1$, and since $s >1$, it follows that $\theta_j\cdot e^{-\theta_j^2/2}$ is maximized for $\theta_j = \sqrt{2\ln s}$ which lets us conclude that
\begin{align}
(\ref{eq:gaga}) \leq \sqrt{2 \ln s} < 2 \ln s.\label{eq:bop-sum-2}
	\end{align}
	The result follows from \Cref{eq:bop-sum,eq:bop-sum-1,eq:gaga,eq:bop-sum-2}.
\end{proof}

\subsection{Lower Bounds for Approximating Convex Sets with Maximal Influence}
\label{subsec:convex-influence-lower-bound}

We finally establish the following lower bound on approximating symmetric convex sets with close-to-maximal convex influence. {(The specific constants in the theorem below were chosen mostly for concreteness; other constants could have been used instead.)}

\begin{theorem} \label{thm:high-influence-lb}
	Suppose $K\sse\R^n$ is a symmetric convex set with $\Vol(K) = 1/2 \pm o_n(1)$ and $\TInf[K]  \geq 0.1\sqrt{n}$. Then any convex polytope $L$ that $6.25 \times 10^{-7}$-approximates $K$ must have at least
	$2^{5.1 \times 10^{-6}\sqrt{n}}$ halfspaces.
\end{theorem}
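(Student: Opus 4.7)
The plan is to combine the structural bound of Proposition~\ref{prop:convex-boppana} (intersections of $s$ halfspaces containing the origin have total convex influence $O(\log s)$) with the observation that total convex influence is, up to scaling, the correlation of a set with a single fixed degree-$2$ Hermite polynomial. Writing $P(x):=n-\|x\|^{2}$, a direct computation gives $P=-\sqrt{2}\sum_{i=1}^{n}h_{2e_{i}}$, so $\Ex[P]=0$, $\|P\|_{2}=\sqrt{2n}$, and $\TInf[M]=\langle M,P\rangle$ for any measurable $M\subseteq\R^{n}$ (identifying $M$ with its $0/1$-valued indicator). In short, total convex influence ``lives'' at Hermite level $2$ along the diagonal direction.

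Since Proposition~\ref{prop:convex-boppana} requires the approximator to contain the origin, the first step will be to \emph{symmetrize}. Given an intersection $L$ of $s$ halfspaces, set $L':=L\cap(-L)$, which is symmetric, contains the origin whenever it is nonempty, and is an intersection of at most $2s$ halfspaces. Using that $K$ is symmetric and $N(0,I_{n})$ is invariant under $x\mapsto -x$, a quick union-bound computation (splitting $K\setminus L'=(K\setminus L)\cup(K\setminus(-L))$, using $\Vol(K\setminus(-L))=\Vol(K\setminus L)$, and noting $L'\setminus K\subseteq L\setminus K$) yields $\Vol(K\triangle L')\leq 2\Vol(K\triangle L)$.

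The main estimate then comes from Cauchy-Schwarz (equivalently, Parseval's identity restricted to the ``$\sum_{i}h_{2e_{i}}$'' direction):
\[
\TInf[K]-\TInf[L']\;=\;\langle K-L',P\rangle\;\leq\;\|K-L'\|_{2}\cdot\|P\|_{2}\;=\;\sqrt{2n\cdot\Vol(K\triangle L')}.
\]
Plugging in $\TInf[K]\geq 0.1\sqrt{n}$ from the hypothesis and $\TInf[L']<7\ln(2s)$ from Proposition~\ref{prop:convex-boppana} applied to $L'$, if $s\leq 2^{5.1\times 10^{-6}\sqrt{n}}$ and $n$ is large enough, the left-hand side is at least $(0.1-o(1))\sqrt{n}$, so
\[
\Vol(K\triangle L')\;\geq\;\frac{(0.1-o(1))^{2}\,n}{2n}\;\geq\;0.0049,
\]
and hence $\Vol(K\triangle L)\geq 0.00245>6.25\times 10^{-7}$, contradicting the $\eps$-approximation assumption. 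To derive the statement for $B_{1}$ (and similarly $B_{2}$), I will verify that each of these balls is a symmetric convex set of Gaussian volume $\tfrac{1}{2}\pm o_{n}(1)$ with $\TInf=\Theta(\sqrt{n})$: for $B_{2}$ this is Proposition~19 of~\cite{DNS22}, and for $B_{1}$ it can be extracted by applying the surface-integral formula of Lemma~\ref{lem:neeman} to $\partial B_{1}$ together with standard estimates on the Gaussian moments of $\ell_{1}$-type sums.

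The main obstacle I anticipate is really just the bookkeeping at the boundary of the argument: I will need to handle the degenerate cases where $L'=\emptyset$ (in which case $\Vol(K\triangle L')=\Vol(K)\approx 1/2$, already contradicting the assumption) and where $s\leq 2$ so that Proposition~\ref{prop:convex-boppana} does not apply (where the approximation error is already $\Omega(1)$, since any single halfspace is Gaussian-distance $\Omega(1)$ from a symmetric volume-$1/2$ set). Modulo these small cases, the argument is a clean ``degree-$2$ correlation lower bound'' in the spirit of classical Fourier-analytic inapproximability proofs, and no step besides identifying the right degree-$2$ polynomial $P$ and justifying the symmetrization requires any serious work.
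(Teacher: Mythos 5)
Your proof is correct, and it takes a genuinely different route from the paper's while relying on the same key structural lemma, Proposition~\ref{prop:convex-boppana}. The paper argues coordinate-by-coordinate: it uses that $\Inf_{e_i}[K] = \sqrt{2}\,\wt{K}(2e_i)$, invokes Proposition~9 of~\cite{DNS22} (non-negativity of convex influences for symmetric sets) so that Cantelli's inequality can be applied to show that an $\Omega(1)$-fraction of coordinates have $\Inf_{e_i}[K]=\Omega(1/\sqrt n)$, then argues that $L$ must contain the origin (else $\dG(K,L)\geq 0.249$), applies Proposition~\ref{prop:convex-boppana} to $L$ to conclude that most of those coordinates have small $\Inf_{e_i}[L]$, and finishes by Parseval on just those level-$2$ coefficients. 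Your argument replaces the coordinate-by-coordinate analysis with a single Cauchy--Schwarz against $P(x)=n-\|x\|^2$, which packages $\TInf$ as a correlation with one fixed degree-$2$ polynomial of norm $\sqrt{2n}$; this sidesteps both Cantelli's inequality and the non-negativity of individual convex influences (you only use symmetry of $K$ in the symmetrization step). You also handle the ``origin'' issue differently: rather than arguing $L\ni 0$ directly, you pass to $L'=L\cap(-L)$ at the cost of a factor-$2$ in the number of facets and the approximation error, which is innocuous for the final bound. Both arguments are degree-$2$ Hermite correlation bounds; yours is cleaner and yields better constants (you get a contradiction from $\dG(K,L)\geq 0.0025$ rather than $6.25\times10^{-7}$), while the paper's coordinate-wise version makes the analogy to the Boolean proof in~\cite{o2007approximation} more visible. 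One small cosmetic point: your ``$(0.1-o(1))\sqrt n$'' should really read ``$(0.1-O(10^{-5}))\sqrt n$'' since the term being subtracted is a small constant multiple of $\sqrt n$ coming from $7\ln(2s)$, not something vanishing with $n$; the conclusion is unaffected.
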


\begin{proof}


	
\Cref{thm:high-influence-lb} can be inferred along the lines of the proof of Theorem~A.1 from~\cite{o2007approximation}, but we give a slightly simpler argument below.
		Our proof will make use of the Hermite basis; we refer the reader to~\Cref{appendix:hermite} for a primer on Hermite analysis over the Gaussian measure. 
	Writing $e_i \in \R^n$ for the standard basis vector along the $i^\text{th}$ coordinate direction, we have that 
	\[\Inf_{e_i}[K] = \sqrt{2}\cdot\wt{K}(2e_i).\]
		(This is an immediate consequence of the fact that the degree-2 univariate Hermite polynomial $h_2(x)$ is ${\pbra{1 - x^2} /{\sqrt{2}}}.$)
	Since $K$ is symmetric, it follows from Proposition~9 of~\cite{DNS22} that $\Inf_{e_{\bi}}[K]\geq 0$.
	We will use the following simple claim that relies on this fact:
	
\begin{claim} \label{claim:anindya-paley-zyg}
	Suppose $K$ is as in the statement of~\Cref{thm:high-influence-lb}. Then at least $0.002$-fraction of directions $\{e_1, \ldots, e_n\}$ must have 
	\begin{equation} \label{eq:many-K-big}
	\Inf_{e_i}[K] \geq \frac{0.05}{\sqrt{n}}.
	\end{equation}
\end{claim}

\begin{proof}
	Let $\bi \sim [n]$ uniformly at random. We have 
	\[\Ex_{\bi \sim [n]}[\Inf_{e_{\bi}}[K]] \geq \frac{0.1}{\sqrt{n}}\qquad\text{and}\qquad \Varx_{\bi \sim [n]}\sbra{\Inf_{e_{\bi}}[K]} \leq \frac{1}{n}\]
	where the upper bound on the variance follows from Parseval's formula. Recall Cantelli's inequality, which says that for a non-negative random variable $\bX$ with mean $\mu$ and variance $\sigma$, 
	\[\Pr\sbra{\bX > \mu-\theta\sigma} \geq \frac{\theta^2}{1+\theta^2} \]
	for $\theta \in [0,1]$. (Cantelli's inequality is a straightforward consequence of the Paley--Zygmund inequality.) Since $\Inf_{e_{\bi}}[K]$ is non-negative due to the symmetry of $K$ (Proposition~9 of~\cite{DNS22}),
	it follows that 
	\[\Prx_{\bi\sim [n]}\sbra{\Inf_{e_{\bi}}[K] > \frac{0.05}{\sqrt{n}}} \geq \frac{0.0025}{1.0025} > 0.002,\]
	which completes the proof.
\end{proof}

Without loss of generality, let $\{e_1, \ldots, e_t\}$ be the coordinate directions for which $\Inf_{e_i}[K] \geq 0.05/\sqrt{n}$ where $t\geq 0.002n$, as guaranteed by~\Cref{claim:anindya-paley-zyg}. 

Now, let $L$ be a convex polytope that $6.25 \times 10^{-7}$-approximates $K$, as in the theorem statement.
We observe that  $L$ must contain the origin, since if it did not, by the symmetry of $K$ and the fact that $\Vol(K) = {\frac 1 2} \pm o_n(1)$ we would have that $\dG(K,L) \geq 0.249.$ Hence we can (and will) analyze the influences of various coordinates on $L$.	

Suppose that 
$\TInf[L] \leq 2.5\times 10^{-5}\sqrt{n}$. It then follows that at most $t/2$ of $\{e_1, \ldots, e_t\}$ have 
\[\Inf_{e_i}[L] \geq \frac{{0.05}}{2\sqrt{n}}.\]
In other words, for at least $t/2 \geq 0.001n$ of the coordinates in $\{e_1, \ldots, e_t\}$, we have 
\begin{equation} \label{eq:many-L-small}
\Inf_{e_i}[L] < \frac{{0.05}}{2\sqrt{n}}.
\end{equation}
Call these coordinates ``bad'' coordinates. Using Parseval's formula and \Cref{eq:many-K-big,eq:many-L-small}, we get that
\begin{align*}
\dist(K,L) &= \Ex_{\bx\sim N(0,I_n)}\sbra{(K(\bx) - L(\bx))^2} \\
&= \sum_{\alpha\in\N^n} (\wt{K}(\alpha) - \wt{L}(\alpha))^2 \\
&\geq \sum_{e_i~\text{is bad}} (\wt{K}(2e_i) - \wt{L}(2e_i))^2 \\
&> 
{0.001n \cdot \pbra{\frac{{0.05}}{2\sqrt{n}}}^2} = 6.25 \times 10^{-7},
\end{align*}
{which contradicts the fact that $\dG(K,L) \leq 6.25 \times 10^{-7}$.
Hence we must have $\TInf[L] > 2.5\times 10^{-5}\sqrt{n}$, and by \Cref{prop:convex-boppana} this means that $L$ must be an intersection of at least 
\[2^{\pbra{{\frac {0.25 \times 10^{-4}}{7 \ln 2}}}\sqrt{n}} > 2^{5.1 \times 10^{-6} \sqrt{n}}\] many halfspaces.}
\end{proof}

As a consequence of \Cref{thm:high-influence-lb}, we immediately obtain lower bounds for approximating the $\ell_2$ and $\ell_1$ balls of Gaussian measure $\approx 1/2$:


\begin{example}[$\ell_2$ ball] \label{eg:ball-conv-inf}
	Example~13 of~\cite{DNS22} establishes that 
\[\TInf[B_2] = \Theta(\sqrt{n}).\]
\Cref{lem:neeman} and Gaussian isoperimetry, however, allow us to obtain a lower bound on the constant hidden by the $\Theta(\cdot)$. Since $B_2$ is an intersection of infinitely many halfspaces all of which are at distance $\sqrt{n}$ from the origin, we have 
\[\TInf[B_2] = \sqrt{n}\cdot\GSA(B_2) \geq \sqrt{\frac{n}{2\pi}}(1- o_n(1)) \geq 0.398\sqrt{n}\]
where the first equality is due to \Cref{lem:neeman} and the second inequality follows from Gaussian isoperimetry.
Together with~\Cref{thm:high-influence-lb}, this immediately implies that any $6.25\times 10^{-7}$-approximation to $B(\sqrt{n})$ requires {$2^{\Omega(\sqrt{n})}$} facets. This in turn implies that \Cref{thm:de-nazarov} is tight up to constant factors when $\eps = 6.25\times 10^{-7}$. 
\end{example}

\begin{examples}[Cross-polytope] \label{eg:cross-polytope-inf}
	Let $B_1$ be the cross-polytope, i.e. the set
	\[B_1 := \cbra{x\in\R^n : \|x\|_1 \leq\sqrt{\frac{2}{\pi}}n}.\]
	Note that $B_1$ is an intersection of $2^n$ halfspaces, and recall  from~\Cref{subsec:cramer-bounds} that $\vol(B_1) = 1/2 \pm o_n(1)$. 
	By the Gaussian isoperimetric inequality, we have that $\GSA(B_1) \geq 0.398$, and so using \Cref{lem:neeman} we have
	\[\TInf[B_1] = \sqrt{\frac{2}{\pi}n}\cdot\GSA(B_1) \geq 0.1\sqrt{n}.\]
	Together with~\Cref{thm:high-influence-lb}, this immediately implies that any $6.25\times 10^{-7}$-approximation to $B_1$ requires {$2^{\Omega(\sqrt{n})}$} facets. (Recall from~\Cref{thm:ell-p-approx} that $B_1$ can be $0.01$-approximated using $2^{\Theta(n^{3/4})}$ halfspaces.)
\end{examples}


\section*{Acknowledgements}

We thank Elisabeth Werner for answering questions about~\cite{LSW06}. 
We thank Ryan O'Donnell and Ashwin Padaki for helpful discussions. 
A.D. is supported by NSF grants CCF-1910534 and CCF-2045128. 
S.N. is supported by NSF grants IIS-1838154, CCF-2106429, CCF-2211238, CCF-1763970, and CCF-2107187. 
R.A.S. is supported by NSF grants IIS-1838154, CCF-2106429, and CCF-2211238. 
This work was partially completed while the authors were visiting the Simons Institute for the Theory of Computing as a part of the program on ``Analysis and TCS: New Frontiers.''

\begin{flushleft}
\bibliographystyle{alpha}
\bibliography{allrefs}
\end{flushleft}

\appendix

\section{Hermite Analysis over $N(0,I_n)$}
\label{appendix:hermite}

Our notation and terminology follow Chapter~11 of ~\cite{odonnell-book}. We say that an $n$-dimensional \emph{multi-index} is a tuple $\alpha \in \N^n$, and we define 
\[
|\alpha| := \sum_{i=1}^n \alpha_i.
\]

For $n \in \N_{>0}$, we write $L^2(\R^n)$ to denote the space of functions $f: \R^n \to \R$ that have finite second moment under the Gaussian distribution, i.e. $f\in L^2(\R^n)$ if 
\[
\|f\|^2 = \Ex_{\bz \sim N(0,I_n)} \left[f(\bz)^2\right]^{1/2} < \infty.
\]
We view $L^2(\R^n)$ as an inner product space with 
\[\la f, g \ra := \Ex_{\bz \sim  N(0,I_n)}[f(\bz)g(\bz)]\]
We recall the Hermite basis for $L^2(\R, \gamma)$:

\begin{definition}[Hermite basis]
	The \emph{Hermite polynomials} $(h_j)_{j\in\N}$ are the univariate polynomials defined as
	$$h_j(x) = \frac{(-1)^j}{\sqrt{j!}} \exp\left(\frac{x^2}{2}\right) \cdot \frac{d^j}{d x^j} \exp\left(-\frac{x^2}{2}\right).$$
\end{definition}

The following fact is standard:

\begin{fact} [Proposition~11.33 of~\cite{odonnell-book}] \label{fact:hermite-orthonormality}
	The Hermite polynomials $(h_j)_{j\in\N}$ form a complete, orthonormal basis for $L^2(\R, \gamma)$. For $n > 1$ the collection of $n$-variate polynomials given by $(h_\alpha)_{\alpha\in\N^n}$ where
	$$h_\alpha(x) := \prod_{i=1}^n h_{\alpha_i}(x)$$
	forms a complete, orthonormal basis for $L^2(\R^n, \gamma)$. 
\end{fact}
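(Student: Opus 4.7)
The plan is to establish the univariate case first and then lift to $n$ variables by a tensor-product argument. For the univariate case I would proceed in two stages: orthonormality, followed by completeness.

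For orthonormality, I would use the Rodrigues-type formula baked into the definition $h_j(x) = \frac{(-1)^j}{\sqrt{j!}} e^{x^2/2} \frac{d^j}{dx^j} e^{-x^2/2}$. To compute $\langle h_j, h_k \rangle = \int h_j(x) h_k(x) \varphi_1(x)\,dx$ for $j \leq k$, I would write the integrand as $\frac{(-1)^{j+k}}{\sqrt{j!k!}} h_j(x) \frac{d^k}{dx^k} e^{-x^2/2} \cdot \frac{1}{\sqrt{2\pi}}$, then integrate by parts $k$ times. The boundary terms vanish because $\frac{d^m}{dx^m} e^{-x^2/2}$ decays faster than any polynomial at $\pm \infty$, leaving $\frac{1}{\sqrt{j!k!}\sqrt{2\pi}} \int h_j^{(k)}(x) e^{-x^2/2}\,dx$. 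Since $h_j$ has degree exactly $j$, the $k$-th derivative is $0$ when $k > j$ (giving orthogonality), and when $k = j$ a short induction shows $h_j^{(j)} = j!/\sqrt{j!} = \sqrt{j!}$, giving $\langle h_j,h_j\rangle = 1$.

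For completeness, I would invoke the standard density argument: show that polynomials are dense in $L^2(\R,\gamma)$, and then note that since $\{h_0,\ldots,h_j\}$ has the same linear span as $\{1,x,\ldots,x^j\}$ for every $j$ (by degree considerations), density of polynomials implies density of the Hermite span. Density of polynomials can be proved by arguing that if $f \in L^2(\R,\gamma)$ is orthogonal to every monomial, then the entire function $F(z) := \int f(x) e^{zx} \varphi_1(x)\,dx$ is identically zero (its Taylor coefficients at $0$ are all zero, since $\int f(x) x^k \varphi_1(x)\,dx = 0$ for all $k$, and an $L^2$ bound with Cauchy--Schwarz justifies the interchange). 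Identifying $F$ with the Laplace transform of the measure $f(x) \varphi_1(x)\,dx$ and invoking uniqueness forces $f = 0$ a.e.

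For the multivariate statement, the Gaussian measure $\gamma$ on $\R^n$ is the product of $n$ copies of the univariate Gaussian, so $L^2(\R^n,\gamma) \cong L^2(\R,\gamma)^{\otimes n}$ as Hilbert spaces. Orthonormality of $\{h_\alpha\}_{\alpha \in \N^n}$ is then immediate from Fubini: $\langle h_\alpha, h_\beta\rangle = \prod_{i=1}^n \langle h_{\alpha_i}, h_{\beta_i}\rangle$, which is $1$ if $\alpha=\beta$ and $0$ otherwise. Completeness follows from the general Hilbert-space fact that if $\{e_j\}$ is a complete orthonormal basis of $H$, then the tensors $\{e_{j_1}\otimes \cdots \otimes e_{j_n}\}$ form a complete orthonormal basis of $H^{\otimes n}$; concretely, finite linear combinations of products $\prod_i h_{\alpha_i}(x_i)$ are dense in finite sums of product functions $\prod_i f_i(x_i)$, which in turn are dense in $L^2(\R^n,\gamma)$ by a standard approximation argument. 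The main obstacle in a self-contained write-up would be the density-of-polynomials step in the univariate case; the cleanest route is the entire-function/Laplace-transform argument sketched above, rather than trying to use Stone--Weierstrass (which does not directly apply on the noncompact $\R$).
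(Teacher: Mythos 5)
The paper does not actually prove this statement; it is imported wholesale as a known fact, with a citation to Proposition~11.33 of the O'Donnell book, so there is no internal argument to compare yours against. Your proposal is the standard textbook proof and is essentially correct: Rodrigues formula plus repeated integration by parts for orthonormality; density of polynomials in $L^2(\R,\gamma)$ via the entire function $F(z)=\int f(x)e^{zx}\varphi_1(x)\,dx$ (where the finite exponential moments of the Gaussian justify both the definition of $F$ and the term-by-term differentiation, and Fourier/Laplace uniqueness then forces $f=0$); the observation that $\mathrm{span}\{h_0,\dots,h_j\}=\mathrm{span}\{1,x,\dots,x^j\}$; and the tensor-product/Fubini argument for $n>1$. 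You are also right that Stone--Weierstrass is not directly usable on the noncompact line, and that the exponential-moment argument is the clean route. One bookkeeping slip to fix: after substituting the Rodrigues formula only for $h_k$, the integrand is $\frac{(-1)^k}{\sqrt{k!}\sqrt{2\pi}}\,h_j(x)\,\frac{d^k}{dx^k}e^{-x^2/2}$, not $\frac{(-1)^{j+k}}{\sqrt{j!\,k!}\sqrt{2\pi}}\,h_j(x)\,\frac{d^k}{dx^k}e^{-x^2/2}$ -- your prefactor double-counts the factor $(-1)^j/\sqrt{j!}$ already contained in $h_j$. With the corrected constant, $k$ integrations by parts give $\frac{1}{\sqrt{k!}\sqrt{2\pi}}\int h_j^{(k)}(x)e^{-x^2/2}\,dx$, which vanishes for $k>j$ and, using $h_j^{(j)}=\sqrt{j!}$ exactly as you computed, equals $1$ when $k=j$; as written, your intermediate expression would instead produce $1/\sqrt{j!}$, contradicting the unit-norm claim even though the method is sound.
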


Given a function $f \in L^2(\R^n, \gamma)$ and $\alpha \in \N^n$, we define its \emph{Hermite coefficient on} $\alpha$ as $\widetilde{f}(\alpha) = \la f, h_\alpha \ra$. It follows that $f:\R^n\to\R$ can be uniquely expressed as 
\[f = \sum_{\alpha\in\N^n} \widetilde{f}(\alpha)h_\alpha\]
with the equality holding in $L^2(\R^n, \gamma)$; we will refer to this expansion as the \emph{Hermite expansion} of $f$. One can check that Parseval's and Plancharel's identities hold in this setting:
\[\abra{f,f} = \sum_{\alpha\in \N^n}\widetilde{f}(\alpha)^2 \qquad\text{and}\qquad \abra{f,g} = \sum_{\alpha\in \N^n}\widetilde{f}(\alpha)\widetilde{g}(\alpha).\]

\end{document}